\documentclass[prx,
twocolumn,
superscriptaddress,nofootinbib,longbibliography]{revtex4-2}

\usepackage[pdftex]{graphicx}
\usepackage{amsmath,amsfonts,amsbsy,amssymb,amsthm,mathtools}
\usepackage{mathrsfs}
\usepackage{epstopdf}
\usepackage{color}
\usepackage{enumitem}
\usepackage{siunitx}
\usepackage[normalem]{ulem}
\usepackage{MnSymbol}
\usepackage{environ}
\usepackage[dvipsnames]{xcolor}
\usepackage{stmaryrd,bbm,dsfont}
\usepackage{subcaption}
\usepackage{ragged2e}
\usepackage{tikz}
\usetikzlibrary{quantikz2}
\usetikzlibrary{automata,arrows,positioning,calc}
\usepackage{tikz-cd}

\usepackage{pgfplots}
\usepackage{pstool}
\usepackage{tikzsymbols}
\usetikzlibrary{plotmarks,spy}
\usepackage{titletoc}
\titlecontents{section}
  [0em]
  {\vspace{0.3em}}
  {\makebox[1em][l]{\thecontentslabel.}}
  {}
  {\titlerule*[0.5pc]{.}\contentspage}
  
\titlecontents{subsection}
  [1em]
  {}
  {\makebox[1em][l]{\thecontentslabel.}}
  {}
  {\titlerule*[0.5pc]{.}\contentspage}

\titlecontents{subsubsection}
  [2em]
  {}
  {\makebox[1em][l]{\thecontentslabel.}}
  {}
  {\titlerule*[0.5pc]{.}\contentspage}

\newtheorem{theorem}{Theorem}
\newtheorem{proposition}[theorem]{Proposition}
\newtheorem{lemma}[theorem]{Lemma}

\newtheorem*{theoreminf*}{Theorem 1 (informal)}

\theoremstyle{definition}
\newtheorem{definition}[theorem]{Definition}

\usepackage[colorlinks = true, citecolor=blue, linkcolor=blue, urlcolor=blue]{hyperref}
\usepackage[T1]{fontenc}
\usepackage{mwe}

\renewcommand{\ll}{\llangle}
\newcommand{\rr}{\rrangle}
\newcommand{\ketbra}[1]{|{#1}\>\mkern-4mu\<{#1}|}
\newcommand{\kketbra}[1]{|{#1}\rr\mkern-4mu\ll{#1}|}
\newcommand{\tr}{\textup{Tr}}
\renewcommand{\>}{\rangle}
\newcommand{\<}{\langle}
\newcommand{\C}{{\mathbb{C}}} %
\newcommand{\R}{{\mathbb{R}}} %

\newcommand{\Z}{{\mathbb{Z}}}

\newcommand{\cov}{{\mathtt{cov}}}

\newcommand{\mse}{{\mathtt{MSE}}}

\newcommand{\seq}{{\mathsf{seq}}}

\newcommand{\D}{{\mathsf{D\,}}}
\renewcommand{\L}{{\mathsf{L\,}}}

\newcommand{\sdp}{{\mathtt{SDP}}}
\newcommand{\supp}{{\mathtt{supp}}}

\newcommand{\outs}{{\textrm{out}}}
\newcommand{\opt}{{\textrm{opt}}}
\renewcommand{\det}{{\textrm{det}}}

\newcommand{\invgate}[1]{ \gate[1,style={blue!0,fill=blue!0,fill opacity=0,opacity=0}]{#1} }

\newcommand{\bshade}[3]{{ \gategroup[#1,steps=#2,style={blue!10,rounded corners,fill=blue!10, inner xsep=1pt},background,label style={label position=above,anchor=north,yshift=0.3cm}]{{#3}} }}

\newcommand{\gshade}[3]{{ \gategroup[#1,steps=#2,style={green!10,fill=green!10, inner xsep=1pt},background,label style={label position=below,anchor=north,yshift=-0.2cm}]{{#3}} }}
\newcommand{\wshade}[3]{{ \gategroup[#1,steps=#2,style={green!0,fill=green!0, inner xsep=1pt},background,label style={label position=below,anchor=north,yshift=-0.2cm}]{{#3}} }}

\newcommand{\CQT}{Centre for Quantum Technologies, National University of Singapore, 3 Science Drive 2, Singapore 117543.\looseness=-1}
\newcommand{\NTU}{Nanyang Quantum Hub, School of Physical and Mathematical Sciences, Nanyang Technological University, Singapore 639673.\looseness=-1}
\def\QUICS{Joint Center for Quantum Information and Computer Science, NIST/University of Maryland, College Park, Maryland 20742, USA}
\def\JQI{Joint Quantum Institute, NIST/University of Maryland, College Park, Maryland 20742, USA}
\def\UMIACS{Department of Computer Science and Institute for Advanced Computer Studies, University of Maryland, College Park, Maryland 20742, USA}

\definecolor{KB}{rgb}{0.4,0.3,0.9}

\definecolor{darkred}{RGB}{0,0,0}
\newenvironment{chgenv}{ \color{darkred} }{%
    \par%
}
\newcommand{\chg}[1]{{\textcolor{darkred}{#1}}}
\newenvironment{atenv}{\begin{chgenv}}{\end{chgenv}}
\newcommand{\atchg}[1]{\chg{#1}}

\begin{document}

\normalem
\newlength\figHeight 
\newlength\figWidth

\title{Optimal noisy sequential multiparameter quantum sensing}

\author{Andrew Tanggara}
\email{andrew.tanggara@gmail.com}
\affiliation{\CQT}
\affiliation{\NTU}
\author{Lorc{\'a}n O Conlon}
\affiliation{\JQI}
\affiliation{\QUICS}
\author{Alexey~V.~Gorshkov}
\affiliation{\JQI}
\affiliation{\QUICS}
\author{Kishor Bharti}
\altaffiliation{Currently at IonQ Inc.}
\affiliation{\QUICS}
\affiliation{\UMIACS}

\date{\today}

\begin{abstract}
The most general strategy for estimating parameters of a quantum channel is to probe the channel many times sequentially with the help of coherent ancillary systems.
However, finding the strategy that minimizes the mean-squared error (MSE) is a notoriously hard problem, requiring optimization over the initial probe state, intermediate evolutions, final measurement, and classical postprocessing.
In this work, we study how the minimum attainable MSE can be bounded from below in this setting. 
We propose a lower bound for the MSE of any physically implementable sequential sensing strategy that applies to estimating parameters of any physical quantum evolution, including quantum channels probed over multiple time steps under temporally correlated noise and general temporally correlated quantum processes.
We show that this lower bound can be formulated as a semidefinite program (SDP).
Our bound generalizes the Nagaoka-Hayashi lower bound (that applies to parameter estimation tasks with separable measurements) to the regime where the quantum channel can be accessed sequentially multiple times.
For regular finite-dimensional single-parameter sequential estimation tasks admitting an optimal strategy, we show that our SDP bound is tight
and provide a procedure to obtain a locally optimal strategy from the SDP optimizer.
Using our optimization bound, we compute the MSE lower bound and derive new protocols for several other physically motivated problems including error-corrected quantum sensing under an unknown correlated noise model, multiparameter estimation, and estimation with qudit systems.
\end{abstract}

\maketitle

\section{Introduction}\label{sec:background}

Quantum-enhanced sensing involves the application of uniquely quantum mechanical resources to 
estimate
properties of a system better than is possible using only classical resources~\cite{giovannetti2011advances}. Quantum-enhanced sensing is one of the most advanced quantum technologies, with practical advantages already demonstrated in gravitational wave detection~\cite{aasi2013enhanced,tse2019quantum}, biosensing~\cite{casacio2021quantum}, atomic clocks~\cite{ludlow2015optical}, gravimetry~\cite{wu2019gravity}, and magnetometry~\cite{troullinou2023quantum}. Several other proof-of-principle experiments show the potential for further enhancements using entangled quantum states and non-separable measurements~\cite{marciniak2022optimal,conlon2023approaching,conlon2023discriminating,nielsen2023deterministic,daryanoosh2018experimental,hou2018deterministic,zhou2025experimental,yi2025optimal,mansouri2026experimental,li2023optimal,yung2025saturating,yung2026beating}. With this potential improvement, there has been great interest in determining optimal protocols for quantum-enhanced sensing. 

Any quantum sensing protocol can be divided into three stages: state preparation, quantum control, and measurement, as shown in Fig.~\ref{fig:quantum_sensing}.\footnote{
    In the figure, the classical post-processing of the measurement outcomes to construct the estimator $\hat{\theta}$ is implied.
} Each of these stages is important in its own right and has been extensively studied. 
Given access to $N$ copies of a unitary $U(\theta)$ that encodes information about an unknown parameter $\theta$ onto a quantum state of interest, one of the main aims of quantum sensing is to achieve the Heisenberg limit, $\mse_\theta\propto1/N^2$, where $\mse_\theta$ is the mean squared error (MSE) with which $\theta$ can be estimated. 
For parallel sensing, where $N$ copies of $U(\theta)$ are accessed simultaneously, it is known that this limit can be achieved by optimal state preparation, using an entangled state~\cite{bollinger1996optimal}. 
However, for sequential sensing where $N$ copies of $U(\theta)$ are accessed sequentially, optimal state preparation alone is not always sufficient to achieve the ultimate limits in quantum sensing.
For example, it is well established that the presence of noise can rapidly degrade sensitivity away from the Heisenberg limit~\cite{Huelga1997,dorner2009optimal,Escher2011,demkowicz2012elusive}.

\begin{figure}
    \centering
    \begin{subfigure}[c]{0.17\columnwidth}
        \centering
        \caption{}
        \includegraphics[width=\linewidth]{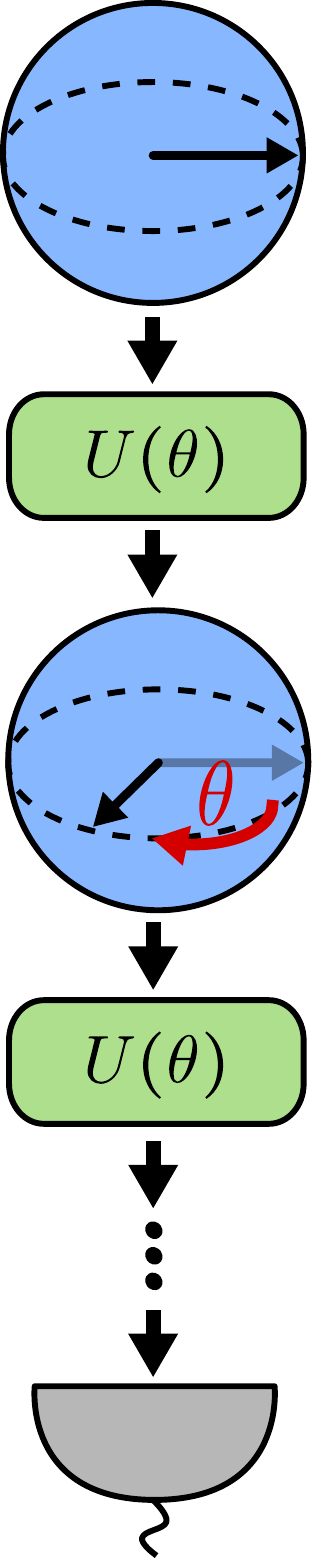}
    \end{subfigure}
    \hspace{3em}
    \begin{subfigure}[c]{0.17\columnwidth}
        \centering
        \caption{}
        \includegraphics[width=\linewidth]{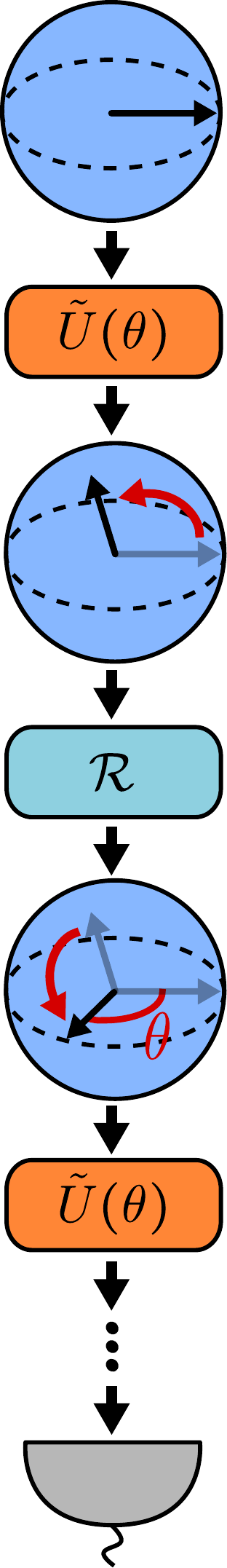}
    \end{subfigure}
    \caption{(a) Sequential quantum sensing for estimating the parameter $\theta$ of parameterized unitary $U(\theta)$ which applies a $\theta$-angle rotation on the probe state.
    (b) Sequential sensing for noisy parameterized unitary $\Tilde{U}(\theta)$ where the noise causes a rotation that erroneously deviates from the noiseless case of $U(\theta)$.
    A control operation $\mathcal{R}$ detects and corrects the error to recover the correct $\theta$ rotation.}
    \label{fig:quantum_sensing}
\end{figure}

In general, in the presence of noise, optimal quantum control is required to restore the Heisenberg limit. 
The restoration of the Heisenberg limit in the presence of noise via quantum control was first observed for specific noise models~\cite{ozeri2013heisenberg,kessler2014quantum,Arrad2014,dur2014improved}, before being developed into a more general theory~\cite{sekatski2017quantum,demkowicz2017adaptive,zhou2018achieving,zhou2021asymptotic}. 
This more general theory, known as error-corrected quantum sensing, presents general geometrical conditions under which the Heisenberg limit can be achieved. 
Note that optimal quantum control can offer enhanced precision even when the Heisenberg limit is not attainable~\cite{chiribella2012optimal,zhou2021asymptotic}, and when noise is not present~\cite{yuan2015optimal,yuan2016sequential}, further emphasizing its importance.

Finally, an optimal measurement is required to extract information about the unknown parameters. For estimating only a single parameter $\theta$, it is known that the quantum Fisher information (QFI) $\mathcal{J}_\text{Q}$ sets the ultimate attainable precision, $\mse_\theta \geq 1/\mathcal{J}_\text{Q}$~\cite{helstrom1967minimum}.\footnote{
    We use the notation $\mathcal{J}_\text{Q}$ for the QFI as there are multiple ways of defining a quantum version of the Fisher information, and here we focus on the QFI based on the symmetric logarithmic derivative.
} 
Here, $\mse_{\theta}$ is the mean-squared error (MSE) of the parameter $\theta$, which is the expectation value of the squared difference between the estimator $\hat{\theta}$ and the true parameter $\theta$, i.e. $\mathbb{E}[(\hat{\theta}-\theta)^2]$.
Furthermore, when estimating only a single parameter, the corresponding measurement saturating the QFI is known~\cite{barndorff2000fisher,hayashi2005statistical}. However, for estimating multiple parameters $(\theta_1,\theta_2,\dots,\theta_m)$, it is considerably more difficult to determine the optimal measurement, and the Cramér-Rao bound obtained from the inverse of the QFI matrix is not always attainable~\cite{conlon2025role,ragy2016compatibility}. The ultimate limit on precision is only known in some special cases~\cite{helstrom1968minimum,young1975asymptotically,belavkin2004generalized,holevo2011probabilistic,matsumoto2002new,bradshaw2017tight,bradshaw2018ultimate,yung2025most}, and in general it is very complex to determine the optimal measurement~\cite{kimizu2024adaptive,zhang2024qestoptpovm}. This is true even when the input state and control operations are fixed such that a fixed output state $\rho_\theta=\rho_\text{out}(\theta_1,...,\theta_m)$ is obtained prior to the final measurement. 
As such it is common to turn to Cramér-Rao bounds, represented here by $\mathcal{C}$, that are efficiently computable, but not necessarily tight, bounds that satisfy $\sum_i\mse_{\theta_i}\geq\mathcal{C}$, where $\sum_i\mse_{\theta_i}$ is the MSE attainable by any measurement on $\rho_\theta$. 
Here, $\mse_{\theta_i}=\mathbb{E}[(\hat{\theta}_i-\theta_i)^2]$ is the MSE of the $i$-th parameter.
A commonly used bound is the Holevo Cramér-Rao bound~\cite{holevo2011probabilistic} (also referred to as the Holevo--Nagaoka Cramér-Rao bound~\cite{nagaoka2005new}). 
However, for estimating many parameters with mixed states, there are several reasons not to use this bound~\cite{das2025holevo,conlon2022gap}.
As such, in this work, we focus on the Nagaoka--Hayashi Cramér-Rao bound (NHB), which is a tighter bound~\cite{conlon2021efficient}. 
The NHB is efficiently computable via a semidefinite program (SDP) and offers a good approximation to the ultimate attainable precision.

For single-parameter estimation, there has been significant progress in simultaneously optimizing the above three steps, determining optimal quantum sensing strategies~\cite{chiribella2012optimal,liu2023optimal,kurdzialek2025quantum,liu2022optimal,liu2024fully}. 
However, for multiparameter estimation, the situation is considerably more complex. 
For example, in the state preparation step, the same probe state may not be optimal for every parameter, a phenomenon known as probe incompatibility~\cite{albarelli2022probe}. 
Similarly, as highlighted above, at the measurement stage, the optimal measurements for each parameter may be incompatible~\cite{conlon2025role}, introducing fundamental trade-offs~\cite{lu2021incorporating,yung2024comparison}. 
Recent work by Hayashi and Ouyang has considered optimizing both the initial probe state and the measurement, but does not handle time-dependent control~\cite{Hayashi_2024}.

In this work, we present a systematic and efficiently computable bound in the system size and number of parameters for optimal quantum multiparameter estimation over $T$ time steps, although the computation grows exponentially with $T$. 
Our bound optimizes over all three stages of a quantum sensing protocol. 
This is achieved by deriving an extension of the NHB via the quantum comb formalism, which represents a sequential sensing protocol over multiple steps as a matrix, computable via an SDP. 
Our SDP bound is applicable to any physical sensing task, including those which allow probing a noisy quantum channel multiple times with temporally correlated noise, or signals of temporally correlated quantum processes more generally.
For single-parameter sensing tasks which optimal protocol exists, we show that our bound is tight.
We also derive a sufficient condition for a certain family of sequential sensing tasks under correlated noise to achieve the Heisenberg scaling, using calibration schemes to learn the noise parameter.
As an application of our analytical results, we compute the SDP bound for several physically-motivated sequential sensing tasks, including sequential quantum channel parameter estimation under various correlated-noise model and sequential multiparameter sensing tasks.

This paper is organized as follows.
We discuss quantum multiparameter estimation in Section~\ref{sec:optimal_multiparameter_estimation}, review existing Cramér-Rao bounds in Section~\ref{sec:existing_bounds}, discuss time-dependent sensing in Section~\ref{subsec:timedependent}, and explain the formalism for sequential sensing protocols in Section~\ref{sec:sequential_strategies_combs}.
In Section~\ref{sec:sequential_bound}, we present the main result on the efficiently computable SDP bound for general sequential sensing and present a proof sketch.
\atchg{In Section~\ref{sec:MSE_optimality_conditions}, we give conditions for determining when the sequential MSE bound is attainable and when Heisenberg scaling can be achieved.
In particular, for every regular finite-dimensional single-parameter sensing problem admitting an optimal tester, we show that the SDP bound is locally attainable and that its optimizer can be used to construct an optimal strategy.
We give a sufficient criterion for Heisenberg scaling in temporally correlated models governed by a finite
latent noise label.}
In Section~\ref{sec:examples}, we numerically apply our SDP bound to multiple examples, mainly on sensing problems under temporally correlated noise and compare SDP lower bounds for one-copy-per-step and two-copy-per-step control in noisy multiparameter sensing.
Additional examples that recover and extend known results are also presented in Appendix~\ref{app:additional_examples}.
\atchg{For the single-parameter examples, we obtain explicit optimal protocols from the SDP optimizer by decomposing it into an initial probe state and a sequence of intermediate isometries, as described in Appendices~\ref{app:single_par_achievability} and~\ref{app:tester_decomposition}, and then using the symmetric logarithmic derivative (SLD) of the corresponding output state.
}

\section{Preliminary Material}
\label{sec:prelims}
In this section, we present a general framework for sequential quantum multiparameter estimation.

\subsection{Optimal quantum multiparameter estimation}\label{sec:optimal_multiparameter_estimation}
A quantum channel parameter estimation problem is defined by real-valued parameters $\theta=(\theta_1,\dots,\theta_m)\in\R^m$ of a quantum channel $\Lambda_\theta:\L(A)\rightarrow\L(A)$ over the space $\L(A)$ of bounded linear operators from $A$ to itself for a $D$-dimensional probe register $A$.
Given access to $T$ copies of channel $\Lambda_\theta$, the most general strategy is to probe each copy of $\Lambda_\theta$ sequentially, while keeping a quantum memory register which assists in probing the channels in sequence and measuring at the end of the protocol. A conditional control operation can be applied between each channel use. At the output of such a procedure one obtains $\rho_\theta=\rho_\text{out}(\theta_1,\dots,\theta_m)$.

The experimenter then performs a measurement described by a positive operator-valued measure (POVM) which is a set of positive operators $\{\Pi_k\}_k$ that sum to the identity $\sum_{k}\Pi_{k}=I_D$. 
The $k$-th measurement outcome occurs with probability \mbox{$p(k|\theta)=\text{Tr}[\rho_\theta\Pi_{k}]$}. 
Based on the observed measurement outcomes, the experimenter produces an estimated value $\hat{\theta}_i$ of $\theta_i$. 
In this work, we focus on locally unbiased estimation where the classical Fisher information (CFI) $\mathcal{J}$ places a bound on the MSE. 
For locally unbiased estimation around a known point, $\theta_0=(\theta_{0,1},\dots,\theta_{0,m})$, we require~\cite{fujiwara2006strong,holevo2011probabilistic}
\begin{equation}\label{eqn:locally_unbiased}
    \langle\hat{\theta_i}\rangle=\theta_{0,i}
    \quad\text{and}\quad
    \left.\frac{\partial\langle\hat{\theta}_i\rangle}{\partial\theta_j}\right|_{\theta=\theta_0}=\delta_{i,j}\;,
\end{equation}
where $\<\hat{\theta}_i\> = \sum_k \hat{\theta}_{k,i} p(k|\theta)$ is the expected value of estimator $\hat{\theta}_i$.
Thus, the MSE of parameter $\theta_i$ is given by $\mse_{\theta_i} = \sum_k (\hat{\theta}_{k,i}-\theta_i)^2 p(k|\theta)$.
In this setting, the covariance matrix, $\cov_\theta$, is given by
\begin{equation}\label{eq:MSEmatrix}
    [\cov_\theta]_{ij}=\sum_{k}(\hat{\theta}_{k,i}-\theta_{i})(\hat{\theta}_{k,j}-\theta_{j}) \,p(k|\theta)\;,
\end{equation}
so that our quantity of interest is $\sum_{i=1}^m\mse_{\theta_i}=\text{Tr}[\cov_\theta]$. The CFI matrix can be computed as
\begin{equation}
\label{eq:CFIdefin}
    \mathcal{J}_{ij}
    =\sum_k p(k|\theta) \left(\frac{\partial \log p(k|\theta)}{\partial \theta_i}\right)\left(\frac{\partial \log p(k|\theta)}{\partial \theta_j}\right) \;.
\end{equation}
The CFI provides a bound on the MSE as
\begin{equation}
\label{eq:CFI}
    \sum_{i=1}^m\mse_{\theta_i} = \text{Tr}(\cov_\theta) \geq \text{Tr}(\mathcal{J}^{-1})\;.
\end{equation}

Given access to $T$ applications of the quantum channel $\Lambda_\theta$, the experimentalist therefore has control over the initial density matrix $\rho$, $T-1$ control unitaries in between each channel use denoted as $\mathbf{U}$, and the final measurement $\{\Pi_k\}_k$ and estimator coefficients $\hat{\theta}_{i,k}$. 
We note that the setup described above also covers strategies that probe $T$ copies of $\Lambda_\theta$ in parallel where $\rho$ is a quantum state over $A^{\otimes T}$ and the $T-1$ control unitaries in $\mathbb{U}$ swap the $A^{\otimes T}$ systems such that we have a quantum state $\Lambda_\theta^{\otimes T}(\rho)$ before the final measurement $\{\Pi_k\}_k$.
The most ambitious goal in quantum sensing is to solve the following minimization problem:
\begin{equation}
\label{eq:genoptimisation}
    \mathcal{C}_\text{opt}=\min_{\rho,\mathbf{U}, \{\Pi\},\hat{\theta}_i} \sum_{i=1}^m \mse_{\theta_i} \;.
\end{equation}
However, as discussed in the introduction, this optimization problem is a challenging problem to say the least. 
As such, it is common to introduce efficiently computable bounds on this quantity.

\subsection{Existing bounds}\label{sec:existing_bounds}

We now discuss existing bounds that apply in certain restricted versions of the problem stated in eqn.~\eqref{eq:genoptimisation}. 
When the input state $\rho$ is fixed, and we only allow for a single time step, i.e.~there is no optimization over $\mathbf{U}$ or $\rho$, the optimization over the measurement can be handled using the NHB. 
The NHB applies for locally unbiased estimation only, hence in this setting we write $\sum_i\mse_{\theta_i}=\text{Tr}(\cov_\theta)$. 
The NHB bounds the MSE as
\begin{equation}
\begin{gathered}
    \text{Tr}(\cov_\theta) \geq \mathcal{C}_\text{NH} \\
    \text{where}\\
    \mathcal{C}_\text{NH} := \min_{\mathbf{L},\,X} \tr(\mathbf{S}_\theta \mathbf{L}) \\
    \text{subject to}\\
       \mathbf{L} \geq  XX^\intercal \;,
        \quad \mathbf{L}_{jk}=\mathbf{L}_{kj}
        \;\mathrm{Hermitian} \;,\\
        X_j \text{ Hermitian, satisfying eqn.~\eqref{eq:unb},}
\end{gathered}
\end{equation}
where $\mathbf{S}_\theta= I_m\otimes \rho_\theta$ and $\mathbf{L}$ is an $m$-by-$m$ matrix of Hermitian operators $\mathbf{L}_{jk}$ acting on probe register $A$, i.e. $\mathbf{L} = \sum_{i,j=1}^m |i\>\<j|\otimes \mathbf{L}_{jk}$. 
On the other hand, $X$ is a block matrix with a single column and $m$ rows, where in each of the $m$ blocks $\{X_j\}_{j=1}^m$ are the estimator matrices.
They also need to satisfy the following matrix form of the local-unbiasedness conditions in eqn.~\eqref{eqn:locally_unbiased}:
\begin{equation}\label{eq:unb}
\begin{gathered}
    \text{Tr}(\rho_\theta X_j)=0 \quad
    \text{and}\quad
    \text{Tr}\Big( \frac{\partial\rho_\theta}{\partial\theta_k} X_j \Big) = \delta_{j,k} \;,
\end{gathered}
\end{equation}
for all $j,k\in\{1,\dots,m\}$.
For a POVM $\{\Pi_k\}_k$ and estimator values $\{\hat{\theta}_{k,j}\}_{k,j}$, the estimator matrices are defined at the fixed known point $\theta_0$ as $X_j:=\sum_k(\hat{\theta}_{k,j}-\theta_{0,j})\Pi_k$.
The evaluation at $\theta=\theta_0$ is implicit in eqn.~\eqref{eq:unb}, and $X_j$ is held fixed when the derivative in that equation is taken.
Thus the derivative acts only on $\rho_\theta$, and the second condition in eqn.~\eqref{eq:unb} is exactly the derivative local-unbiasedness condition in eqn.~\eqref{eqn:locally_unbiased}.

As mentioned in the introduction, Hayashi and Ouyang~\cite{Hayashi_2024} have extended the NHB to include optimization over both the initial state $\rho$ and measurement $\{\Pi_k\}_k$, simultaneously. 
Such optimization can be done efficiently in the probe dimension $D$ and number of parameters $m$ by using conic programming.

\subsection{Optimal time-dependent quantum sensing}\label{subsec:timedependent}

Optimal single-parameter estimation protocols and their MSE bounds are well-understood in the literature~\cite{chiribella2012optimal,liu2023optimal,kurdzialek2025quantum,liu2022optimal,liu2024fully}.
While comparatively little is known about optimal time-dependent sensing protocols for multiparameter estimation, there is a lot of insight to be gleaned from error-corrected quantum sensing. 
Error-corrected quantum sensing will also serve as one of the main applications of our new result, hence we introduce it in detail here. 
In this context, it is instructive to consider sensing problems in a continuous time limit. 
We consider a parameter $\theta$ of a Hamiltonian $H_\theta$ acting on our probe system, while simultaneously the probe experiences Markovian noise generated by Lindblad operators $\{L_k\}_k$,
\begin{equation}\label{eq:lindblad_metrology}
  \frac{d\rho}{dt} = -\iota[ H_\theta,\rho] + \sum_k\Big(L_k\rho L_k^\dagger - \tfrac12\{L_k^\dagger L_k,\rho\}\Big) \;,
\end{equation}
where $\iota=\sqrt{-1}$.
It is standard to discretize time into short windows $d t$ during which the state evolves according to eqn.~\eqref{eq:lindblad_metrology}. 
Following each $dt$ window, a recovery operation is applied designed to undo the effect of the noise accumulated during that segment \cite{kessler2014quantum,Arrad2014,dur2014improved,zhou2018achieving}. 
In the limit of arbitrarily fast correction, this approach can suppress noise to leading order in $d t$ while allowing coherent signal accumulation for a total time $\tau=M\,d t$ for $M$ channel uses. 

Let $\mathcal{S}$ denote the complex linear span of all Hermitian jump operators and their products:
\begin{equation}
    \mathcal{S} = \text{span}_{\C}\{\, I,\; L_k, L_k^{\dagger},\; L_k^{\dagger}L_j : \forall j,k \,\}.
\end{equation}
Then the condition under which Heisenberg scaling can be achieved for a single-parameter estimation is known as Hamiltonian not in Lindblad span (HNLS)~\cite{zhou2018achieving,demkowicz2017adaptive}. 
That is, Heisenberg scaling in time can be achieved if and only if the Hamiltonian generator $H_\theta=\theta h_0$ has a nonzero component orthogonal to the Lindblad span, $h_0 \notin \mathcal{S}$. 
Under this condition, it is known that there exists an error correcting code capable of achieving the Heisenberg limit~\cite{zhou2018achieving}. 
For multiparameter estimation, the corresponding necessary and sufficient conditions for achieving the Heisenberg limit under Markovian noise with adaptive protocols is also known~\cite{gorecki2020optimal}.

\subsection{Sequential quantum processes}\label{sec:sequential_strategies_combs}

As we have discussed in Section~\ref{sec:optimal_multiparameter_estimation}, a quantum sensing protocol on $T$ copies of quantum channel $\Lambda_\theta$ with parameters $\theta=(\theta_1,\dots,\theta_m)$ consists of an initial probe state $\rho$, a sequence of unitaries $\mathbf{U}$, a final POVM measurement $\Pi=\{\Pi_k\}_k$, and an estimator $\hat{\theta}=(\hat{\theta}_1,\dots,\hat{\theta}_m)$.
In the most general case, the protocol may involve ancillary registers $C$ which may exhibit entanglement with the probe register $A$ initially in the initial joint state $\rho$ over $C\otimes A$, acted on jointly with $A$ by unitaries in $\mathbf{U}=(U^{(1)},\dots,U^{(T-1)})$, and finally measured jointly with $A$ by POVM $\{\Pi_k\}_k$.
The initial state $\rho$, control unitaries $\mathbf{U}$, and POVM $\Pi$ altogether describes a \textit{sequential probe} $\mathbf{M}$.
Such a sequential probe $\mathbf{M}$ acting on $T$ copies of $\Lambda_\theta$ can be illustrated as
\begin{equation}
    \resizebox{0.97\linewidth}{!}{
    \begin{quantikz}[
        classical gap=0.03cm,
        column sep=0.28cm,
        font=\normalsize,
        labels={font=\footnotesize}
    ]
        \bshade{2}{9}{$\mathbf{M}$}
        \setwiretype{n} & \lstick[2]{$\rho$} & \wire[l][1]["C"{above,pos=0}]{q} \setwiretype{b} & \gate[2,style={fill=blue!20}]{U^{(1)}} \wire[r][1]["C"{above,pos=0.6}]{q} & 
            \invgate{\dots} & \wire[l][1]["C"{above,pos=0}]{q} & \gate[2,style={fill=blue!20}]{U^{(T-1)}} &
            \wire[l][1]["C"{above,pos=0}]{q} & \gate[2,style={fill=blue!20}]{\Pi} \\
        \setwiretype{n} & & \gate[1,style={fill=green!20}]{\Lambda_\theta} \wire[l][1]["{A}"{above,pos=0.45}]{q} \setwiretype{q} \gshade{1}{1}{} & \wire[l][1]["{A}"{above,pos=0.45}]{q} \wire[r][1]["{A}"{above,pos=0.45}]{q} & 
            \invgate{\dots} & \gate[1,style={fill=green!20}]{\Lambda_\theta} \wire[l][1]["{A}"{above,pos=0.6}]{q} \wire[r][1]["{A}"{above,pos=0.45}]{q} \gshade{1}{1}{} & &
            \gate[1,style={fill=green!20}]{\Lambda_\theta} \wire[l][1]["{A}"{above,pos=0.6}]{q} \wire[r][1]["{A}"{above,pos=0.45}]{q} \gshade{1}{1}{} &
    \end{quantikz} ,} 
\end{equation}
where the object in blue represents the sequential probe $\mathbf{M}$ and the objects in green are the $T$ copies of the channel of interest.

Instead of $T$ copies of quantum channels, we may also consider a general multi-time-step quantum process $\mathbf{\Lambda}_\theta$ parameterized by $\theta=(\theta_1,\dots,\theta_m)$ which is being probed at $T$ time points, e.g.~where there is an environment $E$ that the quantum process acts on jointly with probe register $A$ over time but is inaccessible to the experimenter.
In such a case, we can describe this entire process $\mathbf{\Lambda}_\theta$ as a sequence of quantum channels $\Lambda_\theta^{(1)},\dots,\Lambda_\theta^{(T)}$ acting jointly on probe register $A$ and some inaccessible environment $E$ which accounts for any temporal correlation in the process $\mathbf{\Lambda}_\theta$:
\begin{equation}
\begin{aligned}
    \resizebox{0.8\linewidth}{!}{
    \begin{quantikz}[
        classical gap=0.03cm,
        column sep=0.28cm,
        font=\normalsize,
        labels={font=\footnotesize}
    ]
        \setwiretype{n} & \gate[2,style={fill=green!20}]{\Lambda_\theta^{(1)}} \wire[l][1]["{A}"{above,pos=0.45}]{q} \setwiretype{q} \gshade{1}{1}{} & \quad \wire[l][1]["{A}"{above,pos=0.45}]{q} \wire[r][1]["{A}"{above,pos=0.45}]{q} & 
            \gate[2,style={fill=green!20}]{\Lambda_\theta^{(2)}} \gshade{1}{1}{} & 
            \invgate{\dots} \wire[l][1]["{A}"{above,pos=0.45}]{q} & 
            \gate[2,style={fill=green!20}]{\Lambda_\theta^{(T)}} \wire[l][1]["{A}"{above,pos=0.65}]{q} \wire[r][1]["{A}"{above,pos=0.45}]{q} \gshade{1}{1}{} & \\
        \setwiretype{n} & \gshade{1}{5}{$\mathbf{\Lambda}_\theta$} & \wire[l][1]["{E}"{above,pos=0.45}]{q} \setwiretype{q} & 
            & 
            \invgate{\dots} \wire[l][1]["{E}"{above,pos=0.45}]{q} & 
            \wire[l][1]["{E}"{above,pos=0.65}]{q} & \setwiretype{n}
    \end{quantikz}
    } ,
\end{aligned}
\end{equation}
where each channel $\Lambda_\theta^{(t)}$ may be individually dependent only on a subset of parameters $\theta$, and channels $\Lambda_\theta^{(t)},\Lambda_\theta^{(t')}$ at different steps need not be the same. 
Thus, the action of a sequential probe $\mathbf{M}$ on parameterized process $\mathbf{\Lambda}_\theta$ over $T$ time steps can be illustrated as
\begin{equation}
    \resizebox{0.95\linewidth}{!}{
    \begin{quantikz}[
        classical gap=0.03cm,
        column sep=0.28cm,
        font=\normalsize,
        labels={font=\footnotesize}
    ]
        \bshade{2}{9}{$\mathbf{M}$}
        \setwiretype{n} & \lstick[2]{$\rho$} & \wire[l][1]["C"{above,pos=0}]{q} \setwiretype{b} & \gate[2,style={fill=blue!20}]{U^{(1)}} \wire[r][1]["C"{above,pos=0.6}]{q} & 
            \invgate{\dots} & \wire[l][1]["C"{above,pos=0}]{q} & \gate[2,style={fill=blue!20}]{U^{(T-1)}} &
            \wire[l][1]["C"{above,pos=0}]{q} & \gate[2,style={fill=blue!20}]{\Pi} \\
        \setwiretype{n} & & \gate[2,style={fill=green!20}]{\Lambda_\theta} \wire[l][1]["{A}"{above,pos=0.45}]{q} \setwiretype{q} \gshade{2}{1}{} & \wire[l][1]["{A}"{above,pos=0.45}]{q} \wire[r][1]["{A}"{above,pos=0.45}]{q} & 
            \invgate{\dots} & \gate[2,style={fill=green!20}]{\Lambda_\theta} \wire[l][1]["{A}"{above,pos=0.6}]{q} \wire[r][1]["{A}"{above,pos=0.45}]{q} \gshade{2}{1}{} & &
            \gate[2,style={fill=green!20}]{\Lambda_\theta} \wire[l][1]["{A}"{above,pos=0.6}]{q} \wire[r][1]["{A}"{above,pos=0.45}]{q} \gshade{2}{1}{} & \\
        \setwiretype{n} & & \gshade{1}{6}{$\mathbf{\Lambda}_\theta$} & \wire[l][1]["{E}"{above,pos=0}]{q} \setwiretype{q} & 
            \invgate{\dots} & & \wire[l][1]["{E}"{above,pos=0}]{q} &
             & \setwiretype{n}
    \end{quantikz} } .
\end{equation}

A sequential probe $\mathbf{M}$ is an object known as a \textit{quantum tester}~\cite{chiribella2009theoretical,taranto2025higherorderquantumoperations}, which can be thought of as an analogue of a POVM for multi-time-step processes.
It gives the measurement statistics from a sequence of interactions with the process $\mathbf{\Lambda}$ followed by a measurement, analogous to how a POVM gives the output statistics when directly measuring a quantum state.
Both sequential probes $\mathbf{M}$ and multi-time-step processes $\mathbf{\Lambda}$ are more generally known as \textit{higher-order quantum operations}~\cite{taranto2025higherorderquantumoperations} or \textit{quantum combs}~\cite{chiribella2009theoretical}, and they can be represented as positive semidefinite operators over the joint probe systems over $T$ steps $\Bar{A}_T = A^{\otimes 2T}$ satisfying a set of certain linear constraints.
Throughout this work, we denote the positive semidefinite matrix representation of a sequential probe and a parameterized process as $\mathbf{M}$ and $\mathbf{\Lambda}_\theta$ themselves, respectively, since their matrix representation contains their complete information operationally. 
We also denote the set of all sequential probes over $T$ time steps as $\seq(\Bar{A}_T)$, which is the set of all physical sequential sensing strategies.
The complete definition of $\seq(\Bar{A}_T)$ is given in eqn.~\eqref{eqn:testers} of Appendix~\ref{app:multistep_processes}.

In this matrix representation, we can conveniently express the output probabilities and therefore the covariance matrix and MSE of a given sensing protocol.
Here we give an outline of how this is done. For a more detailed and rigorous approach, see Appendix~\ref{app:sequential_sensing}.

Since a sequential probe $\mathbf{M}$ is a positive semidefinite matrix, it can be decomposed into positive semidefinite operators $\mathbf{M}_1,\dots,\mathbf{M}_K$ as
\begin{equation}
    \mathbf{M} = \sum_{k=1}^K \mathbf{M}_k,
\end{equation}
where each $\mathbf{M}_k$ corresponds to an outcome $k$ which is observed as a result of an interaction with parameterized process $\mathbf{\Lambda}_\theta$ with probability
\begin{equation}
    p(k|\theta,\mathbf{M}) = \mathbf{\Lambda}_\theta \ast \mathbf{M}_k = \tr(\mathbf{\Lambda}_\theta^\top \mathbf{M}_k) \;.
\end{equation}
Here ``$\ast$'' is the \textit{link product}~\cite{chiribella2009theoretical,taranto2025higherorderquantumoperations}\footnote{
    In general, for operators $X\in\L(A\otimes B)$ and $Y\in\L(B\otimes C)$, their link product is given by $X\ast Y = \tr_B((X^{\top_B}\otimes I_C) (I_A\otimes Y))$, where $\tr_B$ is partial trace over space $B$ and $(\cdot)^{\top_B}$ is the partial transpose over space $B$. Note that we can equivalently write $X\ast Y = \tr_B(XY^{\top_B})$.},
which contracts some output registers of one operation with some input register of another.
Each $\mathbf{M}_k$ can be described as
\begin{equation}
    \resizebox{0.87\linewidth}{!}{
    \begin{quantikz}[
        classical gap=0.03cm,
        column sep=0.28cm,
        font=\normalsize,
        labels={font=\footnotesize}
    ]
        \bshade{2}{8}{$\mathbf{M}_k$}
        \setwiretype{n} & \lstick[2]{$\rho$} & \wire[l][1]["C"{above,pos=0}]{q} \setwiretype{b} & \gate[2,style={fill=blue!20}]{U^{(1)}} \wire[r][1]["C"{above,pos=0.6}]{q} & 
            \invgate{\dots} & \gate[2,style={fill=blue!20}]{U^{(T-1)}} &
            \wire[l][1]["C"{above,pos=0}]{q} & \gate[2,style={fill=blue!20}]{\Pi_k} \\
        \setwiretype{n} & & \invgate{} \wire[l][1]["{A}"{above,pos=0.45}]{q} \setwiretype{q} \wshade{1}{1}{} & \wire[l][1]["{A}"{above,pos=0.45}]{q} \wire[r][1]["{A}"{above,pos=0.45}]{q} & 
            \invgate{\dots} \wshade{1}{1}{} & &
            \invgate{} \wire[l][1]["{A}"{above,pos=0.6}]{q} \wire[r][1]["{A}"{above,pos=0.45}]{q} \wshade{1}{1}{} &
    \end{quantikz} } ,
\end{equation}
corresponding to outcome $k$ from the final POVM $\Pi$.

The $i,j$-th entry of the covariance matrix for a sensing protocol with sequential probe $\mathbf{M}$ and estimator $\hat{\theta}$ is given by $\sum_{k=1}^K (\hat{\theta}_{i,k}-\theta_i)(\hat{\theta}_{j,k}-\theta_j) p(k|\theta,\mathbf{M})$.
Thus we can write the covariance matrix for this sensing protocol as
\begin{equation}
    \cov_\theta(\mathbf{M},\hat{\theta}) = \sum_{i,j=1}^m |i\>\<j| \otimes \sum_k \xi_{i,k}\xi_{j,k} \mathbf{\Lambda}_\theta \ast \mathbf{M}_k
\end{equation}
for $\xi_{i,k}=\hat{\theta}_{i,k}-\theta_i$, and therefore its MSE is given by
\begin{equation}
    \mse_\theta(\mathbf{M},\hat{\theta}) = \sum_{i=1}^m \sum_k \xi_{i,k}^2 \mathbf{\Lambda}_\theta \ast \mathbf{M}_k \;.
\end{equation}

\section{Precision bound for sequential quantum sensing}\label{sec:sequential_bound}

We will now show our main result, which provides a lower bound to the MSE for any sequential quantum sensing scenario, thus imposing a lower bound on any physically implementable strategy.
For a given $T$-time-step sequential parameter estimation task, our lower bound is efficiently computable by a semidefinite program (SDP) in the size of the probe system dimension $D$ and number of parameters $m$, that is, the SDP outputs the MSE bound in time polynomial in $D$ and $m$ (see Appendix~\ref{app:SDP_runtime_complexity}).
However the SDP runtime is exponential in the number of time steps $T$, as the size of the matrix describing such strategy is $D^{2T}$.
Throughout, we will denote the process that encodes parameter $\theta=(\theta_1,\dots,\theta_m)$ that we are estimating simply as $\mathbf{\Lambda}_\theta$.

\begin{theorem}[{Sequential Bound}]\label{thm:sequential_NH_bound}
    For any quantum process $\mathbf{\Lambda}_\theta$ parameterized by $\theta=(\theta_1,\dots,\theta_m)$ over $T$ time steps and a sensing protocol with sequential probe $\mathbf{M}'$ and locally-unbiased estimator $\hat{\theta}$, it holds that 
    \begin{equation}
        \mse_\theta(\mathbf{M}',\hat{\theta}) \geq \sdp(\mathbf{\Lambda}_\theta) \;,
    \end{equation}
    where $\sdp(\mathbf{\Lambda}_\theta)$ is the value of the following semidefinite program
    \begin{equation}\label{eqn:1var_optimization_sequential_bound}
    \begin{gathered}
        \min_\mathbf{X} \tr\Big( \overline{\mathbf{\Lambda}}_\theta \mathbf{X} \Big) \\
        \text{s.t.}\\
            \mathbf{X} = \begin{bmatrix}
                \mathbf{M} & X^\dag \\
                X & \mathbf{L}
            \end{bmatrix}\geq0 \;,\quad
            \mathbf{M} \in \seq(\Bar{A}_T) \;,\\
            X = \sum_{i=1}^m |i\>\otimes \mathbf{X}_{0,i} \;,\quad
                \mathbf{X}_{0,i} = \mathbf{X}_{0,i}^\dag \;, \\
            \mathbf{L} = \sum_{i,j=1}^m |i\>\<j|\otimes \mathbf{X}_{i,j} \;,\quad
                \mathbf{X}_{i,j} = \mathbf{X}_{i,j}^\dag \;, \\
            \tr(\mathbf{\Lambda}_\theta^\top \mathbf{X}_{0,j}) = 0\;,\quad\mathrm{and}\\
                \frac{\partial}{\partial\theta_i} \tr(\mathbf{\Lambda}_\theta^\top \mathbf{X}_{0,j}) = \delta_{i,j} \quad\forall i,j\in[1,m] \;,
    \end{gathered}
    \end{equation}
    where $\overline{\mathbf{\Lambda}}_\theta = \begin{bmatrix}
        0_{\Bar{D}\times\Bar{D}} & 0_{\Bar{D}\times m\Bar{D}} \\
        0_{\Bar{D}\times m\Bar{D}} & I_m\otimes\mathbf{\Lambda}_\theta^\top
    \end{bmatrix}$ is a block matrix with $I_m\otimes\mathbf{\Lambda}_\theta^\top$ in its lower-right block and zeros everywhere else and $\seq(\Bar{A}_T)$ is the set of all sequential probes over $T$ steps (see eqn.~\eqref{eqn:testers} in Appendix~\ref{app:multistep_processes}) and $\Bar{A}_T = A^{\otimes 2T}$ denotes the joint probe systems over $T$ steps.
    In the matrix notation above, $0_{q\times p}$ is the $p\times q$ zero matrix and $\Bar{D}=D^{2T}$ is the dimension of $\Bar{A}_T$.
\end{theorem}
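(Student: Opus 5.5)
The plan is to show that every physical protocol $(\mathbf{M}',\hat{\theta})$ produces a feasible point $\mathbf{X}$ of the SDP whose objective value equals its MSE. Then the minimum of the SDP can be no larger than that MSE. This follows the standard Nagaoka--Hayashi argument, but with the tester $\mathbf{M}'$ playing the role of the POVM.

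Write $\mathbf{M}'=\sum_{k=1}^K \mathbf{M}_k$ with $\mathbf{M}_k\geq0$, and set $\xi_{i,k}=\hat{\theta}_{i,k}-\theta_i$ at the point $\theta=\theta_0$. I would define the candidate as follows:
\begin{equation}
\mathbf{M}=\mathbf{M}',\quad
\mathbf{X}_{0,i}=\sum_k \xi_{i,k}\mathbf{M}_k,\quad
\mathbf{X}_{i,j}=\sum_k \xi_{i,k}\xi_{j,k}\mathbf{M}_k .
\end{equation}
The block matrix can then be written as
\begin{equation}
\mathbf{X}=\sum_k v_kv_k^\top\otimes \mathbf{M}_k ,
\end{equation}
where $v_k=(1,\xi_{1,k},\dots,\xi_{m,k})^\top\in\R^{m+1}$. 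Each term is a tensor product of two positive semidefinite matrices, so $\mathbf{X}\geq0$. Since the $\xi$ are real and each $\mathbf{M}_k$ is Hermitian, the blocks $\mathbf{X}_{0,i}$ and $\mathbf{X}_{i,j}$ are Hermitian. The condition $\mathbf{M}\in\seq(\Bar{A}_T)$ holds because $\mathbf{M}'$ is a sequential probe.

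Next I would check the linear constraints using $p(k|\theta,\mathbf{M}')=\tr(\mathbf{\Lambda}_\theta^\top\mathbf{M}_k)$. First,
\begin{equation}
\tr(\mathbf{\Lambda}_\theta^\top\mathbf{X}_{0,j})=\sum_k\xi_{j,k}\,p(k|\theta)=\<\hat{\theta}_j\>-\theta_{0,j},
\end{equation}
which vanishes at $\theta_0$ by eqn.~\eqref{eqn:locally_unbiased}. Second, the derivative acts only on $\mathbf{\Lambda}_\theta$, with $\mathbf{X}_{0,j}$ held fixed as in the convention stated after eqn.~\eqref{eq:unb}. So
\begin{equation}
\partial_{\theta_i}\tr(\mathbf{\Lambda}_\theta^\top\mathbf{X}_{0,j})=\partial_{\theta_i}\<\hat{\theta}_j\>=\delta_{i,j},
\end{equation}
again by local unbiasedness. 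Finally, the objective evaluates to
\begin{equation}
\tr(\overline{\mathbf{\Lambda}}_\theta\mathbf{X})=\sum_i\tr(\mathbf{\Lambda}_\theta^\top\mathbf{X}_{i,i})=\sum_i\sum_k\xi_{i,k}^2\,p(k|\theta)=\mse_\theta(\mathbf{M}',\hat{\theta}).
\end{equation}
Since this feasible point attains the MSE, the SDP minimum is at most the MSE, which proves the theorem.

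The computations are routine. The main point needing care is bookkeeping: showing that the linear-map structure of the link product makes $p(k|\theta)$ exactly $\tr(\mathbf{\Lambda}_\theta^\top\mathbf{M}_k)$, and that the tester normalization constraints in $\seq(\Bar{A}_T)$ apply only to the $\mathbf{M}$ block while the $\mathbf{L}$ block remains unconstrained. I would also confirm that the block ordering and the transpose convention in $\overline{\mathbf{\Lambda}}_\theta$ match the ordering in $\mathbf{X}$. If the estimator $\xi$ is defined relative to $\theta$ rather than $\theta_0$, the statement should be read as evaluated at $\theta=\theta_0$.
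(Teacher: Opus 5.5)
Your proof is correct, and it takes a more direct route than the paper. The paper builds the same candidate point from the protocol, but it certifies $\mathbf{X}\geq0$ indirectly, in three steps:
\begin{enumerate}
\item It writes $\mathbf{L}_\theta=\Xi\,\mathrm{diag}(\mathbf{M}_1,\dots,\mathbf{M}_K)\,\Xi^\top$ and $X_\theta=\Xi[\mathbf{M}_1;\dots;\mathbf{M}_K]$.
\item It proves $\mathrm{diag}(\mathbf{M}_k)\geq[\mathbf{M}_k]_k\mathbf{M}^+[\mathbf{M}_k]_k^\dag$ (Lemma~\ref{lem:diagonal_outer_product_block_choi_matrix_inequality}), which gives $\mathbf{L}_\theta\geq X_\theta\mathbf{M}^+X_\theta^\dag$.
\item It invokes the generalized Schur-complement criterion (Lemma~\ref{lem:block_martix_psd_iff_condition}). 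Together with the fact that each $X_\theta(\mathbf{M},\hat{\theta})_i$ is supported in $\supp(\mathbf{M})$, this yields block positivity.
\end{enumerate}

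Your identity $\mathbf{X}=\sum_k v_kv_k^\top\otimes\mathbf{M}_k$ removes the need for the pseudoinverse and both lemmas. The paper in fact uses this same identity later, in the proof of Proposition~\ref{prop:single_par_achievability}, to check feasibility of its optimal point.

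Your argument also avoids the paper's detour through an optimal pair $(\mathbf{M}_\opt,\hat{\theta}_\opt)$. That detour is unnecessary, and it presupposes that an optimal strategy exists. Showing that every protocol yields a feasible point whose objective equals its MSE is already enough.

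What the paper's route buys is the explicit intermediate inequality $\mathbf{L}\geq X\mathbf{M}^+X^\dag$. This isolates exactly what the relaxation discards, and the paper relies on it afterwards in two places:
\begin{itemize}
\item to explain why saturation for $m>1$ requires commuting effective estimator observables $Y_i$;
\item via the Schur complement, to reduce the fixed-$\mathbf{M}$ single-parameter problem to $\min_Y\tr(\rho_\theta^{\mathbf{M}}Y^2)$.
\end{itemize}

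One small bookkeeping point: both of your linear-constraint computations use $\sum_k p(k|\theta)=\tr(\mathbf{\Lambda}_\theta^\top\mathbf{M}')=1$ identically in $\theta$. This is what makes $\theta_{0,j}$ drop out of both the value and its derivative, so you should state it explicitly.
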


The complete proof of Theorem~\ref{thm:sequential_NH_bound} is given in Appendix~\ref{app:sequential_bound_proof}.
Before giving a proof outline, we first provide some remarks regarding how $\sdp(\mathbf{\Lambda}_\theta)$ in Theorem~\ref{thm:sequential_NH_bound} is to be interpreted.
The matrix $\mathbf{X}$ is an $m+1\times m+1$ block matrix, where its $(i,j)$-th block $\mathbf{X}_{i,j}$ is an operator over the joint probe systems over $T$ steps $\Bar{A}_T = A^{\otimes 2T}$.
When the SDP gives an attainable MSE, i.e. $\sdp(\mathbf{\Lambda}_\theta) = \mathcal{C}_\mathrm{opt}$ (see eqn.~\eqref{eq:genoptimisation}), the operators in the blocks of $\mathbf{X}$ describe the entire optimal valid strategy that achieves this MSE: an initial probe state, intermediate control, final measurement, and estimator.
Its upper-left block $\mathbf{M}\in\seq(\Bar{A}_T)$ is a $T$-step tester (see Section~\ref{sec:sequential_strategies_combs}) describing the optimal initial probe state, intermediate control, and final measurement.
Each of the remaining $m$ blocks $\mathbf{X}_{0,1},\dots,\mathbf{X}_{0,m}$ in the first block-column $X$ of $\mathbf{X}$ describe the relationship between measurement outcomes from strategy $\mathbf{M}$ and estimator $\hat{\theta}=(\hat{\theta}_1,\dots,\hat{\theta}_m)$.
On the other hand, the lower-right $m\times m$ block matrix $\mathbf{L}$ describes the correlation between each pair of estimators $\hat{\theta}_1,\dots,\hat{\theta}_m$.
The positive-semidefinite constraint on $\mathbf{X}$ guarantees that the value obtained by the SDP lower-bounds any optimal strategy.
Lastly, the bottom two trace constraints ensure local-unbiasedness (see eqn.~\eqref{eqn:locally_unbiased}).

Note however, that the SDP is not necessarily a tight bound. In instances where it is not tight, the blocks of $\mathbf{X}$ may not describe a valid strategy.
{We will come back to this issue in Section~\ref{sec:single_par_tight_bound} and show that $\sdp(\mathbf{\Lambda}_\theta)$ is tight, i.e. $\sdp(\mathbf{\Lambda}_\theta)=\mathcal{C}_\mathrm{opt}$, for every regular finite-dimensional single-parameter sensing problem admitting an optimal tester, and that a valid locally optimal strategy can be constructed from an SDP optimizer.}

Now we will outline the proof of Theorem~\ref{thm:sequential_NH_bound}.
Recall from Section~\ref{sec:sequential_strategies_combs} that the probability of observing outcome $k$ is given by $p(k|\theta,\mathbf{M}) = \mathbf{\Lambda}_\theta \ast \mathbf{M}_k = \tr(\mathbf{\Lambda}_\theta^\top \mathbf{M}_k)$, so that we can express the covariance matrix for a sequential probe $\mathbf{M}=\sum_{k=1}^K\mathbf{M}_k$ as
\begin{equation}
\begin{aligned}
    \cov_\theta(\mathbf{M},\hat{\theta}) &= \sum_{i,j=1}^m |i\>\<j| \sum_k \xi_{i,k}\xi_{j,k} \tr(\mathbf{\Lambda}_\theta^\top \mathbf{M}_k)
\end{aligned}
\end{equation}
for $\xi_{i,k}=\hat{\theta}_{i,k}-\theta_i$ and its MSE as
\begin{equation}
\begin{aligned}
    \mse_\theta(\mathbf{M},\hat{\theta}) &= \sum_{i=1}^m \sum_k \xi_{i,k}^2 \tr(\mathbf{\Lambda}_\theta^\top \mathbf{M}_k) \;.
\end{aligned}
\end{equation}
On the other hand, the estimator $\hat{\theta} = (\hat{\theta}_1,\dots,\hat{\theta}_m)$ can be expressed as
\begin{equation}
   \mathbb{E}[ \hat{\theta}_j ]-\theta_j= \sum_{k} \xi_{j,k} \tr(\mathbf{\Lambda}_\theta^\top \mathbf{M}_k) \;,
\end{equation}
which gives the unbiasedness conditions from eqn.~\eqref{eqn:locally_unbiased} as
\begin{equation}\label{eqn:unbiased_estimator1}
\begin{gathered}
    \sum_k \xi_{j,k} \tr(\mathbf{\Lambda}_\theta^\top \mathbf{M}_k) = 0 
    \\\text{and}\\
    \frac{\partial}{\partial\theta_i} \sum_{k} \xi_{j,k} \tr(\mathbf{\Lambda}_\theta^\top \mathbf{M}_k) = \delta_{i,j} \;.
\end{gathered}
\end{equation}

Let us now consider an $m$-dimensional real vector space $V$ with basis $\{|i\>\}_{i=1}^m$ and matrix $\mathbf{L}_\theta(\mathbf{M},\hat{\theta})$ over $V\otimes\Bar{A}_T$, given by
\begin{equation}\label{eqn:L_matrix1}
\begin{aligned}
    \mathbf{L}_\theta(\mathbf{M},\hat{\theta}) &= \sum_{i,j=1}^m |i\>\<j|_V \otimes \mathbf{L}_\theta(\mathbf{M},\hat{\theta})_{i,j},
\end{aligned}
\end{equation}
where $\mathbf{L}_\theta(\mathbf{M},\hat{\theta})_{i,j} = \sum_k \xi_{i,k} \xi_{j,k} \mathbf{M}_k$ is Hermitian.
Thus by denoting $\Tilde{\mathbf{\Lambda}}_\theta = I_m \otimes \mathbf{\Lambda}_\theta^\top$, where $I_m$ is the $m\times m$ identity matrix, we can write the 
MSE as
\begin{equation}
\begin{aligned}
    \mse_\theta (\mathbf{M},\hat{\theta}) &= \tr\Big( \Tilde{\mathbf{\Lambda}}_\theta \mathbf{L}_\theta(\mathbf{M},\hat{\theta}) \Big) \;.
\end{aligned}
\end{equation}

Let us now consider \textit{estimator matrices} $X_\theta(\mathbf{M},\hat{\theta})_1,\dots,X_\theta(\mathbf{M},\hat{\theta})_m$, given by Hermitian matrices
\begin{equation}\label{eqn:estimator_matrix_elements1}
\begin{aligned}
    X_\theta(\mathbf{M},\hat{\theta})_i = \sum_k \xi_{i,k} \mathbf{M}_k
\end{aligned}
\end{equation}
and a block matrix of these estimator matrices
\begin{equation}\label{eqn:estimator_matrix1}
    X_\theta(\mathbf{M},\hat{\theta}) = \sum_{i=1}^m |i\>_V \otimes X_\theta(\mathbf{M},\hat{\theta})_i \;. 
\end{equation}
Block matrices $\mathbf{L}_\theta(\mathbf{M},\hat{\theta})$ and $X_\theta(\mathbf{M},\hat{\theta})$ satisfy the matrix inequality (see eqn.~\eqref{eqn:LX_inequality2} and Lemma~\ref{lem:diagonal_outer_product_block_choi_matrix_inequality})
\begin{equation}\label{eqn:LX_inequality3}
    \mathbf{L}_\theta(\mathbf{M},\hat{\theta}) \geq X_\theta(\mathbf{M},\hat{\theta}) \mathbf{M}^+ X_\theta(\mathbf{M},\hat{\theta})^\dag \;,
\end{equation}
where $\mathbf{M}^+$ is the pseudoinverse of $\mathbf{M}$.
We note that this inequality holds due to the fact that tester $\mathbf{M}$ decomposes as $\mathbf{M}=\sum_k\mathbf{M}_k$, where $\mathbf{M}_k$ is a positive-semidefinite matrix corresponding to outcome $k$ that came up in the summands of $\mathbf{L}_\theta(\mathbf{M},\hat{\theta})_{i,j}$ and $X_\theta(\mathbf{M},\hat{\theta})_i$.
Shortly, we will discuss how this inequality determines whether $\sdp(\mathbf{\Lambda}_\theta)$ is tight.

By denoting $\mathbf{M}_\opt,\hat{\theta}_\opt$ as an optimal sequential probe and optimal estimator, we can lower-bound the MSE as
\begin{equation}\label{eqn:optimization_sequential_bound_main}
\begin{aligned}
    \mse_\theta(\mathbf{M},\hat{\theta}) &\geq \mse_\theta(\mathbf{M}_\opt,\hat{\theta}_\opt) \\
    &= \tr\big(\Tilde{\mathbf{\Lambda}}_\theta \mathbf{L}_\theta(\mathbf{M}_\opt,\hat{\theta}_\opt)\big) \\
    &\geq \min_{\mathbf{L}} \tr( \tilde{\mathbf{\Lambda}}_\theta \mathbf{L} ) \,,
\end{aligned}
\end{equation}
by minimizing over all matrices $\mathbf{L} = \sum_{i,j=1}^m |i\>\<j|\otimes \mathbf{X}_{i,j}$ satisfying
\begin{equation}\label{eqn:intermediate_sdp_constraints}
\begin{gathered}
    \mathbf{X}_{i,j} = \mathbf{X}_{j,i} \;,\quad \mathbf{X}_{i,j}=\mathbf{X}_{i,j}^\dag \;, \\
    \mathbf{L} \geq X_\theta(\mathbf{M}_\opt,\hat{\theta}_\opt) \mathbf{M}_\opt^+ X_\theta(\mathbf{M}_\opt,\hat{\theta}_\opt)^\dag \;.
\end{gathered}
\end{equation}
Note that the support of each estimator matrix $X_\theta(\mathbf{M}_\opt,\hat{\theta}_\opt)_i$ is contained in the support of $\mathbf{M}_\opt$ by definition (see eqn~\eqref{eqn:estimator_matrix_elements1}).
This allows us to rewrite the constraints in eqn.~\eqref{eqn:intermediate_sdp_constraints} above as
\begin{equation}
\begin{gathered}
    \mathbf{X}_{i,j} = \mathbf{X}_{j,i} \;,\quad \mathbf{X}_{i,j}=\mathbf{X}_{i,j}^\dag \;, \\
        \mathbf{X} = \left[\begin{matrix}
        \mathbf{M}_\opt & X_\theta(\mathbf{M}_\opt,\hat{\theta}_\opt)^\dag \\
        X_\theta(\mathbf{M}_\opt,\hat{\theta}_\opt) & \mathbf{L}
    \end{matrix}\right] \geq 0 \;,
\end{gathered}
\end{equation}
since the support of each estimator matrix $X_\theta(\mathbf{M}_\opt,\hat{\theta}_\opt)_i$ is contained in the support of $\mathbf{M}_\opt$ and the matrix inequality in eqn.~\eqref{eqn:LX_inequality3} is equivalent to the block matrix
$\mathbf{X}$ being positive semidefinite (see Lemma~\ref{lem:block_martix_psd_iff_condition} and surrounding text).

Thus by simply constraining the block matrix
\begin{equation}
    \mathbf{X} = \left[\begin{matrix}
        \mathbf{M} & X^\dag \\
        X & \mathbf{L}
    \end{matrix}\right]
\end{equation}
to be positive semidefinite, we can guarantee that the support of blocks $\{\mathbf{X}_{0,i}\}_{i=1}^m$ of $X$ is in the support of $\mathbf{M}$ and that matrix inequality $\mathbf{L}\geq X\mathbf{M}^+X^\dag$ is satisfied.
Then we can further minimize over all such block matrices $\mathbf{X}$, constrain the upper-left block to be a valid sequential probe and require the other sub-blocks to be Hermitian to obtain 
\begin{equation}\label{eqn:optimization_sequential_bound3}
\begin{aligned}
    \mse_\theta(\mathbf{M},\hat{\theta}) 
    &\geq \min_{\mathbf{L}} \tr( \tilde{\mathbf{\Lambda}}_\theta \mathbf{L} )
    &\geq 
    \min_{X,\mathbf{L}} \tr\Big( \Tilde{\mathbf{\Lambda}}_\theta \mathbf{L} \Big) \;,
\end{aligned}
\end{equation}
where the minimization is over matrices $X$ and $\mathbf{L}$ satisfying the following constraints:
\begin{equation}\label{eqn:X_optimization_constraints}
\begin{gathered}
    \mathbf{X} = \begin{bmatrix}
        \mathbf{M} & X^\dag \\
        X & \mathbf{L}
    \end{bmatrix}\geq0 \;,\quad
    \mathbf{M} \in \seq(\Bar{A}_T) \;,\\
    X = \sum_{i=1}^m |i\>\otimes \mathbf{X}_{0,i} \;,\quad
        \mathbf{X}_{0,i} = \mathbf{X}_{0,i}^\dag \;, \\
    \mathbf{L} = \sum_{i,j=1}^m |i\>\<j|\otimes \mathbf{X}_{i,j} \;,\quad
        \mathbf{X}_{i,j} = \mathbf{X}_{i,j}^\dag \;, \\
    \tr(\mathbf{\Lambda}_\theta^\top \mathbf{X}_{0,j}) = 0 \;,\quad\text{and}\\
    \frac{\partial}{\partial\theta_i} \tr(\mathbf{\Lambda}_\theta^\top \mathbf{X}_{0,j}) = \delta_{i,j} \quad\forall i,j\in[1,m] \;.
\end{gathered}
\end{equation}
Further by denoting $\overline{\mathbf{\Lambda}}_\theta = \begin{bmatrix}
    0 & 0 \\
    0 & I_m\otimes\mathbf{\Lambda}_\theta^\top
\end{bmatrix}$, we can rewrite the minimization on the right-hand side to obtain
\begin{equation}
\begin{aligned}
    \mse_\theta(\mathbf{M},\hat{\theta}) 
    &\geq
    \min_\mathbf{X} \tr\Big( \overline{\mathbf{\Lambda}}_\theta \mathbf{X} \Big) \;,
\end{aligned}
\end{equation}
which is the form given in Theorem~\ref{thm:sequential_NH_bound}.

\section{MSE optimality conditions}\label{sec:MSE_optimality_conditions}

In this section, we give conditions for determining when the sequential MSE bound in Theorem~\ref{thm:sequential_NH_bound} is attainable and when Heisenberg scaling can be achieved.
First, in Section~\ref{sec:single_par_tight_bound}, we show that the bound
is locally attainable for every regular finite-dimensional single-parameter problem admitting an optimal tester and
explain how an optimal sensing strategy can be recovered from an SDP
optimizer.
We then review, in Section~\ref{subsec:HNKS}, the Hamiltonian-not-in-Kraus-span condition proposed in~\cite{zhou2021asymptotic}, which characterizes Heisenberg scaling for identical and independent uses of a known single-parameter
channel.
Finally, in Section~\ref{subsec:finite_latent_hs}, Proposition~\ref{prop:finite_latent_hs} gives a sufficient
condition for lifting a Heisenberg-limited known-noise strategy to a
temporally correlated process governed by an unknown finite latent noise
label.

\subsection{Single-parameter sequential bound tightness and optimal strategy}\label{sec:single_par_tight_bound}

The constraints of the SDP lower bound in Theorem~\ref{thm:sequential_NH_bound} imply that an optimizer $\mathbf{X}$ must be of the form
\begin{equation}
    \mathbf{X} = \begin{bmatrix}
        \mathbf{M} & X^\dag \\
        X & \mathbf{L}
    \end{bmatrix}
\end{equation}
with $\mathbf{M}\in\seq(\Bar{A}_T)$ and blocks of $X$ satisfying the local-unbiasedness estimator conditions $\tr(\mathbf{\Lambda}_\theta^\top \mathbf{X}_{0,j})=0$ and $\frac{\partial}{\partial\theta_i} \tr(\mathbf{\Lambda}_\theta^\top \mathbf{X}_{0,j}) = \delta_{i,j}$.
Whenever the optimal value $\sdp(\mathbf{\Lambda}_\theta)$ is equal to $\min_{\mathbf{M},\hat{\theta}}\mse_\theta(\mathbf{M},\hat{\theta})$, if we can express the $\mathbf{M}$ and $X$ blocks of optimizer $\mathbf{X}$ as
\begin{equation}
    \mathbf{M} = \sum_{k=1}^K \mathbf{M}_k
    \quad\text{and}\quad
    \mathbf{X}_{0,i} = \sum_{k=1}^K \xi_{i,k}\mathbf{M}_k \;,\forall i\in[m]
\end{equation}
for some positive semidefinite operators $\mathbf{M}_1,\dots,\mathbf{M}_K$, then we can obtain an optimal strategy given by the estimator $\hat{\theta}$ and sequential probe $\mathbf{M}$.

For estimating parameters of a single copy of a channel $\Lambda_\theta$, our SDP bound in Theorem~\ref{thm:sequential_NH_bound} reduces to the Nagaoka-Hayashi bound (see Appendix~\ref{app:NH_bound}).
For a regular, locally identifiable single-parameter state model, the Nagaoka--Hayashi bound reduces to the SLD Cram\'er--Rao bound and is locally attainable~\cite{Braunstein_1994,conlon2025role}.
For multi-time processes, quantum Fisher information is obtained by maximizing the output-state SLD quantum Fisher information over all admissible strategies~\cite{Yang_2019,Altherr_2021,liu2023optimal,kurdzialek2025quantum}.
The following proposition shows that the single-parameter specialization of our sequential Nagaoka--Hayashi SDP reduces exactly to the reciprocal of this established quantity.

\begin{proposition}[Single-parameter attainability]\label{prop:single_par_achievability}
    Let \(m=1\). Assume finite dimension, regularity\footnote{
        \atchg{Here regularity means that the output-state model for each fixed strategy is differentiable at $\theta$ and admits a Hermitian SLD satisfying eqn.~\eqref{eqn:single_par_SLD_definition} with finite quantum Fisher information as defined in eqn.~\eqref{eqn:single_par_SLD_QFI_definition}.}
    }, and existence of an
    optimal tester. 
    Then the SDP bound given in Theorem~\ref{thm:sequential_NH_bound} satisfies
    \begin{equation}\label{eqn:single_par_SDP_fisher_info}
    \begin{gathered}
        \sdp(\mathbf{\Lambda}_\theta)
        =
        \frac{1}{\mathcal{J}_{\seq}(\mathbf{\Lambda}_\theta)} \\
        \text{for}\quad
        \mathcal{J}_{\seq}(\mathbf{\Lambda}_\theta) := \max_{\mathbf{M}\in \mathrm{seq}(\Bar{A}_T)}
        \mathcal{J}_\text{Q}(\rho_\theta^\mathbf{M}) \;,
    \end{gathered}
    \end{equation}
    where $\mathcal{J}_\text{Q}(\rho_\theta^\mathbf{M})$ is the single parameter channel QFI of quantum state $\rho_\theta^\mathbf{M} := \mathbf{M}^{1/2}\mathbf{\Lambda}_\theta^\top \mathbf{M}^{1/2}$ immediately before the final measurement.
    Moreover, the bound is locally attainable.
\end{proposition}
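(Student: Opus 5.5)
The strategy is to show two inequalities, $\sdp(\mathbf{\Lambda}_\theta)\geq 1/\mathcal{J}_{\seq}$ and $\sdp(\mathbf{\Lambda}_\theta)\leq 1/\mathcal{J}_{\seq}$. Attainability then follows from the second direction by constructing an explicit strategy. Throughout I fix $m=1$ and write the feasible block matrix as $\mathbf{X}=\begin{bmatrix}\mathbf{M} & X_0\\ X_0 & \mathbf{L}\end{bmatrix}$, where $X_0=\mathbf{X}_{0,1}$ and $\mathbf{L}=\mathbf{X}_{1,1}$.

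\emph{Lower bound.} I fix a feasible $\mathbf{X}$ and first minimize over $\mathbf{L}$ with $\mathbf{M}$ and $X_0$ held fixed. By the Schur-complement fact already used in the proof of Theorem~\ref{thm:sequential_NH_bound} (Lemma~\ref{lem:block_martix_psd_iff_condition}), we have $\mathbf{L}\geq X_0\mathbf{M}^+X_0$ and $\supp X_0\subseteq\supp\mathbf{M}$. Since $\mathbf{\Lambda}_\theta^\top\geq0$, this gives $\tr(\mathbf{\Lambda}_\theta^\top\mathbf{L})\geq\tr(\mathbf{\Lambda}_\theta^\top X_0\mathbf{M}^+X_0)$.

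Next I set $Y:=\mathbf{M}^{+1/2}X_0\mathbf{M}^{+1/2}$, which is a Hermitian operator on $\supp\mathbf{M}$ with $X_0=\mathbf{M}^{1/2}Y\mathbf{M}^{1/2}$. Using $\rho_\theta^{\mathbf{M}}=\mathbf{M}^{1/2}\mathbf{\Lambda}_\theta^\top\mathbf{M}^{1/2}$, the objective becomes $\tr(\rho^{\mathbf{M}}_\theta Y^2)$. The constraints become $\tr(\rho^{\mathbf{M}}_\theta Y)=0$ and $\tr(\partial_\theta\rho^{\mathbf{M}}_\theta Y)=1$, because $\mathbf{M}$ does not depend on $\theta$.

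Using the SLD $L_\theta$, defined by $\partial_\theta\rho=\tfrac12\{\rho,L_\theta\}$, the second constraint reads $\mathrm{Re}\,\tr(\rho L_\theta Y)=1$. Cauchy--Schwarz in the inner product $\langle A,B\rangle=\mathrm{Re}\,\tr(\rho AB)$ then gives $1\leq\tr(\rho Y^2)\,\mathcal{J}_\text{Q}(\rho^{\mathbf{M}}_\theta)$. Hence the objective is at least $1/\mathcal{J}_\text{Q}(\rho^{\mathbf{M}}_\theta)\geq1/\mathcal{J}_{\seq}$.

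I must make sure the SLD is taken for the normalized state. Since the tester normalization gives $\tr\rho^{\mathbf{M}}_\theta=1$ for every $\theta$, this is consistent with the paper's definition of $\rho_\theta^{\mathbf{M}}$.

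\emph{Upper bound and attainability.} I take an optimal tester $\mathbf{M}^\star$, which exists by assumption, and set $\rho:=\rho^{\mathbf{M}^\star}_\theta$ and $J:=\mathcal{J}_\text{Q}(\rho)$. I then choose $Y=L_\theta/J$ and define $X_0=\mathbf{M}^{\star1/2}Y\mathbf{M}^{\star1/2}$ and $\mathbf{L}=X_0\mathbf{M}^{\star+}X_0$. This choice satisfies both constraints: $\tr(\rho L_\theta)=0$, and $\tr(\partial_\theta\rho\,L_\theta)/J=1$. The resulting block matrix is PSD, and its value is $\tr(\rho L_\theta^2)/J^2=1/J$.

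For attainability, I spectrally decompose $Y=\sum_k y_k P_k$ and set $\mathbf{M}_k:=\mathbf{M}^{\star1/2}P_k\mathbf{M}^{\star1/2}$ and $\xi_k:=y_k$. These satisfy $\sum_k\mathbf{M}_k=\mathbf{M}^\star$ and $\sum_k\xi_k\mathbf{M}_k=X_0$, and the MSE of this strategy equals $\sum_k y_k^2\tr(\rho P_k)=1/J$. Operationally, this corresponds to measuring the SLD eigenbasis on the output state, with estimator $\theta+y_k$. By construction these $\mathbf{M}_k$ form a valid tester decomposition, so this is exactly the recipe in Section~\ref{sec:single_par_tight_bound}.

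\emph{Main obstacle.} The delicate step is handling supports in the lower bound. The SLD is only defined on $\supp\rho$, which may be strictly smaller than $\supp\mathbf{M}$, and $\partial_\theta\rho$ may have components off $\supp\rho$. The regularity assumption (a Hermitian SLD with finite QFI) is exactly what makes $\partial_\theta\rho=\tfrac12\{\rho,L_\theta\}$ hold globally. I would invoke it to justify the Cauchy--Schwarz step without restricting $Y$ to $\supp\rho$, checking that the components of $Y$ outside $\supp\rho$ contribute nothing to the constraint and only nonnegatively to the objective.
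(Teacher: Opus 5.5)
Your proposal is correct and follows the paper's proof essentially step for step. The shared steps are: the Schur-complement reduction to $\mathbf{L}=X\mathbf{M}^+X$; the substitution $Y=\sqrt{\mathbf{M}^+}X\sqrt{\mathbf{M}^+}$, which turns the fixed-tester problem into $\min\tr(\rho_\theta^{\mathbf{M}}Y^2)$ under the two unbiasedness constraints; Cauchy--Schwarz against the SLD; the choice $Y=S/\mathcal{J}_{\seq}$ at the optimal tester; and the outcome operators $\mathbf{M}^{1/2}P_k\mathbf{M}^{1/2}$ built from the SLD spectral projectors.

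On your support worry, the SLD identity holds on the whole space. For a positive family one has $P_0\dot{\rho}_\theta^{\mathbf{M}}P_0=0$, with $P_0$ the projector onto $\ker\rho_\theta^{\mathbf{M}}$, so Cauchy--Schwarz applies directly to every Hermitian $Y$. Your remark that components of $Y$ outside $\supp\rho_\theta^{\mathbf{M}}$ contribute nothing to the constraint is imprecise. The off-diagonal blocks of $Y$ between $\supp\rho_\theta^{\mathbf{M}}$ and its kernel do enter the derivative constraint; only the kernel--kernel block drops out of both constraint and objective. Cauchy--Schwarz absorbs this without any separate argument.
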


A proof of Proposition~\ref{prop:single_par_achievability} is given in Appendix~\ref{app:single_par_achievability}.
In particular, a key reason for the single-parameter attainability in Proposition~\ref{prop:single_par_achievability} is that there exists a tester $\mathbf{M}_\opt$ and an estimator $\hat{\theta}_\opt$ to construct matrices $X$ and $\mathbf{L}$ satisfying the constraints in eqn.~\eqref{eqn:X_optimization_constraints},
such that the matrix inequality in eqn.~\eqref{eqn:LX_inequality3} is saturated. 
Such matrices are constructed in the proof in Appendix~\ref{app:single_par_achievability}. 
For $m>1$, on the other hand, saturation would require, for every $i,j\in[1,m]$,
\begin{equation}
    \mathbf{L}_\theta(\mathbf{M},\hat{\theta})_{i,j}
    =
    X_\theta(\mathbf{M},\hat{\theta})_i
    \mathbf{M}^+
    X_\theta(\mathbf{M},\hat{\theta})_j \;.
\end{equation}
Since $\mathbf{L}_\theta(\mathbf{M},\hat{\theta})_{i,j} = \mathbf{L}_\theta(\mathbf{M},\hat{\theta})_{j,i}$, this requires the effective estimator observables $Y_i:=\sqrt{\mathbf{M}^+}X_\theta(\mathbf{M},\hat{\theta})_i\sqrt{\mathbf{M}^+}$ to satisfy a compatibility condition $[Y_i,Y_j]=0$ on $\supp(\mathbf{M})$ for every $i,j$. 
This compatibility condition is automatic when $m=1$, but need not hold for multiparameter estimation, where the estimator observables associated with different parameters may be incompatible. 
Consequently, eqn.~\eqref{eqn:LX_inequality3} need not be saturable in the multiparameter case, and the SDP bound in Theorem~\ref{thm:sequential_NH_bound} need not be attainable.

Single-parameter metrology of quantum combs is conventionally characterized by maximizing the SLD quantum Fisher information of the output state over admissible probe combs \cite{Yang_2019,Altherr_2021,liu2023optimal,kurdzialek2025quantum}.
Proposition~\ref{prop:single_par_achievability} shows that the single-parameter specialization of our sequential Nagaoka--Hayashi SDP is exactly the reciprocal of this established sequential quantum Fisher information $\mathcal{J}_{\mathrm{seq}}$.
The examples in the next section and Appendix~\ref{app:additional_examples} explicitly realize the SLD measurement and estimator from the proof.
For a single-parameter estimation task, an explicit optimal strategy can be obtained from an SDP optimizer $\mathbf{X}$ as follows.
The spectral projectors of the SLD of $\rho_\theta^{\mathbf{M}}$ determine the tester outcome operators $\{\mathbf{M}_k\}_k$, while the rescaled SLD eigenvalues determine the locally unbiased estimator values, as shown in Appendix~\ref{app:single_par_achievability}.
The upper-left tester block $\mathbf{M}$ is then decomposed into an initial probe state and a sequence of intermediate isometries, with the SLD projectors implemented by the final measurement, using the decomposition in Appendix~\ref{app:tester_decomposition}.

From such operators $\{\mathbf{M}_k\}_k$ we can then obtain an explicit form of sequential probe $\mathbf{M}$ corresponding to an optimal strategy:
\begin{equation}
    \resizebox{0.95\linewidth}{!}{
    \begin{quantikz}[
        classical gap=0.03cm,
        column sep=0.43cm,
        font=\small,
        labels={font=\footnotesize}
    ]
        \;\; \bshade{2}{9}{$\mathbf{M}_k$} & \setwiretype{n} \lstick[2]{$|\psi\>$} & \wire[l][1]["C_1"{above,pos=0}]{q} \setwiretype{q} & \gate[2,style={fill=blue!20}]{V^{(1)}} & \wire[l][1]["C_2"{above,pos=0}]{q} \setwiretype{b} & \invgate{\dots} & \gate[2,style={fill=blue!20}]{V^{(T-1)}} & \wire[l][1]["C_T"{above,pos=0}]{q} & \gate[2,style={fill=blue!20}]{M_k'} \\
        & \setwiretype{n} & \invgate{} \wire[l][1]["{A_0}"{above,pos=0.45}]{q} \wshade{1}{1}{} & \wire[l][1]["{A_1}"{above,pos=0.45}]{q} \wire[r][1]["{A_1'}"{above,pos=0.45}]{q} & \invgate{} \wshade{1}{2}{} & \invgate{\dots} & \wire[l][1]["{A_{T-1}}"{above,pos=0.7}]{q} \wire[r][1]["{A_{T-1}'}"{above,pos=0.7}]{q} & \invgate{} \wire[r][1]["{A_T}"{above,pos=0.45}]{q} \wshade{1}{1}{} & 
    \end{quantikz} 
    }
\end{equation}
for some initial pure state $|\psi\>$ over $C_1\otimes A_0$, isometries $V^{(1)},\dots,V^{(T-1)}$, and a final POVM measurement $\{M_k'\}_k$ over $C_T\otimes A_T$.
Here we explicitly label the probe system $A$ at the input and output of the channel at time step $t$ as $A_{t-1}'$ and $A_t$, respectively, i.e. $A_t\cong A_t'\cong A$ for all $t$, where $A_0'=A_0$.
For details on how such a form is obtained for any $\mathbf{M}\in\seq(\Bar{A}_T)$, see Appendix~\ref{app:tester_decomposition}.

\subsection{HNKS}\label{subsec:HNKS}

Here we review the Hamiltonian-not-in-Kraus-span (HNKS) condition introduced in Ref.~\cite{zhou2021asymptotic}, which characterizes the achievability of Heisenberg scaling for repeated independent uses of a single-parameter quantum channel.
We will use this condition to analyze the examples in Section~\ref{sec:examples} regarding their Heisenberg limit achievability.

Consider a differentiable single-parameter quantum channel 
\begin{equation} 
    \mathcal{C}_\theta(\rho) = \sum_{a=1}^{r} K_{a,\theta}\rho K_{a,\theta}^{\dagger}, 
\end{equation} 
with Kraus operators \(\{K_{a,\theta}\}_{a=1}^{r}\). 

We define its \emph{channel Hamiltonian}, or effective parameter generator, by 
\begin{equation} 
    H_{\rm K}(\mathcal{C}_\theta) := \iota \sum_{a=1}^{r} K_{a,\theta}^{\dagger} \frac{\partial K_{a,\theta}}{\partial\theta}, 
\label{eqn:channel_hamiltonian_HNKS} 
\end{equation} 
and its \emph{Hermitian Kraus span} by 
\begin{equation} 
    \mathcal{S}_{\rm K}(\mathcal{C}_\theta) := \left\{ \sum_{a,b=1}^{r} h_{a,b} K_{a,\theta}^{\dagger}K_{b,\theta} : h=h^\dagger \right\}. 
    \label{eqn:kraus_span_HNKS} 
\end{equation} 
The operator in eqn.~\eqref{eqn:channel_hamiltonian_HNKS} is Hermitian as a consequence of \(\sum_aK_{a,\theta}^{\dagger}K_{a,\theta}=I\). 

Although the channel Hamiltonian depends on the choice of Kraus representation, changing the Kraus representation changes it only by an element of the Kraus span, so membership in \(\mathcal{S}_{\rm K}(\mathcal{C}_\theta)\) is representation independent~\cite{zhou2021asymptotic}.
The derivation is given in Appendix~\ref{app:kraus_representation_independence}.

The \emph{Hamiltonian-not-in-Kraus-span} (HNKS) condition\footnote{
HNKS is the discrete-channel analogue of HNLS.
For an infinitesimal Lindblad time step, the products of the short-time Kraus operators span $\operatorname{span}_{\C}\{I,L_k,L_k^\dagger,L_k^\dagger L_j\}$ to the relevant orders in $\sqrt{dt}$, while $H_{\rm K}=h_0\,dt+O(dt^2)$.
Hence eqn.~\eqref{eqn:HNKS_condition} reduces to HNLS as $dt\rightarrow0$.
Retaining these leading orders gives the HNLS condition for Lindblad sensing with arbitrarily fast control~\cite{zhou2018achieving,zhou2021asymptotic}.
This correspondence does not assert that the exact channel for every fixed $dt>0$ has the same Kraus span as its leading-order approximation.
Also, this limiting correspondence does not extend the HNKS theorem to an arbitrary temporally correlated multi-time process.
} 
is defined as
\begin{equation} 
    H_{\rm K}(\mathcal{C}_\theta) \notin \mathcal{S}_{\rm K}(\mathcal{C}_\theta). 
    \label{eqn:HNKS_condition} 
\end{equation} 
HNKS applies when the sensing task consists of \(N\) identical and independent uses of \(\mathcal{C}_\theta\), with arbitrary noiseless ancillas and arbitrary control operations between the uses.
It therefore does not apply directly to a general temporally correlated process \(\mathbf{\Lambda}_\theta^{(N)}\).

For \(N\) identical and independent uses of a finite-dimensional single-parameter channel, with arbitrary noiseless ancillas and arbitrary controls, Theorem~1 of Ref.~\cite{zhou2021asymptotic} directly proves that both the optimal parallel quantum Fisher information and the optimal sequential quantum Fisher information scale as \(\Theta(N^2)\) if and only if the HNKS condition in eqn.~\eqref{eqn:HNKS_condition} is satisfied.
If HNKS fails, both quantum Fisher informations are \(O(N)\); when the channel is locally identifiable and the asymptotic linear coefficient is nonzero, this scaling is \(\Theta(N)\).
Thus no additional derivation is required for the repeated independent channel considered in that theorem.

Theorem~1 of Ref.~\cite{zhou2021asymptotic}, however, does not apply directly to an arbitrary non-Markovian \(N\)-step process \(\mathbf{\Lambda}_\theta^{(N)}\).
{For such general temporally correlated process, the Kraus-gauge optimization must retain the causal constraints on the sequential strategy.
This gives the optimal comb quantum Fisher information for fixed $T$~\cite{Altherr_2021} and can also give asymptotic upper bounds for correlated processes~\cite{Kurdzialek_2025}.
In particular, a vanishing gauge term in the latter bounds can rule out Heisenberg scaling, whereas a nonzero term does not establish its achievability.
Appendix~\ref{app:correlated_kraus_bounds} discusses the construction and its relation to the single-parameter SDP bound.}

For a noise with its temporal correlation determined by a latent variable $j$, which will be considered in the following subsection, we will show how HNKS can be used to derive a sufficient criterion for the achievability of Heisenberg scaling (Proposition~\ref{prop:finite_latent_hs}): 
HNKS is first applied after conditioning on the knowledge of the latent noise parameter, and the conditional result is then lifted to the problem of estimating $\theta$ when the latent noise label is unknown.

\subsection{Finite-latent correlated-noise criterion}
\label{subsec:finite_latent_hs}

We now consider a temporally correlated sensing process governed by a finite latent noise label \(j\).
The label is sampled once at the beginning of the protocol, independently of \(\theta\), and remains fixed throughout all sensing rounds.
Conditioned on \(j\), the experimentalist faces a known-\(j\) sensing problem; without this conditioning, the process is generally not a product of independent one-step channels.

More explicitly, let $\theta\in\R$ be the parameter to be estimated and consider a process
\begin{equation}
    \mathbf{\Lambda}_{\theta}^{(T)}
    =
    \sum_{j\in\mathcal{K}}
    p_j\,
    \mathbf{\Lambda}_{\theta|j}^{(T)},
    \label{eqn:finite_latent_process}
\end{equation}
where $\mathcal{K}$ is a finite set of latent noise labels, probability $p_j$ is independent of $\theta$, and $j$ is sampled once at the beginning of the protocol and remains fixed throughout all $T$ sensing rounds.

If \(\mathbf{\Lambda}_{\theta|j}^{(T)}\) consists of \(T\) identical and independent uses of a conditional channel \(\mathcal{C}_{\theta|j}\) satisfying HNKS, then Theorem~1 of Ref.~\cite{zhou2021asymptotic} verifies the quadratic-QFI condition in the proposition.
For a repeated independent conditional channel, HNKS is necessary and sufficient for quadratic QFI.
For a more general known-$j$ process, the quadratic-QFI condition may instead be achieved using a conditional sensing strategy using initial calibration rounds, which we will discuss shortly.
Below we define a class of admissible calibration strategies with exponentially small error, allowing noise label $j$ to be learned using only $O(\log T)$ of the $T$ available sensing rounds.

\begin{definition}[{Admissible calibration strategy}]\label{def:admissible_calibration}
    For the process in eqn.~\eqref{eqn:finite_latent_process}, a family of calibration protocols and sensing continuations is
    \emph{admissible} at $\theta$ if there exist an open
    neighborhood $\mathcal{U}\subset\R$ of $\theta$,
    a compact interval $\theta\in\mathcal{I}\subset\mathcal{U}$,
    constants $A,c>0$, and integers $n_0,N_1$ such that
    the following hold for every $\theta'\in\mathcal{U}$,
    $n\geq n_0$, $N\geq N_1$, and $j$ with $p_j>0$.

    \begin{enumerate}
        \item
        \label{item:admissible_calibration_discrimination}
        \atchg{Calibration uses $n$ rounds and reports
        $\hat j$ with error probability
        $P_{\rm err}(n;\theta')$, averaged over
        $\{p_j\}_j$, satisfying}
        \begin{equation}
            P_{\rm err}(n;\theta')
            \leq A e^{-cn}
            \qquad
            \forall\theta'\in\mathcal{U}.
        \end{equation}

        \item
        \label{item:admissible_calibration_continuation}
        Conditioned on $j$, calibration admits a fresh $N$-round continuation with process $\mathbf{\Lambda}_{\theta'|j}^{(N)}$ independent of the calibration outcomes.

        At $\theta$, the known-$j$ problem satisfies the hypotheses of Proposition~\ref{prop:single_par_achievability} and admits an optimal locally unbiased estimator with all outcome values in $\mathcal{I}$.\footnote{
            \atchg{For each reported label, the selected strategy, measurement settings, and estimator values are fixed at the point $\theta$ as $\theta'$ varies.}
            \atchg{The requirement that all estimator values belong to $\mathcal{I}$ includes outcomes that have zero probability when $j$ is known but may occur when calibration reports an incorrect label.}
        }

        \item
        \label{item:admissible_calibration_encoding}
        Each sensing round of the complete $T=n+N$ protocol uses exactly one application of
        \begin{equation}
            U_{\theta'}=e^{-\iota\theta'G},
        \end{equation}
        where $G$ is a fixed, bounded, Hermitian operator that is independent of $T$.
        Conditioned on $j$, the initial joint state and all other operations are independent of $\theta'$.
    \end{enumerate}
\end{definition}

Now we state sufficient conditions for Heisenberg-limit achievability using the strategy with admissible calibration above.

\begin{proposition}[Finite-latent correlated-noise criterion]\label{prop:finite_latent_hs}
    {Suppose that the process in
    eqn.~\eqref{eqn:finite_latent_process} admits an admissible
    calibration scheme at point $\theta$ in the sense of
    Definition~\ref{def:admissible_calibration},
    with neighborhood $\mathcal{U}$.}
    For each $j$ with $p_j>0$, define the optimal known-$j$
    sequential quantum Fisher information by
    \begin{equation}
        \mathcal{J}_j(N;\theta')
        :=
        \mathcal{J}_{\seq}
        \left(
            \mathbf{\Lambda}_{\theta'|j}^{(N)}
        \right),
    \end{equation}
    for $\mathcal{J}_{\seq}$ as defined in
    eqn.~\eqref{eqn:single_par_SDP_fisher_info}.
    
    {Suppose that there exist positive constants
    $\{\kappa_j\}_{j:p_j>0}$ and an integer $N_0$,
    independent of $N$ and $\theta'$, such that for
    every $N\geq N_0$,}
    \begin{equation}
        \mathcal{J}_j(N;\theta')
        \geq \kappa_jN^2
        \qquad
        \forall\theta'\in\mathcal{U},
        \label{eqn:finite_latent_quadratic_qfi}
    \end{equation}
    where $\kappa_j>0$ whenever $p_j>0$.
    
    Then the unknown-$j$ sensing problem admits an  estimation strategy locally unbiased at $\theta$ with Heisenberg-scaling MSE,
    \begin{equation}
        \mse_\theta=O(T^{-2}).
    \end{equation}
\end{proposition}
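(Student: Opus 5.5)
We build an explicit two-stage protocol and analyse its mean squared error directly.
Fix $T$ and choose $n=\lceil \alpha\log T\rceil$ calibration rounds and $N=T-n$ sensing rounds, with $\alpha>2/c$ so that $A e^{-cn}\le A T^{-2}$. For $T$ large we also have $n\ge n_0$ and $N\ge\max(N_0,N_1)$.

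In the first stage, run the admissible calibration (item~\ref{item:admissible_calibration_discrimination} of Definition~\ref{def:admissible_calibration}) to obtain $\hat j$. In the second stage, run on the fresh $N$-round continuation (item~\ref{item:admissible_calibration_continuation}) the optimal known-$\hat j$ strategy given by Proposition~\ref{prop:single_par_achievability} at the point $\theta$. That proposition supplies a tester and a locally unbiased estimator $\hat\theta_{\hat j}$. Under the correct label $j$, this estimator has MSE $1/\mathcal{J}_j(N;\theta)\le 1/(\kappa_j N^2)$ at $\theta$. The final estimate is $\hat\theta_{\hat j}$, possibly corrected as described in the third paragraph.

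Next, we analyse the MSE at $\theta$ by conditioning on $j$ and on whether $\hat j=j$.
\begin{itemize}
\item On the event $\hat j=j$, the conditional MSE is at most $1/(\kappa_j N^2)$. Averaging over $p_j$ gives at most $\max_{j:p_j>0}\kappa_j^{-1}N^{-2}=O(T^{-2})$, since $N=T-O(\log T)$ and the label set is finite.
\item On the error event, all estimator values lie in the compact interval $\mathcal{I}$ by item~\ref{item:admissible_calibration_continuation}. Hence $(\hat\theta-\theta)^2\le|\mathcal{I}|^2$, and this contribution is at most $|\mathcal{I}|^2 A e^{-cn}=O(T^{-2})$.
\end{itemize}
So $\mse_\theta=O(T^{-2})$, provided the estimator is locally unbiased at $\theta$ for the unknown-$j$ process.

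The main obstacle is local unbiasedness. Write $f(\theta')=\mathbb{E}_{\theta'}[\hat\theta]$. On the correct-label branch, the contribution is $\sum_j p_j P(\hat j=j|j,\theta')\,\mathbb{E}_{\theta'|j}[\hat\theta_j]$. The inner expectation has value $\theta$ and derivative $1$ at $\theta$. However, the calibration success probability depends on $\theta'$, and the wrong-label branch contributes a bias and a derivative error.
\begin{itemize}
\item The bias is $O(e^{-cn})$, because values lie in $\mathcal{I}$ and the error probability is at most $Ae^{-cn}$ uniformly on $\mathcal{U}$.
\item For the derivatives, item~\ref{item:admissible_calibration_encoding} says each round contains exactly one $U_{\theta'}=e^{-\iota\theta' G}$ with bounded $G$, so all outcome probabilities are analytic in $\theta'$. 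The derivative of any outcome probability is bounded by $2T\|G\|$ times the probability mass it perturbs. By Cauchy--Schwarz this can be refined to $O(T\sqrt{P_{\rm err}})$ on the wrong-label events.
\item Each $p(\cdot)$ is analytic with uniform bound on $\mathcal{U}$, and the error probability is at most $Ae^{-cn}$ uniformly on $\mathcal{U}$. A Cauchy-estimate argument on a complex neighbourhood, or the per-round generator bound, then gives derivative errors of order $T e^{-cn/2}$. Taking $\alpha$ large enough makes this $O(T^{-1})$ or smaller.
\end{itemize}
Hence $f(\theta)=\theta+\epsilon_0$ and $f'(\theta)=1+\epsilon_1$ with $\epsilon_0,\epsilon_1=O(T^{-2})$. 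Replace $\hat\theta$ by the affine correction $\tilde\theta=\theta+(\hat\theta-\theta-\epsilon_0)/(1+\epsilon_1)$, which is exactly locally unbiased at $\theta$. Its MSE differs from that of $\hat\theta$ by a factor $1+O(T^{-2})$ plus an additive $O(\epsilon_0^2)$, so $\mse_\theta(\tilde\theta)=O(T^{-2})$.

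This derivative bound is the step I expect to be most delicate. If the crude $2T\|G\|$ bound is insufficient, I would first try the Cauchy--Schwarz bound $|\partial_{\theta'}P(E)|\le 2\sqrt{P(E)}\,\sqrt{\mathcal{J}}$ with total QFI $\mathcal{J}=O(T^2)$.
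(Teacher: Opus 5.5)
Your proposal is correct and is essentially the paper's proof. Both use $O(\log T)$ calibration rounds, the optimal known-$\hat j$ continuation with all estimator values in $\mathcal{I}$, a Cauchy--Schwarz bound on the wrong-calibration contribution to $f_T'(\theta)$ via the generator bound $g^2T^2$ on the joint (label-and-record) Fisher information, and the affine correction $\theta+(\hat\theta-f_T(\theta))/f_T'(\theta)$.

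The differences are minor. The paper centres the correct-calibration term at $\theta$, so the derivative of the success probability drops out. It also needs only $f_T'(\theta)\to1$, and gets $O(T^{-1/2})$ with $n=\lceil(3/c)\log T\rceil$, so your claim $\epsilon_1=O(T^{-2})$ is unnecessary. You should rely on the Cauchy--Schwarz route alone, for two reasons:
\begin{itemize}
\item Your derivative bound that is linear in the probability mass is false as stated. The correct scaling is $\sqrt{P}$.
\item The complex Cauchy-estimate alternative is unjustified. Real-line bounds on $P_{\rm err}$ do not control the analytic continuation, which can grow like $e^{gT|\mathrm{Im}\,\theta'|}$.
\end{itemize}
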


The proof of Proposition~\ref{prop:finite_latent_hs} is given in Appendix~\ref{app:proof_finite_latent_hs}.

Proposition~\ref{prop:finite_latent_hs} transfers Heisenberg scaling from the known-$j$ sensing problems to the original problem with an unknown noise label.
An $O(\log T)$-round calibration selects which known-$j$ sensing strategy to use, leaving $T-O(\log T)$ rounds for estimating $\theta$.
The exponentially small identification error and bounded estimator values make the contribution from an incorrect label negligible compared with $T^{-2}$.
The operational encoding condition allows local unbiasedness to be restored while preserving this scaling.

\section{Noisy sensing MSE Bounds and optimal strategies}\label{sec:examples}

In this section, we demonstrate the SDP bound given in Theorem~\ref{thm:sequential_NH_bound} on a few different sequential sensing problems and derive an optimal protocol for cases which the bound is attainable.
The first sequential sensing problem (Section~\ref{sec:correlated_noise_z_rot}) involves estimating the angle of a single-qubit Pauli-$Z$ rotation channel under different temporally correlated noise models.
For its finite-latent noise models, we verify both admissible calibration and the conditional-QFI hypothesis, given in Section~\ref{subsec:finite_latent_hs}.
Second, in Section~\ref{sec:sensing_correlated_random_noise_direction}, we consider the problem of estimating the same Pauli-$Z$ rotation under temporally correlated noise with unknown noise directions.
Third, in Section~\ref{sec:example_multiparameter_rotations_depolarizing}, we consider single- and multi- parameter sequential sensing problem under independent depolarizing noise, which extends the single-step results in~\cite[Section III.A]{Conlon_2023}.
Lastly, in Section~\ref{sec:parseq2}, we consider the same sensing problem and compare a fully sequential strategy with a strategy that probes two channels in parallel at every step; 
for the single-parameter Pauli-$Z$ model, the attainable fully sequential optimal MSE is numerically smaller, whereas for the two-parameter Pauli-$YX$ model we compare only the corresponding SDP lower bounds, at the same total number of channel uses.
Additional examples are given in Appendix~\ref{app:additional_examples}.

We note that single-parameter metrology for non-Markovian processes has previously been formulated in terms of the comb quantum Fisher information, with exact small-step semidefinite programs and persistent-environment collision-model examples~\cite{Altherr_2021}.
More recently, scalable and systematically improvable upper bounds on the adaptive quantum Fisher information were derived for correlated noise, including correlated dephasing~\cite{Kurdzialek_2025}.
The models we consider here are different: a Pauli label or equatorial noise direction is sampled once and remains fixed throughout the protocol.  

\subsection{Pauli-$Z$ rotations with correlated noise}\label{sec:correlated_noise_z_rot}

Here we consider sequential Pauli-\(Z\) sensing under two temporally correlated noise models.
We write $R_z(\theta):=R_{z,\theta}=e^{-\iota\theta\sigma_z/2}$ and $\mathcal{R}_{z,\theta}(\rho):=R_z(\theta)\rho R_z(\theta)^\dagger$, and set $\sigma_0:=I$.
In both models, a finite latent Pauli label \(j\) is sampled once at the
beginning of the protocol and remains fixed throughout all \(T\) sensing
rounds.
In the strongly correlated model, conditioning on \(j\) fixes the same
unitary Pauli error at every round.
In the weakly correlated model, conditioning on \(j\) fixes the repeated
one-step Pauli noise channel, while the error-or-no-error event within
that channel remains stochastic at each round.

Neither the HNLS condition of Section~\ref{subsec:timedependent} nor the
HNKS condition of Section~\ref{subsec:HNKS} should be applied to the
one-step marginal obtained by averaging over \(j\), because this
marginalization discards the temporal memory.
Instead, we can apply HNKS to the conditional strongly correlated channel and to the conditional weakly correlated channel, where Proposition~\ref{prop:finite_latent_hs} then lifts the resulting known-\(j\) Heisenberg scaling to the original unknown-\(j\) process, once the remaining admissible calibration requirements in Section~\ref{subsec:finite_latent_hs} have been verified.

For comparison, non-Markovian extensions of HNLS have also been studied
directly.
In particular, Ref.~\cite{Mann_2025} derives the
Hamiltonian-not-in-extended-span condition and the weaker
Hamiltonian-not-in-extended-Lindblad-span condition.
These direct non-Markovian criteria are not required for the
finite-latent reduction used here.

First, let us consider sequential estimation of $T$ copies of a Pauli-$Z$ rotation under a strongly correlated noise.
We consider a correlated Pauli noise model parameterized by $p=(p_x,p_y,p_z)$ where $p_j$ is the probability of error $\sigma_j$ occurring after \textit{every} channel being probed.
We denote $p_0=1-p_x-p_y-p_z$ as the probability of no errors.
Because the noise is correlated, we must model all $T$ noisy  copies of the Pauli-$Z$ rotation as a single process $\mathbf{\Lambda}_\theta^\mathrm{st}$, which can be illustrated as
\begin{equation}
\begin{aligned}
    &\mathbf{\Lambda}_\theta^\mathrm{st} = \sum_{j\in\{0,x,y,z\}} p_j \kketbra{\sigma_jR_{z,\theta}}^{\otimes T} \\
    &\;= \sum_j p_j
    \resizebox{0.74\linewidth}{!}{
    \begin{quantikz}[
        classical gap=0.03cm,
        column sep=0.28cm,
        font=\normalsize,
        labels={font=\footnotesize}
    ]
        & \gate[1,style={fill=green!20}]{R_{z,\theta}} \wire[l][1]["{A_0}"{above,pos=0.6}]{q} \gshade{1}{2}{} & \gate[style={fill=red!20}]{\sigma_j} \wire[r][1]["{A_1}"{above,pos=0.55}]{q} & \invgate{\dots} & \gate[1,style={fill=green!20}]{R_{z,\theta}} \wire[l][1]["{A_{T-1}'}"{above,pos=0.7}]{q} \gshade{1}{2}{} & \gate[style={fill=red!20}]{\sigma_j} \wire[r][1]["{A_T}"{above,pos=0.55}]{q} \wireoverride{q} & 
    \end{quantikz} 
    } .
\end{aligned}
\end{equation}

\begin{figure}
    \centering
    \includegraphics[width=1\columnwidth]{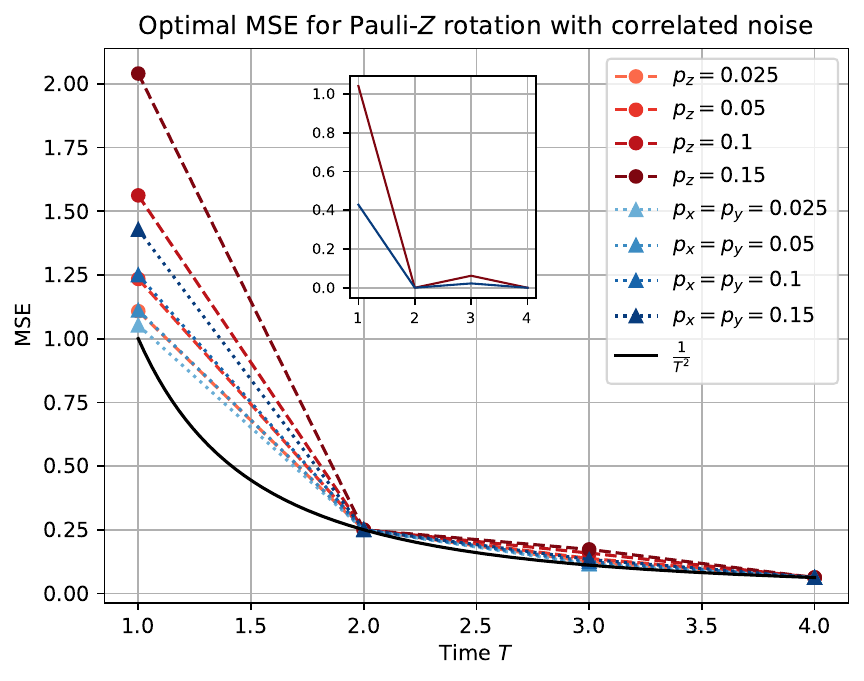}
    \caption{Optimal mean-squared error (MSE) computed from the SDP in Theorem~\ref{thm:sequential_NH_bound} for sequential Pauli-$Z$ rotation angle estimation with strongly correlated Pauli noise.
    The $x$-axis shows the number of time steps, where a single use of the channel counts as a single time step, whereas the $y$-axis shows the MSE.
    The colored lines show the optimal MSEs for different types of correlated noise on the $Z$-rotation channels at different probabilities.
    Correlated Pauli $X,Y,Z$ error probabilities are shown in the legend as $p_x,p_y,p_z$, respectively.
    Red lines are MSEs in the case of noise with non-zero $p_z$ error rate (with $p_x = p_y = 0$) and blue lines are lower bounds in the case of noise with non-zero $p_x=p_y$ error rate (with $p_z = 0$).
    Inset plot shows the difference between the computed MSEs and the Heisenberg limit $1/T^2$ for $T\in\{1,2,3,4\}$ channel uses.
    For the even values $T=2,4$ shown here, the optimal MSE equals
    the exact noiseless value $1/T^2$.
    Optimality of the computed MSE follows from Proposition~\ref{prop:single_par_achievability}
    since there is only one parameter (the Pauli-$Z$ rotation angle).
    }
    \label{fig:MSE_corr_noise_Z_rot}
\end{figure}

For both non-zero $p_z$ error rate (with $p_x = p_y = 0$) and non-zero $p_x=p_y$ error rate (with $p_z = 0$), as shown in Fig.~\ref{fig:MSE_corr_noise_Z_rot} up to $T=4$ steps, the SDP given in Theorem~\ref{thm:sequential_NH_bound} gives us a bound of $\frac{1}{T^2}$ at the tested even values $T=2,4$, matching the exact noiseless MSE.
Since this is a single-parameter problem, Proposition~\ref{prop:single_par_achievability} shows that these SDP values are attainable and optimal, rather than merely lower bounds on the MSE.
If the latent Pauli label $j$ is discarded and the process is replaced by a fictitious one-step memoryless marginal channel, the corresponding Markovian HNLS condition may be violated.
For example, when both \(X\)- and \(Y\)-type jump operators occur in the same memoryless model, the Lindblad span contains \(X Y=\iota Z\).
However, this does not diagnose the correlated process \(\mathbf{\Lambda}_\theta^{\rm st}\), because in the actual process a single value of \(j\) is sampled once.
Conditioned on \(j\), the noise is a known unitary and can be inverted without suppressing the Pauli-\(Z\) signal.

This model admits a particularly simple interpretation. 
A single Pauli label $j\in\{0,x,y,z\}$ is sampled once at the beginning of the protocol and then remains fixed throughout all $T$ sensing rounds. 
Consequently, conditioned on $j$, the noise process is deterministic rather than stochastic. 
The sensing problem therefore reduces to first identifying the latent Pauli label, followed by applying the corresponding recovery operation after every subsequent sensing round.
We describe this more precisely below.

\begin{atenv}
Conditioned on the latent label $j\in\{0,x,y,z\}$, the repeated one-step channel is
\begin{equation}
    \mathcal{C}_{\theta|j}^{\rm st}(\rho)
    =
    \sigma_jR_z(\theta)\rho R_z(\theta)^\dagger\sigma_j,
\end{equation}
with the single Kraus operator
\(K_{j,\theta}^{\rm st}=\sigma_jR_z(\theta)\).
Consequently,
\begin{equation}
\begin{aligned}
    H_{\rm K}
    \left(
        \mathcal{C}_{\theta|j}^{\rm st}
    \right)
    &=
    \iota
    (K_{j,\theta}^{\rm st})^\dagger
    \frac{\partial K_{j,\theta}^{\rm st}}{\partial\theta}
    =
    \frac{\sigma_z}{2},\\
    \mathcal{S}_{\rm K}
    \left(
        \mathcal{C}_{\theta|j}^{\rm st}
    \right)
    &=
    \operatorname{span}_{\R}\{I\}.
\end{aligned}
\label{eqn:strong_conditional_HNKS}
\end{equation}
Thus every conditional branch satisfies HNKS (c.f. eqn.~\eqref{eqn:HNKS_condition}).

Moreover, the normalized overlaps of the Kraus operators of the one-step channel $\mathcal{C}_{\theta|j}^{\rm st}$ satisfy
\begin{equation}
\begin{aligned}
    \frac{1}{2}
    \tr\left(
        (K_{j,\theta}^{\rm st})^\dagger
        K_{k,\theta}^{\rm st}
    \right)
    &=
    \frac{1}{2}\tr(\sigma_j\sigma_k)
    =
    \delta_{j,k}.
\end{aligned}
\end{equation}
\atchg{Let $C_1$ be a noiseless ancilla qubit, and consider the following states with $C_1$ as the first tensor factor and the probe register $A_0$ as the second:}
\begin{equation}\label{eqn:bell_basis}
    |\Phi^+\>
    :=
    \frac{|00\>+|11\>}{\sqrt{2}}\in C_1\otimes A_0,
    \quad
    |\Phi_j\>
    :=
    (I\otimes\sigma_j)|\Phi^+\>.
\end{equation}
To compensate the nominal rotation at the local operating point $\theta_0$, apply $R_{z,\theta_0}^\dagger$ to the probe half of $|\Phi^+\>$ immediately before the unknown channel.
For an actual parameter value $\theta'$, the resulting normalized conditional Choi state is
\begin{equation}
    |\Phi_{j,\theta'}\>
    :=
    \left(
        I\otimes
        \sigma_j
        R_z(\theta')
        R_z(\theta_0)^\dagger
    \right)
    |\Phi^+\>.
\end{equation}
At $\theta'=\theta_0$, this reduces to $|\Phi_j\>$.
The four states are orthonormal because
\begin{equation}
    \<\Phi_j|\Phi_k\>
    =
    \frac{1}{2}
    \tr(\sigma_j^\dagger\sigma_k)
    =
    \delta_{j,k}.
\end{equation}
A measurement in the Bell basis $\{|\Phi_j\>\}_j$ therefore identifies $j$ exactly in one calibration round at $\theta_0$.

For $\delta:=\theta'-\theta_0$, the Bell measurement returns the true label $j$ with probability
\begin{equation}
    r(\delta)
    =
    \cos^2\left(\frac{\delta}{2}\right)
\end{equation}
and returns only its paired label, with the pairings $0\leftrightarrow z$ and $x\leftrightarrow y$, with probability $1-r(\delta)$.
Choose
\begin{equation}
    \mathcal{U}
    :=
    \left(
        \theta_0-\frac{\pi}{3},
        \theta_0+\frac{\pi}{3}
    \right).
\end{equation}
Then $r(\delta)\geq3/4$ uniformly for $\theta'\in\mathcal{U}$.

\atchg{Use a fresh probe--ancilla pair in every calibration round; the $n$ outcomes are independent conditioned on the fixed label $j$.}
\atchg{Infer $j$ by majority vote within the observed pair $\{0,z\}$ or $\{x,y\}$, breaking ties by a fixed deterministic rule.}
Hoeffding's inequality gives
\begin{equation}
\begin{aligned}
    P_{\rm err}(n;\theta')
    &\leq
    \exp
    \left(
        -\frac{n}{2}
        \left(
            2r(\delta)-1
        \right)^2
    \right)\\
    &=
    \exp
    \left(
        -\frac{n}{2}
        \cos^2(\delta)
    \right)
    \leq
    e^{-n/8}
\end{aligned}
\end{equation}
uniformly over $\theta'\in\mathcal{U}$.
Thus the calibration-error bound in Definition~\ref{def:admissible_calibration} holds with
\begin{equation}
    A=1,
    \qquad
    c=\frac{1}{8},
    \qquad
    n_0=1.
\end{equation}

For the quadratic-QFI condition in eqn.~\eqref{eqn:finite_latent_quadratic_qfi}, when $j$ is known, applying $\sigma_j$ after every channel use gives
\begin{equation}
    \sigma_j
    \left(
        \sigma_jR_z(\theta')
    \right)
    =
    R_z(\theta').
\end{equation}
The remaining $N$ rounds are therefore exactly noiseless Pauli-$Z$ sensing and satisfy
\begin{equation}
    \mathcal{J}_j(N;\theta')
    =
    N^2
\end{equation}
for every $j$ and every $\theta'\in\mathcal{U}$.
\atchg{Optimality follows from eqn.~\eqref{eqn:finite_latent_generator_bound}: the largest and smallest eigenvalues of $G=\sigma_z/2$ differ by one, so any $N$-round strategy satisfies $\mathcal{J}_j(N;\theta')\leq N^2$.}
Hence eqn.~\eqref{eqn:finite_latent_quadratic_qfi} holds with
\begin{equation}
    \kappa_j=1
    \quad
    \text{for every $j$ with $p_j>0$},
    \qquad
    N_0=1.
\end{equation}
\begin{atenv}
The remaining requirements of Definition~\ref{def:admissible_calibration} also hold with $N_1=1$.
After the correct label has been identified, the final logical SLD measurement has locally unbiased estimator values $\theta_0\pm1/N$ and MSE $1/N^2$.
These values belong to the fixed interval $\mathcal{I}=[\theta_0-1,\theta_0+1]\subset\mathcal{U}$.
\atchg{The measurement and estimator are fixed at $\theta_0$.
The same estimator values are assigned when calibration is incorrect; any completion outcome outside the logical subspace is assigned the value $\theta_0$.
Thus all possible estimator values belong to $\mathcal{I}$, including those on incorrectly identified branches.}

The operational encoding condition holds with $U_{\theta'}=R_z(\theta')$ and $G=\sigma_z/2$: the latent distribution and Pauli errors are independent of $\theta'$, and the nominal compensation, controls, and measurements are fixed as $\theta'$ varies.

Conditioned on the fixed label, fresh preparation after calibration gives the required concatenation property.
The same operational encoding condition holds for the weakly correlated model because each $\mathcal{E}_j$ is independent of $\theta'$.
For $p_z=0$, Appendix~\ref{app:weak_pauli_calibration} verifies the remaining admissibility requirements and gives the complete logical measurement.
\end{atenv}
The assumptions of Proposition~\ref{prop:finite_latent_hs} are therefore satisfied, which gives
\begin{equation}
    \mse_\theta = O(T^{-2}).
\end{equation}

Proposition~\ref{prop:finite_latent_hs} establishes this asymptotic bound for the full sequence of sufficiently large values of $T$ using $O(\log T)$ calibration rounds, but it does not determine the exact finite-$T$ coefficient.
By contrast, the SDP calculation in Fig.~\ref{fig:MSE_corr_noise_Z_rot} gives the exact noiseless value $1/T^2$ at the even values $T=2,4$ considered numerically.
These equalities are stronger finite-$T$ statements, but the numerical values $T=2,4$ alone do not establish exact equality for every even $T$.
\end{atenv}

For the Pauli-$Z$ rotation estimation under strongly-correlated noise over $T=2$ time steps described above, we can obtain an optimal strategy from the matrices $\mathbf{X}_{0,1}$ and $\mathbf{M}$ obtained from the SDP.
In particular, the the optimal strategy $\mathbf{M}$ can be decomposed as
\begin{equation}\label{eqn:zrot_correlated_noise_strategy}
    \begin{quantikz}[classical gap=0.05cm]
        \;\; \setwiretype{n} \bshade{2}{6}{$\mathbf{M}$} & \lstick[2]{$|\psi\>$} & \setwiretype{q} & \gate[2,style={fill=blue!20}]{V} \wire[l][1]["{C_1}"{above,pos=0.6}]{q} & & \gate[2,style={fill=blue!20}]{\mathcal{M}} \wire[l][1]["{C_2}"{above,pos=0.6}]{q} \\
        \setwiretype{n} & & \invgate{} \wire[l][1]["{A_0}"{above,pos=0.4}]{q} \wire[r][1]["{A_1}"{above,pos=0.4}]{q} \wshade{1}{1}{} \setwiretype{q} & & \invgate{} \wire[l][1]["{A_1'}"{above,pos=0.4}]{q} \wire[r][1]["{A_2}"{above,pos=0.4}]{q} \wshade{1}{1}{} & 
    \end{quantikz}  \;,
\end{equation}
for initial probe state $|\psi\> = (|00\>+|11\>)/\sqrt{2}\in C_1\otimes A_0$ and isometry $V:C_1\otimes A_1\rightarrow C_2\otimes A_1'$ given by
\begin{equation}
    V =
    \begin{quantikz}
        & \gate[1,style={blue!05,fill=blue!05}]{|0\>} \setwiretype{n} \gategroup[4,steps=3,style={dashed, fill=blue!05, inner xsep=2pt},background,label style={label position=above,anchor=north,yshift=0.3cm}]{{\sc $V$}} & \setwiretype{q} & \targ{} & \rstick[3]{$C_2$} \\
        \lstick[1]{$C_1$} & & & & \\
        \lstick[1]{$A_1$} & & \ctrl{1} & \ctrl{-2} & \\
        & \gate[1,style={blue!05,fill=blue!05}]{|0\>} \setwiretype{n} & \targ{} \setwiretype{q} & & \rstick[1]{$A_1'$}
    \end{quantikz}
    \;,
\end{equation}
which is identical to the one we obtained for Pauli-$Z$ rotation with independent noise in Section~\ref{sec:indep_noise_zrot}.
The final measurement $\mathcal{M}$ is defined by POVM $\{\kketbra{M_k}\}_{k=1}^4$ for $|M_1\rr = \frac{|0^4\>-i|1^4\>}{\sqrt{2}}$, $|M_2\rr = \frac{|0^4\>+i|1^4\>}{\sqrt{2}}$, $|M_3\rr = \frac{|0101\>-i|1010\>}{\sqrt{2}}$, and $|M_4\rr = \frac{|0101\>+i|1010\>}{\sqrt{2}}$.
In Appendix~\ref{app:correlated_noise_z_rot}, we show in detail how this strategy is obtained and derive the measurement output probabilities.

\begin{figure*}[t]
    \centering
    \begin{subfigure}[t]{0.34\linewidth}
        \centering
        \includegraphics[
            width=\linewidth,
            trim={0bp 0bp 304.49bp 28bp},clip
        ]{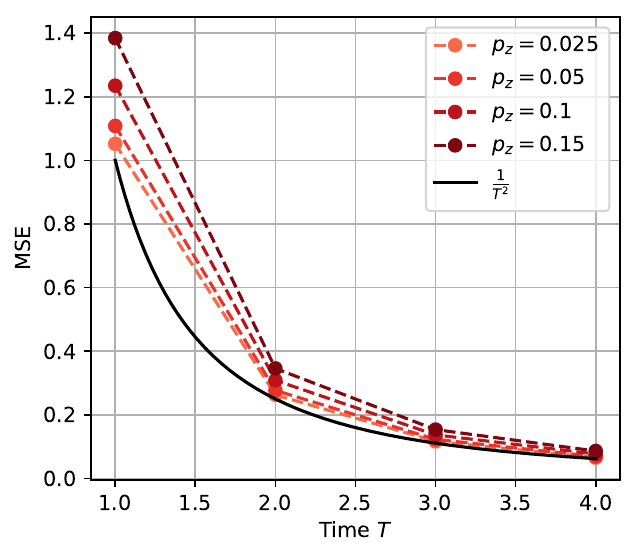}
        \caption{}
        \label{fig:MSE_weakly_correlated_z}
    \end{subfigure}\hspace{0.02\linewidth}%
    \begin{subfigure}[t]{0.34\linewidth}
        \centering
        \includegraphics[
            width=\linewidth,
            trim={304.49bp 0bp 0bp 28bp},clip
        ]{Figures/z-wkcorr_plot.pdf}
        \caption{}
        \label{fig:MSE_weakly_correlated_xy}
    \end{subfigure}
    \caption{
    Optimal mean-squared error (MSE) computed from the SDP in Theorem~\ref{thm:sequential_NH_bound} for sequential Pauli-$Z$ rotation angle estimation with weakly correlated Pauli noise.
    The $x$-axis shows the number of time steps, where a single use of the channel counts as a single time step, whereas the $y$-axis shows the MSE.
    In panel~(\subref{fig:MSE_weakly_correlated_z}), the correlated noise applies Pauli noise channel $\mathcal{E}_z$ at every step with probability $p_z$.
    In panel~(\subref{fig:MSE_weakly_correlated_xy}), the correlated noise applies Pauli noise channel $\mathcal{E}_x$ or $\mathcal{E}_y$ at every step with probability $p_x,p_y$, respectively.
    In both cases, as the noise strength \(p\) gets smaller, the MSE
    approaches the noiseless benchmark \(1/T^2\) shown by the solid black
    line. The finite-\(T\) separation from this benchmark does not by itself
    determine the asymptotic scaling exponent.
    Optimality of the computed MSE follows from Proposition~\ref{prop:single_par_achievability} since there is only one parameter (the Pauli-$Z$ rotation angle).
    }
    \label{fig:MSE_weakly_correlated}
\end{figure*}

We now consider a weakly correlated noise channel that with probability $p_j$, applies a single-qubit Pauli-$j$ channel $\mathcal{E}_j$ for $j\in\{x,y,z\}$ after all Pauli-$Z$ rotations.
Such noise channels are given by
\begin{equation}
    \mathcal{E}_j(\rho) = \frac{1}{2} (\rho + \sigma_j\rho\sigma_j) \;.
\end{equation}
We also define \(\mathcal{E}_0:=\operatorname{id}\) and \(p_0:=1-p_x-p_y-p_z\), so the latent label takes values in \(\{0,x,y,z\}\).
Thus, the entire process $\mathbf{\Lambda}_\theta^\mathrm{wk}$ can be illustrated as:
\begin{equation}
\begin{aligned}
    \sum_{j\in\{0,x,y,z\}}p_j
    \resizebox{0.74\linewidth}{!}{
    \begin{quantikz}[
        classical gap=0.03cm,
        column sep=0.28cm,
        font=\normalsize,
        labels={font=\footnotesize}
    ]
        & \gate[1,style={fill=green!20}]{R_{z,\theta}} \wire[l][1]["{A_0}"{above,pos=0.6}]{q} \gshade{1}{2}{} & \gate[style={fill=red!20}]{\mathcal{E}_j} \wire[r][1]["{A_1}"{above,pos=0.5}]{q} & \invgate{\dots} & \gate[1,style={fill=green!20}]{R_{z,\theta}} \wire[l][1]["{A_{T-1}'}"{above,pos=0.73}]{q} \gshade{1}{2}{} & \gate[style={fill=red!20}]{\mathcal{E}_j} \wire[r][1]["{A_T}"{above,pos=0.55}]{q} \wireoverride{q} & 
    \end{quantikz} 
    } .
\end{aligned}
\end{equation}
As opposed to the previous process $\mathbf{\Lambda}_\theta^\mathrm{st}$ with strongly-correlated noise, the noise in $\mathbf{\Lambda}_\theta^\mathrm{wk}$ is weaker in the sense that instead of the Pauli error $\sigma_j$ occurring at every step with probability $p_j$, the sequence of errors over the $T$ steps consists of a quantum channel after every Pauli-$Z$ rotation that applies either a Pauli error $\sigma_j$ or no error.
Taking $T=4$ as an example, the error sequence $\sigma_x,\sigma_0,\sigma_0,\sigma_x$ occurs with probability $p_x2^{-4}$, but $\sigma_z,\sigma_0,\sigma_0,\sigma_x$ cannot occur.

Panels~(\subref{fig:MSE_weakly_correlated_z}) and~(\subref{fig:MSE_weakly_correlated_xy}) of Fig.~\ref{fig:MSE_weakly_correlated} only establish that the finite-\(T\) optimal MSE lies above the noiseless value \(1/T^2\) for the parameters considered.
Whether the asymptotic \(T^{-2}\) scaling survives is instead determined below by applying HNKS (Section~\ref{subsec:HNKS}) conditionally on the persistent Pauli label.
This observation does not by itself imply the loss of Heisenberg scaling: an MSE of \(c/T^2\) with \(c>1\) depending only on error probabilities $p_x,p_y,p_z$ remains Heisenberg scaling.

\begin{atenv}
The conditional channel \(\mathcal E_j\circ\mathcal R_{z,\theta}\) has Kraus operators
\begin{equation}
    K_{0,j}(\theta)
    =
    \frac{1}{\sqrt{2}}R_z(\theta),
    \qquad
    K_{1,j}(\theta)
    =
    \frac{1}{\sqrt{2}}\sigma_jR_z(\theta).
\end{equation}
Using eqns.~\eqref{eqn:channel_hamiltonian_HNKS}
and~\eqref{eqn:kraus_span_HNKS}, the channel Hamiltonian and Hermitian
Kraus span of the conditional channel are
\begin{equation}
\begin{aligned}
    H_{\rm K}
    \left(
        \mathcal E_j\circ\mathcal R_{z,\theta}
    \right)
    &=
    \frac{\sigma_z}{2}, \quad\text{and}\\
    \mathcal S_{\rm K}
    \left(
        \mathcal E_j\circ\mathcal R_{z,\theta}
    \right)
    &=
    \operatorname{span}_{\R}
    \left\{
        I,\,
        R_z(\theta)^\dagger
        \sigma_j
        R_z(\theta)
    \right\}.
\end{aligned}
\label{eqn:weak_conditional_HNKS}
\end{equation}
These are precisely the operators appearing in the HNKS test \eqref{eqn:HNKS_condition}.
For \(j=x\) or \(j=y\),
\(R_z(\theta)^\dagger\sigma_jR_z(\theta)\) lies in the Pauli-\(X\)--\(Y\) plane, and hence HNKS is satisfied:
\begin{equation}
    \frac{\sigma_z}{2}
    \notin
    \mathcal S_{\rm K}
    \left(
        \mathcal E_j\circ\mathcal R_{z,\theta}
    \right).
\end{equation}
The conditional \(X\)- and \(Y\)-axis models therefore satisfy HNKS.
The identity branch \(j=0\) is unitary and also satisfies HNKS.

The \(Z\)-axis branch instead satisfies
\begin{equation}
    \mathcal S_{\rm K}
    \left(
        \mathcal E_z\circ\mathcal R_{z,\theta}
    \right)
    =
    \operatorname{span}_{\R}\{I,\sigma_z\},
\end{equation}
so its channel Hamiltonian belongs to its Kraus span and HNKS fails.
Indeed,
\begin{equation}
    \mathcal E_z\circ\mathcal R_{z,\theta}
    =
    \mathcal E_z,
\end{equation}
and the pure \(Z\)-dephasing branch contains no information about
\(\theta\).

When \(p_z=0\), every positive-probability conditional branch satisfies HNKS.
\atchg{The persistent label can be learned with exponentially small error using Bell-basis calibration.
At $\theta=\theta_0$, the average error probability is $(p_x+p_y)2^{-n}\leq2^{-n}$.
Appendix~\ref{app:weak_pauli_calibration} derives this expression and the neighborhood-uniform bound with $A=1$, $c=1/32$, and $n_0=1$.
It also gives the conditional error-correction protocol with $\mathcal{J}_j(N;\theta')=N^2$.}
The appendix verifies the continuation and bounded-estimator requirements of Definition~\ref{def:admissible_calibration}.
The operational encoding requirement holds with $G=\sigma_z/2$ and parameter-independent $\mathcal{E}_j$. Proposition~\ref{prop:finite_latent_hs} therefore gives $\mse_\theta=O(T^{-2})$ when $p_z=0$.

Thus the finite-\(T\) separation from \(1/T^2\) in
Fig.~\ref{fig:MSE_weakly_correlated}(\subref{fig:MSE_weakly_correlated_xy})
is a difference in the finite-\(T\) prefactor and does not
establish the loss of the Heisenberg exponent.

When \(p_z>0\), Proposition~\ref{prop:finite_latent_hs} does not apply directly because the \(Z\)-axis branch is not Heisenberg limited.
This failure of a sufficient condition is not itself a no-Heisenberg-scaling result.
In particular, if \(p_0>0\), the identity branch remains informative and can contribute an \(O(T^2)\) term after the latent branches are asymptotically resolved.
The noise model with $p_z=1$, for which $\mathcal{E}_z(\rho)=(\rho+\sigma_z\rho\sigma_z)/2$, by contrast, has zero quantum Fisher information.
A general mixed Pauli-\(Z\) noise model therefore may require a separate analysis or an asymptotic upper bound.
\end{atenv}

\subsection{Pauli-$Z$ rotations with unknown noise direction}\label{sec:sensing_correlated_random_noise_direction}

\begin{figure*}[!ht]
    \centering
    \setlength{\figHeight}{0.30\linewidth}
    
    \begin{subfigure}[t]{0.32\linewidth}
        \centering
        \includegraphics[
            height=\figHeight,
            trim={0bp 0bp 596.978bp 28bp},clip
        ]{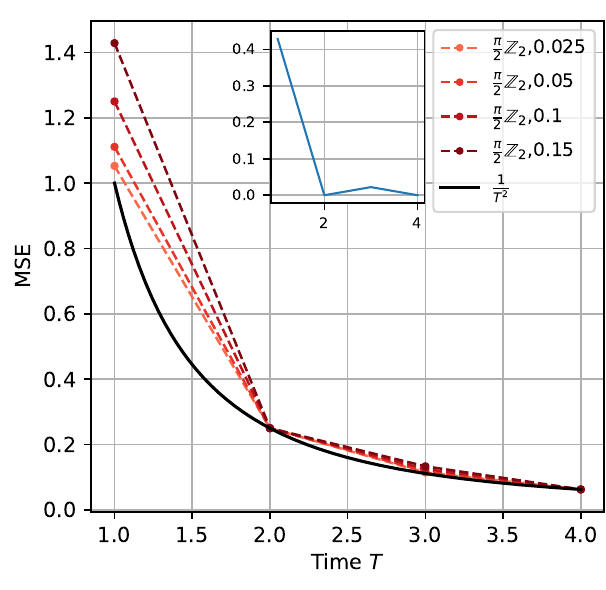}
        \caption{}
        \label{fig:MSE_z_rot_rand_noise_dir_l2}
    \end{subfigure}\hfill%
    \begin{subfigure}[t]{0.32\linewidth}
        \centering
        \includegraphics[
            height=\figHeight,
            trim={291bp 0bp 297.978bp 28bp},clip
        ]{Figures/Z_rot_unk_noise_plot.pdf}
        \caption{}
        \label{fig:MSE_z_rot_rand_noise_dir_l3}
    \end{subfigure}\hfill%
    \begin{subfigure}[t]{0.32\linewidth}
        \centering
        \includegraphics[
            height=\figHeight,
            trim={590bp 0bp 0bp 28bp},clip
        ]{Figures/Z_rot_unk_noise_plot.pdf}
        \caption{}
        \label{fig:MSE_z_rot_rand_noise_dir_l4}
    \end{subfigure}
    \caption{
    Optimal mean-squared error (MSE) computed from the SDP in Theorem~\ref{thm:sequential_NH_bound} for Pauli-$Z$ rotations with correlated random noise direction.
    Panels~(\subref{fig:MSE_z_rot_rand_noise_dir_l2}),
    (\subref{fig:MSE_z_rot_rand_noise_dir_l3}), and
    (\subref{fig:MSE_z_rot_rand_noise_dir_l4}) correspond to
    $l=2,3,4$, respectively.
    The $x$-axis shows the number of time steps, where a single use of the channel counts as a single time step, whereas the $y$-axis shows the MSE.
    The line labeled by $\frac{\pi}{l}\Z_l,p$ in the plot legend indicates the MSE corresponding to correlated noise distribution over angles $0,\pi/l, 2\pi/l,\dots,(l-1)\pi/l$ with uniform probability $p$. 
    Inset plot shows the difference between the computed MSEs and the Heisenberg limit $1/T^2$ for $T\in\{1,2,3,4\}$ channel uses.
    Note that regardless of error rate $p$, Heisenberg scaling $\frac{1}{T^2}$ is achieved at time steps that are multiples of $l$ for noise direction $\frac{\pi}{l}\Z_l$.
    Optimality of the computed MSE follows from Proposition~\ref{prop:single_par_achievability} since there is only one parameter (the Pauli-$Z$ rotation angle).
    }
\label{fig:MSE_z_rot_rand_noise_dir}
\end{figure*}

Next we consider the sequential estimation of $T$ copies of a Pauli-$Z$ rotation angle in the presence of unknown
noise. We consider two different noise models. 

\subsubsection{Unknown correlated noise}\label{sec:unknown_corr_noise}
First consider correlated noise $E_\varphi = (\cos\varphi) X + (\sin\varphi) Y$ with $\varphi$ a random angle drawn from a discrete set.
For our numerical study, we consider three uniform distributions $\{p_\varphi\}_\varphi$ over the noise angles:
$p_0^{(2)}=p_{\pi/2}^{(2)}=1/2$, $p_0^{(3)}=p_{\pi/3}^{(3)}=p_{2\pi/3}^{(3)}=1/3$, and $p_0^{(4)}=p_{\pi/4}^{(4)}=p_{2\pi/4}^{(4)}=p_{3\pi/4}^{(4)}=1/4$, where $p_\varphi$ is the probability that error $E_\varphi$ occurred after each of the $T$ copies of rotation channels being probed.
The noisy $T$-step process $\mathbf{\Lambda}_{\theta,\varphi}^{(l)}$ over angle distribution $l$ can be illustrated as
\begin{equation}
\begin{aligned}
    &\mathbf{\Lambda}_{\theta,\varphi}^{(l)} = \sum_\varphi p_\varphi^{(l)} \kketbra{E_\varphi R_{z,\theta}}^{\otimes T} \\
    &= \sum_\varphi p_\varphi^{(l)}
    \resizebox{0.75\linewidth}{!}{
    \begin{quantikz}[
        classical gap=0.03cm,
        column sep=0.28cm,
        font=\normalsize,
        labels={font=\footnotesize}
    ]
        & \gate[1,style={fill=green!20}]{R_{z,\theta}} \wire[l][1]["{A_0}"{above,pos=0.6}]{q} \gshade{1}{2}{} & \gate[style={fill=red!20}]{E_\varphi} \wire[r][1]["{A_1}"{above,pos=0.55}]{q} & \invgate{\dots} & \gate[1,style={fill=green!20}]{R_{z,\theta}} \wire[l][1]["{A_{T-1}'}"{above,pos=0.73}]{q} \gshade{1}{2}{} & \gate[style={fill=red!20}]{E_\varphi} \wire[r][1]["{A_T}"{above,pos=0.55}]{q} \wireoverride{q} & 
    \end{quantikz} 
    } .
\end{aligned}
\end{equation}

The SDP given in Theorem~\ref{thm:sequential_NH_bound} gives us a bound of $\frac{1}{T^2}$ at every step $T$ that is a multiple of $l$ for noise angle distribution $p_\varphi^{(l)}$, as shown in Fig.~\ref{fig:MSE_z_rot_rand_noise_dir} up to $T=4$ steps, which coincides with the Heisenberg scaling.
The cases $l=2,3,4$ are shown in
panels~(\subref{fig:MSE_z_rot_rand_noise_dir_l2}),
(\subref{fig:MSE_z_rot_rand_noise_dir_l3}), and
(\subref{fig:MSE_z_rot_rand_noise_dir_l4}) of
Fig.~\ref{fig:MSE_z_rot_rand_noise_dir}, respectively.
Note that distribution $p_\varphi^{(2)}$ coincides with the $p_x=p_y$ noise in Section~\ref{sec:correlated_noise_z_rot}.
For noise angle distribution $p_\varphi^{(l)}$, Heisenberg scaling is achieved for time steps $T$ that are multiples of $l$.

These numerical observations again admit a simple analytical explanation. 
\begin{atenv}
If an error is detected, apply the fixed recovery operation $Q=X$ after each noisy channel use.
Using $XY=\iota Z$, we obtain
\begin{equation}\label{eqn:discrete_unknown_direction_recovery}
\begin{aligned}
    QE_\varphi
    &=
    X\left[
        (\cos\varphi)X+(\sin\varphi)Y
    \right]\\
    &=
    (\cos\varphi)I+\iota(\sin\varphi)Z\\
    &=
    R_{z,-2\varphi}.
\end{aligned}
\end{equation}
Since this residual Pauli-$Z$ rotation commutes with the signal rotation,
one corrected noisy step is
\begin{equation}
    QE_\varphi R_{z,\theta}
    =
    R_{z,\theta-2\varphi}.
\end{equation}
After $l$ sensing rounds, the accumulated evolution is therefore
\begin{equation}
    \left(
        QE_\varphi R_{z,\theta}
    \right)^l
    =
    R_{z,l\theta-2l\varphi}.
\end{equation}
For $\varphi=j\pi/l$, the logical error relative to the ideal signal
rotation $R_{z,l\theta}$ satisfies
\begin{equation}
\begin{aligned}
    R_{z,-2l\varphi}
    &= R_{z,-2j\pi}
    = (-1)^jI.
\end{aligned}
\end{equation}
The remaining factor is only a global phase and therefore $R_{z,-2l\varphi}$ induces the identity channel.
Hence every block of $l$ corrected noisy steps is equivalent to $l$ ideal sensing steps, explaining why the SDP recovers the noiseless Heisenberg limit whenever $T$ is a multiple of $l$.
\end{atenv}

More generally, one may be interested in a continuous-time version of the above problem, modeled via eqn.~\eqref{eq:lindblad_metrology}. 
\begin{atenv}
In this case, let $j\in\{0,\dots,l-1\}$ and define $\varphi_j=\frac{j\pi}{l}$.
Conditioned on $j$, the jump operator is
\begin{equation}
    L_j
    =
    \sqrt{\gamma}E_{\varphi_j}.
\end{equation}
Since $L_j^\dagger L_j=\gamma I$, the expected number of detected jumps after a total sensing time $t$ is $\gamma t$.
\atchg{
In the discrete-time protocol for the persistent noise directions $\varphi_j=j\pi/l$ discussed in eqn.~\eqref{eqn:discrete_unknown_direction_recovery} in Section~\ref{sec:unknown_corr_noise}, the fixed recovery operation $Q=X$ is applied after every noisy channel use.
In the continuous-time protocol considered here, the jumps occur stochastically, so the same recovery operation is instead applied only after a detected jump.
Eqn.~\eqref{eqn:discrete_unknown_direction_recovery} shows that it converts $E_\varphi$ into the residual rotation $R_{z,-2\varphi}$.
In the continuous-time protocol, by contrast, the jumps are stochastic.
We use the probe--ancilla code and the code-space/error-space measurement $\{\Pi_\text{c},\Pi_\text{e}\}$ described in Appendix~\ref{sec:known_noise_dir}; whenever the error-space outcome $\Pi_\text{e}$ is observed, we apply the same fixed recovery $Q=\sigma_x=X$ to the probe.
}
\atchg{Conditioned on the error-space outcome, the recovery acts on the jump operator as}
\begin{equation}
\begin{aligned}
    QL_j
    &= \sqrt{\gamma}QE_{\varphi_j}
    = \sqrt{\gamma}R_{z,-2\varphi_j}.
\end{aligned}
\end{equation}
Consequently, after $N$ detected jumps, the accumulated logical error is
\begin{equation}
    R_{z,-2N\varphi_j}.
\end{equation}
Whenever $N=ml$, this residual error becomes
\begin{equation}
\begin{aligned}
    R_{z,-2ml\varphi_j}
    &=
    R_{z,-2\pi mj}
    = (-1)^{mj}I.
\end{aligned}
\end{equation}
Thus, independently of $j$, the residual error induces the identity
channel whenever the number of detected jumps is a multiple of $l$.
\end{atenv}

We may therefore run the sensing protocol until a desired time $t$, and then continue the evolution until the total number of detected jumps reaches the next multiple of $l$. Denote this stopping time by $\tau_t$. Since at most $l-1$ additional errors are required, the additional waiting time has bounded expectation,
\begin{equation}
    \mathbb{E}[\tau_t-t]\leq\frac{l-1}{\gamma}.
\end{equation}
For fixed $l$, this overhead is independent of $t$, and hence
\begin{equation}
\begin{aligned}
    \mathbb{E}\left[
        \left|
            \frac{\tau_t}{t}-1
        \right|
    \right]
    &=
    \frac{\mathbb{E}[\tau_t-t]}{t} 
    \leq \frac{l-1}{\gamma t} \rightarrow 0.
\end{aligned}
\end{equation}
Hence $\tau_t/t\to1$ in expectation.

At the stopping time, the unknown logical error cancels out, while the signal has coherently accumulated for time $\tau_t$.
The resulting state is therefore equivalent to ideal noiseless sensing for interrogation time $\tau_t$, giving
\begin{equation}
    \mse_{\theta}=O\left(\frac{1}{\tau_t^2}\right)
    =O\left(\frac{1}{t^2}\right).
\end{equation}
Thus, for any fixed finite set of cyclically related noise directions, the continuous-time protocol asymptotically achieves Heisenberg scaling despite the noise direction being unknown.

\begin{figure}[b]
    \centering
    \includegraphics[width=0.75\columnwidth]{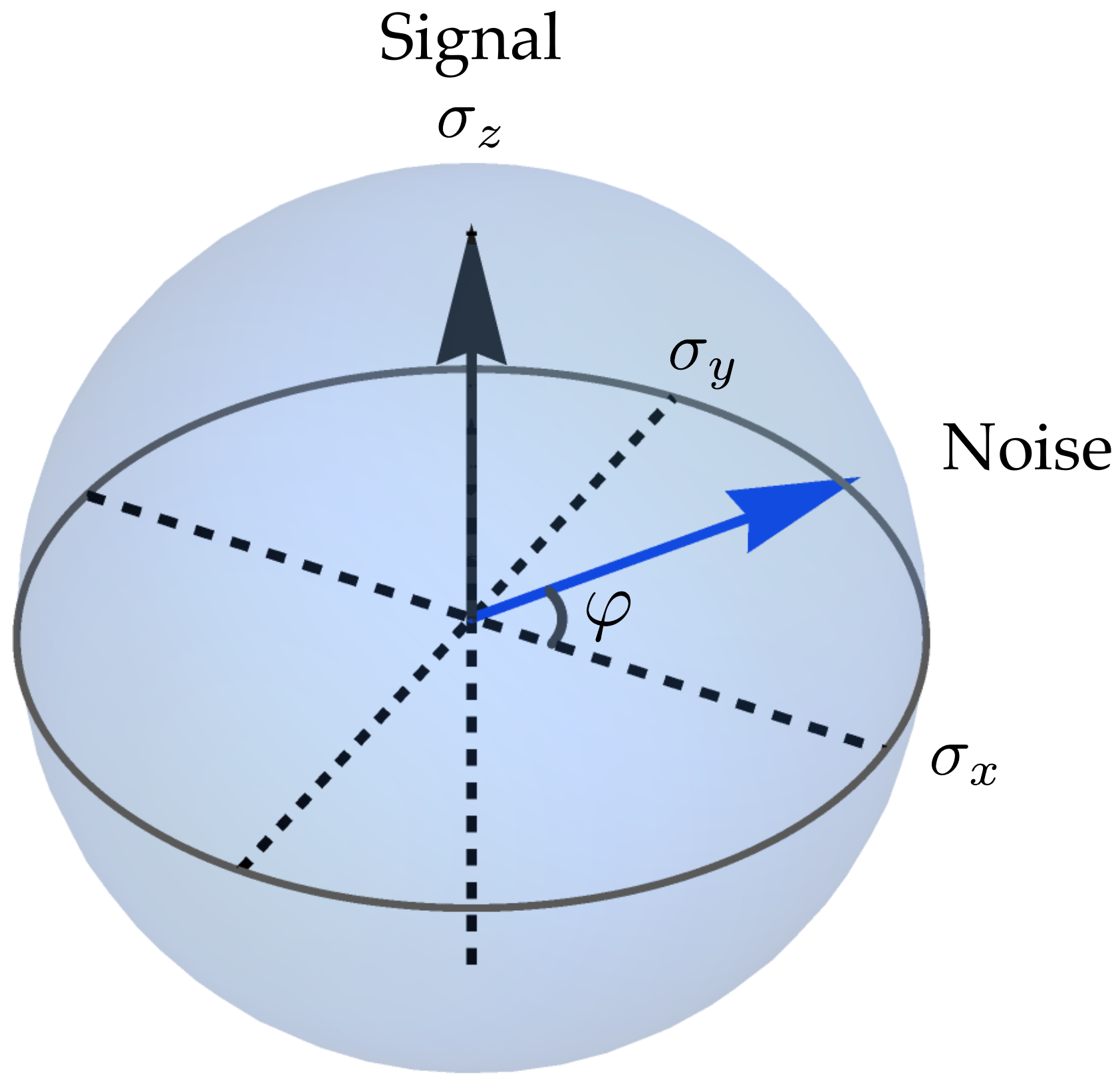}
    \caption{\textbf{Estimating $\theta$ with unknown noise characterized by $\varphi$.} The signal is a rotation about the $z$ axis of the Bloch sphere. The noise is characterized by an operator $L_\varphi=\sqrt{\gamma}((\cos\varphi)\sigma_x+(\sin\varphi)\sigma_y)$ with unknown $\varphi$.}
    \label{unknownnoise}
\end{figure}

\subsubsection{Unknown uncorrelated noise}

In Section~\ref{sec:unknown_corr_noise},
we have demonstrated that one can achieve Heisenberg scaling whenever the unknown noise angle $\varphi$ comes from a finite set of uniformly spaced periodic angles.
In the present subsection, by contrast, $\varphi$ is an arbitrary but fixed parameter in $[0,\pi)$.
Thus $\varphi$ is neither restricted to a finite set nor resampled between noise events.
We now consider evolution of a probed qubit system $A$ under the Lindblad master equation for the noise operator $L_\varphi = \sqrt{\gamma} (\cos(\varphi)\sigma_x + \sin(\varphi)\sigma_y )$ with our signal a rotation along the $z$ axis, as depicted in Fig.~\ref{unknownnoise}.
We assume that noise only happens on the probed system, between ideal control operations.
Evidently the HNLS condition is satisfied for every fixed and known value of $\varphi$, and so the Heisenberg limit is in principle attainable (see Appendix~\ref{sec:known_noise_dir}). 
However, it is not obvious what correction procedure should be applied to restore the Heisenberg limit when $\varphi$ is unknown.

If $\varphi$ is an unknown, one natural strategy is to first estimate $\varphi$, and then use this estimated noise direction for error correction. 
\atchg{Appendix~\ref{sec:unknown_noise_direction_estimation} shows that the particular method using independent Bell probes for calibration followed by plug-in correction does not achieve Heisenberg scaling for regular locally unbiased estimation.
This conclusion concerns that calibration and readout scheme, and does not exclude more general joint sensing strategies.}
We leave it as an open question whether Heisenberg scaling can be achieved under such a noise model when $\varphi$ is an unknown arbitrary continuously valued noise direction.

\subsection{Single- and multi- parameter Pauli rotations under depolarizing noise}\label{sec:example_multiparameter_rotations_depolarizing}

\begin{figure*}
    \centering
    \setlength{\figHeight}{0.25\linewidth}
    
    \begin{subfigure}[t]{0.32\linewidth}
        \centering
        \includegraphics[height=\figHeight]{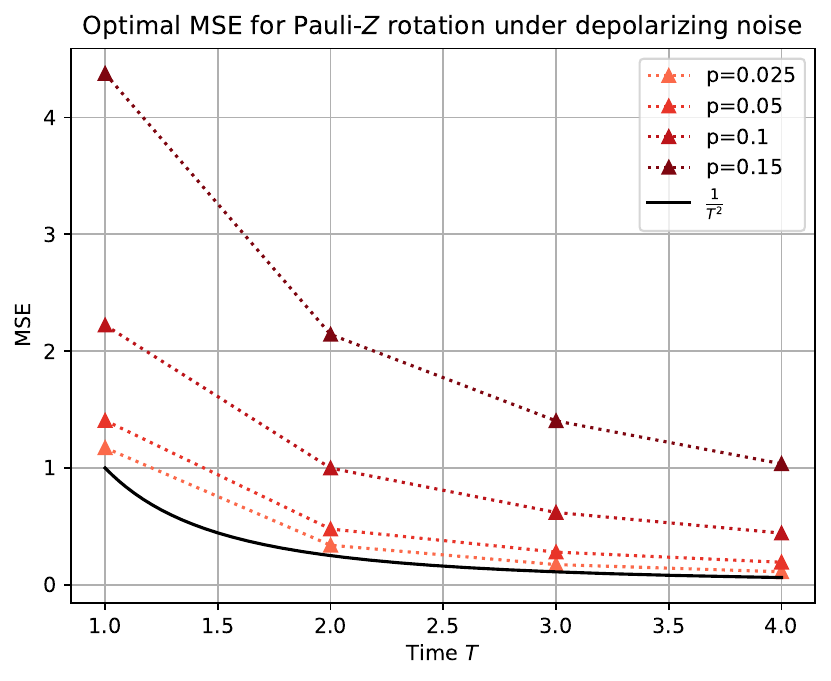}
        \caption{}
        \label{fig:MSE_depolarizing_noise_z}
    \end{subfigure}\hfill%
    \begin{subfigure}[t]{0.32\linewidth}
        \centering
        \includegraphics[height=\figHeight]{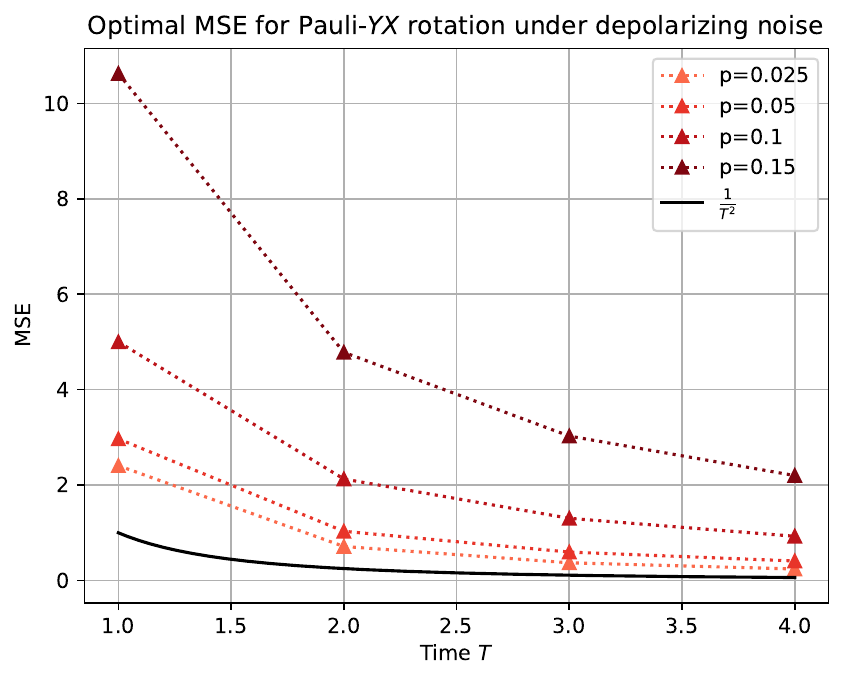}
        \caption{}
        \label{fig:MSE_depolarizing_noise_yx}
    \end{subfigure}\hfill%
    \begin{subfigure}[t]{0.32\linewidth}
        \centering
        \includegraphics[height=\figHeight]{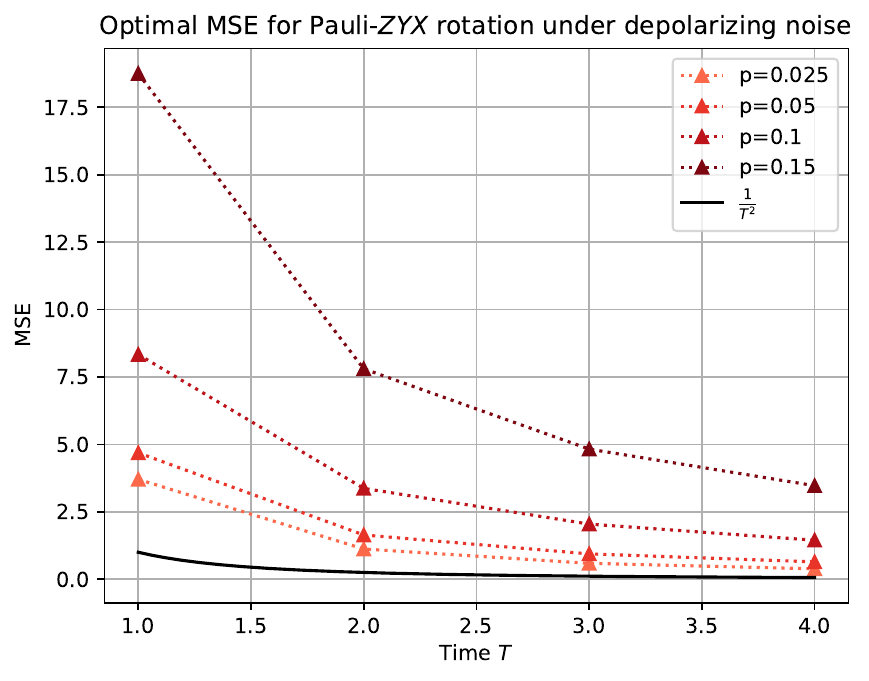}
        \caption{}
        \label{fig:MSE_depolarizing_noise_zyx}
    \end{subfigure}
    \caption{
    {Mean-squared error (MSE) SDP lower bounds for
    Pauli-$Z$ rotation in
    panel~(\subref{fig:MSE_depolarizing_noise_z}),
    Pauli-$X$ and $Y$ rotations in
    panel~(\subref{fig:MSE_depolarizing_noise_yx}), and
    Pauli-$X$, $Y$, and $Z$ rotations in
    panel~(\subref{fig:MSE_depolarizing_noise_zyx}) under
    depolarizing noise with parameter
    $p\in\{.025,.05,.1,.15\}$.}
    As noise strength $p$ gets smaller, the MSE (dotted red lines) are closer to the Heisenberg scaling (solid black line).
    The $T=1$-step bounds recover the analytic expression of the Nagaoka-Hayashi bound shown in~\cite[Section III.A]{Conlon_2023} (see eqn.~\eqref{eqn:NH_bound_depolarizing}).
    }
    \label{fig:MSE_depolarizing_noise}
\end{figure*}

Now we consider 1-, 2-, and 3- parameter Pauli rotations $R_{z,\theta_z}$, $R_{yx,\theta} = R_{y,\theta_y}R_{x,\theta_x}$, and $R_{zyx,\theta} = R_{z,\theta_z}R_{y,\theta_y}R_{x,\theta_x}$, where all three are followed by a depolarizing noise channel $\mathcal{D}_p(\rho) = (1-3p) \rho + p (\sigma_x\rho\sigma_x +\sigma_y\rho\sigma_y +\sigma_z\rho\sigma_z)$ with noise strength $p\in[0,1/3]$.
The $T$-step process $\mathbf{\Lambda}_{\theta}^{(p)}$ representing the sequence of such noisy rotations is given by
\begin{equation}
\begin{aligned}
    &\mathbf{\Lambda}_{\theta}^{(p)} =
    \resizebox{0.71\linewidth}{!}{
    \begin{quantikz}[
        classical gap=0.03cm,
        column sep=0.28cm,
        font=\normalsize,
        labels={font=\footnotesize}
    ]
        & \gate[1,style={fill=green!20}]{R_{W,\theta}} \wire[l][1]["{}"{above,pos=0.6}]{q} \gshade{1}{2}{} & \gate[style={fill=red!20}]{\mathcal{D}_p} \wire[r][1]["{}"{above,pos=0.5}]{q} & \invgate{\dots} & \gate[1,style={fill=green!20}]{R_{W,\theta}} \wire[l][1]["{}"{above,pos=0.65}]{q} \gshade{1}{2}{} & \gate[style={fill=red!20}]{\mathcal{D}_p} \wire[r][1]["{}"{above,pos=0.5}]{q} \wireoverride{q} & 
    \end{quantikz} 
    } ,
\end{aligned}
\end{equation}
for $W\in\{z,yx,zyx\}$.
Our SDP bound in Theorem~\ref{thm:sequential_NH_bound} again gives MSEs that get closer to the Heisenberg scaling as the noise strength $p\rightarrow0$, as shown in Fig.~\ref{fig:MSE_depolarizing_noise} for up to $T=4$ steps.
{The one-, two-, and three-parameter cases are shown
in panels~(\subref{fig:MSE_depolarizing_noise_z}),
(\subref{fig:MSE_depolarizing_noise_yx}), and
(\subref{fig:MSE_depolarizing_noise_zyx}) of
Fig.~\ref{fig:MSE_depolarizing_noise}, respectively.}
Note that our SDP bound recovers the analytic expressions for the Nagaoka-Hayashi bound for such 2- and 3- parameter rotations under depolarizing noise given in~\cite[Section III.A]{Conlon_2023}, which correspond to the case of $T=1$ step.
In~\cite[Section III.A]{Conlon_2023}, the 1-, 2-, and 3- parameter Nagaoka-Hayashi bound for the MSE of estimating the rotation angles under depolarizing noise with parameter $p$, denoted as $\mathrm{NH}_z(p),\mathrm{NH}_{yx}(p),\mathrm{NH}_{zyx}(p)$, respectively, are given by
\begin{equation}\label{eqn:NH_bound_depolarizing}
\begin{gathered}
    \mathrm{NH}_z(p) = \frac{1-2p}{(1-4p)^2} \;,\quad
    \mathrm{NH}_{yx}(p) = \frac{4(1-p)}{2(1-4p)^2} \;,\\
    \text{and}\quad
    \mathrm{NH}_{zyx}(p) = \frac{3}{(1-4p)^2} \;.
\end{gathered}
\end{equation}
These are precisely the points at $T=1$ in the plots in Fig.~\ref{fig:MSE_depolarizing_noise}.

\subsection{Parallel-sequential strategy for 2-parameter rotations under depolarizing noise \label{sec:parseq2}}

\begin{figure*}
    \centering
    \setlength{\figHeight}{0.37\linewidth}
    
    \begin{subfigure}[t]{0.49\linewidth}
        \centering
        \includegraphics[height=\figHeight]{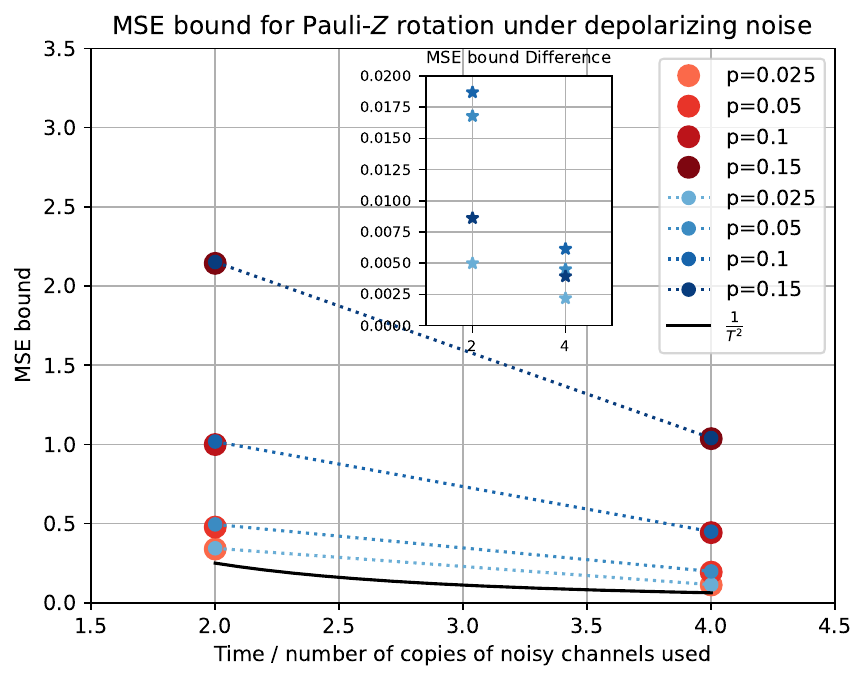}
        \caption{}
        \label{fig:MSE_par_seq_depolarizing_noise_z}
    \end{subfigure}\hfill%
    \begin{subfigure}[t]{0.49\linewidth}
        \centering
        \includegraphics[height=\figHeight]{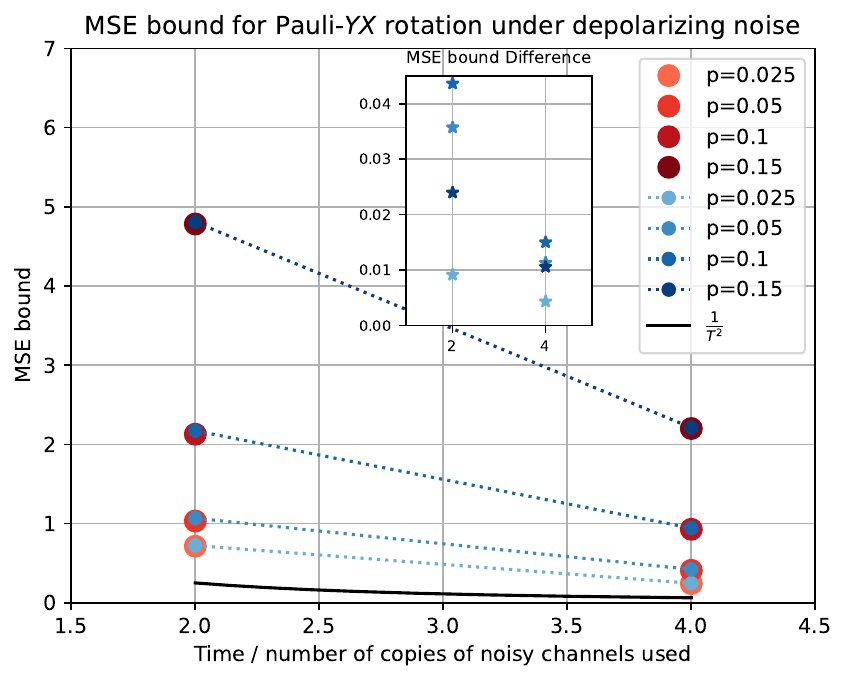}
        \caption{}
        \label{fig:MSE_par_seq_depolarizing_noise_yx}
    \end{subfigure}
    \caption{
    Mean-squared error (MSE) SDP lower bounds for Pauli-$Z$ rotation in panel~(\subref{fig:MSE_par_seq_depolarizing_noise_z}) and Pauli-$X$ and $Y$ rotations in panel~(\subref{fig:MSE_par_seq_depolarizing_noise_yx}) under depolarizing noise with parameter $p\in\{.025,.05,.1,.15\}$.
    Blue circles are SDP lower bounds for the two-copy per round strategy over $T\in\{1,2\}$ steps illustrated in eqn.~\eqref{eqn:parallel_sequential_yx_rot_depolarizing}.
    Red circles are SDP lower bounds for the one-copy per round strategy over $T\in\{2,4\}$ steps, given in panels~(\subref{fig:MSE_depolarizing_noise_z}) (Pauli-$Z$) and~(\subref{fig:MSE_depolarizing_noise_yx}) (Pauli-$X$ followed by Pauli-$Y$) of Fig.~\ref{fig:MSE_depolarizing_noise}, respectively.
    The inset in each panel shows the difference between SDP lower bounds of the one-copy and two-copy per step strategies:
    $[\text{two copy bound}]-[\text{one copy bound}]$, given $2$ copies and $4$ copies of the noisy Pauli rotation channel.
    The shades of the blue and red points indicate the noise parameter $p$ of the depolarizing noise channel, where darker shades indicate larger noise parameter.
    For the single-parameter Pauli-$Z$ model, both sets of SDP values are locally attainable by Proposition~\ref{prop:single_par_achievability}.
    For the two-parameter Pauli-$YX$ model, they are lower bounds whose attainability has not been established.
    }
    \label{fig:MSE_par_seq_depolarizing_noise}
\end{figure*}

Finally, we analyze a 1-parameter Pauli rotation $R_{z,\theta_z}$ and 2-parameter Pauli rotation $R_{yx,\theta} = R_{y,\theta_y}R_{x,\theta_x}$, both followed by a depolarizing noise channel $\mathcal{D}_p$ with noise strength $p\in\{0.025,0.05,0.1,0.15\}$ as in Section~\ref{sec:example_multiparameter_rotations_depolarizing}.
However, we now restrict the strategy to use two copies of the noisy parameterized rotation channel, in parallel, at every step.
For the single-parameter model, Proposition~\ref{prop:single_par_achievability} makes this comparison a separation between attainable precisions.
For the two-parameter model, the comparison of SDP lower bounds alone does not establish such a separation.
The $T$-step process $\mathbf{\Lambda}_{\theta}^{(p)}$ representing the sequence of such noisy rotations is given by
\begin{equation}\label{eqn:parallel_sequential_yx_rot_depolarizing}
\begin{aligned}
    &\mathbf{\Lambda}_{\theta}^{(p)} =
    \resizebox{0.71\linewidth}{!}{
    \begin{quantikz}[
        classical gap=0.03cm,
        column sep=0.28cm,
        font=\normalsize,
        labels={font=\footnotesize}
    ]
        & \gate[1,style={fill=green!20}]{R_{yx,\theta}} \wire[l][1]["{}"{above,pos=0.6}]{q} \gshade{1}{2}{} & \gate[style={fill=red!20}]{\mathcal{D}_p} \wire[r][1]["{}"{above,pos=0.5}]{q} & \invgate{\dots} & \gate[1,style={fill=green!20}]{R_{yx,\theta}} \wire[l][1]["{}"{above,pos=0.65}]{q} \gshade{1}{2}{} & \gate[style={fill=red!20}]{\mathcal{D}_p} \wire[r][1]["{}"{above,pos=0.5}]{q} \wireoverride{q} & \\
        & \gate[1,style={fill=green!20}]{R_{yx,\theta}} \wire[l][1]["{}"{above,pos=0.6}]{q} \gshade{1}{2}{} & \gate[style={fill=red!20}]{\mathcal{D}_p} \wire[r][1]["{}"{above,pos=0.5}]{q} & \invgate{\dots} & \gate[1,style={fill=green!20}]{R_{yx,\theta}} \wire[l][1]["{}"{above,pos=0.65}]{q} \gshade{1}{2}{} & \gate[style={fill=red!20}]{\mathcal{D}_p} \wire[r][1]["{}"{above,pos=0.5}]{q} \wireoverride{q} & 
    \end{quantikz} 
    } \,.
\end{aligned}
\end{equation}

In Fig.~\ref{fig:MSE_par_seq_depolarizing_noise}, we compare the two-copy-per-step strategy described in the previous paragraph with the one-copy-per-step strategy in Section~\ref{sec:example_multiparameter_rotations_depolarizing}, using the same total number of channel uses, either two or four, in each comparison.
{Panels~(\subref{fig:MSE_par_seq_depolarizing_noise_z})
and~(\subref{fig:MSE_par_seq_depolarizing_noise_yx}) show
this comparison for Pauli-$Z$ and Pauli-$YX$ rotations,
respectively.}
These plots show that for a given $T\in\{2,4\}$ copies of noisy $w\in\{z,yx\}$ rotation being probed under depolarizing noise channel with parameter $p$, the SDP lower bounds of Theorem~\ref{thm:sequential_NH_bound} for one-copy per step strategy $\mathrm{seqNH}_w^1(T,p)$ are always lower than the two-copies per step strategy $\mathrm{seqNH}_w^2(T,p)$, given the same number of copies of the noisy channel used.
This is shown by the points in the inset plots of Fig.~\ref{fig:MSE_par_seq_depolarizing_noise}, representing the difference between the MSE bound of the two: $\mathrm{seqNH}_w^2(T,p) - \mathrm{seqNH}_w^1(T,p)$.

For the single-parameter Pauli-$Z$ model, both $\mathrm{seqNH}_z^1(T,p)$ and $\mathrm{seqNH}_z^2(T,p)$ are locally attainable by Proposition~\ref{prop:single_par_achievability}.
Indeed, a two-copy parallel block is itself one step of a single-parameter process, so the proposition applies to both strategies.
Moreover, every two-copy-per-step strategy can be implemented within the fully sequential class by storing the first output until the second channel use is complete.
The positive differences in the inset of
Fig.~\ref{fig:MSE_par_seq_depolarizing_noise}%
(\subref{fig:MSE_par_seq_depolarizing_noise_z}) therefore
demonstrate, within solver precision, an advantage from allowing control after every channel use.
An attaining fully sequential strategy can be recovered from the SDP optimizer using the decomposition described in Appendix~\ref{app:tester_decomposition}.
A rigorous numerical certificate can be obtained by verifying that the MSE of the recovered fully sequential strategy is smaller than a dual-feasible lower bound for the two-copy-per-step strategy, after accounting for the corresponding numerical residuals.

For the two-parameter Pauli-$YX$ model,
Proposition~\ref{prop:single_par_achievability} does not apply, and the SDP lower bound need not be attainable.
Consequently, the inset of
Fig.~\ref{fig:MSE_par_seq_depolarizing_noise}%
(\subref{fig:MSE_par_seq_depolarizing_noise_yx}) compares only SDP lower bounds and does not by itself prove a separation between the two strategy classes.
Such a separation would follow from an attainable fully sequential strategy whose MSE is strictly below a certified lower bound for the two-copy-per-step strategy.

\section{Discussion}

In this work we study the precision of multiparameter quantum sensing, as quantified by the mean-squared error (MSE), using the most general type of protocols that sequentially probe the quantum channel or quantum process with access to coherent ancillas.
We propose a semidefinite program (SDP) optimization to compute a lower bound for the optimally achievable MSE for any sequential multiparameter quantum sensing protocol, including the most general protocol that sequentially probes the quantum channel or quantum process with access to coherent ancillas (Theorem~\ref{thm:sequential_NH_bound} in
Section~\ref{sec:sequential_bound}).
Our SDP optimization is efficiently computable in the system dimension and number of parameters for a fixed number of sensing rounds (see Appendix~\ref{app:SDP_runtime_complexity}).
Our bound applies to any quantum parameter-estimation tasks where the parameters are probed over multiple steps in sequence, including quantum channel parameter estimation scenarios under temporally correlated noise, allowing us to investigate how noise correlation affects whether Heisenberg scaling can be achieved.
Moreover, for quantum channel parameter estimation scenarios under independent noise, our bound recovers known results (Section~\ref{sec:example_multiparameter_rotations_depolarizing}).
\begin{atenv}
For regular finite-dimensional single-parameter problems admitting an optimal tester, we show that the bound is locally attainable and equals the reciprocal of the optimal sequential quantum Fisher information (Proposition~\ref{prop:single_par_achievability}).
For such attainable cases, we describe how to construct an initial state, intermediate isometries, and a locally optimal final measurement from an SDP optimizer (Section~\ref{sec:single_par_tight_bound} and Appendix~\ref{app:tester_decomposition}).
We also give a sufficient criterion for Heisenberg scaling in finite-latent correlated processes in Proposition~\ref{prop:finite_latent_hs}, using the admissible calibration schemes and operational encoding condition in Definition~\ref{def:admissible_calibration}.
Explicit protocols for correlated Pauli noise and finite sets of unknown noise directions are given in Sections~\ref{sec:correlated_noise_z_rot} and~\ref{sec:sensing_correlated_random_noise_direction}.
\end{atenv}

\begin{atenv}
A further direction is fault-tolerant quantum sensing,
where state preparation, control operations, and
measurements are also noisy.
For a restricted circuit-level noise model,
Conlon \emph{et al.}~\cite{conlon2026fault} show that an
$N$-qubit repetition code extends the time range of
Heisenberg scaling.
Our framework could provide finite-round precision bounds
for such protocols by describing the noisy encoded
evolution as a quantum process, including the syndrome
records.
For a fixed circuit architecture and noise model,
preparation, gate, and readout noise must be included
explicitly, with the optimization restricted to physically
available operations.
Relaxing these restrictions to the unrestricted testers of
Theorem~\ref{thm:sequential_NH_bound} would still give a lower
bound on the MSE, but its attainability within the
fault-tolerant architecture would require a separate proof.
This extension could quantify the trade-off between code
size, control errors, and sensing precision, including
multiparameter tasks and temporally correlated noise.
\end{atenv}

Also, it would be interesting to extend this framework to sequential function estimation~\cite{Eldredge_2018,Proctor_2018,Qian_2019,Rubio_2020,Bringewatt_2021,Bringewatt_2024}.
In particular, it is shown in~\cite{Eldredge_2018} how one can construct an optimal sequential protocol for estimating a function of a collection of Pauli-$Z$ rotation angles.
Our program can further extend this result to identify optimal MSEs for noisy function estimation scenarios and for a function of angles with different rotation axes.

\paragraph*{Note added.}
During the preparation of this manuscript, two closely related works appeared~\cite{zhou2026unified,li2026optimalstrategiesmultiparameterquantum}.
Zhou and Zhang~\cite{zhou2026unified} formulate the tight multiparameter estimation problem using quantum testers, while Li \emph{et al.}~\cite{li2026optimalstrategiesmultiparameterquantum} develop a conic framework covering several precision bounds and causal strategies. 
The present work develops a tester-based conic framework while also taking a complementary direction, focusing on general parameterized multi-time processes, single-parameter attainability, and applications to error-corrected sensing and temporally correlated noise.
We note that our SDP in Theorem~\ref{thm:sequential_NH_bound} coincides with one of the conic programs in the conic program hierarchy proposed in~\cite[Section III]{li2026optimalstrategiesmultiparameterquantum} for causally-definite strategies.
However, as shown in Section~\ref{sec:sequential_bound}, we arrive at our SDP program through a different approach.

\section*{Acknowledgements}
A.T. is supported by CQT PhD scholarship, Google PhD fellowship program, and CQT Young Researcher Career Development Grant. L.O.C.~was supported in part by the National Science Foundation under Award No. 2533041. L.O.C.~and A.V.G.~were supported in part by ONR MURI, AFOSR MURI, NSF QLCI (award No.~OMA-2120757), NSF STAQ program, DoE ASCR Quantum Testbed Pathfinder program (award No.~DE-SC0024220),    DARPA SAVaNT ADVENT, ARL (W911NF-24-2-0107), and NQVL:QSTD:Design:FTL. 
L.O.C.~and A.V.G.~also acknowledge support from the U.S.~Department of Energy, Office of Science, National Quantum Information Science Research Centers, Quantum Systems Accelerator (award No.~DE-SCL0000121) and from the U.S.~Department of Energy, Office of Science, Accelerated Research in Quantum Computing, Fundamental Algorithmic Research toward Quantum Utility (FAR-Qu). K.B.~is supported by a Hartree Fellowship from the Joint Center for Quantum Information and Computer Science (QuICS) at the University of Maryland, College Park.

\bibliographystyle{unsrt}
\bibliography{references}

\begin{thebibliography}{10}

\bibitem{giovannetti2011advances}
Vittorio Giovannetti, Seth Lloyd, and Lorenzo Maccone.
\newblock Advances in quantum metrology.
\newblock {\em Nature photonics}, 5(4):222--229, 2011.

\bibitem{aasi2013enhanced}
Junaid Aasi, Joan Abadie, BP~Abbott, Richard Abbott, TD~Abbott, MR~Abernathy, Carl Adams, Thomas Adams, Paolo Addesso, RX~Adhikari, et~al.
\newblock {Enhanced sensitivity of the LIGO gravitational wave detector by using squeezed states of light}.
\newblock {\em Nature Photonics}, 7(8):613--619, 2013.

\bibitem{tse2019quantum}
Maggie Tse, Haocun Yu, Nutsinee Kijbunchoo, A~Fernandez-Galiana, P~Dupej, L~Barsotti, CD~Blair, DD~Brown, SE~ea Dwyer, A~Effler, et~al.
\newblock {Quantum-enhanced advanced LIGO detectors in the era of gravitational-wave astronomy}.
\newblock {\em Physical Review Letters}, 123(23):231107, 2019.

\bibitem{casacio2021quantum}
Catxere~A Casacio, Lars~S Madsen, Alex Terrasson, Muhammad Waleed, Kai Barnscheidt, Boris Hage, Michael~A Taylor, and Warwick~P Bowen.
\newblock Quantum-enhanced nonlinear microscopy.
\newblock {\em Nature}, 594(7862):201--206, 2021.

\bibitem{ludlow2015optical}
Andrew~D Ludlow, Martin~M Boyd, Jun Ye, Ekkehard Peik, and Piet~O Schmidt.
\newblock Optical atomic clocks.
\newblock {\em Reviews of Modern Physics}, 87(2):637--701, 2015.

\bibitem{wu2019gravity}
Xuejian Wu, Zachary Pagel, Bola~S Malek, Timothy~H Nguyen, Fei Zi, Daniel~S Scheirer, and Holger M{\"u}ller.
\newblock Gravity surveys using a mobile atom interferometer.
\newblock {\em Science advances}, 5(9):eaax0800, 2019.

\bibitem{troullinou2023quantum}
Charikleia Troullinou, Vito~Giovanni Lucivero, and Morgan~W Mitchell.
\newblock Quantum-enhanced magnetometry at optimal number density.
\newblock {\em Physical Review Letters}, 131(13):133602, 2023.

\bibitem{marciniak2022optimal}
Christian~D Marciniak, Thomas Feldker, Ivan Pogorelov, Raphael Kaubruegger, Denis~V Vasilyev, Rick van Bijnen, Philipp Schindler, Peter Zoller, Rainer Blatt, and Thomas Monz.
\newblock Optimal metrology with programmable quantum sensors.
\newblock {\em Nature}, 603(7902):604--609, 2022.

\bibitem{conlon2023approaching}
Lorc{\'a}n~O Conlon, Tobias Vogl, Christian~D Marciniak, Ivan Pogorelov, Simon~K Yung, Falk Eilenberger, Dominic~W Berry, Fabiana~S Santana, Rainer Blatt, Thomas Monz, et~al.
\newblock Approaching optimal entangling collective measurements on quantum computing platforms.
\newblock {\em Nature Physics}, 19(3):351--357, 2023.

\bibitem{conlon2023discriminating}
Lorc{\'a}n~O Conlon, Falk Eilenberger, Ping~Koy Lam, and Syed~M Assad.
\newblock Discriminating mixed qubit states with collective measurements.
\newblock {\em Communications Physics}, 6(1):337, 2023.

\bibitem{nielsen2023deterministic}
Jens~AH Nielsen, Jonas~S Neergaard-Nielsen, Tobias Gehring, and Ulrik~L Andersen.
\newblock {Deterministic quantum phase estimation beyond N00N states}.
\newblock {\em Physical Review Letters}, 130(12):123603, 2023.

\bibitem{daryanoosh2018experimental}
Shakib Daryanoosh, Sergei Slussarenko, Dominic~W Berry, Howard~M Wiseman, and Geoff~J Pryde.
\newblock {Experimental optical phase measurement approaching the exact Heisenberg limit}.
\newblock {\em Nature communications}, 9(1):4606, 2018.

\bibitem{hou2018deterministic}
Zhibo Hou, Jun-Feng Tang, Jiangwei Shang, Huangjun Zhu, Jian Li, Yuan Yuan, Kang-Da Wu, Guo-Yong Xiang, Chuan-Feng Li, and Guang-Can Guo.
\newblock Deterministic realization of collective measurements via photonic quantum walks.
\newblock {\em Nature communications}, 9(1):1414, 2018.

\bibitem{zhou2025experimental}
Kai Zhou, Changhao Yi, Wen-Zhe Yan, Zhibo Hou, Huangjun Zhu, Guo-Yong Xiang, Chuan-Feng Li, and Guang-Can Guo.
\newblock Experimental realization of genuine three-copy collective measurements for optimal information extraction.
\newblock {\em Physical Review Letters}, 134(21):210201, 2025.

\bibitem{yi2025optimal}
Changhao Yi, Kai Zhou, Zhibo Hou, Guo-Yong Xiang, and Huangjun Zhu.
\newblock Optimal estimation of three parallel spins with genuine and restricted collective measurements.
\newblock {\em Physical Review A}, 111(6):062409, 2025.

\bibitem{mansouri2026experimental}
Arman Mansouri, Kyle~M Jordan, Raphael~A Abrahao, and Jeff~S Lundeen.
\newblock Experimental demonstration of a multi-particle collective measurement for optimal quantum state estimation.
\newblock {\em Optics Express}, 34(4):5718--5728, 2026.

\bibitem{li2023optimal}
Bacui Li, Lorc{\'a}n~O Conlon, Ping~Koy Lam, and Syed~M Assad.
\newblock Optimal single-qubit tomography: Realization of locally optimal measurements on a quantum computer.
\newblock {\em Physical Review A}, 108(3):032605, 2023.

\bibitem{yung2025saturating}
Simon~K Yung, Aritra Das, Jun Suzuki, Ping~Koy Lam, Jie Zhao, Lorc{\'a}n~O Conlon, and Syed~M Assad.
\newblock Saturating the quantum cram$\backslash$'er--rao bound in prioritised parameter estimation.
\newblock {\em arXiv preprint arXiv:2511.06704}, 2025.

\bibitem{yung2026beating}
Simon~K Yung, Wen-Zhe Yan, Lan-Tian Feng, Aritra Das, Jiayi Qin, Guang-Can Guo, Ping~Koy Lam, Jie Zhao, Zhibo Hou, Lorcan~O Conlon, et~al.
\newblock Beating three-parameter precision trade-offs with entangling collective measurements.
\newblock {\em arXiv preprint arXiv:2604.08871}, 2026.

\bibitem{bollinger1996optimal}
John~J Bollinger, Wayne~M Itano, David~J Wineland, and Daniel~J Heinzen.
\newblock Optimal frequency measurements with maximally correlated states.
\newblock {\em Physical Review A}, 54(6):R4649, 1996.

\bibitem{Huelga1997}
S.~F. Huelga, C.~Macchiavello, T.~Pellizzari, A.~K. Ekert, M.~B. Plenio, and J.~I. Cirac.
\newblock Improvement of frequency standards with quantum entanglement.
\newblock {\em Phys. Rev. Lett.}, 79:3865--3868, 1997.

\bibitem{dorner2009optimal}
Uwe Dorner, Rafal Demkowicz-Dobrzanski, Brian~J Smith, Jeff~S Lundeen, Wojciech Wasilewski, Konrad Banaszek, and Ian~A Walmsley.
\newblock Optimal quantum phase estimation.
\newblock {\em Physical review letters}, 102(4):040403, 2009.

\bibitem{Escher2011}
B.~M. Escher, R.~L. de~Matos~Filho, and L.~Davidovich.
\newblock General framework for estimating the ultimate precision limit in noisy quantum-enhanced metrology.
\newblock {\em Nature Physics}, 7:406--411, 2011.

\bibitem{demkowicz2012elusive}
Rafa{\l} Demkowicz-Dobrza{\'n}ski, Jan Ko{\l}ody{\'n}ski, and M{\u{a}}d{\u{a}}lin Gu{\c{t}}{\u{a}}.
\newblock {The elusive Heisenberg limit in quantum-enhanced metrology}.
\newblock {\em Nature communications}, 3(1):1063, 2012.

\bibitem{ozeri2013heisenberg}
Roee Ozeri.
\newblock {Heisenberg limited metrology using quantum error-correction codes}.
\newblock {\em arXiv preprint arXiv:1310.3432}, 2013.

\bibitem{kessler2014quantum}
Eric~M Kessler, Igor Lovchinsky, Alexander~O Sushkov, and Mikhail~D Lukin.
\newblock {Quantum error correction for metrology}.
\newblock {\em Physical review letters}, 112(15):150802, 2014.

\bibitem{Arrad2014}
Gilad Arrad, Yuval Vinkler, Dorit Aharonov, and Alex Retzker.
\newblock Increasing sensing resolution with error correction.
\newblock {\em Phys. Rev. Lett.}, 112:150801, 2014.

\bibitem{dur2014improved}
Wolfgang D{\"u}r, Michalis Skotiniotis, Florian Froewis, and Barbara Kraus.
\newblock {Improved quantum metrology using quantum error correction}.
\newblock {\em Physical Review Letters}, 112(8):080801, 2014.

\bibitem{sekatski2017quantum}
Pavel Sekatski, Michalis Skotiniotis, Janek Ko{\l}ody{\'n}ski, and Wolfgang D{\"u}r.
\newblock Quantum metrology with full and fast quantum control.
\newblock {\em Quantum}, 1:27, 2017.

\bibitem{demkowicz2017adaptive}
Rafa{\l} Demkowicz-Dobrza{\'n}ski, Jan Czajkowski, and Pavel Sekatski.
\newblock {Adaptive quantum metrology under general Markovian noise}.
\newblock {\em Physical Review X}, 7(4):041009, 2017.

\bibitem{zhou2018achieving}
Sisi Zhou, Mengzhen Zhang, John Preskill, and Liang Jiang.
\newblock {Achieving the Heisenberg limit in quantum metrology using quantum error correction}.
\newblock {\em Nature communications}, 9(1):78, 2018.

\bibitem{zhou2021asymptotic}
Sisi Zhou and Liang Jiang.
\newblock Asymptotic theory of quantum channel estimation.
\newblock {\em PRX Quantum}, 2(1):010343, 2021.

\bibitem{chiribella2012optimal}
Giulio Chiribella.
\newblock Optimal networks for quantum metrology: semidefinite programs and product rules.
\newblock {\em New Journal of Physics}, 14(12):125008, 2012.

\bibitem{yuan2015optimal}
Haidong Yuan and Chi-Hang~Fred Fung.
\newblock Optimal feedback scheme and universal time scaling for hamiltonian parameter estimation.
\newblock {\em Physical review letters}, 115(11):110401, 2015.

\bibitem{yuan2016sequential}
Haidong Yuan.
\newblock Sequential feedback scheme outperforms the parallel scheme for hamiltonian parameter estimation.
\newblock {\em Physical review letters}, 117(16):160801, 2016.

\bibitem{helstrom1967minimum}
Carl~W Helstrom.
\newblock Minimum mean-squared error of estimates in quantum statistics.
\newblock {\em Physics letters A}, 25(2):101--102, 1967.

\bibitem{barndorff2000fisher}
Ole~E Barndorff-Nielsen and Richard~D Gill.
\newblock Fisher information in quantum statistics.
\newblock {\em Journal of Physics A: Mathematical and General}, 33(24):4481, 2000.

\bibitem{hayashi2005statistical}
Masahito Hayashi and Keiji Matsumoto.
\newblock Statistical model with measurement degree of freedom and quantum physics.
\newblock In {\em Asymptotic theory of quantum statistical inference: selected papers}, pages 162--169. World Scientific, 2005.

\bibitem{conlon2025role}
Lorcan~O Conlon, Jun Suzuki, Ping~Koy Lam, and Syed~M Assad.
\newblock Role of the extended hilbert space in the attainability of the quantum cram{\'e}r--rao bound for multiparameter estimation.
\newblock {\em Physics Letters A}, 542:130445, 2025.

\bibitem{ragy2016compatibility}
Sammy Ragy, Marcin Jarzyna, and Rafa{\l} Demkowicz-Dobrza{\'n}ski.
\newblock Compatibility in multiparameter quantum metrology.
\newblock {\em Physical Review A}, 94(5):052108, 2016.

\bibitem{helstrom1968minimum}
Carl~W Helstrom.
\newblock The minimum variance of estimates in quantum signal detection.
\newblock {\em IEEE Trans. Inf. Theory}, 14(2):234--242, 1968.

\bibitem{young1975asymptotically}
Tzay~Y Young.
\newblock Asymptotically efficient approaches to quantum-mechanical parameter estimation.
\newblock {\em Inf. Sci.}, 9(1):25--42, 1975.

\bibitem{belavkin2004generalized}
Vyacheslav~P Belavkin.
\newblock Generalized uncertainty relations and efficient measurements in quantum systems.
\newblock {\em arXiv preprint quant-ph/0412030}, 2004.

\bibitem{holevo2011probabilistic}
Alexander~S Holevo.
\newblock {\em Probabilistic and statistical aspects of quantum theory}, volume~1.
\newblock Springer Science \& Business Media, 2011.

\bibitem{matsumoto2002new}
Keiji Matsumoto.
\newblock A new approach to the cram{\'e}r-rao-type bound of the pure-state model.
\newblock {\em J. Phys. A Math. Gen.}, 35(13):3111, 2002.

\bibitem{bradshaw2017tight}
M.~Bradshaw, S.~M. Assad, and P.~K. Lam.
\newblock A tight \text{C}ram{\'e}r--\text{R}ao bound for joint parameter estimation with a pure two-mode squeezed probe.
\newblock {\em Phys. Lett. A}, 381(32):2598--2607, 2017.

\bibitem{bradshaw2018ultimate}
M.~Bradshaw, P.~K. Lam, and S.~M. Assad.
\newblock Ultimate precision of joint quadrature parameter estimation with a {G}aussian probe.
\newblock {\em Phys. Rev. A}, 97(1):012106, 2018.

\bibitem{yung2025most}
Simon~K Yung, CM~Yung, Lorc{\'a}n~O Conlon, and Syed~M Assad.
\newblock The most informative cram$\backslash$'er--rao bound for quantum two-parameter estimation with pure state probes.
\newblock {\em arXiv preprint arXiv:2511.14950}, 2025.

\bibitem{kimizu2024adaptive}
Masataka Kimizu, Fuyuhiko Tanaka, and Akio Fujiwara.
\newblock Adaptive quantum state estimation for two optical point sources.
\newblock {\em Physical Review A}, 109(3):032434, 2024.

\bibitem{zhang2024qestoptpovm}
Jianchao Zhang and Jun Suzuki.
\newblock Qestoptpovm: an iterative algorithm to find optimal measurements for quantum parameter estimation.
\newblock {\em arXiv preprint arXiv:2403.20131}, 2024.

\bibitem{nagaoka2005new}
Hiroshi Nagaoka.
\newblock {A new approach to Cram{\'e}r-Rao bounds for quantum state estimation}.
\newblock In {\em {Asymptotic Theory of Quantum Statistical Inference: Selected Papers}}, pages 100--112. World Scientific, 2005.

\bibitem{das2025holevo}
Aritra Das, Lorc{\'a}n~O Conlon, Jun Suzuki, Simon~K Yung, Ping~K Lam, and Syed~M Assad.
\newblock Holevo cram{\'e}r-rao bound: How close can we get without entangling measurements?
\newblock {\em Quantum}, 9:1867, 2025.

\bibitem{conlon2022gap}
Lorc{\'a}n~O Conlon, Jun Suzuki, Ping~Koy Lam, and Syed~M Assad.
\newblock The gap persistence theorem for quantum multiparameter estimation.
\newblock {\em arXiv preprint arXiv:2208.07386}, 2022.

\bibitem{conlon2021efficient}
Lorc{\'a}n~O Conlon, Jun Suzuki, Ping~Koy Lam, and Syed~M Assad.
\newblock Efficient computation of the nagaoka--hayashi bound for multiparameter estimation with separable measurements.
\newblock {\em npj Quantum Information}, 7(1):110, 2021.

\bibitem{liu2023optimal}
Qiushi Liu, Zihao Hu, Haidong Yuan, and Yuxiang Yang.
\newblock Optimal strategies of quantum metrology with a strict hierarchy.
\newblock {\em Physical Review Letters}, 130(7):070803, 2023.

\bibitem{kurdzialek2025quantum}
Stanis{\l}aw Kurdzia{\l}ek, Piotr Dulian, Joanna Majsak, Sagnik Chakraborty, and Rafa{\l} Demkowicz-Dobrza{\'n}ski.
\newblock Quantum metrology using quantum combs and tensor network formalism.
\newblock {\em New Journal of Physics}, 27(1):013019, 2025.

\bibitem{liu2022optimal}
Jing Liu, Mao Zhang, Hongzhen Chen, Lingna Wang, and Haidong Yuan.
\newblock Optimal scheme for quantum metrology.
\newblock {\em Advanced Quantum Technologies}, 5(1):2100080, 2022.

\bibitem{liu2024fully}
Qiushi Liu, Zihao Hu, Haidong Yuan, and Yuxiang Yang.
\newblock Fully-optimized quantum metrology: framework, tools, and applications.
\newblock {\em Advanced Quantum Technologies}, 7(12):2400094, 2024.

\bibitem{albarelli2022probe}
Francesco Albarelli and Rafa{\l} Demkowicz-Dobrza{\'n}ski.
\newblock Probe incompatibility in multiparameter noisy quantum metrology.
\newblock {\em Physical Review X}, 12(1):011039, 2022.

\bibitem{lu2021incorporating}
Xiao-Ming Lu and Xiaoguang Wang.
\newblock Incorporating heisenberg’s uncertainty principle into quantum multiparameter estimation.
\newblock {\em Physical Review Letters}, 126(12):120503, 2021.

\bibitem{yung2024comparison}
Simon~K Yung, Lorc{\'a}n~O Conlon, Jie Zhao, Ping~Koy Lam, and Syed~M Assad.
\newblock Comparison of estimation limits for quantum two-parameter estimation.
\newblock {\em Physical Review Research}, 6(3):033315, 2024.

\bibitem{Hayashi_2024}
Masahito Hayashi and Yingkai Ouyang.
\newblock Finding the optimal probe state for multiparameter quantum metrology using conic programming.
\newblock {\em npj Quantum Information}, 10(1), November 2024.

\bibitem{fujiwara2006strong}
Akio Fujiwara.
\newblock Strong consistency and asymptotic efficiency for adaptive quantum estimation problems.
\newblock {\em Journal of Physics A: Mathematical and General}, 39(40):12489--12504, 2006.

\bibitem{gorecki2020optimal}
Wojciech G{\'o}recki, Sisi Zhou, Liang Jiang, and Rafa{\l} Demkowicz-Dobrza{\'n}ski.
\newblock Optimal probes and error-correction schemes in multi-parameter quantum metrology.
\newblock {\em Quantum}, 4:288, 2020.

\bibitem{chiribella2009theoretical}
Giulio Chiribella, Giacomo~Mauro D’Ariano, and Paolo Perinotti.
\newblock Theoretical framework for quantum networks.
\newblock {\em Physical Review A—Atomic, Molecular, and Optical Physics}, 80(2):022339, 2009.

\bibitem{taranto2025higherorderquantumoperations}
Philip Taranto, Simon Milz, Mio Murao, Marco~Túlio Quintino, and Kavan Modi.
\newblock Higher-order quantum operations, 2025.
\newblock \href{https://arxiv.org/abs/2503.09693}{arXiv:2503.09693}.

\bibitem{Braunstein_1994}
Samuel~L. Braunstein and Carlton~M. Caves.
\newblock Statistical distance and the geometry of quantum states.
\newblock {\em Phys. Rev. Lett.}, 72:3439--3443, May 1994.

\bibitem{Yang_2019}
Yuxiang Yang.
\newblock Memory effects in quantum metrology.
\newblock {\em Physical Review Letters}, 123(11), 2019.

\bibitem{Altherr_2021}
Anian Altherr and Yuxiang Yang.
\newblock Quantum metrology for non-markovian processes.
\newblock {\em Physical Review Letters}, 127(6), August 2021.

\bibitem{Kurdzialek_2025}
Stanisław Kurdziałek, Francesco Albarelli, and Rafał Demkowicz-Dobrzański.
\newblock Universal bounds for quantum metrology in the presence of correlated noise.
\newblock {\em Physical Review Letters}, 135(13), 2025.

\bibitem{Conlon_2023}
Lorcán~O. Conlon, Ping~Koy Lam, and Syed~M. Assad.
\newblock Multiparameter estimation with two-qubit probes in noisy channels.
\newblock {\em Entropy}, 25(8):1122, July 2023.

\bibitem{Mann_2025}
Zachary Mann, Ningping Cao, Raymond Laflamme, and Sisi Zhou.
\newblock Quantum error-corrected non-markovian metrology.
\newblock {\em PRX Quantum}, 6(3), August 2025.

\bibitem{conlon2026fault}
Lorcan~O. Conlon, Yu-Xin Wang, Erfan Abbasgholinejad, Victor~V. Albert, Michael~J. Gullans, and Alexey~V. Gorshkov.
\newblock Fault-tolerant heisenberg-limited quantum sensing.
\newblock {\em arXiv preprint arXiv:2608.00171}, 2026.

\bibitem{Eldredge_2018}
Zachary Eldredge, Michael Foss-Feig, Jonathan~A. Gross, S.~L. Rolston, and Alexey~V. Gorshkov.
\newblock Optimal and secure measurement protocols for quantum sensor networks.
\newblock {\em Physical Review A}, 97(4), April 2018.

\bibitem{Proctor_2018}
Timothy~J. Proctor, Paul~A. Knott, and Jacob~A. Dunningham.
\newblock Multiparameter estimation in networked quantum sensors.
\newblock {\em Physical Review Letters}, 120(8), February 2018.

\bibitem{Qian_2019}
Kevin Qian, Zachary Eldredge, Wenchao Ge, Guido Pagano, Christopher Monroe, J.~V. Porto, and Alexey~V. Gorshkov.
\newblock Heisenberg-scaling measurement protocol for analytic functions with quantum sensor networks.
\newblock {\em Physical Review A}, 100(4), October 2019.

\bibitem{Rubio_2020}
Jesús Rubio, Paul~A Knott, Timothy~J Proctor, and Jacob~A Dunningham.
\newblock Quantum sensing networks for the estimation of linear functions.
\newblock {\em Journal of Physics A: Mathematical and Theoretical}, 53(34):344001, August 2020.

\bibitem{Bringewatt_2021}
Jacob Bringewatt, Igor Boettcher, Pradeep Niroula, Przemyslaw Bienias, and Alexey~V. Gorshkov.
\newblock Protocols for estimating multiple functions with quantum sensor networks: Geometry and performance.
\newblock {\em Physical Review Research}, 3(3), 2021.

\bibitem{Bringewatt_2024}
Jacob Bringewatt, Adam Ehrenberg, Tarushii Goel, and Alexey~V. Gorshkov.
\newblock Optimal function estimation with photonic quantum sensor networks.
\newblock {\em Physical Review Research}, 6(1), March 2024.

\bibitem{zhou2026unified}
Zhao-Yi Zhou and Da-Jian Zhang.
\newblock Unified and computable approach to optimal strategies for multiparameter estimation.
\newblock {\em arXiv preprint arXiv:2603.06244}, 2026.

\bibitem{li2026optimalstrategiesmultiparameterquantum}
Linxuan Li, Zihao Hu, Longyun Chen, Yuxiang Yang, and Haidong Yuan.
\newblock Optimal strategies for multi-parameter quantum metrology, 2026.
\newblock \href{https://arxiv.org/abs/2608.01114}{arXiv:2608.01114}.

\bibitem{Zhou_2018}
Sisi Zhou, Mengzhen Zhang, John Preskill, and Liang Jiang.
\newblock Achieving the heisenberg limit in quantum metrology using quantum error correction.
\newblock {\em Nature Communications}, 9(1), January 2018.

\bibitem{nussbaum2010asymptotically}
Michael Nussbaum and Arleta Szko{\l}a.
\newblock Asymptotically optimal discrimination between pure quantum states.
\newblock In {\em Conference on Quantum Computation, Communication, and Cryptography}, pages 1--8. Springer, 2010.

\bibitem{Hayashi_2023}
Masahito Hayashi and Yingkai Ouyang.
\newblock Tight cramér-rao type bounds for multiparameter quantum metrology through conic programming.
\newblock {\em Quantum}, 7:1094, August 2023.

\bibitem{vandenberghe1996semidefinite}
Lieven Vandenberghe and Stephen Boyd.
\newblock Semidefinite programming.
\newblock {\em SIAM review}, 38(1):49--95, 1996.

\bibitem{boyd2004convex}
Stephen Boyd and Lieven Vandenberghe.
\newblock {\em Convex optimization}.
\newblock Cambridge university press, 2004.

\bibitem{gartner2012approximation}
Bernd G{\"a}rtner and Jiri Matousek.
\newblock {\em Approximation algorithms and semidefinite programming}.
\newblock Springer Science \& Business Media, 2012.

\bibitem{di2012elementarygatesternaryquantum}
Yao-Min Di and Hai-Rui Wei.
\newblock Elementary gates for ternary quantum logic circuit, 2012.
\newblock \href{https://arxiv.org/abs/1105.5485}{arXiv:1105.5485}.

\bibitem{trisciani2025decompositionmultiqutritgatesgenerated}
Daniele Trisciani, Marco Cattaneo, and Zoltán Zimborás.
\newblock Decomposition of multi-qutrit gates generated by weyl-heisenberg strings, 2025.
\newblock \href{https://arxiv.org/abs/2507.09781}{arXiv:2507.09781}.

\bibitem{Boixo_2007}
Sergio Boixo, Steven~T. Flammia, Carlton~M. Caves, and JM~Geremia.
\newblock Generalized limits for single-parameter quantum estimation.
\newblock {\em Physical Review Letters}, 98(9), February 2007.

\end{thebibliography}

\onecolumngrid

\newpage

\makeatletter
\global\let\set@footnotewidth\set@footnotewidth@one
\global\let\compose@footnotes\compose@footnotes@one
\makeatother

\appendix

\section*{Appendices}

\section{Quantum Fisher Information with an unknown noise direction}
\label{apenQFIunknownnoise}

    In this appendix, we present a more detailed analysis of the sequential sensing problem under unknown noise studied in Section~\ref{sec:sensing_correlated_random_noise_direction}. In this setting, we consider the noise operator $L_\varphi=\sqrt{\gamma}((\cos\varphi)\sigma_x+(\sin\varphi)\sigma_y)$ with our signal a rotation around the $z$ axis (see Fig.~\ref{unknownnoise}). 
    We assume that noise only happens on the probe system, where any control operations and ancillary systems are ideal. 
    In this appendix we use Hamiltonian $H_\theta=\theta\sigma_z$, consistently with the short-time Kraus operator below.
    Thus noiseless evolution for time $T$ implements $R_{z,2\theta T}$, with optimal QFI $4T^2$ and locally attainable MSE $1/(4T^2)$.
    Operators acting only on the probe are implicitly tensored with the identity on the ancillary register when applied to a probe--ancilla state.
    In Section~\ref{sec:known_noise_dir}, we first show that, when $\varphi$ is known, the Heisenberg limit is attainable as expected. Then, in Section~\ref{apensubsubestimatingnoisedir}, we compute the QFI for estimating the unknown noise direction. In Section~\ref{sec:unknown_noise_direction_estimation}, we show that a protocol implementing error correction based on imperfect knowledge of $\varphi$ does not achieve Heisenberg scaling. Finally, in Section~\ref{apenA5}, we consider estimating $\theta$ without knowing $\varphi$.
    If $\varphi$ is restricted to a fixed finite set, its value can be identified with exponentially small error, allowing $\mse_\theta=O(T^{-2})$.
    However, for an arbitrary continuously valued unknown noise direction $\varphi$, no-correction and estimate-then-correct protocols do not retain Heisenberg scaling asymptotically.

\subsection{Known noise direction}\label{sec:known_noise_dir}
First assume that $\varphi$ is known exactly. We take as our input state the maximally entangled state $\ket{\psi_0}=\frac{|00\>+|11\>}{\sqrt{2}}$ over ancillary qubit $C$ and probe qubit $A$, so that the projector onto the noiseless code space and error space can be defined as
\begin{equation}
    \begin{split}
        \Pi_\text{c}&=\ketbra{00} + \ketbra{11} \quad\text{and}\\
        \Pi_\text{e}&=\ketbra{01}+\ketbra{10} \;,
    \end{split}
\end{equation}
respectively.
Following Ref.~\cite{Zhou_2018}, we approximate the noisy channel acting on a probed qubit system $A$, where the ideal channel is parameterized by rotation angle $\theta$ and noise direction parameter $\varphi$, as follows:
\begin{equation}
    \mathcal{E}_{\theta,\varphi}(\rho)=K_0\rho K_0^\dagger+K_1\rho K_1^\dagger \;.
\end{equation}
The Kraus operators $K_0,K_1$ \footnote{Note that this is an approximation of the actual evolution since $K_0^\dag K_0+K_1^\dag K_1 = I + O(dt^2)$.} are given by
\begin{equation}
\begin{gathered}
    K_0=I+\left(-\iota\theta \sigma_z-\frac{1}{2}L_\varphi^\dagger L_\varphi\right)dt \\
    \text{and}\\
    K_1=L_\varphi\sqrt{dt}\;.
\end{gathered}
\end{equation}
Since $L_\varphi^\dagger L_\varphi = \gamma I$, 
we have $K_0=(1-(\gamma\,dt)/2) I -\iota\theta dt\sigma_z$, which acts on a subsystem of $\ket{\psi_0}=(\ket{00}+\ket{11})/\sqrt{2}$ as
\begin{equation}
    I\otimes K_0\ket{\psi_0} = \Big(1-\frac{dt\gamma}{2}\Big) \ket{\psi_0}-\iota\theta dt\sigma_z\ket{\psi_0} + O(dt^2) \;.
\end{equation}
Note that this Kraus operator keeps the evolution within the code space. 
In contrast, $K_1$ brings the state out of the code space, acting as
\begin{equation}
    I\otimes K_1\ket{\psi_0} =\sqrt{dt}L_\varphi \ket{\psi_0} + O(dt^2) \;.
\end{equation}
When the projector $\Pi_\text{e}$ is observed, one can implement the correction operator $Q_\varphi=(\cos\varphi)\sigma_x+(\sin\varphi)\sigma_y$, such that 
\begin{equation}
    I\otimes Q_\varphi K_1\ket{\psi_0} =\sqrt{dt}\sqrt{\gamma} \ket{\psi_0} + O(dt^2) \;.
\end{equation}
Thus for input $\rho=\ketbra{\psi_0}$ to the channel $\mathcal{E}_{\theta,\varphi}$, we obtain an output state $\rho_\outs=\mathcal{R}\circ(\mathcal{I}\otimes\mathcal{E}_{\theta,\varphi})(\rho)$, given by
\begin{equation}
\begin{aligned}
    \rho_\outs =& (1-dt\gamma/2)^2 \ketbra{\psi_0}-\iota\theta dt (\sigma_z\ketbra{\psi_0}-\ketbra{\psi_0}\sigma_z) + dt\gamma\ketbra{\psi_0}+O(dt^2)\\
    =&(1-dt\gamma) \ketbra{\psi_0}-\iota\theta dt (\sigma_z\ketbra{\psi_0}-\ketbra{\psi_0}\sigma_z) + dt\gamma\ketbra{\psi_0}+O(dt^2)\\
    =& \ketbra{\psi_0}-\iota\theta dt (\sigma_z\ketbra{\psi_0}-\ketbra{\psi_0}\sigma_z)+O(dt^2)\;,
\end{aligned}
\end{equation}
which is the ideal evolution to first order in $dt$. In the limit $dt\to0$, this protocol can recover Heisenberg scaling exactly.

\subsection{Estimating noise direction}
\label{apensubsubestimatingnoisedir}

We now assume that $\varphi$ is unknown and consider the task of estimating it. Starting from a perfect Bell state, we compute the quantum Fisher information for estimating $\varphi$ in the noise operator $L_\varphi=\sqrt{\gamma} (\cos(\varphi)\sigma_x+\sin(\varphi)\sigma_y)$. The quantum state after an infinitesimal time step is
\begin{equation}
\label{rhoevolve1}
    \rho=
    \begin{pmatrix}
     \frac{1}{2}-\frac{\gamma  dt}{2} & 0 & 0 & -\frac{\gamma  dt}{2}-\iota dt \theta +\frac{1}{2} \\
     0 & \frac{\gamma  dt}{2} & \frac{1}{2} \gamma  dt \mathrm{e}^{-2 \iota \varphi } & 0 \\
     0 & \frac{1}{2} \gamma dt \mathrm{e}^{2 \iota \varphi } & \frac{\gamma  dt}{2} & 0 \\
     -\frac{\gamma  dt}{2}+\iota dt \theta +\frac{1}{2} & 0 & 0 & \frac{1}{2}-\frac{\gamma  dt}{2} \\
    \end{pmatrix}\;.
\end{equation}
Differentiating with respect to $\varphi$ gives
\begin{equation}
    \hspace{-1cm}
    \dot{\rho}_\varphi=
    \begin{pmatrix}
     0 & 0 & 0 & 0 \\
     0 & 0 & -\iota \gamma  dt \mathrm{e}^{-2 \iota \varphi } & 0 \\
     0 & \iota\gamma dt \mathrm{e}^{2 \iota \varphi } & 0 & 0 \\
    0 & 0 & 0 & 0 \\
    \end{pmatrix}\;.
\end{equation}
{At the point $\theta=0$, the eigensystem to
first order in $dt$, with unnormalized eigenvectors, is}
\begin{equation}
    \begin{array}{c|cccc}
1-dt\gamma& 1 & 0 & 0 & 1 \\[4pt]
0 & -1 & 0 & 0 & 1 \\[4pt]
dt\gamma & 0 & \mathrm{e}^{-2\iota\varphi} & 1 & 0 \\[4pt]
0 & 0 & -\mathrm{e}^{-2\iota\varphi} & 1 & 0
\end{array},
\end{equation}
where the leftmost column denotes the eigenvalue of the eigenvector presented in the remainder of the corresponding row. 
From this, the SLD operator $\mathcal{L}_\varphi$ can be calculated as
\begin{equation}
    \mathcal{L}_\varphi=\begin{pmatrix}
 0 & 0 & 0 & 0 \\
 0 & 0 & -2\iota \mathrm{e}^{-2 \iota \varphi } & 0 \\
 0 & 2\iota\mathrm{e}^{2 \iota \varphi } & 0 & 0 \\
0 & 0 & 0 & 0 \\
\end{pmatrix}\;.
\end{equation}
The corresponding QFI is
\begin{equation}
    \label{eq:QFItheta}\mathcal{J}_\text{Q}=4dt\gamma\;,
\end{equation}
i.e.~the QFI for $\varphi$ grows linearly in time, meaning the variance shrinks as $1/(\gamma T)$, which gives the standard quantum limit. 
More rigorously, one can use additivity of the QFI to go from $dt$ in eqn.~\eqref{eq:QFItheta} to $T$ in general. 
Over a total sensing time $T$, additivity gives $\mathcal{J}_\text{Q}(T)=4\gamma T$, and the quantum Cram\'er--Rao bound gives $\mse_\varphi\geq1/(4\gamma T)$, which has standard-quantum-limit scaling.

\subsection{Unknown noise direction---estimating $\varphi$}\label{sec:unknown_noise_direction_estimation}

When $\varphi$ is unknown, the exact correction operator $Q$ cannot be implemented. 
As such, this is a paradigmatic example of sequential sensing with no clear optimal protocol, and can be illustrated in the multi-time-step process notation as
\begin{equation}
    \begin{quantikz}[classical gap=0.03cm]
        \gate[2,style={blue!0,fill=blue!0,fill opacity=0,opacity=0}]{|\psi_0\>} \bshade{2}{8}{} \wire[r][1]["{}"{above,pos=0.45}]{q} && \gate[2,style={fill=blue!20}]{\mathcal{Q}} \wire[l][1]["{C}"{above,pos=0.45}]{q} &
            && \gate[2,style={fill=blue!20}]{\mathcal{Q}} &
            & \gate[2,style={fill=blue!20}]{M} \\
        & \gate[1,style={fill=green!20}]{\mathcal{E}_{\theta,\varphi}} \wire[l][1]["{}"{above,pos=0.45}]{q} \setwiretype{q} \gshade{1}{1}{} & \wire[l][1]["{A}"{above,pos=0.45}]{q} & 
            \invgate{\dots} & \gate[1,style={fill=green!20}]{\mathcal{E}_{\theta,\varphi}} \wire[l][1]["{}"{above,pos=0.6}]{q} \wire[r][1]["{}"{above,pos=0.45}]{q} \gshade{1}{1}{} &&
            \gate[1,style={fill=green!20}]{\mathcal{E}_{\theta,\varphi}} \wire[l][1]["{}"{above,pos=0.6}]{q} \wire[r][1]["{}"{above,pos=0.45}]{q} \gshade{1}{1}{} &
    \end{quantikz} .
\end{equation}
A natural approach to this problem appears to be to learn $\varphi$ and then use this knowledge to implement an approximate version of $Q$, $\hat{Q}$. 
{For the independent-probe calibration protocol in Appendix~\ref{apensubsubestimatingnoisedir}, a calibration time $T_{\rm cal}$ gives $\mse_\varphi\geq1/(4\gamma T_{\rm cal})$ for locally unbiased estimation.
We define the signed calibration error as $\epsilon=\hat{\varphi}-\varphi$, so that $\mathbb{E}[\epsilon^2]=\mse_\varphi$.
The total time includes both calibration and subsequent sensing, and hence $T_{\rm cal}\leq T$.}

We now consider a protocol that uses the estimated value of $\varphi$ to predict $\theta$. 
As such, a protocol will be biased.
While unbiased estimators are optimal in the limit of many experimental repetitions\footnote{Technically, estimators with asymptotically vanishing bias are optimal.}, for finite samples, one may still wish to consider biased estimators. Given a specific estimator $\hat{\theta}_i$ with variance $v_i$ and bias $b_i$, the total MSE is given by
\begin{equation}  
\label{eq:MSEbias}
\sum_{i=1}^m\mse_{\theta_i}=\sum_i (v_i+b_i^2)\;.
\end{equation}

One particular protocol proceeds as follows: when the error projector is observed, we implement an imperfect correction operation $Q_{\hat{\varphi}}=L_{\hat{\varphi}}/\sqrt{\gamma}$ acting on the probe. 
Note that at this stage we consider a fixed $\hat{\varphi}$, such that $\epsilon$ is a fixed number, and later we will average over all possible $\epsilon$. Together with the check measurement this gives a recovery channel
\begin{equation}\label{eqn:unk_noise_recovery}
    \mathcal{Q}(\cdot) = \Pi_\text{c}\cdot\Pi_\text{c} + Q_{\hat{\varphi}}\Pi_\text{e} \cdot \Pi_\text{e} Q_{\hat{\varphi}}^\dag \;.
\end{equation}
Note that the action of correction operation $Q_{\hat{\varphi}}$ on the jump operator is proportional to a Pauli-$Z$ rotation:
\begin{equation}
\label{eq:CL}
    Q_{\hat{\varphi}} L_\varphi = \sqrt{\gamma} R_{z,2\epsilon}
\end{equation}
for $\epsilon=\hat{\varphi}-\varphi$ and $R_{z,2\epsilon} = e^{-\iota\frac{2\epsilon}{2}\sigma_z}$ denoting the single-qubit Pauli-$Z$ rotation by $2\epsilon$ angle. 
Here we used the identities $(\cos\varphi')(\cos\varphi)+(\sin\varphi')(\sin\varphi) = \cos(\varphi-\varphi')$ and $(\cos\varphi')(\sin\varphi)-(\sin\varphi')(\cos\varphi) = \sin(\varphi-\varphi')$.
Therefore, when an error is observed, the state becomes
\begin{equation}
\begin{split}
      I\otimes Q_{\hat{\varphi} }K_1\ket{\psi_0} &=\sqrt{dt}\sqrt{\gamma} R_{z,2\epsilon}\ket{\psi_0} + O(dt^2) 
      =\sqrt{dt}\sqrt{\gamma} (I-\iota\epsilon \sigma_z)\ket{\psi_0} + O(dt^2)\;,
\end{split}
\end{equation}
where in the second line we have expanded $R_{z,2\epsilon}$ assuming $\epsilon$ is small. Thus, the overall evolution is
\begin{equation}
\begin{aligned}
    \rho_\outs =& (1-dt\gamma/2)^2 \ketbra{\psi_0}-\iota\theta dt (\sigma_z\ketbra{\psi_0}-\ketbra{\psi_0}\sigma_z) + dt\gamma (I-\iota\epsilon \sigma_z)\ketbra{\psi_0}(I-\iota\epsilon \sigma_z)^\dagger+O(dt^2)\\
    =&\ketbra{\psi_0}-\iota(\theta+\gamma\epsilon) dt (\sigma_z\ketbra{\psi_0}-\ketbra{\psi_0}\sigma_z) +O(dt^2)\;.
\end{aligned}
\end{equation}
Evidently this approach introduces a bias of $\gamma\epsilon$ to first order in $dt$. 
From eqn.~\eqref{eq:MSEbias}, we can therefore see that $\mathbb{E}[(\hat{\theta}-\theta)^2]\geq\gamma^2\mathbb{E}[\epsilon^2]\geq\gamma/(4T)$, which is not Heisenberg scaling. 

These observations rule out Heisenberg scaling for the specified independent-probe calibration followed by plug-in correction. 
Appendix~\ref{apen:nuisance} separately analyzes the fixed-duration protocol with recovery $Q=X$ and an unknown continuous noise direction.
Neither observation implies a general impossibility of Heisenberg scaling for arbitrary sequential strategies.
In spite of this, there does exist a protocol achieving Heisenberg scaling, as we have shown in Section~\ref{sec:sensing_correlated_random_noise_direction} for a noise direction $\varphi$ taken from a finite set of uniformly spaced periodic angles.

There is no contradiction with the open question in the main text.
The negative conclusion above concerns an arbitrary continuously valued but fixed $\varphi$ and the particular estimate-then-correct protocol.
It does not establish a general impossibility result for all sequential strategies.
If $\varphi$ is promised to belong to a fixed finite set, one may instead identify the corresponding label with exponentially small error and then apply the known-direction recovery from Appendix~\ref{sec:known_noise_dir}.
For the uniformly spaced set considered in Section~\ref{sec:sensing_correlated_random_noise_direction}, the fixed cyclic recovery given there also cancels the unknown logical rotation.
Both constructions give Heisenberg scaling in the finite-set model, whereas the construction of a strategy for arbitrary continuous $\varphi$ remains an open problem.

\subsection{Estimation with unknown noise direction}
\label{apenA5}
In this subsection, we present further details on protocols that do not involve knowing the noise direction $\varphi$. 
{We first analyze the fixed-duration protocol with recovery $Q=X$, treating the unknown direction $\varphi$ as a nuisance parameter, giving SQL-scaling Fisher information and demonstrating that Heisenberg scaling is not attainable.}
We then analyze the alternative problem discussed in Section~\ref{sec:sensing_correlated_random_noise_direction} in the main text where $\varphi$ is drawn from a discrete set.
\atchg{The protocol in
Appendix~\ref{sec:unknown_noise_direction_estimation}
uses a separate calibration stage and applies the
estimate-dependent recovery $Q_{\hat{\varphi}}$.
Here the recovery is always $Q=X$, and the jump count and
final measurement outcome are used together to estimate
$\theta$ in the presence of the unknown parameter $\varphi$.}

\subsubsection{Nuisance parameter estimation}
\label{apen:nuisance}

\begin{atenv}
Consider the protocol that measures the code and error
syndromes and applies the fixed recovery $Q=X$ after each
detected jump.
Let $N_{\rm E}$ denote the number of detected jumps during
the fixed interrogation time $T$.
Since $L_\varphi^\dagger L_\varphi=\gamma I$,
\begin{equation}
    P(N_{\rm E}=n)
    =
    e^{-\gamma T}\frac{(\gamma T)^n}{n!},
\end{equation}
independently of $\theta$ and $\varphi$.

Each corrected jump contributes $R_{z,-2\varphi}$.
With $H_\theta=\theta\sigma_z$, the final state conditioned on $N_{\rm E}=n$ is therefore
\begin{equation}
    |\psi_{\theta,\varphi|n}\>
    =
    \frac{
        e^{-\iota(\theta T-n\varphi)}|00\>
        +
        e^{\iota(\theta T-n\varphi)}|11\>
    }{\sqrt{2}}.
\end{equation}
\atchg{After observing $N_{\rm E}=n$, measure in the following orthonormal basis of the code space:}
\begin{equation}
    |\pm_{\chi_n}\>
    :=
    \frac{|00\>\pm e^{\iota\chi_n}|11\>}{\sqrt{2}},
    \qquad
    \Pi_{\pm|n}
    :=
    \ketbra{\pm_{\chi_n}}.
\end{equation}
\atchg{Here $\chi_n$ is a controllable phase specifying the measurement basis.
This measurement has probabilities}
\begin{equation}
    p_\pm(\theta,\varphi|n)
    =
    \frac{
        1\pm\cos(2\theta T-2n\varphi-\chi_n)
    }{2}.
\end{equation}
\atchg{For example, at the point
$(\theta_0,\varphi_0)$, choose}
\begin{equation}
    \atchg{
    \chi_n
    =
    2\theta_0T-2n\varphi_0+\frac{\pi}{2}.
    }
\end{equation}
\atchg{Then $p_+=p_-=1/2$ at that point.
These measurement settings are held fixed when taking
parameter derivatives.}
The Fisher information for the phase is one.
Applying the chain rule therefore gives the outer product of the phase gradient $(2T,-2n)$ with itself.
This measurement also attains the SLD quantum Fisher information matrix of the conditional state at the operating point.
\end{atenv}

\atchg{Conditioned on the observed count $N_{\rm E}$, the CFI for simultaneously estimating both $\theta$ and $\varphi$ is}
\begin{equation}
    \mathcal{J}=\begin{pmatrix}
        \mathcal{J}_{\theta\theta}&\mathcal{J}_{\theta\varphi}\\
        \mathcal{J}_{\varphi\theta}&\mathcal{J}_{\varphi\varphi}
    \end{pmatrix}=4\begin{pmatrix}
        T^2&-N_\text{E}T\\-N_\text{E}T&N_\text{E}^2
    \end{pmatrix}\;.
\end{equation}

\begin{atenv}
The Fisher information matrix is singular because
\begin{equation}
\begin{aligned}
    \det\mathcal{J}
    &=
    16T^2N_{\rm E}^2 - 16(N_{\rm E}T)^2
    = 0.
\end{aligned}
\end{equation}
Thus its inverse $\mathcal{J}^{-1}$ does not exist.

\begin{atenv}
We treat $\varphi$ as a nuisance parameter: $\varphi$ is unknown and affects the outcome probabilities.
A locally unbiased estimator of $\theta$ must then satisfy,
\begin{equation}
    \mathbb{E}[\hat{\theta}]=\theta,
    \qquad
    \partial_\theta\mathbb{E}[\hat{\theta}]=1,
    \qquad
    \partial_\varphi\mathbb{E}[\hat{\theta}]=0.
\end{equation}
The last condition prevents a small change in the unknown noise direction from being interpreted as a change in $\theta$.
If $\varphi$ is known, this condition is unnecessary and the relevant scalar Fisher information is $\mathcal{J}_{\theta\theta}$.
If both parameters are estimated, the same nuisance penalty appears in the $\theta$ entry of the inverse joint Fisher information matrix.
\end{atenv}

For a fixed nonzero observed value of $N_{\rm E}$, the effective Fisher information for $\theta$ after treating $\varphi$ as a nuisance parameter is its Schur complement:
\begin{equation}
\begin{aligned}
    \mathcal{J}_{\theta|\varphi}
    :=
    \mathcal{J}_{\theta\theta}
    -
    \mathcal{J}_{\theta\varphi}
    \mathcal{J}_{\varphi\varphi}^{-1}
    \mathcal{J}_{\varphi\theta}
    = 4T^2 - 4(N_{\rm E}T) \frac{1}{N_{\rm E}^2} (N_{\rm E}T)
    = 0,
\end{aligned}
\end{equation}
The corresponding locally unbiased Cram\'er--Rao bound therefore diverges.
Equivalently, for fixed $N_{\rm E}\neq0$, the conditional likelihood depends only on the combination $T\theta-N_{\rm E}\varphi$, so the two parameters cannot be separated within this conditional model.
For $N_{\rm E}=0$, the conditional likelihood contains no $\varphi$ dependence and this particular nuisance-parameter conclusion does not apply.
More generally, this calculation does not constitute a no-go theorem for strategies that use the complete jump record or jointly optimize the sensing and recovery operations.
\end{atenv}

\begin{atenv}
For the complete experimental record, the observed count $N_{\rm E}$ must also be included.
Because its probability distribution is parameter independent, the full Fisher information is the average of the conditional matrices:
\begin{equation}
\begin{aligned}
    \mathcal{J}_{\rm full}
    &=
   4 \begin{pmatrix}
        T^2 & T\mathbb{E}[N_{\rm E}]\\
        T\mathbb{E}[N_{\rm E}] & \mathbb{E}[N_{\rm E}^2]
    \end{pmatrix}
    =
    4\begin{pmatrix}
        T^2 & -\gamma T^2\\
        -\gamma T^2 & \gamma T+\gamma^2T^2
    \end{pmatrix}.
\end{aligned}
\end{equation}
For $\gamma T>0$, this matrix is nonsingular.
Its effective Fisher information for $\theta$ is
\begin{equation}
    \mathcal{J}_{\theta|\varphi}^{\rm full}
    =
    \frac{4T^2}{1+\gamma T},
\end{equation}
and hence
\begin{equation}
    \mse_\theta
    \geq
    \frac{1+\gamma T}{4T^2}.
\end{equation}
Thus the full record resolves the conditional non-identifiability, but this fixed-duration, fixed-recovery protocol still has only SQL scaling for unknown $\varphi$.
This calculation does not rule out other sensing and recovery strategies.
\end{atenv}

\subsubsection{Discriminating a finite set of possible errors}
\label{apen:discretenoise}

Now we consider the more restricted noise model with $\varphi=j\pi/l$ for $j=0,...,l-1$, where $j$ is unknown but $l$ is known, as discussed in Section~\ref{sec:sensing_correlated_random_noise_direction}. 
We now ask what is the minimum time required for the experimentalist to determine $j$. For this we use the quantum Chernoff bound~\cite{nussbaum2010asymptotically}. We consider preparing the state $\ket{\psi_0}=(\ket{00}+\ket{11})/\sqrt{2}$, and using fresh copies in successive infinitesimal intervals until an error is observed. 
We denote by $\ket{\psi_{0,j}}$ the state after the jump operator corresponding to $\varphi=j\pi/l$ acts upon it. 

\atchg{To obtain independent copies of the jump states $\ket{\psi_{0,j}}$, we prepare a fresh Bell probe in each infinitesimal interval, retain the state when an error is detected, and otherwise restart the probe.
In the ideal fast-control limit, coherent evolution within one interval vanishes, while the detected jump state is $(I\otimes E_{\varphi_j})|\psi_0\>$.
This specifies how the coherent rotation is removed from the calibration argument.}
Then we have that
\begin{equation}
    |\<\psi_{0,j}|\psi_{0,k}\>|^2=\cos^2\left(\frac{j\pi}{l}-\frac{k\pi}{l}\right)\;.
\end{equation}

\begin{atenv}
For a fixed finite set of pure states $\{\ket{\psi_{0,i}}\}_i$, the multiple-state quantum Chernoff bound identifies the optimal asymptotic discrimination error exponent $\mathrm{e}^{-n\kappa}$ (c.f.~\cite{nussbaum2010asymptotically}).
Let $P_{\rm err}(n)$ denote the minimum worst-case error probability over joint measurements on $n$ copies of the unknown jump state.
Then, for $l>2$,
\begin{equation}\label{eqn:finite_jump_chernoff_rate}
    \lim_{n\to\infty}-\frac{1}{n}\log P_{\rm err}(n)=\kappa.
\end{equation}
The same exponent applies to the minimum equal-prior average error, since the minimum average and worst-case errors differ by at most a factor $l$.
For these pure states, the Chernoff exponent $\kappa$ is determined by their squared overlaps:
\end{atenv}
\begin{equation}
    \kappa=\min_{j\neq k}-\log\left(\cos^2\left(\frac{j\pi}{l}-\frac{k\pi}{l}\right)\right)
    = -2\log\left(
        \cos\left(
            \frac{\pi}{l}
        \right)
    \right)
    \quad
    \text{for }l>2 .
\end{equation}
\begin{atenv}
For $l=2$, the two possible jump states are orthogonal and can be discriminated perfectly after one observed jump, corresponding formally to $\kappa=\infty$.
Equation~\eqref{eqn:finite_jump_chernoff_rate} implies that $-n^{-1}\log P_{\rm err}(n)\geq\kappa/2$ for all sufficiently large $n$.
Consequently, there exist $A>0$ and $n_0$ such that
\begin{equation}
    P_{\rm err}(n)
    \leq
    A e^{-n\kappa/2},
    \qquad n\geq n_0.
\end{equation}

Since $L_j^\dagger L_j=\gamma I$, jumps form a Poisson process with rate $\gamma$, independently of $j$.
If $W_r$ is the waiting time between the $(r-1)$th and $r$-th jumps, then the $W_r$'s are independent and
\begin{equation}
    \mathbb{E}[W_r]
    =
    \int_0^\infty P(W_r>w)\,dw
    =
    \int_0^\infty e^{-\gamma w}\,dw
    =
    \frac{1}{\gamma}.
\end{equation}
The time $\tau_n$ needed to collect $n$ jump states satisfies
\begin{equation}
    \tau_n=\sum_{r=1}^nW_r,
    \qquad
    \mathbb{E}[\tau_n]=\frac{n}{\gamma}.
\end{equation}

For a deterministic total time $T$, define
\begin{equation}
\begin{gathered}
    c_0:=\min\left\{\frac{\kappa}{2},\frac14\right\},\qquad
    n(T):=
    \max\left\{
        n_0,\,
        \left\lceil
            \frac{3}{c_0}\log(1+\gamma T)
        \right\rceil
    \right\},\qquad
    t_{\rm cal}:=\frac{2n(T)}{\gamma}.
\end{gathered}
\end{equation}
For fixed $\gamma>0$ and fixed $l$, both $\mathbb{E}[\tau_{n(T)}]$ and $t_{\rm cal}$ are $O(\log T)$, since
\begin{equation}
    \mathbb{E}[\tau_{n(T)}]
    =
    \frac{n(T)}{\gamma}
    =
    \frac{3}{\gamma c_0}\log(1+\gamma T)+O(1)
    =
    O(\log T).
\end{equation}
Stop calibration once $n(T)$ jump states have been collected, and declare a calibration failure if this has not happened by $t_{\rm cal}$.
A Poisson lower-tail bound gives
\begin{equation}
\begin{aligned}
    \Pr(\tau_{n(T)}>t_{\rm cal})
    &=
    \Pr\!\left(
        \operatorname{Poisson}(2n(T))<n(T)
    \right)
    \leq
    e^{-n(T)/4}.
\end{aligned}
\end{equation}
Combining discrimination errors and calibration failures,
\begin{equation}
    P_{\rm fail}(T)
    \leq
    (A+1)(1+\gamma T)^{-3}
    =
    O(T^{-3}).
\end{equation}
The remaining sensing time is $T-t_{\rm cal}=T-O(\log T)$.
A bounded estimator on a failure event therefore makes its MSE contribution negligible compared with $T^{-2}$.
The known-direction recovery can then be used for the remaining sensing time.
The corresponding local SLD estimator has values $\theta_0\pm[2(T-t_{\rm cal})]^{-1}$, which remain bounded for sufficiently large $T$.
The joint Fisher information, with the latent label supplied, is at most $4T^2$.
\atchg{Here the bound follows from the continuous-time signal generator $\sigma_z$ and total interrogation time $T$; this is the continuous-time counterpart of the generator argument used in the proof of Proposition~\ref{prop:finite_latent_hs}.}
The affine normalization used in the proof of Proposition~\ref{prop:finite_latent_hs} therefore restores local unbiasedness without changing the $O(T^{-2})$ scaling.

For $l=2$, one detected jump identifies the direction perfectly and the expected waiting time is $1/\gamma$.
A deterministic calibration deadline $3\log(1+\gamma T)/\gamma$ makes the probability of observing no jump equal to $(1+\gamma T)^{-3}$.
For $l=1$, no direction calibration is needed.

The distinction between continuous and discrete time concerns the resource used for calibration.
In the continuous-time model, the informative jump events occur randomly, so collecting $n$ jump states requires a random physical time with mean $n/\gamma$.
In the discrete-time model of Section~\ref{sec:unknown_corr_noise}, the fixed unitary $E_{\varphi_j}$ occurs once per channel use, so $n$ calibration probes require exactly $n$ channel uses.
In both cases, a fixed finite set can be discriminated with exponentially small error using a logarithmic calibration overhead, under the respective ideal-control assumptions.
\atchg{The continuous-time argument uses the limit $dt\rightarrow0$ with arbitrarily frequent syndrome measurements.
It does not identify a finite-duration Lindblad channel with one application of the discrete unitary $E_{\varphi_j}R_{z,\theta}$.}
\end{atenv}

\section{Kraus representations and bounds for correlated processes}
\label{app:correlated_kraus_bounds}

In this appendix, we first give a proof the Kraus-representation independence of HNKS and then gives the details of the correlated-process Kraus-gauge construction summarized in Section~\ref{subsec:HNKS}, following the results in refs.~\cite{zhou2021asymptotic,Altherr_2021,Kurdzialek_2025}.
We distinguish the exact quantum Fisher information for a fixed number of sensing rounds from asymptotic bounds on a family of correlated processes.

\subsection{Kraus-representation independence}
\label{app:kraus_representation_independence}

\begin{atenv}
Consider the Kraus vector
\begin{equation}
    \mathbf{K}_\theta
    :=
    (K_{1,\theta},\ldots,K_{r,\theta})^\top.
\end{equation}
After padding with zero Kraus operators when necessary,
equivalent differentiable Kraus representations are
related locally by
\begin{equation}
    \mathbf{K}_\theta'
    =
    u_\theta\mathbf{K}_\theta,
    \qquad
    u_\theta^\dagger u_\theta=I.
\end{equation}
Differentiating $u_\theta^\dagger u_\theta=I$ shows that
\begin{equation}
    h_\theta
    :=
    \iota u_\theta^\dagger
    \frac{\partial u_\theta}{\partial\theta}
    =
    h_\theta^\dagger.
\end{equation}
The channel Hamiltonians therefore satisfy
\begin{equation}
\begin{aligned}
    H_{\rm K}'(\mathcal{C}_\theta)
    &=
    H_{\rm K}(\mathcal{C}_\theta) + \sum_{a,b=1}^{r} (h_\theta)_{a,b} K_{a,\theta}^\dagger K_{b,\theta}.
\end{aligned}
\end{equation}
The additional term belongs to
$\mathcal{S}_{\rm K}(\mathcal{C}_\theta)$.

The Hermitian Kraus span is also unchanged.
Indeed, a Hermitian coefficient matrix $h'$ for the primed
representation gives the Hermitian coefficient matrix
$u_\theta^\dagger h'u_\theta$ for the original
representation.
Conversely, any Hermitian coefficient matrix $h$ is obtained
by choosing $h'=u_\theta h u_\theta^\dagger$.
Consequently,
\begin{equation}
\begin{aligned}
    H_{\rm K}'(\mathcal{C}_\theta)
    &\in\mathcal{S}_{\rm K}(\mathcal{C}_\theta)
    \Longleftrightarrow
    H_{\rm K}(\mathcal{C}_\theta)
    \in\mathcal{S}_{\rm K}(\mathcal{C}_\theta).
\end{aligned}
\end{equation}
Thus HNKS is independent of the Kraus representation,
as discussed following eqn.~(11) of
Ref.~\cite{zhou2021asymptotic}.
\end{atenv}

To see the representation independence explicitly, introduce the Kraus vector
\begin{equation}
    \mathbf{K}_\theta
    :=
    (K_{1,\theta},\ldots,K_{r,\theta})^\top .
\end{equation}
Another Kraus representation can be written as
\begin{equation}
    \mathbf{K}_\theta'
    =
    u_\theta\mathbf{K}_\theta
    \quad\text{for}\quad
    u_\theta^\dagger u_\theta=I.
\end{equation}
Defining $h_\theta := \iota u_\theta^\dagger \frac{\partial u_\theta}{\partial\theta}$, the corresponding channel Hamiltonians satisfy
\begin{equation}
\begin{aligned}
    H_{\rm K}'(\mathcal{C}_\theta)
    &=
    H_{\rm K}(\mathcal{C}_\theta)
    +
    \sum_{a,b=1}^{r}
    (h_\theta)_{a,b}
    K_{a,\theta}^\dagger K_{b,\theta}.
\end{aligned}
\end{equation}
The second term belongs to \(\mathcal{S}_{\rm K}(\mathcal{C}_\theta)\), while the Hermitian Kraus span itself is unchanged by the isometric transformation \(u_\theta\).
Consequently,
\begin{equation}
    H_{\rm K}'(\mathcal{C}_\theta)
    \in
    \mathcal{S}_{\rm K}(\mathcal{C}_\theta)
    \quad\Longleftrightarrow\quad
    H_{\rm K}(\mathcal{C}_\theta)
    \in
    \mathcal{S}_{\rm K}(\mathcal{C}_\theta).
\end{equation}
This is the Kraus-gauge argument following eqn.~(11) of Ref.~\cite{zhou2021asymptotic}.

\subsection{Correlated-process bounds}

\begin{atenv}
The relevant Kraus-gauge construction starts from a decomposition of the complete parameterized process,
\begin{equation}\label{eqn:sequential_parameterized_process_kraus_decomposition}
    \mathbf{\Lambda}_\theta^{(T)}
    =
    \sum_a
    \kketbra{K_{a,\theta}^{(T)}},
\end{equation}
where $\{|K_{a,\theta}^{(T)}\rr\}_a$ are vectorized Kraus operators of the complete $T$-step process.
For a fixed admissible sequential tester $\mathbf{M}$, this decomposition induces the ensemble decomposition
\begin{equation}
\begin{gathered}
    \rho_\theta^{\mathbf{M}}
    =
    \mathbf{M}^{1/2}
    \left(
        \mathbf{\Lambda}_\theta^{(T)}
    \right)^\top
    \mathbf{M}^{1/2}
    =
    \sum_a
    |\phi_{a,\theta}^{\mathbf{M}}\>
    \<\phi_{a,\theta}^{\mathbf{M}}|
    \quad \text{for}\quad
    |\phi_{a,\theta}^{\mathbf{M}}\>
    :=
    \mathbf{M}^{1/2}
    |\overline{K_{a,\theta}^{(T)}}\rr.
\end{gathered}
\end{equation}
Equivalent differentiable decompositions are parameterized by a Hermitian matrix $h=h^\dagger$ through
\begin{equation}
    |\dot{\widetilde K}_{a,\theta}^{(T)}\rr
    =
    |\dot K_{a,\theta}^{(T)}\rr
    -
    \iota
    \sum_b
    h_{a,b}
    |K_{b,\theta}^{(T)}\rr.
\end{equation}
The purification formula for the SLD quantum Fisher information minimizes the corresponding derivative norm over $h$, while maximizing over causally admissible testers gives the comb quantum Fisher information.
This construction is given in eqns.~(6)--(8) and Theorem~1 of Ref.~\cite{Altherr_2021}, with its optimization and dual derivations in Appendices~A.1--A.3.
The equivalent fixed-$T$ expression is given in eqn.~(4) of Ref.~\cite{Kurdzialek_2025}.

For a regular, locally identifiable single-parameter problem, this quantity is $\mathcal{J}_{\seq}(\mathbf{\Lambda}_\theta^{(T)})$, whose reciprocal is exactly the SDP value in Theorem~\ref{thm:sequential_NH_bound}, as shown in Proposition~\ref{prop:single_par_achievability}.
\end{atenv}

The fixed-$T$ expression does not by itself give an asymptotic HNKS criterion.
Treating one fixed $T$-step process as a global channel and applying ordinary HNKS would characterize independent repetitions of that entire block, rather than the scaling of the correlated family $\{\mathbf{\Lambda}_\theta^{(T)}\}_{T\geq1}$ as $T$ increases.

For the latter problem, Ref.~\cite{Kurdzialek_2025} divides the process into blocks of $q$ consecutive teeth, where the corresponding block length is denoted by $m$ in that reference.
For this recursive construction, each block retains the environmental input and output through which correlations pass between blocks.
The vectors $|K_{a,\theta}^{(q)}\rr$ below refer to a Kraus decomposition of this block with its environmental boundary systems included, and $\tr_{\rm out}$ traces over its final probe and environmental outputs.
The optimization allows access to the environmental systems between blocks as a relaxation used to obtain an upper bound.
This distinction is essential when applying the construction to a correlated process.

For a Kraus gauge $h=h^\dagger$, the derivatives of the $q$-round block Kraus vectors are
\begin{equation}
\begin{aligned}
    |\dot{\widetilde K}_{a,\theta}^{(q)}\rr
    &:=
    |\dot K_{a,\theta}^{(q)}\rr
    -\iota\sum_b h_{a,b}|K_{b,\theta}^{(q)}\rr \quad
    \text{for}\quad
    |\dot K_{a,\theta}^{(q)}\rr
    :=
    \frac{\partial}{\partial\theta}
    |K_{a,\theta}^{(q)}\rr.
\end{aligned}
\end{equation}
The dot denotes differentiation with respect to $\theta$ at fixed block length $q$.
The tilde denotes a parameter-dependent change of Kraus representation, chosen to agree with the original representation at the operating point; $h$ specifies its local derivative.
This changes the representation without changing the physical block.
We then define
\begin{equation}
\begin{aligned}
    \boldsymbol{\alpha}^{(q)}(h)
    &:=
    \tr_{\rm out}
    \left(
        \sum_a
        |\dot{\widetilde K}_{a,\theta}^{(q)}\rr
        \ll\dot{\widetilde K}_{a,\theta}^{(q)}|
    \right),\quad\text{and}\quad
    \boldsymbol{\beta}^{(q)}(h)
    :=
    \tr_{\rm out}
    \left(
        \sum_a
        |\dot{\widetilde K}_{a,\theta}^{(q)}\rr
        \ll K_{a,\theta}^{(q)}|
    \right).
\end{aligned}
\end{equation}
Their causally constrained analogues of the single-channel quantities $\|\alpha\|$ and $\|\beta\|$ are
\begin{equation}
\begin{aligned}
    a^{(q)}(h)
    &:=
    \max_{\widetilde{\mathbf{C}}^{(q)}}
    \tr
    \left(
        \boldsymbol{\alpha}^{(q)}(h)
        \widetilde{\mathbf{C}}_{00}^{(q)}
    \right),
    \quad\text{and}\quad
    b^{(q)}(h)
    &:=
    \max_{\widetilde{\mathbf{C}}^{(q)}}
    \operatorname{Re}
    \tr
    \left(
        \boldsymbol{\beta}^{(q)}(h)
        \widetilde{\mathbf{C}}_{10}^{(q)}
    \right),
\end{aligned}
\end{equation}
where $\widetilde{\mathbf{C}}^{(q)}$ ranges over the relevant causal testers and the subscripts $00$ and $10$ denote their blocks with respect to the virtual two-dimensional space used in the recursive construction.
\atchg{Here $F^{(\ell)}$ denotes the recursively constructed upper bound on the optimal QFI after $\ell$ sensing rounds, with $F^{(0)}=0$, as defined in eqn.~(8) of Ref.~\cite{Kurdzialek_2025}.}
These quantities give the recursive bound
\begin{equation}
    F^{(\ell+q)}
    \leq
    F^{(\ell)}
    +
    4\min_h
    \left[
        a^{(q)}(h)
        +
        \sqrt{F^{(\ell)}}\,b^{(q)}(h)
    \right].
\end{equation}
This is eqn.~(9) of Ref.~\cite{Kurdzialek_2025}; the performance operators and the derivation of $a^{(q)}$ and $b^{(q)}$ are given in Appendix~D.2, particularly eqns.~(30) and~(34).

If there is a gauge $h$ for which $b^{(q)}(h)=0$, eqn.~(10) of Ref.~\cite{Kurdzialek_2025} gives
\begin{equation}
    \limsup_{N\rightarrow\infty}
    \frac{
        \mathcal{J}_{\seq}
        \left(
            \mathbf{\Lambda}_\theta^{(N)}
        \right)
    }{N}
    \leq
    \frac{4}{q}
    \min_{h:\,b^{(q)}(h)=0}
    a^{(q)}(h),
\end{equation}
and therefore rules out Heisenberg scaling.
If $b^{(q)}(h)\neq0$ for every gauge, eqn.~(11) gives only a quadratic upper bound.
It permits, but does not establish, Heisenberg scaling.
The asymptotic derivations are given in Appendix~D.3, particularly eqns.~(41) and~(42), and the SDP formulations are given in Appendix~D.5.
Because these correlated-process bounds are not generally tight, they do not provide a necessary and sufficient criterion for arbitrary correlated combs.

\section{Proof of Proposition~\ref{prop:finite_latent_hs}}\label{app:proof_finite_latent_hs}

\begin{proof}
    \atchg{All protocols and estimator values are held fixed when taking derivatives at the operating point.
    The latent label is used only in the analysis of the joint distribution and is not supplied to the experimentalist.}
    \atchg{Measurement settings are likewise held fixed; the scheme may be designed using the known point $\theta$.}
    First, let us choose $n = n(T) = \left\lceil \frac{3}{c}\log T \right\rceil$ so that $n(T)\geq\frac{3}{c}\log T$.
    The calibration-error bound in Definition~\ref{def:admissible_calibration} gives
    \begin{equation}
    \begin{aligned}
        P_{\rm err}(n(T);\theta')
        &\leq
        A e^{-cn(T)}
        \leq
        A e^{-3\log T}
        =
        A T^{-3}
        =
        O(T^{-3}).
    \end{aligned}
    \end{equation}
    Then, let $N(T):=T-n(T)=T-O(\log T)$ be the number of remaining sensing rounds.
    Because \(\log T=o(T)\), we have
    \begin{equation}
        \frac{N(T)}{T}
        \longrightarrow
        1
    \end{equation}
    as $T\rightarrow\infty$.
    Hence $N(T)\geq T/2$ for all sufficiently large $T$.
    
    \begin{atenv}
    Now, let us define
    \begin{equation}
        \kappa_{\min}
        :=
        \min_{j:p_j>0}\kappa_j
        >
        0,
    \end{equation}
    where positivity follows from the finiteness of $\mathcal{K}$ and the positivity of each $\kappa_j$.
    Then, take $T$ sufficiently large such that $n(T)\geq n_0$, $N(T)\geq\max\{N_0,N_1\}$, and the chosen sensing continuations satisfy Definition~\ref{def:admissible_calibration}.
    \end{atenv}

    \begin{atenv}
    Before calibration, fix for each $k$ with $p_k>0$
    one of the optimal known-$k$ strategies and locally
    unbiased estimators in
    Definition~\ref{def:admissible_calibration}.
    On the reported outcome $\hat j=k$, use that strategy
    and estimator for the remaining $N(T)$ rounds.
    Without loss of generality, the decoder reports only
    labels with nonzero prior probability: any report
    with $p_k=0$ can be replaced by a fixed
    positive-probability label without increasing
    $P_{\rm err}$.
    \end{atenv}

    \atchg{On correctly identifying $j$, the concatenation assumption makes the remaining experiment an $N(T)$-round known-$j$ sensing problem.
    Proposition~\ref{prop:single_par_achievability} and eqn.~\eqref{eqn:finite_latent_quadratic_qfi} therefore give an MSE for $\theta$ conditioned on $j$,}
    \begin{equation}
    \begin{aligned}
        \mse_{\theta|j}
        &\leq
        \frac{1}{\mathcal{J}_j(N(T);\theta)}
        \leq
        \frac{1}{\kappa_{\min}N(T)^2}
        \leq
        \frac{4}{\kappa_{\min}T^2}.
    \end{aligned}
    \end{equation}
    Let $\hat j$ denote the label returned by the calibration protocol and define the correct-calibration event as $\mathsf{C}:=\{\hat j=j\}$.
    Its complement $\mathsf{C}^{\rm c}=\{\hat j\neq j\}$ is the incorrect-calibration event.
    \atchg{Let $\hat{\theta}_{\rm cl}$ be the final estimator, with values in the interval $\mathcal{I}$ specified in Definition~\ref{def:admissible_calibration}.
    The requirement concerns every assigned outcome value, including outcomes that can occur when $\hat j\neq j$.
    The subscript ${\rm cl}$ labels this bounded estimator; no additional clipping step is needed.}
    The squared-error loss of the bounded estimator $\hat{\theta}_{\rm cl}$ is therefore $\delta:=(\hat{\theta}_{\rm cl}-\theta)^2$.
    Moreover, on $\mathsf{C}^{\rm c}$, we have
    \begin{equation}
        \delta
        \leq
        B
        :=
        \operatorname{diam}(\mathcal{I})^2.
    \end{equation}
    Here $\operatorname{diam}(\mathcal{I}) :=\sup_{\theta_1,\theta_2\in\mathcal{I}} |\theta_1-\theta_2|$ denotes the diameter of $\mathcal{I}$.
    In particular, if $\mathcal{I}= [\theta_-,\theta_+]$, then $\operatorname{diam}(\mathcal{I})=\theta_+-\theta_-$.
    Since both the {bounded} estimate and the true value $\theta$ belong to $\mathcal{I}$, their squared difference is at most $\operatorname{diam}(\mathcal{I})^2=B$.
    The law of total expectation then gives
    \begin{equation}
    \begin{aligned}
        \mathbb{E}[\delta]
        &=
        P(\mathsf{C})
        \mathbb{E}[\delta|\mathsf{C}]
        +
        P(\mathsf{C}^{\rm c})
        \mathbb{E}[\delta|\mathsf{C}^{\rm c}]
        \leq
        \frac{P(\mathsf{C})}
        {\kappa_{\min}N(T)^2}
        +
        B P(\mathsf{C}^{\rm c})
        \leq
        \frac{1}
        {\kappa_{\min}N(T)^2}
        +
        B P_{\rm err}(n(T);\theta).
    \end{aligned}
    \end{equation}
    Therefore,
    \begin{equation}
    \begin{aligned}
        \mathbb{E}[\delta]
        &\leq
        \frac{1}{\kappa_{\min}N(T)^2}
        +
        B P_{\rm err}(n(T);\theta)
        \leq
        \frac{4}{\kappa_{\min}T^2}
        +
        A B T^{-3}
        =
        O(T^{-2}).
    \end{aligned}
    \end{equation}
    It remains to impose local unbiasedness for the unknown-$j$ model.
    Write
    \begin{equation}
        f_T(\theta')
        :=
        \mathbb{E}_{\theta'}[\hat{\theta}_{\rm cl}],
        \qquad
        D:=\operatorname{diam}(\mathcal{I}).
    \end{equation}
    \atchg{For the remainder of the proof, write $P_{\rm err}:=P_{\rm err}(n(T);\theta)$.}
    On every correctly identified branch, the conditional estimator has mean $\theta$ and derivative of its mean equal to one at the operating point.
    Consequently,
    \begin{equation}
        |f_T(\theta)-\theta|
        \leq
        D P_{\rm err}.
    \end{equation}

    Let $y$ denote the complete observed record, including calibration, intermediate measurement, and final measurement outcomes.
    Let $P_{\theta'}(y\mid j)$ denote the corresponding conditional probability, or density with respect to a fixed reference measure, and write
    \begin{equation}
        \atchg{
        P_{\theta'}(j,y)
        :=p_jP_{\theta'}(y\mid j).
        }
    \end{equation}
    
    Define the joint score, on outcomes with
    $P_\theta(j,y)>0$, by
    \begin{equation}
        s_\theta(j,y)
        :=
        \left.
        \frac{\partial}{\partial\theta'}
        \log P_{\theta'}(j,y)
        \right|_{\theta'=\theta}.
    \end{equation}
    \atchg{Its second moment is the joint classical
    Fisher information
    $\mathcal{J}_{\rm joint}(T;\theta)
    :=\mathbb{E}_\theta[s_\theta^2]$.
    We derive its upper bound from the operational
    parameter-encoding condition in
    Definition~\ref{def:admissible_calibration}.}

    \atchg{The noise operations may depend on $j$ and act on persistent memory.
    Adaptive controls and measurements are allowed, with their prescribed dependence on recorded outcomes held fixed as $\theta'$ varies.}
    \atchg{We condition on a true label $j$ and purify the initial state and all intervening operations, retaining the environment and measurement records.
    The generator of the complete purified evolution is a sum of $T$ unitarily conjugated copies of $G\otimes I$ (where $G$ is the generator assumed in item 3 of Definition~\ref{def:admissible_calibration}).
    Let $g:=\lambda_{\max}(G)-\lambda_{\min}(G)$ denote the spectral range of $G$, where $\lambda_{\max}(G)$ and $\lambda_{\min}(G)$ are its largest and smallest eigenvalues, respectively.
    The total generator therefore has spectral range at most $Tg$.
    Writing $|\Psi_{j,T}(\theta')\>$ for the purified output, its SLD quantum Fisher information satisfies}
    \begin{equation}\label{eqn:finite_latent_generator_bound}
        \mathcal{J}_\text{Q}\left( \ketbra{\Psi_{j,T}(\theta)} \right)
        \leq g^2T^2.
    \end{equation}
    
    \atchg{Because $p_j$ is independent of $\theta'$,
    the joint classical Fisher information is the
    average of the conditional classical Fisher
    informations.
    Each is bounded by the corresponding purified-state quantum Fisher information, hence}
    \begin{equation}
    \begin{aligned}
        \mathcal{J}_{\rm joint}(T;\theta)
        &\leq
        \sum_{j:p_j>0}p_j\,
        \mathcal{J}_\text{Q} \left(\ketbra{\Psi_{j,T}(\theta)} \right)
        \leq g^2T^2.
    \end{aligned}
    \end{equation}
    This bound applies to the actual concatenated protocol, including incorrect calibration outcomes; no postselection on $\hat j=j$ is made.
    \begin{atenv}
    The unitary parameter encoding makes the purified output differentiable, so probabilities and expectations of bounded functions of the fixed measurement record can be differentiated at $\theta$.
    Set $s_\theta(j,y)=0$ wherever $P_\theta(j,y)=0$; these zero-density outcomes have zero first derivative at the interior point $\theta$.
    For every bounded, parameter-independent function $h(j,y)$, the score therefore satisfies
    \begin{equation}\label{eqn:finite_latent_score_identity}
        \left.\frac{d}{d\theta'}\mathbb{E}_{\theta'}[h(j,y)]\right|_{\theta'=\theta}
        =\mathbb{E}_\theta[h(j,y)s_\theta(j,y)].
    \end{equation}
    \end{atenv}
    \begin{atenv}
    Write $q_j(\theta'):=\Pr_{\theta'}(\hat j=j\mid j)$ and let $m_j(\theta')$ denote the mean of the chosen known-$j$ continuation estimator, with $m_j(\theta)=\theta$ and $m_j'(\theta)=1$.
    \atchg{The continuation condition in Definition~\ref{def:admissible_calibration}, item~\ref{item:admissible_calibration_continuation}, gives}
    \begin{equation}
        \mathbb{E}_{\theta'}[(\hat{\theta}_{\rm cl}-\theta)\mathbf{1}_{\mathsf{C}}]
        =\sum_{j:p_j>0}p_jq_j(\theta')(m_j(\theta')-\theta),
    \end{equation}
    where $\mathbf{1}_{\mathsf{C}}$ is the indicator of correct calibration and the centering value $\theta$ is fixed.
    Differentiating at $\theta$ eliminates the terms containing $q_j'(\theta)$ because $m_j(\theta)-\theta=0$, and the remaining contribution is $\sum_jp_jq_j(\theta)=1-P_{\rm err}$.
    \atchg{Apply eqn.~\eqref{eqn:finite_latent_score_identity} with $h(j,y)=(\hat{\theta}_{\rm cl}-\theta)\mathbf{1}_{\mathsf{C}^{\rm c}}$.}
    \atchg{This function is bounded and independent of $\theta'$ because the estimator values and decoder are fixed, and $\theta$ is the fixed operating point.}
    \begin{equation}
        \atchg{
        \left.\frac{d}{d\theta'}\mathbb{E}_{\theta'}[(\hat{\theta}_{\rm cl}-\theta)\mathbf{1}_{\mathsf{C}^{\rm c}}]\right|_{\theta'=\theta}
        =
        \mathbb{E}_\theta[(\hat{\theta}_{\rm cl}-\theta)s_\theta\mathbf{1}_{\mathsf{C}^{\rm c}}].
        }
    \end{equation}
    \atchg{Since $\mathbf{1}_{\mathsf{C}}+\mathbf{1}_{\mathsf{C}^{\rm c}}=1$, adding the correct-calibration contribution $1-P_{\rm err}$ gives}
    \end{atenv}
    \begin{equation}
        f_T'(\theta) = 1-P_{\rm err}
        +
        \mathbb{E}_\theta\left[
            (\hat{\theta}_{\rm cl}-\theta)
            s_\theta
            \mathbf{1}_{\mathsf{C}^{\rm c}}
        \right].
    \end{equation}
    Here $\mathbf{1}_{\mathsf{C}^{\rm c}}$ is the indicator of incorrect calibration.
    \atchg{By Definition~\ref{def:admissible_calibration}, item~\ref{item:admissible_calibration_continuation}, every assigned estimator value belongs to $\mathcal{I}$, including values on incorrectly identified branches.}
    \atchg{Since $\theta\in\mathcal{I}$, we have $|\hat{\theta}_{\rm cl}-\theta|\leq\operatorname{diam}(\mathcal{I})=D$, in particular on $\mathsf{C}^{\rm c}$.}
    Cauchy--Schwarz inequality yields
    \begin{equation}
    \begin{aligned}
        |f_T'(\theta)-1|
        &\leq
        P_{\rm err}
        +
        D\sqrt{
            P_{\rm err}\mathcal{J}_{\rm joint}(T;\theta)
        }\\
        &\leq P_{\rm err}+DgT\sqrt{P_{\rm err}} \\
        &=O(T^{-1/2}),
    \end{aligned}
    \end{equation}
    where $P_{\rm err}=P_{\rm err}(n(T);\theta)=O(T^{-3})$.
    Thus $f_T'(\theta)\rightarrow1$.
    For sufficiently large $T$, define
    \begin{equation}
        \hat{\theta}_{\rm lu}
        :=
        \theta+
        \frac{
            \hat{\theta}_{\rm cl}-f_T(\theta)
        }{f_T'(\theta)}.
    \end{equation}
    This estimator is locally unbiased at $\theta$, and
    \begin{equation}
        \mse_\theta
        =
        \frac{
            \operatorname{Var}_\theta(\hat{\theta}_{\rm cl})
        }{[f_T'(\theta)]^2}
        \leq
        \frac{\mathbb{E}[\delta]}{[f_T'(\theta)]^2}
        =
        O(T^{-2}).
    \end{equation}
    Thus the incorrect-calibration contribution is \(O(T^{-3})\), which is
    lower order than the \(O(T^{-2})\) MSE of the correctly identified
    conditional sensing problem.
\end{proof}

\section{Calibration of the weakly correlated Pauli model}
\label{app:weak_pauli_calibration}

\begin{atenv}
This appendix gives the detailed calibration and conditional
sensing arguments summarized in
Section~\ref{sec:correlated_noise_z_rot} for $p_z=0$.
A branch denotes the process conditioned on a fixed latent
label $j$.
The $X$ and $Y$ axes are called transverse because they are
perpendicular to the signal axis $Z$.
Accordingly, transverse Bell outcomes mean
$|\Phi_x\>$ and $|\Phi_y\>$.

\atchg{In each calibration round, prepare a fresh copy of $|\Phi^+\>$ from eqn.~\eqref{eqn:bell_basis}, apply $R_{z,\theta_0}^\dagger$ to the probe before the noisy channel, and then measure in the Bell basis.}
At $\theta=\theta_0$, use the decoder that returns $j=x$ or $j=y$ when the corresponding Bell outcome is observed, and returns $j=0$ if all outcomes are $|\Phi_0\>$.
Conditioned on $j=x$ or $j=y$, each round fails to reveal that
label with probability $1/2$.
The rounds are independent conditioned on $j$, so
\begin{equation}
\begin{aligned}
    P_{\rm err}(n;\theta_0)
    &=
    p_0\,0+p_x\left(\frac12\right)^n
    +p_y\left(\frac12\right)^n\\
    &=
    (p_x+p_y)2^{-n}
    \leq
    2^{-n},
\end{aligned}
\end{equation}
where the last inequality follows from
$p_x+p_y=1-p_0\leq1$.
The following argument extends this calibration to a fixed
neighborhood of $\theta_0$.
\end{atenv}

Consider the compensating rotation $R_{z,\theta_0}^\dag$ and Bell basis defined in eqn.~\eqref{eqn:bell_basis}, and let
\begin{equation}
    \delta
    :=
    \theta'-\theta_0,
    \quad
    r(\delta)
    :=
    \cos^2\left(\frac{\delta}{2}\right),
    \quad
    q(\delta)
    :=
    1-r(\delta).
\end{equation}
For the identity branch $j=0$, the Bell measurement produces only the outcomes $|\Phi_0\>$ and $|\Phi_z\>$.
For $j=x$, the probabilities of $|\Phi_x\>$ and $|\Phi_y\>$ are $r(\delta)/2$ and $q(\delta)/2$, respectively; for $j=y$, these two probabilities are interchanged.
The total probability of obtaining one of the outcomes $|\Phi_0\>$ or $|\Phi_z\>$ is $1/2$ for each of the conditional branches $j=x$ and $j=y$.

Use the following decoder: if no transverse outcome occurs, set $\hat j=0$; otherwise, choose the more frequent of the $x$ and $y$ Bell outcomes.
\atchg{Break ties by a fixed deterministic rule.}
For a true label $j\in\{x,y\}$, define 
\atchg{scores $S_t\in\{-1,0,1\}$, which are independent conditioned on $j$ and the fixed parameter value $\theta'$,} where $S_t=1$ for the Bell outcome labeled by $j$, $S_t=-1$ for the other outcome among $|\Phi_x\>$ and $|\Phi_y\>$, and $S_t=0$ to either $|\Phi_0\>$ or $|\Phi_z\>$.
\atchg{Here $t\in\{1,\ldots,n\}$ indexes calibration rounds, and $P_{j,\theta'}$ and $\mathbb{E}_{j,\theta'}$ denote probability and expectation conditioned on the latent label $j$ at parameter value $\theta'$.}
Then
\begin{equation}
    \mathbb{E}_{j,\theta'}[S_t]
    =
    \frac{r(\delta)-q(\delta)}{2}
    =
    \frac{\cos(\delta)}{2}.
\end{equation}
For
\begin{equation}
    \mathcal{U}
    :=
    \left(
        \theta_0-\frac{\pi}{3},
        \theta_0+\frac{\pi}{3}
    \right),
\end{equation}
we have $\mathbb{E}_{j,\theta'}[S_t]\geq1/4$ uniformly over $\theta'\in\mathcal{U}$.
Hoeffding's inequality therefore gives
\begin{equation}
\begin{aligned}
    P_{j,\theta'}
    \left(
        \sum_{t=1}^nS_t\leq0
    \right)
    &\leq
    \exp
    \left[
        -\frac{n}{2}
        \left(
            \mathbb{E}_{j,\theta'}[S_t]
        \right)^2
    \right]\\
    &\leq
    e^{-n/32}.
\end{aligned}
\end{equation}
The identity branch is identified without error because it never produces a transverse outcome.
Averaging over the latent distribution thus gives
\begin{equation}
    P_{\rm err}(n;\theta')
    \leq
    e^{-n/32}
\end{equation}
uniformly over $\theta'\in\mathcal{U}$.
The discrimination condition in Definition~\ref{def:admissible_calibration} therefore holds with
\begin{equation}
    A=1,
    \qquad
    c=\frac{1}{32},
    \qquad
    n_0=1.
\end{equation}

To verify eqn.~\eqref{eqn:finite_latent_quadratic_qfi}, use the probe--ancilla code
\begin{equation}
    \mathcal{C}
    =
    \operatorname{span}
    \{
        |00\>,
        |11\>
    \}.
\end{equation}
For $j=x$ or $j=y$, the no-error component of $\mathcal{E}_j$ remains in the code space, whereas the $\sigma_j$ component maps the state into the orthogonal error space.
\atchg{The first tensor factor is a noiseless ancilla and the second is the probe.}
\atchg{The code and error syndromes are the two outcomes of the projective measurement $\{\Pi_{\mathcal{C}},I-\Pi_{\mathcal{C}}\}$, where $\Pi_{\mathcal{C}}:=|00\>\<00|+|11\>\<11|$ projects onto $\mathcal{C}$.}
\atchg{After each channel use, measure these projectors and apply $I\otimes\sigma_j$ upon outcome $I-\Pi_{\mathcal{C}}$.}
The signal rotation preserves $\mathcal{C}$, and $(I\otimes\sigma_j)^2=I$, so this recovery restores the state after the signal rotation exactly, whereas the branch $j=0$ is already noiseless.

The recovered noiseless strategy with initial state $(|00\>+|11\>)/\sqrt{2}$ attains QFI $N^2$, so the optimal known-$j$ QFI satisfies $\mathcal{J}_j(N;\theta')\geq N^2$.
Conversely, eqn.~\eqref{eqn:finite_latent_generator_bound} with $T=N$ and $G=\sigma_z/2$, whose spectral range is $g=1$, bounds the QFI of every allowed $N$-round strategy by $N^2$.
Consequently,
\begin{equation}
    \mathcal{J}_j(N;\theta')=N^2
\end{equation}
for every positive-probability branch $j\in\{0,x,y\}$ and every $\theta'\in\mathcal{U}$.
Thus one may take
\begin{equation}
    \kappa_j=1,
    \qquad
    N_0=1.
\end{equation}
\begin{atenv}
The remaining admissibility requirements can be
verified explicitly.
Conditioned on $j$, different channel uses are
independent, so fresh preparation after calibration
gives the required $N$-round continuation.

Let $|0_{\rm L}\>:=|00\>$ and
$|1_{\rm L}\>:=|11\>$, and prepare their equal
superposition.
After correct recovery, the final state is
\begin{equation}
    |\psi_N(\theta')\>
    =
    \frac{
        e^{-\iota N\theta'/2}|0_{\rm L}\>
        +e^{\iota N\theta'/2}|1_{\rm L}\>
    }{\sqrt{2}}.
\end{equation}
For a fixed point $\theta_0$, define
\begin{equation}
    |\pm_{\theta_0}^{(N)}\>
    :=
    \frac{
        |0_{\rm L}\>
        \pm\iota e^{\iota N\theta_0}|1_{\rm L}\>
    }{\sqrt{2}}.
\end{equation}
Then, consider the complete POVM
\begin{equation}
\begin{gathered}
    M_\pm
    =
    |\pm_{\theta_0}^{(N)}\>
    \<\pm_{\theta_0}^{(N)}|,\qquad
    M_\perp=I-M_+-M_-,
\end{gathered}
\end{equation}
and let $\hat\theta_\pm=\theta_0\pm\frac1N$ and $\hat\theta_\perp=\theta_0$.
On a correctly identified branch, we have
\begin{equation}
    P_\pm(\theta')
    =
    \frac{1\pm\sin[N(\theta'-\theta_0)]}{2},
    \qquad
    P_\perp(\theta')=0.
\end{equation}
Consequently,
\begin{equation}
    \mathbb{E}_{\theta'}[\hat\theta]
    =
    \theta_0+
    \frac{\sin[N(\theta'-\theta_0)]}{N},
\end{equation}
so this estimator is locally unbiased at $\theta_0$
and attains MSE $1/N^2$.

The corrected state family is smooth, has constant
rank, and realizes the optimal known-$j$ QFI $N^2$.
The same outcome assignments are used for every
reported label and every syndrome record, including
incorrect calibration, so all estimator values lie
in $\mathcal{I}=[\theta_0-1,\theta_0+1]
\subset\mathcal{U}$ for $N\geq1$.

The $j=0$ strategy can use the same probe--ancilla encoding and final POVM, with no intermediate recovery.
\atchg{These continuations and estimator assignments are defined for every reported label in $\{0,x,y\}$, including a label with zero prior probability.}
For the strongly correlated model, the same measurement construction applies with $|0_{\rm L}\>=|0\>$ and $|1_{\rm L}\>=|1\>$.
Finally, the operational encoding condition holds with $G=\sigma_z/2$ and parameter-independent noise.
Thus Definition~\ref{def:admissible_calibration} is satisfied with $N_1=1$ at $\theta=\theta_0$ when $p_z=0$.
\end{atenv}
All assumptions of Proposition~\ref{prop:finite_latent_hs} are explicitly satisfied when $p_z=0$.
Proposition~\ref{prop:finite_latent_hs} therefore gives
\begin{equation}
    \mse_\theta
    =
    O(T^{-2})
\end{equation}
for the weakly correlated transverse-noise model.

\section{Sequential quantum sensing}\label{app:sequential_sensing}

In this appendix, we formalize a general setting for sequential parameter estimation, first for multiple copies of quantum channels probed in sequence, then for a more general case for quantum processes that may exhibit correlations over time.
Section~\ref{sec:sequential_strategies_combs} gives a concise overview of the quantum-comb representation used in the main text, whereas this appendix provides the detailed construction underlying that overview.
In particular, we distinguish the operational sequential probe $\mathcal{M}$ from its positive semidefinite tester representation $\mathbf{M}=\sum_k\mathbf{M}_k$, derive the outcome probabilities, covariance matrix, and MSE in this representation, and separately discuss repeated copies of a channel, the fixed-initial-state setting, and general temporally correlated processes.
Thus, the two sections describe the same framework at different levels of detail.
We show that the precision for a given estimation strategy, quantified by the mean-squared error, can be expressed as a matrix operation between some matrix representation of the strategy and the probed process over multiple time steps.

A quantum channel parameter estimation problem is defined by real-valued parameters $\theta=(\theta_1,\dots,\theta_m)\in\R^m$ of a quantum channel $\Lambda_\theta:\L(A)\rightarrow\L(A)$ over a $D$-dimensional probe register $A$.
Given the access to $T$ copies of channel $\Lambda_\theta$, the most general strategy is to probe each copy of $\Lambda_\theta$ sequentially, while keeping a quantum memory register which assists in probing the channels in sequence and measuring at the end of the protocol.

A quantum sensing strategy for estimating parameter $\theta$ from $T$ copies of $\Lambda_\theta$ is defined by a \textit{sequential probe} $\mathcal{M}$ and an \textit{estimator} $\hat{\theta}$.
Sequential probe $\mathcal{M} = (\rho, \mathcal{M}^{(1)},\mathcal{M}^{(2)},\dots,\mathcal{M}^{(T-1)}, \mathcal{M}^{(T)})$ consists of:
\begin{enumerate}
    \item An initial probe state described by density operator $\rho\in\L(C_1\otimes A)$ for some ancilla register $C_1$.
    \item A sequence of $T-1$ quantum channels $\mathcal{M}^{(1)},\mathcal{M}^{(2)},\dots,\mathcal{M}^{(T-1)}$ probing $T$ copies of channel $\Lambda_\theta$ in sequence jointly with ancilla registers $C_1,\dots,C_T$, where $C_t$ (resp. $C_{t+1}$) is the input (resp. output) ancilla registers of $\mathcal{M}_t$.
    In general the dimension of ancilla registers $C_1,\dots,C_T$ may differ, i.e. where new ancilla registers are prepared during the protocol.
    \item Final measurement $\mathcal{M}^{(T)}$ with outcomes $k\in\{1,\dots,K\}$ measuring the output after probing the $T$-th copy of $\Lambda_\theta$ jointly with ancilla $A_T$.
\end{enumerate}
On the other hand, the estimator $\hat{\theta} = (\hat{\theta}_1,\dots,\hat{\theta}_m)$ is defined by
\begin{equation}
    \hat{\theta}_j = \sum_{k} \hat{\theta}_{j,k} p(k|{\theta,\mathcal{M}}),
\end{equation}
where $p(k|\theta,\mathcal{M})$ is the probability of measurement outcome $k$ given a sequential probe $\mathcal{M}$.
In this work, we assume that the estimator is unbiased (see e.g.~\cite{conlon2021efficient,Hayashi_2023}):
\begin{equation}\label{eqn:unbiased_estimator}
\begin{gathered}
    \hat{\theta}_j = \theta_j
    \quad\text{and}\quad
    \frac{\partial}{\partial\theta_i} \hat{\theta}_j = \delta_{i,j} \;.
\end{gathered}
\end{equation}

A sequential probe $\mathcal{M}$ with measurement outcome $k$ acting on $T$ copies of $\Lambda_\theta$ can be illustrated as:
\begin{equation}
    \begin{quantikz}[classical gap=0.03cm]
        \bshade{2}{10}{$\mathcal{M}$}
        \setwiretype{n} & \lstick[2]{$\rho$} & \wire[l][1]["C_1"{above,pos=0}]{q} \setwiretype{q} & \gate[2,style={fill=blue!20}]{\mathcal{M}^{(1)}} & 
            \wire[l][1]["C_2"{above,pos=0}]{q} \setwiretype{b} & \gate[2,style={fill=blue!20}]{\mathcal{M}^{(2)}} & 
            \wire[l][1]["C_3"{above,pos=0}]{q} & \invgate{\dots} & 
            \wire[l][1]["C_T"{above,pos=0}]{q} & \gate[2,style={fill=blue!20}]{\mathcal{M}_k^{(T)}} \\
        \setwiretype{n} & & \gate[1,style={fill=green!20}]{\Lambda_\theta} \wire[l][1]["{A_0}"{above,pos=0.45}]{q} \setwiretype{q} \gshade{1}{1}{} & \wire[l][1]["{A_1}"{above,pos=0.45}]{q} \wire[r][1]["{A_1'}"{above,pos=0.45}]{q} & 
            \gate[1,style={fill=green!20}]{\Lambda_\theta} \gshade{1}{1}{} & \wire[l][1]["{A_2}"{above,pos=0.45}]{q} \wire[r][1]["{A_2'}"{above,pos=0.45}]{q} & 
            \gate[1,style={fill=green!20}]{\Lambda_\theta} \gshade{1}{1}{} \wire[r][1]["{A_3}"{above,pos=0.45}]{q} & \invgate{\dots} & 
            \gate[1,style={fill=green!20}]{\Lambda_\theta} \wire[l][1]["{A_{T-1}'}"{above,pos=0.6}]{q} \wire[r][1]["{A_T}"{above,pos=0.45}]{q} \gshade{1}{1}{} &
    \end{quantikz} \;,
\end{equation}
where the $A_t,A_t'$'s are the probe register $A$ with time-step labels and primed registers $A_t'$ for $t\geq1$ are used to explicitly indicate the output probe register of $\mathcal{M}^{(t)}$.

The sequence of $T$ copies of $\Lambda_\theta$ can described as a positive semidefinite operator $\mathbf{\Lambda}_\theta^{(T)}$, we call the \textit{channel operator}, over the joint probe systems $\Bar{A}_T = A_t\otimes(\bigotimes_{t=1}^{T-1} A_t'\otimes A_t)\otimes A_0$ over the $T$ time steps, given by
\begin{equation}\label{eqn:channels_choi}
\begin{aligned}
    \mathbf{\Lambda}_\theta^{(T)} &= \mathbf{\Lambda}_\theta^{\otimes T} 
    =
    \begin{quantikz}[classical gap=0.03cm]
        \setwiretype{n} & \gate[1,style={fill=green!20}]{\Lambda_\theta} \wire[l][1]["{A_0}"{above,pos=0.5}]{q} \wire[r][1]["{A_1}"{above,pos=0.45}]{q} \gshade{1}{1}{} & \quad & \gate[1,style={fill=green!20}]{\Lambda_\theta} \wire[l][1]["{A_1'}"{above,pos=0.45}]{q} \wire[r][1]["{A_2}"{above,pos=0.45}]{q} \gshade{1}{1}{} & \;\dots\; & \gate[1,style={fill=green!20}]{\Lambda_\theta} \wire[l][1]["{A_{T-1}'}"{above,pos=0.65}]{q} \wire[r][1]["{A_T}"{above,pos=0.55}]{q} \gshade{1}{1}{} &
    \end{quantikz} ,
\end{aligned}
\end{equation}
where $\mathbf{\Lambda}_\theta$ is the Choi matrix of channel $\Lambda_\theta$.
On the other hand, the sequential probe $\mathcal{M}$ can also be described by a positive semidefinite operator $\mathbf{M}$ over $\Bar{A}_T$, we call the \textit{sequential probe operator}, which admits a linear decomposition into collection of positive semidefinite operators $\{\mathbf{M}_k\}_{k=1}^K$, each corresponding to measurement outcome $k$, as
\begin{equation}\label{eqn:strategy_choi}
\begin{aligned}
    \mathbf{M} &= \sum_k \mathbf{M}_k \;.
\end{aligned}
\end{equation}
Throughout this paper, we will often refer to $\mathbf{M}$ as the sequential probe.
Each operator $\mathbf{M}_k$ can be thought of as a ``higher-order'' POVM element over the $T$ copies of $\Lambda_\theta$ which determines how they are probed and finally measured at time step $T$.
Each of these operators can be described as
\begin{equation}\label{eqn:strategy_choi_PPOVM}
\begin{aligned}
    &\mathbf{M}_k = \mathbf{M}_k^{(T)}\ast\mathbf{M}^{(T-1)} \ast\dots\ast \mathbf{M}^{(1)} \ast\rho \\
    &=
    \begin{quantikz}[classical gap=0.03cm]
        \bshade{2}{8}{} & \setwiretype{n} \lstick[2]{$\rho$} & \wire[l][1]["C_1"{above,pos=0}]{q} \setwiretype{q} & \gate[2,style={fill=blue!20}]{\mathcal{M}^{(1)}} & \wire[l][1]["C_2"{above,pos=0}]{q} \setwiretype{b} & \invgate{\dots} & \wire[l][1]["C_T"{above,pos=0}]{q} & \gate[2,style={fill=blue!20}]{\mathcal{M}_k^{(T)}} \\
        & \setwiretype{n} & \invgate{} \wire[l][1]["{A_0}"{above,pos=0.45}]{q} \wshade{1}{1}{} & \wire[l][1]["{A_1}"{above,pos=0.45}]{q} \wire[r][1]["{A_1'}"{above,pos=0.45}]{q} & \invgate{} \wshade{1}{3}{} & \invgate{\dots} & \invgate{} \wire[r][1]["{A_T}"{above,pos=0.45}]{q} & 
    \end{quantikz} ,
\end{aligned}
\end{equation}
where $\mathbf{M}^{(t)}$ is the Choi matrix of channel $\mathcal{M}^{(t)}$ for $t\in\{1,\dots,T-1\}$ and $\{\mathbf{M}_k^{(T)}\}_k$ is the POVM of the final measurement.
Here ``$\ast$'' is the \textit{link product}~\cite{chiribella2009theoretical,taranto2025higherorderquantumoperations}~\footnote{
    In general for operators $X\in\L(A\otimes B)$ and $Y\in\L(B\otimes C)$, their link product is given by $X\ast Y = \tr_B(X^{\top_B}Y)$ where $\tr_B$ is partial trace over space $B$ and $(\cdot)^{\top_B}$ is the partial transpose over space $B$. Note that we can equivalently write $X\ast Y = \tr_B(XY^{\top_B})$.},
which contracts some output registers of one operation with some input register of another.
In eqn.~\eqref{eqn:strategy_choi_PPOVM}, the link product $\mathbf{M}^{(t-1)} \ast \mathbf{M}^{(t)}$ contracts the ancilla registers between channels $\mathcal{M}^{(t-1)}$ and $\mathcal{M}^{(t)}$ while leaving the registers $A_{t-1}'$ and $A_t$ open to be contracted later with the $t$-th copy of the channel $\Lambda_\theta$.
On the other hand, $\rho\ast\mathbf{M}^{(1)}$ and $\mathbf{M}^{(T-1)} \ast \mathbf{M}_k^{(T)}$ contract the ancilla register $C_1$ between state $\rho$ and channel $\mathcal{M}^{(1)}$ and ancilla $C_T$ between $\mathcal{M}^{(T-1)}$ and measurement $\mathcal{M}^{(T)}$, respectively.

We note that any positive semidefinite operator $\mathbf{M}$ over $\Bar{A}_T$ that satisfy a certain set of linear constraints admits a form of eqn.~\eqref{eqn:strategy_choi_PPOVM} and thus correspond to some sequential probe $\mathcal{M}$ over $T$ time steps.
We denote the set of all such operators over $\Bar{A}_T$ as $\seq(\Bar{A}_T)$.
For a more detailed discussion and definition of $\seq(\Bar{A}_T)$, see Appendix~\ref{app:multistep_processes}, particularly eqn.~\eqref{eqn:testers}.
Also, since the sequential probe operator $\mathbf{M}$ is positive semidefinite, it admits a unique spectral decomposition
\begin{equation}\label{eqn:strategy_spectral_decomposition}
    \mathbf{M} = \sum_j \gamma_j^\mathcal{M} \kketbra{M_j},
\end{equation}
where $|M_k\rr$ is its eigenvector corresponding to eigenvalue $\gamma_k^\mathcal{M}$.

The probability of observing outcome $k$ from probing $T$ copies of $\Lambda_\theta$ with sequential probe $\mathcal{M}$ is given by
\begin{equation}\label{eqn:output_proba_channel_sequence}
\begin{aligned}
    &p(k|\theta,\mathcal{M}) = \tr\big( (\mathbf{\Lambda}_\theta^{(T)})^\top \mathbf{M}_k \big)  = 
    \begin{quantikz}[classical gap=0.03cm]
        \setwiretype{q} \lstick[2]{$\rho$} & \wire[l][1]["C_1"{above,pos=0}]{q} & \gate[2,style={fill=blue!20}]{\mathcal{M}^{(1)}} & \wire[l][1]["C_2"{above,pos=0}]{q} \setwiretype{b} & \invgate{\dots} & \wire[l][1]["C_T"{above,pos=0}]{q} & \gate[2,style={fill=blue!20}]{\mathcal{M}_k^{(T)}} \\
        \setwiretype{n} & \gate[1,style={fill=green!20}]{\Lambda_\theta} \wire[l][1]["{A_0}"{above,pos=0.5}]{q} & \wire[l][1]["{A_1}"{above,pos=0.45}]{q} \wire[r][1]["{A_1'}"{above,pos=0.45}]{q} & \gate[1,style={fill=green!20}]{\Lambda_\theta} \wire[r][1]["{A_2}"{above,pos=0.5}]{q} & \invgate{\dots} & \gate[1,style={fill=green!20}]{\Lambda_\theta} \wire[l][1]["{A_{T-1}}"{above,pos=0.6}]{q} \wire[r][1]["{A_T}"{above,pos=0.45}]{q} & 
    \end{quantikz} .
\end{aligned}
\end{equation}
Thus, for a given strategy with sequential probe $\mathcal{M}$ and estimator $\hat{\theta}$, its precision in estimating parameters $\theta$ can be quantified by its covariance matrix $\cov_\theta(\mathcal{M},\hat{\theta})$.
Covariance matrix $\cov_\theta(\mathcal{M},\hat{\theta})$ is a size-$m$ symmetric real matrix given by
\begin{equation}\label{eqn:cov_matrix_channel_sequence}
\begin{aligned}
    \cov_\theta(\mathcal{M},\hat{\theta}) &= \sum_{i,j=1}^m |i\>\<j| \bigg( \sum_k \xi_{i,k}\xi_{j,k} p(k|\theta,\mathcal{M}) \bigg) \;,
\end{aligned}
\end{equation}
where $\xi_{i,k} = \hat{\theta}_{i,k} - \theta_i$ and $p(k|\theta,\mathcal{M})$ is given by eqn.~\eqref{eqn:output_proba_channel_sequence}.
In this work, we use the \textit{mean-squared error (MSE)} to quantify how well parameters $\theta$ are estimated:
\begin{equation}\label{eqn:mse_channel_sequence}
    \mse_\theta(\mathcal{M},\hat{\theta}) = \tr\big( \cov_\theta(\mathcal{M},\hat{\theta}) \big) = \sum_{i,k} \xi_{i,k}^2 p(k|\theta,\mathcal{M}) \;.
\end{equation}

    As an example, let us consider a $T=3$ single-qubit Pauli-$Z$ rotation channel estimation.
    The Choi matrix of a single-qubit $Z$-rotation channel $\mathcal{R}_{z,\theta}(\cdot) = e^{-\iota\frac{\theta}{2}Z}\cdot e^{\iota\frac{\theta}{2}Z}$ with angle $\theta$ is given by
    \begin{equation}
        \mathbf{R}_{z,\theta} = \ketbra{00}+\ketbra{11} + e^{-\iota\theta}|00\>\<11| + e^{\iota\theta}|11\>\<00| \;.
    \end{equation}
    Three copies of channel $\mathcal{R}_{z,\theta}(\cdot)$ are therefore described by channel operator $\mathbf{R}_\theta^{(3)}$ given by
    \begin{equation}
    \begin{aligned}
        \mathbf{R}_\theta^{(3)} &= \mathbf{R}_{z,\theta}^{\otimes3} \\
        &= 
        \begin{quantikz}[classical gap=0.03cm]
            \setwiretype{n} & \gate[1,style={fill=green!20}]{\mathcal{R}_{z,\theta}} \wire[l][1]["{A_0}"{above,pos=0.5}]{q} \wire[r][1]["{A_1}"{above,pos=0.45}]{q} & \quad & \gate[1,style={fill=green!20}]{\mathcal{R}_{z,\theta}} \wire[l][1]["{A_1'}"{above,pos=0.45}]{q} \wire[r][1]["{A_2}"{above,pos=0.45}]{q} & \quad & \gate[1,style={fill=green!20}]{\mathcal{R}_{z,\theta}} \wire[l][1]["{A_2'}"{above,pos=0.45}]{q} \wire[r][1]["{A_3}"{above,pos=0.45}]{q} &
        \end{quantikz}.
    \end{aligned}
    \end{equation}
    On the other hand, the sequential probe $\mathcal{M}$ described by operator $\mathbf{M}=\sum_k\mathbf{M}_k$ with operators $\mathbf{M}_k$ given in eqn.~\eqref{eqn:strategy_choi_PPOVM} is illustrated as
    \begin{equation}
        \mathbf{M}_k =
        \begin{quantikz}[classical gap=0.03cm]
            \setwiretype{q} \lstick[2]{$\rho$} & \wire[l][1]["C_1"{above,pos=0}]{q} & \gate[2,style={fill=blue!20}]{\mathcal{M}^{(1)}} & \wire[l][1]["C_2"{above,pos=0}]{q} \setwiretype{b} & \gate[2,style={fill=blue!20}]{\mathcal{M}^{(2)}}& \wire[l][1]["C_3"{above,pos=0}]{q} & \gate[2,style={fill=blue!20}]{\mathcal{M}_k^{(3)}} \\
            \setwiretype{n} & \invgate{} \wire[l][1]["{A_0}"{above,pos=0.45}]{q} & \wire[l][1]["{A_1}"{above,pos=0.45}]{q} \wire[r][1]["{A_1'}"{above,pos=0.45}]{q} & \invgate{} & \wire[l][1]["{A_2}"{above,pos=0.45}]{q} \wire[r][1]["{A_2'}"{above,pos=0.45}]{q} & \invgate{} \wire[r][1]["{A_3}"{above,pos=0.45}]{q} & 
        \end{quantikz} .
    \end{equation}
    The action of sequential probe $\mathcal{M}$ on the three copies of $\Lambda_\theta$ with measurement outcome $k$ is therefore illustrated as
    \begin{equation}
        \begin{quantikz}[classical gap=0.03cm]
            \setwiretype{q} \lstick[2]{$\rho$} & \wire[l][1]["C_1"{above,pos=0}]{q} & \gate[2,style={fill=blue!20}]{\mathcal{M}^{(1)}} & \wire[l][1]["C_2"{above,pos=0}]{q} \setwiretype{b} & \gate[2,style={fill=blue!20}]{\mathcal{M}^{(2)}}& \wire[l][1]["C_3"{above,pos=0}]{q} & \gate[2,style={fill=blue!20}]{\mathcal{M}_k^{(3)}} \\
            \setwiretype{q} & \gate[1,style={fill=green!20}]{\mathcal{R}_{z,\theta}} \wire[l][1]["{A_0}"{above,pos=0.45}]{q} & \wire[l][1]["{A_1}"{above,pos=0.45}]{q} \wire[r][1]["{A_1'}"{above,pos=0.45}]{q} & \gate[1,style={fill=green!20}]{\mathcal{R}_{z,\theta}} & \wire[l][1]["{A_2}"{above,pos=0.45}]{q} \wire[r][1]["{A_2'}"{above,pos=0.45}]{q} & \gate[1,style={fill=green!20}]{\mathcal{R}_{z,\theta}} \wire[r][1]["{A_3}"{above,pos=0.45}]{q} & 
        \end{quantikz} .
    \end{equation}
We will come back to this example in Section~\ref{sec:z_rotation_example} to show an optimal strategy.

\subsection{Fixed initial probe state}

If we fix an initial probe state $\rho\in\D(C_1\otimes A_0)$, then we can describe the sequence of $T$ copies of $\Lambda_\theta$ along with $\rho$ as a positive semidefinite operator $\mathbf{\Lambda}_{\theta,\rho}^{(T)}$ given by
\begin{equation}\label{eqn:fixed_probe_choi}
\begin{aligned}
    &\mathbf{\Lambda}_{\theta,\rho}^{(T)} = \mathbf{\Lambda}_\theta^{\otimes T-1} \otimes (\mathcal{I}_{C_1}\otimes \Lambda_\theta)(\rho) 
    =
    \begin{quantikz}[classical gap=0.03cm]
        \gshade{2}{2}{} & \lstick[2]{$\rho$} \setwiretype{n} & \wire[l][1]["{C_1}"{above,pos=0}]{q} &\setwiretype{n}&&&& \\
        & \setwiretype{n} \gshade{1}{2}{} & \gate[1,style={fill=green!20}]{\Lambda_\theta} \wire[l][1]["{A_0}"{above,pos=0.5}]{q} \wire[r][1]["{A_1}"{above,pos=0.45}]{q} & \quad & \gate[1,style={fill=green!20}]{\Lambda_\theta} \wire[l][1]["{A_1'}"{above,pos=0.45}]{q} \wire[r][1]["{A_2}"{above,pos=0.45}]{q} \gshade{1}{1}{} & \;\dots\; & \gate[1,style={fill=green!20}]{\Lambda_\theta} \wire[l][1]["{A_{T-1}'}"{above,pos=0.65}]{q} \wire[r][1]["{A_T}"{above,pos=0.55}]{q} \gshade{1}{1}{} &
    \end{quantikz} .
\end{aligned}
\end{equation}
On the other hand, the sequential probe $\mathcal{M} = (\mathcal{M}^{(1)},\dots,\mathcal{M}^{(T-1},\mathcal{M}_k^{(T)})$ corresponding to outcomes $k$ can be described by a positive semidefinite operator
\begin{equation}\label{eqn:fixed_state_measurement_choi}
\begin{aligned}
    \mathbf{M}_k &= \mathbf{M}_k^{(T)}\ast\mathbf{M}^{(T-1)} \ast\dots\ast \mathbf{M}^{(1)}  
    =
    \begin{quantikz}[classical gap=0.03cm]
        \setwiretype{q} & \gate[2,style={fill=blue!20}]{\mathcal{M}^{(1)}} \wire[l][1]["C_1"{above,pos=0.45}]{q} \bshade{2}{5}{} & \wire[l][1]["C_2"{above,pos=0}]{q} \setwiretype{b} & \invgate{\dots} & \wire[l][1]["C_T"{above,pos=0}]{q} & \gate[2,style={fill=blue!20}]{\mathcal{M}_k^{(T)}} \\
        \setwiretype{n} & \wire[l][1]["{A_1}"{above,pos=0.45}]{q} \wire[r][1]["{A_1'}"{above,pos=0.45}]{q} & \invgate{} \wshade{1}{3}{} & \invgate{\dots} & \invgate{} \wire[r][1]["{A_T}"{above,pos=0.45}]{q} & 
    \end{quantikz} .
\end{aligned}
\end{equation}

\subsection{General parameterized multi-time processes}

Instead of $T$ copies of quantum channels, we may also consider a general multi-time quantum process which is being probed at $T$ time points, e.g. where there is an inaccessible environment $E$ that is correlated with the probe system $A$ over the $T$ time points.
In this case, we can describe this entire process parameterized by $\theta=(\theta_1,\dots,\theta_m)$ as a sequence of quantum channels $\Lambda_\theta^{(t)}$ at time step $t$ acting on probe register $A$ and some inaccessible environment $E$ as a positive semidefinite operator
\begin{equation}
\begin{aligned}
    &\mathbf{\Lambda}_\theta = \mathbf{\Lambda}_\theta^{(T)} \ast\dots\ast \mathbf{\Lambda}_\theta^{(1)}
    = 
    \begin{quantikz}[classical gap=0.03cm]
        \setwiretype{n} & \gate[2,style={fill=green!20}]{\Lambda_\theta^{(1)}} \wire[l][1]["{A_0}"{above,pos=0.45}]{q} \setwiretype{q} \gshade{1}{1}{} & \quad \wire[l][1]["{A_1}"{above,pos=0.45}]{q} \wire[r][1]["{A_1'}"{above,pos=0.45}]{q} & 
            \gate[2,style={fill=green!20}]{\Lambda_\theta^{(2)}} \gshade{1}{1}{} & 
            \invgate{\dots} \wire[l][1]["{A_2}"{above,pos=0.45}]{q} & 
            \gate[2,style={fill=green!20}]{\Lambda_\theta^{(T)}} \wire[l][1]["{A_{T-1}'}"{above,pos=0.65}]{q} \wire[r][1]["{A_T}"{above,pos=0.45}]{q} \gshade{1}{1}{} & \\
        \setwiretype{n} & \gshade{1}{5}{$\mathbf{\Lambda}_\theta$} & \wire[l][1]["{E_1}"{above,pos=0.45}]{q} \setwiretype{q} & 
            & 
            \invgate{\dots} \wire[l][1]["{E_2}"{above,pos=0.45}]{q} & 
            \wire[l][1]["{E_{T-1}}"{above,pos=0.65}]{q} & \setwiretype{n}
    \end{quantikz} ,
\end{aligned}
\end{equation}
where $\mathbf{\Lambda}_\theta^{(t)}$ is the Choi matrix of channel $\Lambda_\theta^{(t)}$.
Thus, the action of a sequential probe $\mathcal{M}$ on parameterized process $\mathbf{\Lambda}_\theta$ over $T$ time steps can be illustrated as
\begin{equation}
    \begin{quantikz}[classical gap=0.03cm]
        \bshade{2}{9}{$\mathcal{M}$}
        \setwiretype{n} & \lstick[2]{$\rho$} & \wire[l][1]["C_1"{above,pos=0}]{q} \setwiretype{q} & \gate[2,style={fill=blue!20}]{\mathcal{M}^{(1)}} & 
            \wire[l][1]["C_2"{above,pos=0}]{q} \setwiretype{b} & \gate[2,style={fill=blue!20}]{\mathcal{M}^{(2)}} & 
            \invgate{\dots} \wire[l][1]["C_3"{above,pos=0.4}]{q} & 
            \wire[l][1]["C_T"{above,pos=0}]{q} & \gate[2,style={fill=blue!20}]{\mathcal{M}_k^{(T)}} \\
        \setwiretype{n} & & \gate[2,style={fill=green!20}]{\Lambda_\theta^{(1)}} \wire[l][1]["{A_0}"{above,pos=0.45}]{q} \setwiretype{q} \gshade{1}{1}{} & \wire[l][1]["{A_1}"{above,pos=0.45}]{q} \wire[r][1]["{A_1'}"{above,pos=0.45}]{q} & 
            \gate[2,style={fill=green!20}]{\Lambda_\theta^{(2)}} \gshade{1}{1}{} & \wire[l][1]["{A_2}"{above,pos=0.45}]{q} \wire[r][1]["{A_2'}"{above,pos=0.45}]{q} & 
            \invgate{\dots} & 
            \gate[2,style={fill=green!20}]{\Lambda_\theta^{(T)}} \wire[l][1]["{A_{T-1}'}"{above,pos=0.65}]{q} \wire[r][1]["{A_T}"{above,pos=0.45}]{q} \gshade{1}{1}{} & \\
        \setwiretype{n} & & \gshade{1}{6}{$\mathbf{\Lambda}_\theta^{(T)}$} & \wire[l][1]["{E_1}"{above,pos=0.45}]{q} \setwiretype{q} & 
            & \wire[l][1]["{E_2}"{above,pos=0.45}]{q} &
            \invgate{\dots} & 
            \wire[l][1]["{E_{T-1}}"{above,pos=0.65}]{q} & \setwiretype{n}
    \end{quantikz} .
\end{equation}

We note that, in general, each channel $\Lambda_\theta^{(t)}$ may be sensitive only to some of the parameters $\theta_1,\dots,\theta_m$.
For example, for $T=2$ and $m=2$, one may have channels $\Lambda_\theta^{(1)} = \Tilde{\Lambda}_{\theta_1}^{(1)}$ and $\Lambda_\theta^{(2)} = \Tilde{\Lambda}_{\theta_1,\theta_2}^{(2)}$ for some channels $\Tilde{\Lambda}_{\theta_1}^{(1)}$ and $\Tilde{\Lambda}_{\theta_1,\theta_2}^{(2)}$.
Also, environments $E_1,\dots,E_{T-1}$ may be of different dimension.
This is the most general physical scenario of estimating parameters of quantum processes.

\section{Multi-step quantum processes}\label{app:multistep_processes}

In this appendix, we briefly review the \textit{quantum combs} or the \textit{higher-order quantum operation} formalism (see e.g.~\cite{chiribella2009theoretical,taranto2025higherorderquantumoperations}), which we use to represent both the multi-step parameterized quantum processes $\mathbf{\Lambda}_\theta$ (which include the multiple copies of parameterized quantum channels as a special case) as well as the sequential probe $\mathcal{M}$ used to estimate $\theta$.
Recall that (cf. appendix~\ref{app:sequential_sensing}) a multi-step parameterized quantum process $\mathbf{\Lambda}_\theta$ over $T$ steps can be expressed as a contraction of quantum channels represented by a positive semidefinite operator
\begin{equation}
\begin{aligned}
    &\mathbf{\Lambda}_\theta = \mathbf{\Lambda}_\theta^{(T)} \ast\dots\ast \mathbf{\Lambda}_\theta^{(1)} 
    = 
    \begin{quantikz}[classical gap=0.03cm]
        \setwiretype{n} & \gate[2,style={fill=green!20}]{\Lambda_\theta^{(1)}} \wire[l][1]["{A_0}"{above,pos=0.45}]{q} \setwiretype{q} \gshade{1}{1}{} & \quad \wire[l][1]["{A_1}"{above,pos=0.45}]{q} \wire[r][1]["{A_1'}"{above,pos=0.45}]{q} & 
            \gate[2,style={fill=green!20}]{\Lambda_\theta^{(2)}} \gshade{1}{1}{} & 
            \invgate{\dots} \wire[l][1]["{A_2}"{above,pos=0.45}]{q} & 
            \gate[2,style={fill=green!20}]{\Lambda_\theta^{(T)}} \wire[l][1]["{A_{T-1}'}"{above,pos=0.65}]{q} \wire[r][1]["{A_T}"{above,pos=0.45}]{q} \gshade{1}{1}{} & \\
        \setwiretype{n} & \gshade{1}{5}{} & \wire[l][1]["{E_1}"{above,pos=0.45}]{q} \setwiretype{q} & 
            & 
            \invgate{\dots} \wire[l][1]["{E_2}"{above,pos=0.45}]{q} & 
            \wire[l][1]["{E_{T-1}}"{above,pos=0.65}]{q} & \setwiretype{n}
    \end{quantikz} ,
\end{aligned}
\end{equation}
where $\mathbf{\Lambda}_\theta^{(t)}$ is the Choi matrix of channel $\Lambda_\theta^{(t)}$.
By picking a basis $\{|i\>\}_{i=1}^{d_{A_{t-1}'}}$ over probe register $A_{t-1}'$ and basis $\{|k\>\}_{k=1}^{d_{E_{t-1}}}$ over environment $E_{t-1}$ and denoting $|i,k\> = |i\>_{A_{t-1}} \otimes |k\>_{E_{t-1}}$, the Choi matrix $\mathbf{\Lambda}_\theta^{(t)}$ is given by
\begin{equation}
    \mathbf{\Lambda}_\theta^{(t)} = \sum_{i,j=1}^{d_{A_{t-1}'}} \sum_{k,l=1}^{d_{E_{t-1}}} \Lambda_\theta^{(t)}(|i,k\>\<j,l|) \otimes|i,k\>\<j,l| \;.
\end{equation}
The link product~\cite{chiribella2009theoretical,taranto2025higherorderquantumoperations} ``$\ast$'' between two operators $X\in\L(A\otimes B)$ and $Y\in\L(B\otimes C)$ is given by $X\ast Y = \tr_B(X^{\top_B}Y)$, where $\tr_B$ is partial trace over space $B$ and $(\cdot)^{\top_B}$ is the partial transpose over space $B$ (we can equivalently write $X\ast Y = \tr_B(XY^{\top_B})$).
So, the link product between $\mathbf{\Lambda}_\theta^{(t+1)}$ and $\mathbf{\Lambda}_\theta^{(t)}$ is given by
\begin{equation}
    \mathbf{\Lambda}_\theta^{(t+1)} \ast \mathbf{\Lambda}_\theta^{(t)} = \tr_{E_t}\Big( (\mathbf{\Lambda}_\theta^{(t+1)})^{\top_{E_t}} \mathbf{\Lambda}_\theta^{(t)} \Big) \;.
\end{equation}
Note that $\mathbf{\Lambda}_\theta^{(t+1)} \ast \mathbf{\Lambda}_\theta^{(t)}$ is an operator over $A_{t+1}\otimes A_t'\otimes A_t\otimes A_{t-1}'\otimes E_{t+1}\otimes E_{t-1}$.
By iteratively applying the link product in $\mathbf{\Lambda}_\theta^{(T)} \ast\dots\ast \mathbf{\Lambda}_\theta^{(1)}$ we can therefore obtain an explicit matrix form of operator $\mathbf{\Lambda}_\theta$ over $\Bar{A}_T$.

On the other hand, a sequential probe $\mathcal{M}$ over $T$ time steps can be described by a positive semidefinite operator
\begin{equation}\label{eqn:sequential_probe_space}
\begin{aligned}
    \mathbf{M} = \sum_{k=1}^K \mathbf{M}_k = \sum_{k=1}^K \mathbf{M}_k^{(T)}\ast\mathbf{M}^{(T-1)} \ast\dots\ast \mathbf{M}^{(1)} \ast\rho 
    = \sum_{k=1}^K
    \begin{quantikz}[classical gap=0.03cm]
        \bshade{2}{8}{} & \setwiretype{n} \lstick[2]{$\rho$} & \wire[l][1]["C_1"{above,pos=0}]{q} \setwiretype{q} & \gate[2,style={fill=blue!20}]{\mathcal{M}^{(1)}} & \wire[l][1]["C_2"{above,pos=0}]{q} \setwiretype{b} & \invgate{\dots} & \wire[l][1]["C_T"{above,pos=0}]{q} & \gate[2,style={fill=blue!20}]{\mathcal{M}_k^{(T)}} \\
        & \setwiretype{n} & \invgate{} \wire[l][1]["{A_0}"{above,pos=0.45}]{q} \wshade{1}{1}{} & \wire[l][1]["{A_1}"{above,pos=0.45}]{q} \wire[r][1]["{A_1'}"{above,pos=0.45}]{q} & \invgate{} \wshade{1}{3}{} & \invgate{\dots} & \invgate{} \wire[r][1]["{A_T}"{above,pos=0.45}]{q} & 
    \end{quantikz} 
\end{aligned},
\end{equation}
where $\mathbf{M}^{(t)}$ for $t\leq T-1$ is the Choi matrix of quantum channel $\mathcal{M}^{(t)}$, which is defined by
\begin{equation}
    \mathbf{M}^{(t)} = \sum_{i,j=1}^{d_{A_t}} \sum_{k,l=1}^{d_{C_t}} \mathcal{M}^{(t)}(|i,k\>\<j,l|) \otimes|i,k\>\<j,l| \;
\end{equation}
for a basis $\{|i\>\}_{i=1}^{d_{A_t}}$ over probe register $A_t$ and a basis $\{|k\>\}_{k=1}^{d_{C_t}}$ over ancilla register $C_t$.
The collection of operators $\{\mathbf{M}_k^{(T)}\}_{k=1}^K$ is a POVM over $A_T\otimes C_T$.

Objects described by any such $\mathbf{M}$ are known as \textit{quantum testers}~\cite[Section IV.E]{chiribella2009theoretical}\cite[Section 2.3.3]{taranto2025higherorderquantumoperations} (or simply testers).
We denote the set of all testers over $T$ steps as
\begin{equation}\label{eqn:testers}
\begin{aligned}
    \seq(\Bar{A}_T) = \Big\{ \mathbf{M}\in\L(\Bar{A}_T) &: \mathbf{M}\geq0 \;, \\
    & \mathbf{M}^{(T)} := \mathbf{M} = I_{A_T}\otimes\mathbf{M}^{(T-1)} \;,\quad
        \mathbf{M}^{(T-1)} = \tr_{A_T}(\mathbf{M}^{(T)})/d_{A_T} \geq 0 \;, \\
    & \forall t\in[1,T-1] \;,\; \mathbf{M}^{(t-1)} = \tr_{A_t'A_t}(\mathbf{M}^{(t)})/d_A \geq0 \;,\quad
        \tr_{A_t'}(\mathbf{M}^{(t)}) = I_{A_t}\otimes \mathbf{M}^{(t-1)} \;,\\
    & \mathbf{M}^{(0)} = \rho^{(0)} \geq0 \;,\quad \tr(\rho^{(0)})=1
    \Big\} \;.
\end{aligned}
\end{equation}
It is known (see \cite[Section IV.E]{chiribella2009theoretical},\cite[Section 2.4.2]{taranto2025higherorderquantumoperations}) that any $\mathbf{M}\in\seq(\Bar{A}_T)$ admits the form of
\begin{equation}
\begin{aligned}
    \mathbf{M} 
    = \sum_k \mathbf{M}_k^{(T)}\ast\mathbf{V}^{(T-1)} \ast\dots\ast \mathbf{V}^{(1)} \ast\ketbra{\psi} 
    = \sum_k
    \begin{quantikz}[classical gap=0.03cm]
        \;\; \bshade{2}{9}{} & \setwiretype{n} \lstick[2]{$|\psi\>$} & \wire[l][1]["C_1"{above,pos=0}]{q} \setwiretype{q} & \gate[2,style={fill=blue!20}]{V^{(1)}} & \wire[l][1]["C_2"{above,pos=0}]{q} \setwiretype{b} & \invgate{\dots} & \gate[2,style={fill=blue!20}]{V^{(T-1)}} & \wire[l][1]["C_T"{above,pos=0}]{q} & \gate[2,style={fill=blue!20}]{M_k'} \\
        & \setwiretype{n} & \invgate{} \wire[l][1]["{A_0}"{above,pos=0.45}]{q} \wshade{1}{1}{} & \wire[l][1]["{A_1}"{above,pos=0.45}]{q} \wire[r][1]["{A_1'}"{above,pos=0.45}]{q} & \invgate{} \wshade{1}{2}{} & \invgate{\dots} & \wire[l][1]["{A_{T-1}}"{above,pos=0.65}]{q} \wire[r][1]["{A_{T-1}'}"{above,pos=0.65}]{q} & \invgate{} \wire[r][1]["{A_T}"{above,pos=0.45}]{q} \wshade{1}{1}{} & 
    \end{quantikz} 
\end{aligned}
\end{equation}
for some ancilla systems $C_1,\dots,C_T$ such that $C_t\cong (\bigotimes_{l=1}^{t-1} A_l'\otimes A_l) \otimes A_0$, some pure state $|\psi\>$ over $A_0\otimes C_1$, isometries $V^{(1)},\dots,V^{(T-1)}$, and POVM $\{M_k'\}_k$ over $A_T\otimes C_T \cong \Bar{A}_T$.
We show in the Appendix~\ref{app:tester_decomposition} how this form can be obtained for any tester $\mathbf{M}$.

\section{Tester decomposition}\label{app:tester_decomposition}

Now we will show a decomposition of a $T$-step tester $\mathbf{M}\in\seq(\Bar{A}_T)$ into an initial state followed by a sequence of $T-1$ isometries and a final POVM measurement.
This derivation follows the proof of Theorem 12 and Theorem 3 of~\cite{chiribella2009theoretical}.
Suppose that there exist positive semidefinite operators $\{\mathbf{M}_k\}$ such that we can write $\mathcal{M}$ as
\begin{equation}
    \mathbf{M} = \sum_k \mathbf{M}_k \;,
\end{equation}
where each $\mathbf{M}_k$ corresponds to measurement outcome $k$:
\begin{equation}
    \mathbf{M}_k =
    \begin{quantikz}[classical gap=0.03cm]
        \bshade{2}{9}{} & \setwiretype{n} \lstick[2]{$\rho$} & \wire[l][1]["C_1"{above,pos=0}]{q} \setwiretype{q} & \gate[2,style={fill=blue!20}]{\mathcal{M}^{(1)}} & \wire[l][1]["C_2"{above,pos=0}]{q} \setwiretype{b} & \invgate{\dots} & \gate[2,style={fill=blue!20}]{\mathcal{M}^{(T-1)}} & \wire[l][1]["C_T"{above,pos=0}]{q} & \gate[2,style={fill=blue!20}]{\mathcal{M}_k^{(T)}} \\
        & \setwiretype{n} & \invgate{} \wire[l][1]["{A_0}"{above,pos=0.45}]{q} \wshade{1}{1}{} & \wire[l][1]["{A_1}"{above,pos=0.45}]{q} \wire[r][1]["{A_1'}"{above,pos=0.45}]{q} & \invgate{} \wshade{1}{2}{} & \invgate{\dots} & \wire[l][1]["{A_{T-1}}"{above,pos=0.65}]{q} \wire[r][1]["{A_{T-1}'}"{above,pos=0.65}]{q} & \invgate{} \wire[r][1]["{A_T}"{above,pos=0.45}]{q} \wshade{1}{1}{} & 
    \end{quantikz} \;.
\end{equation}
So, for a $T$-step channel operator $\mathbf{\Lambda}_\theta$, the probability of output $k$ from sequential strategy $\mathcal{M}$ with operator $\mathbf{M}=\sum_k\mathbf{M}_k$ is given by
\begin{equation}
    p(k|\mathcal{M},\mathbf{\Lambda}_\theta) = \tr(\mathbf{M}_k \mathbf{\Lambda}_\theta^\top) =
    \begin{quantikz}[classical gap=0.03cm]
        \bshade{2}{9}{$\mathbf{M}_k$}
        \setwiretype{n} & \lstick[2]{$\rho$} & \wire[l][1]["C_1"{above,pos=0}]{q} \setwiretype{q} & \gate[2,style={fill=blue!20}]{\mathcal{M}^{(1)}} & 
            \wire[l][1]["C_2"{above,pos=0}]{q} \setwiretype{b} & \gate[2,style={fill=blue!20}]{\mathcal{M}^{(2)}} & 
            \invgate{\dots} \wire[l][1]["C_3"{above,pos=0.4}]{q} & 
            \wire[l][1]["C_T"{above,pos=0}]{q} & \gate[2,style={fill=blue!20}]{\mathcal{M}_k^{(T)}} \\
        \setwiretype{n} & & \gate[2,style={fill=green!20}]{\Lambda_\theta^{(1)}} \wire[l][1]["{A_0}"{above,pos=0.45}]{q} \setwiretype{q} \gshade{1}{1}{} & \wire[l][1]["{A_1}"{above,pos=0.45}]{q} \wire[r][1]["{A_1'}"{above,pos=0.45}]{q} & 
            \gate[2,style={fill=green!20}]{\Lambda_\theta^{(2)}} \gshade{1}{1}{} & \wire[l][1]["{A_2}"{above,pos=0.45}]{q} \wire[r][1]["{A_2'}"{above,pos=0.45}]{q} & 
            \invgate{\dots} & 
            \gate[2,style={fill=green!20}]{\Lambda_\theta^{(T)}} \wire[l][1]["{A_{T-1}'}"{above,pos=0.65}]{q} \wire[r][1]["{A_T}"{above,pos=0.45}]{q} \gshade{1}{1}{} & \\
        \setwiretype{n} & & \gshade{1}{6}{$\mathbf{\Lambda}_\theta$} & \wire[l][1]["{E_1}"{above,pos=0.45}]{q} \setwiretype{q} & 
            & \wire[l][1]["{E_2}"{above,pos=0.45}]{q} &
            \invgate{\dots} & 
            \wire[l][1]["{E_{T-1}}"{above,pos=0.65}]{q} & \setwiretype{n}
    \end{quantikz}
    \;. 
\end{equation}
We will first show in Appendix~\ref{app:tester_isometry_sequence_measurement_decomposition} that one can decompose each $\mathbf{M}_k$ into $\mathbf{M}_k = \sqrt{\mathbf{M}}^\dag M_k' \sqrt{\mathbf{M}}$.
Here, operators $\{M_k'\}_k$ correspond to a final measurement POVM at time step $T$ over the joint ancilla--probe registers $C_T\otimes A_T\cong \Bar{A}_T$ and $\sqrt{\mathbf{M}}\cdot\sqrt{\mathbf{M}}^\dag$ is a linear map $\L(\Bar{A}_T)\rightarrow\L(C_T\otimes A_T)$ describing the intermediate operations from time step $1$ to $T-1$:
\begin{equation}
    M_k' =
    \begin{quantikz}[classical gap=0.03cm]
        & \gate[2,style={fill=blue!20}]{\mathcal{M}_k^{(T)}} \wire[l][1]["C_T"{above,pos=0.5}]{q} \\
        &  \wire[l][1]["{A_T}"{above,pos=0.5}]{q}
    \end{quantikz}
    \quad\text{and}\quad
    \sqrt{\mathbf{M}}\cdot\sqrt{\mathbf{M}}^\dag =
    \begin{quantikz}[classical gap=0.03cm]
        & \setwiretype{n} \lstick[2]{$\rho$} & \wire[l][1]["C_1"{above,pos=0}]{q} \setwiretype{q} & \gate[2,style={fill=blue!20}]{\mathcal{M}^{(1)}} & \wire[l][1]["C_2"{above,pos=0}]{q} \setwiretype{b} & \invgate{\dots} & \gate[2,style={fill=blue!20}]{\mathcal{M}^{(T-1)}} & \wire[l][1]["C_T"{above,pos=0}]{q} \\
        & \setwiretype{n} & \invgate{} \wire[l][1]["{A_0}"{above,pos=0.45}]{q} \wshade{1}{1}{} & \wire[l][1]["{A_1}"{above,pos=0.45}]{q} \wire[r][1]["{A_1'}"{above,pos=0.45}]{q} & \invgate{} \wshade{1}{2}{} & \invgate{\dots} & \wire[l][1]["{A_{T-1}}"{above,pos=0.65}]{q} \wire[r][1]["{A_{T-1}'}"{above,pos=0.65}]{q} & 
    \end{quantikz} \;.
\end{equation}
Then, in Appendix~\ref{app:tester_isometry_sequence_decomposition}, we will show how we can obtain a form of each $\mathcal{M}^{(t)}$ as an isometry channel $\mathcal{M}^{(t)}(\cdot) = V^{(t)} \cdot (V^{(t)})^\dag$, i.e. how the sequence of intermediate operations described by $\sqrt{\mathbf{M}}\cdot\sqrt{\mathbf{M}}^\dag$ can be decomposed into a sequence of isometries.

\subsection{Decomposing tester into a final measurement and intermediate operations}\label{app:tester_isometry_sequence_measurement_decomposition}

Note that by spectral decomposition of $\mathbf{M}$, we can also write $\mathbf{M} = \sum_j \alpha_j \kketbra{M_j}$, where $|M_j\rr$ is an eigenvector with eigenvalue $\alpha_j\geq0$.
Now consider the projector $\Pi_\mathbf{M} = \sum_{j:\alpha_j>0} \kketbra{M_j}$ onto the support of $\mathbf{M}$ and an operator $M_k'$ over $C_T\otimes A_T$ for $C_T\cong (\bigotimes_{t=1}^{T-1} A_t'\otimes A_t) \otimes A_0$ given by
\begin{equation}
    M_k' = \sqrt{\mathbf{M}^+} \mathbf{M}_k \sqrt{\mathbf{M}^+} + G_k,
\end{equation}
where $\{G_k\}_k$ is a set of positive semidefinite operators satisfying $\sum_k G_k = I - \Pi_\mathbf{M}$ and $K^+$ denotes the pseudo-inverse of $K$, which is defined as an operator $K^+$ satisfying
\begin{equation}
\begin{gathered}
    KK^+K = K \;,\quad
    K^+KK^+ = K^+ \;\quad
    (KK^+)^\dag = KK^+ \;,\quad
    (K^+K)^\dag = K^+K \;.
\end{gathered}
\end{equation}
Since $\mathbf{M}$ is positive semidefinite, its pseudoinverse is given by
\begin{equation}
    \mathbf{M}^+ = \sum_{j:\alpha_j>0} \alpha_j^{-1}\kketbra{M_j} \;,
\end{equation}
which implies that $\sqrt{\mathbf{M}^+}\mathbf{M}\sqrt{\mathbf{M}^+} = \Pi_\mathbf{M}$.
Thus, $\{M_k'\}_k$ is a POVM since each $M_k'$ is positive semidefinite and
\begin{equation}
    \sum_k M_k' = \sqrt{\mathbf{M}^+} \bigg( \underbrace{\sum_k \mathbf{M}_k}_{=\mathbf{M}} \bigg) \sqrt{\mathbf{M}^+} + \sum_k G_k = \Pi_\mathbf{M} + (I-\Pi_\mathbf{M}) = I \;.
\end{equation}

Now consider the operator $\sqrt{\mathbf{M}} = \sum_j \sqrt{\alpha_j}\kketbra{M_j} :\Bar{A}_T\rightarrow A_T\otimes C_T$ (recall that $A_T\otimes C_T\cong\Bar{A}_T$).
Then note that each $G_k$ satisfies $0\leq G_k\leq I-\Pi_\mathbf{M}$ so that, for any $|v\>$ in the support of $\Pi_\mathbf{M}$, we have
\begin{equation}
    0 \leq G_k|v\> \leq |v\> - \Pi_\mathbf{M}|v\> = 0,
\end{equation}
which means that $G_k\Pi_\mathbf{M} = \Pi_\mathbf{M}G_k = 0$.
Then this implies that $\sqrt{\mathbf{M}}^\dag G_k \sqrt{\mathbf{M}} = 0$.
Similarly since $0\leq \mathbf{M}_k\leq \mathbf{M}$, the support of $\mathbf{M}_k$ is contained in the support of $\mathbf{M}$, so that we have $\mathbf{M}_k\Pi_\mathbf{M} = \Pi_\mathbf{M}\mathbf{M}_k$.
Since $\sqrt{\mathbf{M}^+}\sqrt{\mathbf{M}} = \Pi_\mathbf{M}$, we have
\begin{equation}
\begin{aligned}
    \sqrt{\mathbf{M}}^\dag M_k' \sqrt{\mathbf{M}} &= \sqrt{\mathbf{M}}^\dag \sqrt{\mathbf{M}^+} \mathbf{M}_k \sqrt{\mathbf{M}^+} \sqrt{\mathbf{M}} + \sqrt{\mathbf{M}}^\dag G_k \sqrt{\mathbf{M}} 
    = \mathbf{M}_k \;.
\end{aligned}
\end{equation}
Therefore, $p(k|\mathcal{M},\Lambda_\theta) = \tr(\sqrt{\mathbf{M}}\mathbf{\Lambda}_\theta^\top \sqrt{\mathbf{M}}^\dag M_k')$, where $\sqrt{\mathbf{M}} \cdot \sqrt{\mathbf{M}}^\dag$ is a linear map of the form $\L(\Bar{A}_T)\rightarrow\L(A_T\otimes C_T)$ and $\{M_k'\}_k$ is a POVM over $A_T\otimes C_T$.

\subsection{Decomposing intermediate operations into isometries}\label{app:tester_isometry_sequence_decomposition}

Now we will describe how one can obtain an initial state $|\psi\>$ and a sequence of isometries $V^{(1)},\dots,V^{(T-1)}$ implementing the intermediate operations corresponding to the map $\sqrt{\mathbf{M}}\cdot\sqrt{\mathbf{M}}^\dag$.
We describe this through an example of sequentially estimating the angle of $T=2$ copies of single-qubit Pauli-$Z$ rotation channels as described in Section~\ref{sec:z_rotation_example} in the main text.
A general method of obtaining this decomposition for $T=2$ steps is described in Proposition~\ref{prop:2_step_tester_decomposition}.

For $T=2$, an optimal estimator matrix is $\mathbf{X}_{0,1}^*$, given by
\begin{equation}
\begin{gathered}
    \mathbf{X}_{0,1}^* = \frac{1}{4}\kketbra{M_1} -\frac{1}{4}\kketbra{M_2} 
    \quad\text{for}\quad
    |M_1\rr = \frac{|0^4\>-|1^4\>}{\sqrt{2}} 
    \;,\quad
    |M_2\rr = \frac{|0^4\>+|1^4\>}{\sqrt{2}} 
\end{gathered},
\end{equation}
and the optimal strategy can be described by positive semidefinite operator $\mathbf{M}\in\L(A_2\otimes A_1'\otimes A_1\otimes A_0)$ is given by
\begin{equation}\label{eqn:M_matrix_2_step_z_rotation}
\begin{aligned}
    \mathbf{M} &= \sum_{k=1}^{12} \mathbf{M}_k \\
    &= \frac{\kketbra{M_1}+\kketbra{M_2} + \ketbra{0111}+\ketbra{1000}}{2} + I_{A_2} \otimes \frac{I_{A_1'}\otimes(\ketbra{01}+\ketbra{10})}{4} \\
    &= I_{A_2}\otimes \mathbf{M}^{(1)} \\
    &= 
    \begin{quantikz}[classical gap=0.04cm]
        \lstick[2]{$\rho$} & \wire[r][1]["{C}"{above,pos=0}]{q} & & \gate[2,style={fill=blue!20}]{V_1} & \wire[r][1]["{C'}"{above,pos=0}]{q} & & \gate[2,style={fill=blue!20}]{\mathcal{M}'} \wireoverride{q} \\
        & \wire[l][1]["{A_0}"{above,pos=0}]{q} \setwiretype{n} & \wire[r][1]["{A_1}"{above,pos=0}]{q} & & \wire[l][1]["{A_1'}"{above,pos=0}]{q} & \wire[r][1]["{A_2}"{above,pos=0}]{q} & 
    \end{quantikz} \\
    &\qquad\text{for} \\
    \mathbf{M}^{(1)} &= \frac{\ketbra{0^3}+\ketbra{1^3}}{2} + \frac{\ketbra{001}+\ketbra{010}+\ketbra{101}+\ketbra{110}}{4} \\
    &=
    \begin{quantikz}[classical gap=0.04cm]
        \lstick[2]{$\rho$} & \wire[r][1]["{C}"{above,pos=0}]{q} & & \gate[2,style={fill=blue!20}]{V_1} & \wire[r][1]["{C'}"{above,pos=0}]{q} & \ground{} \\
        & \wire[l][1]["{A_0}"{above,pos=0}]{q} \setwiretype{n} & \wire[r][1]["{A_1}"{above,pos=0}]{q} & & \wire[l][1]["{A_1'}"{above,pos=0}]{q} & 
    \end{quantikz}
\end{aligned}
\end{equation}
and reduced initial state $\tr_C(\rho) = I_{A_0}/2$, since $\kketbra{M_1}+\kketbra{M_2} = \ketbra{0^4}+\ketbra{1^4}$ and $\tr_{A_1'}(\mathbf{M}^{(1)}) = I_{A_1}\otimes I_{A_0}/2$.
The notation $\begin{quantikz}[classical gap=0.04cm]
        & \wire[r][1]["{C'}"{above,pos=0}]{q} & \ground{} 
    \end{quantikz}$ above means that register $C'$ is being traced out.

The probability of outcome $k\in\{1,\dots,12\}$ from the optimal strategy is given by
\begin{equation}
    \tr(\mathbf{\Lambda}_\theta \mathbf{M}_k^\top) = 
    \begin{quantikz}[classical gap=0.05cm]
        \setwiretype{q} \lstick[2]{$\rho$} & & \gate[2,style={fill=blue!20}]{V_1} & \wire[l][1][""{above,pos=0}]{q} & \gate[2,style={fill=blue!20}]{\mathcal{M}_k'} \wireoverride{q}  \\
        \setwiretype{q} & \gate[1,style={fill=red!20}]{\Lambda_\theta} & & \gate[1,style={fill=red!20}]{\Lambda_\theta} & 
    \end{quantikz}, 
\end{equation}
where the $\mathbf{M}_k$'s are the 12 terms of $\mathbf{M}$.
Now consider a process map $\mathcal{T}(\cdot) = \sqrt{\mathbf{M}^\top}\cdot\sqrt{\mathbf{M}^\top}$ which action on $\mathbf{\Lambda}_\theta$ and which can be illustrated as
\begin{equation}\label{eqn:process_to_state_map}
\begin{gathered}
    \mathcal{T}(\mathbf{\Lambda}_\theta) = \sqrt{\mathbf{M}^\top} \mathbf{\Lambda}_\theta \sqrt{\mathbf{M}^\top} =
    \begin{quantikz}[classical gap=0.05cm]
        \setwiretype{q} \lstick[2]{$\rho$} & & \gate[2,style={fill=blue!20}]{V_1} & \wire[l][1]["C'"{above,pos=0}]{q} &  \\
        \setwiretype{q} & \gate[1,style={fill=red!20}]{\Lambda_\theta} & & \gate[1,style={fill=red!20}]{\Lambda_\theta} & \wire[l][1]["A_2"{above,pos=0}]{q}
    \end{quantikz} 
\end{gathered}
\end{equation}
for some density operator $\rho\in\L(C\otimes A_0)$ and isometry $V_1:C\otimes A_1\rightarrow C'\otimes A_1'$ for $C'\cong A_1'\otimes A_1\otimes A_0$.
Operator $\sqrt{\mathbf{M}^\top}:A_2\otimes A_1'\otimes A_1\otimes A_0 \rightarrow A_2\otimes C'$ with $C'\cong A_1'\otimes A_1\otimes A_0$ is explicitly given by
\begin{equation}
\begin{aligned}
    \sqrt{\mathbf{M}^\top} &= \frac{\ketbra{0^4}+\ketbra{1^4}+\ketbra{0111}+\ketbra{1000}}{\sqrt{2}} + I_{A_2} \otimes \frac{I_{A_1'}\otimes(\ketbra{01}+\ketbra{10})}{2} \\
    &= I_{A_2} \otimes \Big( \frac{\ketbra{0^3}+\ketbra{1^3}}{\sqrt{2}} + \frac{I_{A_1'}\otimes(\ketbra{01}+\ketbra{10})}{2} \Big) \;.
\end{aligned}
\end{equation}

The POVM measurement $\mathcal{M}'$ consists of POVM elements
\begin{equation}\label{eqn:2step_z_rotation_optimal_POVM}
\begin{gathered}
    M_k' = \sqrt{\mathbf{M}^+} \mathbf{M}_k \sqrt{\mathbf{M}^+} + Q_k = 
    \begin{quantikz}[classical gap=0.05cm]
        \setwiretype{q} & \gate[2,style={fill=blue!20}]{\mathcal{M}_k'} \wire[l][1]["{C'}"{above,pos=0.5}]{q} \\
        \setwiretype{q} & \wire[l][1]["{A_2}"{above,pos=0.5}]{q}
    \end{quantikz} \\
    \text{for}\\
    \sqrt{\mathbf{M}^+} = \sqrt{2}\big(\ketbra{0^4}+\ketbra{1^4}+\ketbra{0111}+\ketbra{1000}\big) + I_{A_2} \otimes 2\big(I_{A_1'}\otimes(\ketbra{01}+\ketbra{10}) \big) \\
    \text{and}\\
    Q_k =
    \begin{cases}
        \ketbra{0100} \;,\quad&\text{if }j=1 \\
        \ketbra{0011} \;,\quad&\text{if }j=2 \\
        \ketbra{1100} \;,\quad&\text{if }j=3 \\
        \ketbra{1011} \;,\quad&\text{if }j=4 \\
        0 \;,\quad&\text{otherwise} 
    \end{cases} \;,
\end{gathered}
\end{equation}
so that each POVM element $M_k'$ is a projection.
In fact, we can pick any four $j$'s and $Q_k$'s such as $\sum_k Q_k$ is a projection onto the orthogonal subspace of the support of $\mathbf{M}$ (so that $\{M_k'\}_k$ is a POVM), since
\begin{equation}
\begin{aligned}
    \tr((M_k')^\top \mathcal{T}(\mathbf{\Lambda}_\theta)) &= \tr\Big( (\sqrt{\mathbf{M}^+} \mathbf{M}_k \sqrt{\mathbf{M}^+} + Q_k)^\top (\sqrt{\mathbf{M}^\top} \mathbf{\Lambda}_\theta \sqrt{\mathbf{M}^\top}) \Big) \\
    &= \tr\Big( (\sqrt{\mathbf{M}^+} \mathbf{M}_k \sqrt{\mathbf{M}^+})^\top (\sqrt{\mathbf{M}^\top} \mathbf{\Lambda}_\theta \sqrt{\mathbf{M}^\top}) \Big) \\
    &= \tr\Big( \mathbf{M}' \mathbf{M}_k^\top \mathbf{M}' \mathbf{\Lambda}_\theta \Big) \\
    &= \tr\Big( \mathbf{M}_k^\top \mathbf{\Lambda}_\theta \Big) \\
    &\quad\text{for}\\
    \mathbf{M}' &= \ketbra{0^4}+\ketbra{1^4}+\ketbra{0111}+\ketbra{1000} + I_{A_2} \otimes I_{A_1'}\otimes(\ketbra{01}+\ketbra{10}) \;.
\end{aligned}
\end{equation}

To obtain an explicit form of $V_1$ of $\rho$, first consider the restriction $\mathcal{T}_{A_1'A_1A_0}:\L(A_1'\otimes A_1\otimes A_0)\rightarrow\L(C')$ of $\mathcal{T}$ to $\L(A_1'\otimes A_1\otimes A_0)$, given by
\begin{equation}
\begin{gathered}
    \mathcal{T}_{A_1'A_1A_0}(\cdot) = \Tilde{\mathbf{M}}\cdot\Tilde{\mathbf{M}}
    =
    \begin{quantikz}[classical gap=0.04cm]
        \lstick[2]{$\rho$} & \wire[r][1]["{C}"{above,pos=0}]{q} & & \gate[2,style={fill=blue!20}]{V_1} & \wire[r][1]["{C'}"{above,pos=0}]{q} & \\
        & \wire[l][1]["{A_0}"{above,pos=0}]{q} \setwiretype{n} & \wire[r][1]["{A_1}"{above,pos=0}]{q} & & \wire[l][1]["{A_1'}"{above,pos=0}]{q} \wire[r][1][""{above,pos=0}]{q} & 
    \end{quantikz} \\
    \text{for}\\
    \Tilde{\mathbf{M}} = \frac{\ketbra{0^3}+\ketbra{1^3}}{\sqrt{2}} + \frac{I_{A_1'}\otimes(\ketbra{01}+\ketbra{10})}{2}
\end{gathered},
\end{equation}
which satisfy $\mathcal{I}_{A_2} \otimes \mathcal{T}_{A_1'A_1A_0}= \mathcal{T}$ since $\mathcal{T}$ acts as an identity map on $\L(A_2)$ (also $\sqrt{\mathbf{M}^\top} = I_{A_2}\otimes\Tilde{\mathbf{M}}$).
Now, consider a linear map $\Tilde{\mathcal{T}}:\L(A_1)\rightarrow\L(C'\otimes A_1'\otimes A_0)$ given by
\begin{equation}\label{eqn:inputs_to_outputs_map}
\begin{gathered}
    \Tilde{\mathcal{T}}(\cdot) =
    \begin{quantikz}[classical gap=0.04cm]
        & \wire[r][1]["{A_1}"{above,pos=0}]{q} & & \gate[2,style={fill=blue!20}]{V_1} & \wire[r][1]["{C'}"{above,pos=0}]{q} & \\
        \lstick[2]{$\rho$} & \wire[r][1]["{C}"{above,pos=0}]{q} & & & \wire[l][1]["{A_1'}"{above,pos=0}]{q} \wire[r][1][""{above,pos=0}]{q} & \\
        & \setwiretype{q} & & & \wire[l][1]["{A_0}"{above,pos=0}]{q} &
    \end{quantikz}
    = \Bar{\mathbf{M}} \cdot \Bar{\mathbf{M}}^\dag \\
    \text{for}\\
    \Bar{\mathbf{M}} = \frac{|0^5\>\<0|+|1^5\>\<1|}{\sqrt{2}} + \sum_{c=0}^1 \frac{(|c01c\>_{C'A_1'} |1\>_{A_0} \<0|_{A_1}+|c10c\>_{C'A_1'} |0\>_{A_0} \<1|_{A_1})}{2} \;.
\end{gathered}
\end{equation}
By tracing out $C'$ and $A_1'$ we get
\begin{equation}\label{eqn:traced_output_equivalence}
\begin{aligned}
    \tr_{C'A_1'}\circ\Tilde{\mathcal{T}}(\cdot) &= 
    \begin{quantikz}[classical gap=0.04cm]
        & \wire[r][1]["{A_1}"{above,pos=0}]{q} & & \gate[2,style={fill=blue!20}]{V_1} & \wire[r][1]["{C'}"{above,pos=0}]{q} & \ground{} \\
        \lstick[2]{$\rho$} & \wire[r][1]["{C}"{above,pos=0}]{q} & & & \wire[l][1]["{A_1'}"{above,pos=0}]{q} \wire[r][1][""{above,pos=0}]{q} & \ground{} \\
        & \setwiretype{q} & & & \wire[l][1]["{A_0}"{above,pos=0}]{q} &
    \end{quantikz} \\
    &= \tr_{C'A_1'}(\Bar{\mathbf{M}} \cdot \Bar{\mathbf{M}}^\dag) \\
    &= \frac{1}{2} \Big( \ketbra{0}\cdot\ketbra{0} + \ketbra{1}\cdot\ketbra{1} \Big) + \frac{1}{2} \Big( |1\>\<0|\cdot|0\>\<1| + |0\>\<1|\cdot|1\>\<0| \Big) \\
    &= \<0|\cdot|0\>\frac{I_{A_0}}{2} + \<1|\cdot|1\>\frac{I_{A_0}}{2} \\
    &= \tr_{A_1}(\cdot) \frac{I_{A_0}}{2} \\
    &=
    \begin{quantikz}[classical gap=0.04cm]
        & \wire[r][1]["{A_1}"{above,pos=0}]{q} & \ground{} \\
        \lstick[1]{$\frac{I}{2}$} & \wire[l][1]["{A_0}"{above,pos=0}]{q} &
    \end{quantikz} \;.
\end{aligned}
\end{equation}
Equivalence $\tr_{C'A_1'}\circ\Tilde{\mathcal{T}}(\cdot) = \tr_{A_1}(\cdot) \frac{I_{A_0}}{2}$ between the two quantum channels implies that there exists an isometry $V:\Tilde{A}_1\otimes \Tilde{A}_0 \rightarrow C'\otimes A_1'$ with $\Tilde{A}_0\cong A_0$ and $\Tilde{A}_1\cong A_1$ that relates the Kraus operators of the two channels:
\begin{equation}\label{eqn:2step_z_rot_kraus_equivalence}
\begin{aligned}
    \<c,b',a',c'|_{C'A_1'}\Bar{\mathbf{M}} &= \sum_{b,a=0}^1 V_{c,b',a',c' ; b,a} \frac{|a\>_{A_0}\<b|_{A_1}}{\sqrt{2}} \\
    &=  \sum_{b,a=0}^1 \Big( \<c,b',a',c'|_{C'A_1'}V|b,a\>_{\Tilde{A}_1,\Tilde{A}_0} \otimes I_{A_0} \Big) \frac{|a\>_{A_0}\<b|_{A_1}}{\sqrt{2}}
    \;,\;\forall c,b',a',c'
\end{aligned},
\end{equation}
which implies that $V_1:A_1\otimes C \rightarrow C'\otimes A_1'$ can be taken as $V$ with $C\cong \Tilde{A}_0$ and $\rho$ as the maximally mixed state on $A_0$, since
\begin{equation}\label{eqn:Mbar_operator_intermediate_isometry}
\begin{aligned}
    \Bar{\mathbf{M}} = \sum_{b,a=0}^1 \Big( V|b,a\>_{\Tilde{A}_1,\Tilde{A}_0} \otimes I_{A_0} \Big) \frac{|a\>_{A_0}\<b|_{A_1}}{\sqrt{2}}
    &= (V\otimes I_{A_0}) \Big( I_{A_1} \otimes \underbrace{ \frac{|00\>_{A_0}+|11\>_{A_0}}{\sqrt{2}} }_{=|\phi\>} \Big)
    = 
    \begin{quantikz}[classical gap=0.04cm]
        & \wire[r][1]["{\Tilde{A}_1}"{above,pos=0}]{q} & & \gate[2,style={fill=blue!20}]{V} & \wire[r][1]["{C'}"{above,pos=0}]{q} & \\
        \lstick[2]{$|\phi\>$} & \wire[r][1]["{\Tilde{A}_0}"{above,pos=0}]{q} & & & \wire[l][1]["{A_1'}"{above,pos=0}]{q} \wire[r][1][""{above,pos=0}]{q} & \\
        & \setwiretype{q} & & & \wire[l][1]["{A_0}"{above,pos=0}]{q} &
    \end{quantikz} \;,
\end{aligned}
\end{equation}
where $|\phi\>$ is the maximally entangled state.

So, we can write the entries of $V$ (equivalently $V_1$) in terms of the entries of $\Bar{\mathbf{M}}$ and $\Tilde{\mathbf{M}}$ by using eqn.~\eqref{eqn:2step_z_rot_kraus_equivalence} and the relation $\Bar{\mathbf{M}} = \Tilde{\mathbf{M}}(I_{A_1}\otimes |I\rr_{A_1'} |I\rr_{A_0})$ (where in turn, $\Tilde{\mathbf{M}} = \tr_{A_2}(\sqrt{\mathbf{M}^\top})/2$, since $I_{A_2} \otimes \Tilde{\mathbf{M}} = \sqrt{\mathbf{M}^\top}$):
\begin{equation}\label{eqn:isometry_as_M_matrix}
\begin{aligned}
    V_{c,b',a',c' ; b,a} &= \sqrt{2} \<c,b',a',c',a|_{C'A_1'A_0}\Bar{\mathbf{M}}|b\>_{A_1} \\
    &= 2^{-\frac{1}{2}} \<c,b',a'|_{C'}\Tilde{\mathbf{M}}|c',b,a\>_{A_1'A_1A_0} 
    =
    \begin{cases}
        1 \quad&,\;\text{if }(c,b',a') = (c',b,a) \in\{000,111\} \\
        \frac{1}{\sqrt{2}} \quad&,\;\text{if }(c,b',a') = (c',b,a) \;,\; (b,a)\in\{01,10\} \\
        0 \quad&,\;\text{otherwise}
    \end{cases} \;.
\end{aligned}
\end{equation}
Explicitly, isometry $V$ can be realized using $|0\>$ state preparations and controlled Hadamard gates $H$:
\begin{equation}\label{eqn:app_optimal_isometry_z_rot}
\begin{aligned}
    V &=
    |0^4\>\<00| + |1^4\>\<11| + \frac{|0010\>+|1011\>}{\sqrt{2}}\<01| + \frac{|0100\>+|1101\>}{\sqrt{2}}\<10| \\
    &= 
    \begin{quantikz}
        & \gate[1,style={blue!05,fill=blue!05}]{|0\>} \setwiretype{n} \gategroup[4,steps=4,style={dashed, fill=blue!05, inner xsep=2pt},background,label style={label position=above,anchor=north,yshift=0.3cm}]{{\sc $V$}} & \gate[]{H} \setwiretype{q} & \gate[]{H} & \ctrl{3} & \rstick[3]{$C'$} \\
        \lstick[1]{$C$} & & & \ctrl{-1} & & \\
        \lstick[1]{$A_1$} & & \ctrl{-2} & & & \\
        & \gate[1,style={blue!05,fill=blue!05}]{|0\>} \setwiretype{n} & \setwiretype{q} & & \targ{} & \rstick[1]{$A_1'$}
    \end{quantikz} \;.
\end{aligned}
\end{equation}

Generally, for a strategy for $T=2$ described by positive semidefinite operator $\mathbf{M}\in\L(A_2\otimes A_1'\otimes A_1\otimes A_0)$, we have the following general form of isometry $V$ and initial pure state $|\psi\>\in C\otimes A_0$.

\begin{proposition}[{Two-step tester decomposition}]\label{prop:2_step_tester_decomposition}
    Consider a two-step tester $\mathbf{M} \in\L(A_2\otimes A_1'\otimes A_1\otimes A_0)$.
    It holds that tester $\mathbf{M}$ admits a decomposition $\mathbf{M} = \sum_k \mathbf{M}_k$ for rank-1 mutually orthogonal positive semidefinite operators $\{\mathbf{M}_k\}_k$ and is realizable as
    \begin{equation}
    \begin{aligned}
        \mathbf{M} = 
        \begin{quantikz}[classical gap=0.04cm]
            \lstick[2]{$|\psi\>$} & \wire[r][1]["{C}"{above,pos=0}]{q} & & \gate[2,style={fill=blue!20}]{V} & \wire[r][1]["{C'}"{above,pos=0}]{q} & & \gate[2,style={fill=green!20}]{\mathcal{M}'} \wireoverride{q} \\
            & \wire[l][1]["{A_0}"{above,pos=0}]{q} \setwiretype{n} & \wire[r][1]["{A_1}"{above,pos=0}]{q} & & \wire[l][1]["{A_1'}"{above,pos=0}]{q} & \wire[r][1]["{A_2}"{above,pos=0}]{q} & 
        \end{quantikz} 
    \end{aligned}
    \end{equation}
    for some POVM measurement $\mathcal{M}' = \{M_k'\}_k$ over $C'\otimes A_2$, isometry $V:C\otimes A_1\rightarrow C'\otimes A_1'$, and initial state $|\psi\>\in C\otimes A_0$ where $C'\cong A_1'\otimes A_1\otimes A_0$ and $C\cong A_0$.
    Moreover, there exists a positive semidefinite operator $\mathbf{M}^{(1)}\in\L(A_1'\otimes A_1\otimes A_0)$ such that $\mathbf{M} = I_{A_2}\otimes \mathbf{M}^{(1)}$ and a reduced initial state $\rho = \sum_k q_k\ketbra{k} = \tr_{A_2A_1'A_1}(\mathbf{M})/(d_{A_2}d_{A_1}) \in\L(A_0)$ such that POVM $\mathcal{M}'$, isometry $V:C\otimes A_1\rightarrow C'\otimes A_1'$, and initial state $|\psi\>\in C\otimes A_0$ are given by
    \begin{equation}
    \begin{gathered}
        M_k' = \sqrt{\mathbf{M}^+} \mathbf{M}_k \sqrt{\mathbf{M}^+} + Q_k \;,\quad
        V = (\sqrt{(\mathbf{M}^{(1)})^\top}\otimes I_{A_1'})(I_{A_1}\otimes |I\rr_{A_1'} \otimes\sqrt{\rho^{-1}})
        \;,\;\text{and}\quad
        |\psi\> = \sum_k \sqrt{q_k}|kk\>_{CA_0}  \;
    \end{gathered}
    \end{equation}
    for some positive semidefinite operators $\{Q_k\}_k$ such that $\sum_k Q_k$ is a projector onto the kernel of $\mathbf{M}$.
\end{proposition}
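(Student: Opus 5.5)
We will prove the proposition in three stages, following the general construction of Appendix~\ref{app:tester_isometry_sequence_measurement_decomposition} and mirroring the worked $Z$-rotation example.

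\textbf{Stage 1: structural facts from the tester constraints.} From eqn.~\eqref{eqn:testers} with $T=2$, $\mathbf{M}=I_{A_2}\otimes\mathbf{M}^{(1)}$ with $\mathbf{M}^{(1)}\geq0$. Furthermore, $\tr_{A_1'}(\mathbf{M}^{(1)})=I_{A_1}\otimes\rho$, where $\rho=\mathbf{M}^{(0)}$ is a density operator on $A_0$. Taking partial traces gives $\rho=\tr_{A_2A_1'A_1}(\mathbf{M})/(d_{A_2}d_{A_1})$. We write its spectral form as $\rho=\sum_k q_k\ketbra{k}$ and set $|\psi\>=\sum_k\sqrt{q_k}|kk\>_{CA_0}$; this is a purification of $\rho$.

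The rank-one orthogonal decomposition of $\mathbf{M}$ is just its spectral decomposition, $\mathbf{M}_k=\alpha_k\kketbra{M_k}$. Once the intermediate operations are fixed, the final POVM is $M_k'=\sqrt{\mathbf{M}^+}\mathbf{M}_k\sqrt{\mathbf{M}^+}+Q_k$. The argument already given in Appendix~\ref{app:tester_isometry_sequence_measurement_decomposition} shows that this is a POVM and that $\sqrt{\mathbf{M}}^\dag M_k'\sqrt{\mathbf{M}}=\mathbf{M}_k$. We therefore reuse that argument directly.

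\textbf{Stage 2: the isometry.} Define $V=(\sqrt{(\mathbf{M}^{(1)})^\top}\otimes I_{A_1'})(I_{A_1}\otimes|I\rr_{A_1'}\otimes\sqrt{\rho^{-1}})$, with $\rho^{-1}$ understood as the pseudoinverse on $\supp\rho$ and $C$ identified with that support. The key computation is
\begin{equation}
V^\dag V=(I_{A_1}\otimes\sqrt{\rho^{-1}})\,\tr_{A_1'}\big((\mathbf{M}^{(1)})^\top\big)\,(I_{A_1}\otimes\sqrt{\rho^{-1}}).
\end{equation}
The vectorised identity $|I\rr_{A_1'}$ contracts $A_1'$ into a partial trace, and the second $A_1'$ tensor factor pairs off. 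The causality constraint gives $\tr_{A_1'}(\mathbf{M}^{(1)})^\top=I_{A_1}\otimes\rho^\top$, and in the eigenbasis of $\rho$ we have $\rho^\top=\rho$. Hence $V^\dag V=I_{A_1}\otimes\Pi_{\supp\rho}$, so $V$ is an isometry on $A_1\otimes C$.

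\textbf{Stage 3: composing the pieces.} We then show that $(V\otimes I_{A_0})(I_{A_1}\otimes|\psi\>)$ reproduces $\sqrt{(\mathbf{M}^{(1)})^\top}$ applied to the unnormalised maximally entangled vectors on $A_1'A_0$, as in eqn.~\eqref{eqn:Mbar_operator_intermediate_isometry}. This works because $\sqrt{\rho^{-1}}$ cancels the $\sqrt{q_k}$ weights in $|\psi\>$ via the transpose trick $(X\otimes I)|I\rr=(I\otimes X^\top)|I\rr$. Consequently, the induced map from the process to the pre-measurement state equals $\mathcal{T}(\cdot)=\sqrt{\mathbf{M}^\top}\cdot\sqrt{\mathbf{M}^\top}$ (tensored with the identity on $A_2$), exactly as in eqn.~\eqref{eqn:process_to_state_map}. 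Combined with Stage~1, this gives $\tr(\mathbf{\Lambda}_\theta^\top\mathbf{M}_k)$ as the outcome probability of the circuit $|\psi\>\to V\to\mathcal{M}'$, which is the claimed realisation.

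\textbf{Main obstacle.} The hardest step is bookkeeping the transposes and the identification of registers: which copy of $A_0$ is contracted and which becomes $C$, together with the transpose convention in the link product. Rank deficiency of $\rho$ must also be handled. We would first verify the identities on the vectorised Kraus form $\<c|V$, following eqn.~\eqref{eqn:2step_z_rot_kraus_equivalence}. Where $\rho$ is singular, we would restrict $C$ to $\supp\rho$ and pad $V$ to a full isometry on the orthogonal complement; this complement is never reached from $|\psi\>$.
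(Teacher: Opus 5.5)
Your argument is correct, and your Stages~1 and~3 match the paper. Stage~1 is the POVM construction of Appendix~\ref{app:tester_isometry_sequence_measurement_decomposition}. Stage~3 is the identity $\bar{\mathbf{M}}=(V\otimes I_{A_0})(I_{A_1}\otimes|\psi\>)$, which produces the map $\mathcal{T}$. Where you differ is in how you obtain the isometry.

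The paper never computes $V^\dagger V$. It first shows $\tr_{C'A_1'}\circ\tilde{\mathcal{T}}(\cdot)=\tr_{A_1}(\cdot)\,\rho$. It then uses the isometric freedom between two Kraus representations of the same channel to conclude that some isometry relates $\{\<c,b',a',c'|\bar{\mathbf{M}}\}$ to $\{\sqrt{q_a}|a\>\<b|\}$. It identifies that Kraus-index isometry with $V:C\otimes A_1\to C'\otimes A_1'$, and only then reads off its entries to obtain the closed form. You instead take the closed form as the definition and verify $V^\dagger V=I_{A_1}\otimes\Pi_{\supp\rho}$ in one line from the causality constraint $\tr_{A_1'}\mathbf{M}^{(1)}=I_{A_1}\otimes\rho$.

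Your route is more elementary and shows directly that causality is exactly what makes $V$ isometric. It also treats singular $\rho$, which the paper passes over by writing $q_a^{-1/2}$. The paper's route explains why some $V$ must exist before any formula is known, and that existence form is what it reuses in its iterative $T$-step sketch.

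Two details deserve a sentence each in your write-up:
\begin{itemize}
\item The identity $\rho^\top=\rho$ requires the basis that defines the transposes and $|I\rr$ on $A_0$ to be an eigenbasis of $\rho$. This is a without-loss-of-generality choice that the paper also makes tacitly; otherwise $\rho^\top$ must replace $\rho$ inside $V$.
\item For singular $\rho$, the equality with $\bar{\mathbf{M}}$ in Stage~3 also needs $\supp\mathbf{M}^{(1)}\subseteq A_1'\otimes A_1\otimes\supp\rho$. This follows from $\mathbf{M}^{(1)}\geq0$ together with $\tr_{A_1'A_1}\mathbf{M}^{(1)}=d_{A_1}\rho$.
\end{itemize}
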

\begin{proof}
    The form of POVM $\mathcal{M}' = \{M_k'\}_k$ follows from the result of~\cite[Theorem 12]{chiribella2009theoretical}, which also asserts that there exists a linear map $\mathcal{T}(\cdot) = \sqrt{\mathbf{M}^\top} \cdot \sqrt{\mathbf{M}^\top}$ that maps the input operator to a state in $C'\otimes A_2$ (as in eqn.~\eqref{eqn:process_to_state_map}) for $C'\cong A_1'\otimes A_1\otimes A_0$ since $\sqrt{\mathbf{M}^\top}$ is a square matrix.
    The existence of positive semidefinite operator $\mathbf{M}^{(1)}\in\L(A_1'\otimes A_1\otimes A_0)$ such that $\mathbf{M} = I_{A_2}\otimes \mathbf{M}^{(1)}$ and $\rho = \tr_{A_2A_1'A_1}(\mathbf{M})/(d_{A_2}d_{A_1})$ being a density matrix are guaranteed by the fact that $\mathbf{M}$ is a valid strategy (or, a \textit{tester} in the terminology of~\cite{chiribella2009theoretical}).

    We can now define a linear map $\Tilde{\mathcal{T}}:\L(A_1)\rightarrow\L(C'\otimes A_1'\otimes A_0)$ given by $\Tilde{\mathcal{T}}(\cdot) = \Bar{\mathbf{M}}\cdot\Bar{\mathbf{M}}^\dag$ as in eqn.~\eqref{eqn:inputs_to_outputs_map} where $\Bar{\mathbf{M}} = (\sqrt{(\mathbf{M}^{(1)})^\top}\otimes I_{A_1'}\otimes I_{A_0}) (I_{A_1}\otimes |I\rr_{A_1'} |I\rr_{A_0})$.
    Tracing out $C',A_1'$ at the output of $\Tilde{\mathcal{T}}$ gives
    \begin{equation}
    \begin{aligned}
        \tr_{C'A_1'}\circ\Tilde{\mathcal{T}}(\cdot) &= \tr_{C'}\big( (\sqrt{(\mathbf{M}^{(1)})^\top}\otimes I_{A_0}) (I_{A_1'}\otimes \cdot\otimes \kketbra{I}_{A_0}) (\sqrt{(\mathbf{M}^{(1)})^\top}\otimes I_{A_0}) \big) \\
        &= \sum_{j,k} \tr\big( \sqrt{(\mathbf{M}^{(1)})^\top} (I_{A_1'}\otimes \cdot\otimes |j\>\<k|_{A_0}) \sqrt{(\mathbf{M}^{(1)})^\top} \big) |j\>\<k|_{A_0} \\
        &= \sum_{j,k} \tr_{A_1A_0}\big( \tr_{A_1'}\big((\mathbf{M}^{(1)})^\top\big) (\cdot\otimes |j\>\<k|_{A_0}) \big) |j\>\<k|_{A_0} \\
        &= \sum_{j,k} \tr_{A_1A_0}\big( (I_{A_1}\otimes\rho^\top) (\cdot\otimes |j\>\<k|_{A_0}) \big) |j\>\<k|_{A_0} \\
        &= \tr_{A_1}(\cdot) \rho \;,
    \end{aligned}
    \end{equation}
    where we use $\tr_{A_1'}(\mathbf{M}^{(1)}) = I_{A_1}\otimes\rho$ to obtain the fourth equality, since $\mathbf{M}$ is a valid strategy.
    The equality $\tr_{C'A_1'}\circ\Tilde{\mathcal{T}}(\cdot) = \tr_{A_1}(\cdot) \rho_{A_0}$ generalizes eqn.~\eqref{eqn:traced_output_equivalence}.
    Similar to eqn.~\eqref{eqn:2step_z_rot_kraus_equivalence}, there exists an isometry $V:A_1\otimes C \rightarrow C'\otimes A_1'$ which relates the Kraus operators $\{\<c,b',a',c'|_{C'A_1'}\Bar{\mathbf{M}}\}_{c,b',a',c'}$ and $\{\sqrt{q_k}|a\>_{A_0}\<b|_{A_1}\}_{a,b}$ of $\tr_{A_1'}(\mathbf{M}^{(1)})$ and $\tr_{A_1}(\cdot)\rho$, respectively, as
    \begin{equation}\label{eqn:general_isometry_as_M_matrix}
    \begin{aligned}
        \<c,b',a',c'|_{C'A_1'}\Bar{\mathbf{M}} &= \sum_{b,a} V_{c,b',a',c' ; b,a} \sqrt{q_a}|a\>_{A_0}\<b|_{A_1} 
        \;,\;\forall c,b',a',c',
    \end{aligned}
    \end{equation}
    so that we can express $\Tilde{\mathcal{T}}$ in terms of isometry $V$ and initial state $|\psi\>$:
    \begin{equation}
    \begin{aligned}
        \Bar{\mathbf{M}} 
        &= (V\otimes I_{A_0}) \Big( I_{A_1} \otimes \underbrace{ \sum_a \sqrt{q_a}|aa\>_{CA_0} }_{=|\psi\> \in C\otimes A_0} \Big) \;,
    \end{aligned}
    \end{equation}
    which generalizes eqn.~\eqref{eqn:Mbar_operator_intermediate_isometry}.
    Furthermore by using eqn.~\eqref{eqn:general_isometry_as_M_matrix} and $\Bar{\mathbf{M}} = (\sqrt{(\mathbf{M}^{(1)})^\top}\otimes I_{A_1'}\otimes I_{A_0}) (I_{A_1}\otimes |I\rr_{A_1'} |I\rr_{A_0})$, we can explicitly write entries of $V$ as
    \begin{equation}
        V_{c,b',a',c' ; b,a} = q_a^{-\frac{1}{2}} \<c,b',a'|_{C'}\<c'|_{A_1'}\<a|_{A_0}\Bar{\mathbf{M}}|b\>_{A_1} = q_a^{-\frac{1}{2}} \<c,b',a'|_{C'} \big( \sqrt{(\mathbf{M}^{(1)})^\top} \big) |c'\>_{A_1'}|b\>_{A_1}|a\>_{A_0} \;,
    \end{equation}
    generalizing eqn.~\eqref{eqn:isometry_as_M_matrix}.
    Since $\sqrt{\rho^{-1}} = \sum_a q_a^{-\frac{1}{2}}\ketbra{a}$, we obtain $V = (\sqrt{(\mathbf{M}^{(1)})^\top}\otimes I_{A_1'})(I_{A_1}\otimes |I\rr_{A_1'} \otimes\sqrt{\rho^{-1}})$.
\end{proof}

For an arbitrary $T$, the initial state $|\psi\>$, the sequence of isometries $V^{(1)},\dots,V^{(T-1)}$, and the final POVM $\{M_k'\}_k$ can be obtained iteratively using the same technique used in the proof of Proposition~\ref{prop:2_step_tester_decomposition}.
By considering operator $\mathbf{M}^{(t)} = \tr_{A_{t+1}A_{t+1}'}(\mathbf{M}^{(t+1)})/d_{A_{t+1}}$ for each $t\in\{0,\dots,T-2\}$, first we can obtain the initial state $|\psi\>$ by purifying operator $\mathbf{M}^{(0)}$ over $A_0$, which must be a density operator, since $\mathbf{M}\in\seq(\Bar{A}_T)$.
Then, we can obtain isometry $V^{(t)}$ from Kraus operator equivalence between quantum channels $\tr_{A_t'}\circ\Tilde{\mathcal{T}}^{(t)}$ and $\tr_{A_t}\otimes\Tilde{\mathcal{T}}^{(t-1)}$ where $\tr_{A_t'}\circ\Tilde{\mathcal{T}}^{(t)} = \tr_{A_t}\otimes\Tilde{\mathcal{T}}^{(t-1)}$ similar to eqn.~\eqref{eqn:general_isometry_as_M_matrix}.
Here, $\Tilde{\mathcal{T}}^{(t)}:\L(\bigotimes_{l=1}^t A_l)\rightarrow \L((\bigotimes_{l=1}^t A_l')\otimes A_0)$ is a quantum channel defined by the link product
\begin{equation}
\begin{aligned}
    \Tilde{\mathcal{T}}^{(t)}(G) &= G\ast\mathbf{M}^{(t)} \\ 
    &= \tr_{A_1\dots A_t}\Big( G (\mathbf{M}^{(t)})^{\top_{A_1\dots A_t}} \Big) \\
    &=
    \begin{quantikz}[classical gap=0.04cm]
        & \wire[r][1]["{A_t}"{above,pos=0}]{q} & & & & & \gate[2,style={fill=blue!20}]{V^{(t)}} & \wire[r][1]["{C_{t+1}}"{above,pos=0}]{q} & \ground{} \\
        & \wire[r][1]["{A_{t-1}}"{above,pos=0}]{q} & & & & \gate[2,style={fill=blue!20}]{V^{(t-1)}} & & \wire[r][1]["{A_t'}"{above,pos=0}]{q} & \\
        \setwiretype{n} & \invgate{\vdots} & & & \wire[r][1]["{C_{t-1}}"{above,pos=0}]{q} & \setwiretype{q} & & \wire[r][1]["{A_{t-1}'}"{above,pos=0}]{q} & \\
        \setwiretype{n} & \invgate{\vdots} & & & \invgate{\udots} & & & \invgate{\vdots} & \\
        & \wire[r][1]["{A_1}"{above,pos=0}]{q} & & \gate[2,style={fill=blue!20}]{V^{(1)}} & \wire[l][1]["{C_2}"{above,pos=0}]{q} & \setwiretype{n} & & \invgate{\vdots} & \\
        \lstick[2]{$\rho$} & \wire[r][1]["{C_1}"{above,pos=0}]{q} & & & & & & \wire[l][1]["{A_1'}"{above,pos=0}]{q} \wire[r][1][""{above,pos=0}]{q} & \\
        & \setwiretype{q} & & & & & & \wire[l][1]["{A_0}"{above,pos=0}]{q} &
    \end{quantikz}
\end{aligned}
\end{equation}
for all operators $G$ over $\bigotimes_{l=1}^t A_l$.
For more details, we refer the readers to~\cite[Theorem 3]{chiribella2009theoretical}.

Recall from the main text that, under the finite-dimensionality, regularity, and optimizer-existence assumptions in Proposition~\ref{prop:single_par_achievability}, the SDP bound in Theorem~\ref{thm:sequential_NH_bound} is locally attainable for single-parameter estimation.
The tester decomposition above complements this result by reconstructing an initial state, intermediate isometries, and a final measurement from the optimal tester block of the SDP optimizer.
The tester decomposition given above complements this achievability result in that one can always find such a decomposition for single-parameter estimation tasks from the optimizer given by the SDP.

\section{Proof of Theorem~\ref{thm:sequential_NH_bound}}\label{app:sequential_bound_proof}

In this appendix, we prove  Theorem~\ref{thm:sequential_NH_bound}.
First, we need to write the covariance matrix for a given parameter estimation problem as a linear function over matrices describing a strategy.
Note that the entries of the covariance matrix in eqn.~\eqref{eqn:cov_matrix_channel_sequence} can be more explicitly written in terms of the operators $\{\mathbf{M}_k\}_{k=1}^K$ and the channel operator $\mathbf{\Lambda}_\theta$ encoding parameters $\theta=(\theta_1,\dots,\theta_m)$ over $T$ time steps:
\begin{equation}
\begin{aligned}
    \cov_\theta(\mathcal{M},\hat{\theta})_{i,j} &= \sum_k \xi_{i,k} \xi_{j,k} p(k|\theta,\mathcal{M}) 
    = \tr\bigg( \mathbf{\Lambda}_\theta^\top  \Big( \underbrace{ \sum_k \xi_{i,k} \xi_{j,k} \mathbf{M}_k }_{=\mathbf{L}_\theta(\mathcal{M},\hat{\theta})_{i,j}} \Big) \bigg) 
    = \tr\bigg( \mathbf{\Lambda}_\theta^\top  \mathbf{L}_\theta(\mathcal{M},\hat{\theta})_{i,j}\bigg) \;.
\end{aligned}
\end{equation}
The matrix $\mathbf{L}_\theta(\mathcal{M},\hat{\theta})$ is given by
\begin{equation}\label{eqn:L_matrix}
\begin{aligned}
    \mathbf{L}_\theta(\mathcal{M},\hat{\theta}) &= \sum_{i,j=1}^m |i\>\<j|_V \otimes \mathbf{L}_\theta(\mathcal{M},\hat{\theta})_{i,j} 
    = \Xi 
    \left[\begin{array}{ccc}
        \mathbf{M}_1 &&  \\
        & \ddots &\\
        && \mathbf{M}_K
    \end{array}\right] \Xi^\top \;,
\end{aligned}
\end{equation}
which is a matrix over $V\otimes \Bar{A}_T$ for an $m$-dimensional real vector space $V\cong \R^m$ and real matrix
\begin{equation}\label{eqn:estimator_parameter_matrix}
    \Xi = 
    \begin{bmatrix}
        \xi_{1,1} & \dots & \xi_{1,K} \\
        \vdots & \ddots & \vdots \\
        \xi_{m,1} & \dots & \xi_{m,K}
    \end{bmatrix} \;.
\end{equation}
Hence by denoting $\Tilde{\mathbf{\Lambda}}_\theta^{(T)} = I_m \otimes (\mathbf{\Lambda}_\theta^{(T)})^\top$, where $I_m$ is the $m\times m$ identity matrix, we can write the covariance matrix as 
\begin{equation}\label{eqn:covariance_matrix_link_prod}
    \cov_\theta (\mathcal{M},\hat{\theta}) = \tr_{\Bar{A}_T}\Big( \Tilde{\mathbf{\Lambda}}_\theta^{(T)} \mathbf{L}_\theta(\mathcal{M},\hat{\theta}) \Big) \;,
\end{equation}
where $\tr_{\Bar{A}_T}$ is the partial trace over $\Bar{A}_T$ and therefore the corresponding MSE can be written as
\begin{equation}
    \mse_\theta (\mathcal{M},\hat{\theta}) = \tr\Big( \Tilde{\mathbf{\Lambda}}_\theta^{(T)} \mathbf{L}_\theta(\mathcal{M},\hat{\theta}) \Big) \;.
\end{equation}

Now let us define \textit{estimator matrices} $X_\theta(\mathcal{M},\hat{\theta})_1,\dots,X_\theta(\mathcal{M},\hat{\theta})_m$, given by
\begin{equation}\label{eqn:estimator_matrix_elements}
\begin{aligned}
    X_\theta(\mathcal{M},\hat{\theta})_i = \sum_k \xi_{i,k} \mathbf{M}_k
\end{aligned}
\end{equation}
and a block matrix of these estimator matrices
\begin{equation}\label{eqn:estimator_matrix}
    X_\theta(\mathcal{M},\hat{\theta}) = \sum_{i=1}^m |i\>_V \otimes X_\theta(\mathcal{M},\hat{\theta})_i = \Xi
    \left[\begin{array}{cc}
        \mathbf{M}_1 \\
        \vdots \\
        \mathbf{M}_K
    \end{array}\right]
\end{equation}
for $m\times K$ real matrix $\Xi$ defined in eqn.~\eqref{eqn:estimator_parameter_matrix}.
In the following, we will show that for any estimator $\hat{\theta}$ and sequential probe $\mathcal{M}$, it holds that
\begin{equation}\label{eqn:LX_inequality}
    \mathbf{L}_\theta(\mathcal{M},\hat{\theta}) \geq X_\theta(\mathcal{M},\hat{\theta}) \mathbf{M}^+ X_\theta(\mathcal{M},\hat{\theta})^\dag \;,
\end{equation}
where $\mathbf{M} = \sum_k\mathbf{M}_k$ and $\mathbf{M}^+$ is the pseudoinverse of $\mathbf{M}$.
To do this, we use the following lemma.

\begin{lemma}\label{lem:diagonal_outer_product_block_choi_matrix_inequality}
    For positive semidefinite matrices $\mathbf{M}_1,\dots,\mathbf{M}_K$ such that $\mathbf{M}_k\geq0$ and $\mathbf{M} = \sum_k\mathbf{M}_k$, it holds that
    \begin{equation}\label{eqn:diagonal_outer_product_block_choi_matrix_inequality}
    \begin{aligned}
        \left[\begin{array}{ccc}
            \mathbf{M}_1 &&  \\
            & \ddots &\\
            && \mathbf{M}_K
        \end{array}\right]
        \geq
        \left[\begin{array}{cc}
            \mathbf{M}_1 \\
            \vdots \\
            \mathbf{M}_K
        \end{array}\right]
        \mathbf{M}^+
        \left[\begin{array}{cc}
            \mathbf{M}_1 \\
            \vdots \\
            \mathbf{M}_K
        \end{array}\right]^\dag \;.
    \end{aligned}
    \end{equation}
\end{lemma}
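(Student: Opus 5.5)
The plan is to reduce the block inequality to a scalar Cauchy--Schwarz-type statement by testing it against an arbitrary vector. Write $D:=\mathrm{diag}(\mathbf{M}_1,\dots,\mathbf{M}_K)$ and let $B$ be the column block matrix with blocks $\mathbf{M}_1,\dots,\mathbf{M}_K$, so the claim is $D\geq B\mathbf{M}^+B^\dag$. For an arbitrary vector $v=(v_1,\dots,v_K)$ with each $v_k\in\Bar{A}_T$, the left side gives $\sum_k\<v_k|\mathbf{M}_k|v_k\>$. The right side gives $\<w|\mathbf{M}^+|w\>$ with $w:=\sum_k\mathbf{M}_k v_k$ (the $\mathbf{M}_k$ are Hermitian). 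So it suffices to show
\begin{equation}
    \Big\<\sum_k\mathbf{M}_kv_k\Big|\,\mathbf{M}^+\Big|\sum_k\mathbf{M}_kv_k\Big\> \leq \sum_k \<v_k|\mathbf{M}_k|v_k\> .
\end{equation}

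The key step is a variational characterization of $\<w|\mathbf{M}^+|w\>$ for $w\in\supp(\mathbf{M})$, namely $\<w|\mathbf{M}^+|w\>=\max_u\big(2\,\mathrm{Re}\<u|w\>-\<u|\mathbf{M}|u\>\big)$. First check that $w$ lies in the support of $\mathbf{M}$. This holds because $0\leq\mathbf{M}_k\leq\mathbf{M}$ forces $\supp(\mathbf{M}_k)\subseteq\supp(\mathbf{M})$, a fact already used in Appendix~\ref{app:tester_isometry_sequence_measurement_decomposition}. Completing the square with $u=\mathbf{M}^+w+z$ then shows that the maximum is attained at $u=\mathbf{M}^+w$ and equals $\<w|\mathbf{M}^+|w\>$, using $\mathbf{M}\mathbf{M}^+w=\Pi_\mathbf{M}w=w$.

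For any $u$, expand $\mathbf{M}=\sum_k\mathbf{M}_k$. This gives
\begin{equation}
    2\,\mathrm{Re}\<u|w\>-\<u|\mathbf{M}|u\>=\sum_k\big(2\,\mathrm{Re}\<u|\mathbf{M}_k|v_k\>-\<u|\mathbf{M}_k|u\>\big).
\end{equation}
Each summand is at most $\<v_k|\mathbf{M}_k|v_k\>$, since the difference is $\<v_k-u|\mathbf{M}_k|v_k-u\>\geq0$. Taking the maximum over $u$ yields the required inequality.

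Alternatively, the same result follows from a Schur-complement argument. Each $2\times2$ block matrix $\begin{bmatrix}\mathbf{M}_k&\mathbf{M}_k\\\mathbf{M}_k&\mathbf{M}_k\end{bmatrix}\geq0$, so summing the embedded blocks gives $\begin{bmatrix}\mathbf{M}&B^\dag\\B&D\end{bmatrix}\geq0$. The generalized Schur complement (Lemma~\ref{lem:block_martix_psd_iff_condition}) then yields $D-B\mathbf{M}^+B^\dag\geq0$, provided the support condition $\supp(B^\dag)\subseteq\supp(\mathbf{M})$ holds, and it does by the same support fact.

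The only delicate point, and the main obstacle, is the handling of the pseudoinverse. This means verifying the support containment, so that $\mathbf{M}\mathbf{M}^+$ acts as the identity on the relevant vectors, rather than treating $\mathbf{M}$ as invertible. Everything else is routine.
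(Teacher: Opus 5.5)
Your proof is correct, but it takes a different route from the paper. The paper factors both sides at once. Take $A=\mathrm{diag}(\sqrt{\mathbf{M}_1},\dots,\sqrt{\mathbf{M}_K})$ and the row block $G=[\sqrt{\mathbf{M}^+}\sqrt{\mathbf{M}_1},\dots,\sqrt{\mathbf{M}^+}\sqrt{\mathbf{M}_K}]$ (called $B$ there). Then the left side is $A^\dagger A$ and the right side is $A^\dagger G^\dagger G A$. Since $GG^\dagger=\sqrt{\mathbf{M}^+}\mathbf{M}\sqrt{\mathbf{M}^+}=\Pi_{\supp(\mathbf{M})}\leq I$, $G$ is a contraction, so the difference $A^\dagger(I-G^\dagger G)A$ is positive semidefinite. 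This short operator argument never touches the support containment, because $GG^\dagger$ is the support projector for any decomposition of $\mathbf{M}$.

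Your quadratic-form argument instead works one outcome at a time through $\langle v_k-u|\mathbf{M}_k|v_k-u\rangle\geq0$. That makes the mechanism more transparent: it is a completing-the-square bound summed over outcomes.

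Your Schur-complement variant is arguably the most useful in context. The same bookkeeping shows directly that the block matrix of Theorem~\ref{thm:sequential_NH_bound}, evaluated at a physical strategy, equals $\sum_k\begin{bmatrix}1\\ \xi_k\end{bmatrix}\begin{bmatrix}1&\xi_k^\top\end{bmatrix}\otimes\mathbf{M}_k\geq0$ with $\xi_k=(\xi_{1,k},\dots,\xi_{m,k})^\top$. That positivity is all the lower bound requires. So one could bypass both this lemma and the step (Schur-complement condition back to block positivity) that the paper invokes there. The paper itself uses the single-parameter version of this decomposition in the proof of Proposition~\ref{prop:single_par_achievability}.

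One correction to your closing assessment: the support containment is not actually needed in either of your routes. In the variational route, with $f(u):=2\,\mathrm{Re}\langle u|w\rangle-\langle u|\mathbf{M}|u\rangle$, you only use $\langle w|\mathbf{M}^+|w\rangle\leq\sup_u f(u)$. Substituting $u=\mathbf{M}^+w$ gives $f(\mathbf{M}^+w)=\langle w|\mathbf{M}^+|w\rangle$ from $\mathbf{M}^+\mathbf{M}\mathbf{M}^+=\mathbf{M}^+$ alone, for any $w$. In the Schur route you only use the implication from block positivity to positivity of the Schur complement in Lemma~\ref{lem:block_martix_psd_iff_condition}. There the support condition is a conclusion rather than a hypothesis. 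Checking it is harmless, but it is not where any difficulty lies.
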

\begin{proof}
    See Appendix~\ref{app:diagonal_outer_product_block_choi_matrix_inequality}.
\end{proof}

Lemma~\ref{lem:diagonal_outer_product_block_choi_matrix_inequality} therefore implies eqn.~\eqref{eqn:LX_inequality}:
\begin{equation}\label{eqn:LX_inequality2}
\begin{aligned}
    \mathbf{L}_\theta(\mathcal{M},\hat{\theta}) &= \Xi 
    \left[\begin{array}{ccc}
        \mathbf{M}_1 &&  \\
        & \ddots &\\
        && \mathbf{M}_K
    \end{array}\right] \Xi^\top 
    \geq \Xi
    \left[\begin{array}{cc}
            \mathbf{M}_1 \\
            \vdots \\
            \mathbf{M}_K
        \end{array}\right]
        \mathbf{M}^+
        \left[\begin{array}{cc}
            \mathbf{M}_1 \\
            \vdots \\
            \mathbf{M}_K
        \end{array}\right]^\dag \Xi^\top 
    = X_\theta(\mathcal{M},\hat{\theta}) \mathbf{M}^+ X_\theta(\mathcal{M},\hat{\theta})^\dag \;.
\end{aligned}
\end{equation}
since each matrix $\mathbf{M}_k$ is Hermitian, i.e. $\mathbf{M}_k = \mathbf{M}_k^\dag$.

Consider an optimal estimator $\hat{\theta}_\opt$ and an optimal sequential probe $\mathcal{M}_\opt$ with sequential probe operator $\mathbf{M}_\opt$. 
Then, for any sequential probe $\mathcal{M}$ and estimator $\hat{\theta}$, we have
\begin{equation}\label{eqn:optimization_sequential_bound1}
\begin{aligned}
    \mse_\theta(\mathcal{M},\hat{\theta}) &\geq \mse_\theta(\mathcal{M}_\opt,\hat{\theta}_\opt) 
    = \tr\Big( \Tilde{\mathbf{\Lambda}}_\theta \mathbf{L}_\theta(\mathcal{M}_\opt,\hat{\theta}_\opt) \Big) 
    \geq \min_\mathbf{L} \tr\Big( \Tilde{\mathbf{\Lambda}}_\theta \mathbf{L} \Big) \;,
\end{aligned}
\end{equation}
where the minimization is over matrices $\mathbf{L} = \sum_{i,j=1}^m |i\>\<j|\otimes \mathbf{X}_{i,j}$ satisfying the constraints
\begin{equation}
\begin{gathered}
    \mathbf{X}_{i,j} = \mathbf{X}_{j,i} \;,\quad \mathbf{X}_{i,j}=\mathbf{X}_{i,j}^\dag \;,\; \text{and} \\
        \mathbf{L} \geq X_\theta(\mathcal{M}_\opt,\hat{\theta}_\opt) \mathbf{M}_\opt^+ X_\theta(\mathcal{M}_\opt,\hat{\theta}_\opt)^\dag  \;.
\end{gathered}
\end{equation}
Note that the last constraint is obtained from eqn.~\eqref{eqn:LX_inequality}.

The minimization above is dependent on the optimal sequential probe $\mathcal{M}_\opt$ and optimal estimator $\hat{\theta}_\opt$.
To obtain a minimization that is independent of them, we replace $X(\mathcal{M}_\opt,\hat{\theta}_\opt)$ with a minimization
\begin{equation}\label{eqn:optimization_sequential_bound2}
\begin{aligned}
    \mse_\theta(\mathcal{M},\hat{\theta}) &\geq \min_{X,\mathbf{L}} \tr\Big( \Tilde{\mathbf{\Lambda}}_\theta \mathbf{L} \Big) \;,
\end{aligned}
\end{equation}
where the minimization is over matrices $\mathbf{L} = \sum_{i,j=1}^m |i\>\<j|\otimes \mathbf{X}_{i,j}$ and $X = \sum_{i=1}^m |i\>\otimes \mathbf{X}_{0,i}$ satisfying the constraints
\begin{equation}\label{eqn:sequential_sdp_constraints1}
\begin{gathered}
    \mathbf{X}_{i,j} = \mathbf{X}_{j,i} \;,\quad \mathbf{X}_{i,j}=\mathbf{X}_{i,j}^\dag \;, \\
        \mathbf{L} \geq X \mathbf{M}^+ X^\dag \;, \\
    \mathbf{M}\in\seq(\Bar{A}_T) \;,\quad
    X_j = X_j^\dag \;,\\
        \tr(\mathbf{\Lambda}_\theta^\top X_j) = 0
        \quad,\;\text{and}\;\quad
        \frac{\partial}{\partial\theta_i} \tr(\mathbf{\Lambda}_\theta^\top X_j) = \delta_{i,j} \;,
\end{gathered}
\end{equation}
where the constraints in the last line are due to the assumption of unbiased estimators in eqn.~\eqref{eqn:unbiased_estimator}.
Now, to obtain the SDP form in Theorem~\ref{thm:sequential_NH_bound}, we need the following lemma.

\begin{lemma}\label{lem:block_martix_psd_iff_condition}
    Consider a complex block matrix $Z$ over $\C^m\oplus\C^n$, given by
    \begin{equation}
        Z = \left[\begin{matrix}
            A & B^\dag \\
            B & C
        \end{matrix}\right]
    \end{equation}
    for $C\in\C^{n\times n}$, $B\in\C^{n\times m}$, and positive semidefinite $A\in\C^{m\times m}$.
    It holds that $Z\geq0$ if and only if both $C-BA^+B^\dag\geq0$ and $(I-AA^+)B^\dag=0$ hold, where $A^+$ is the pseudo-inverse of $A$.
\end{lemma}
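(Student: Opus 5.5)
The plan is to reduce to the case of invertible $A$ via the spectral decomposition of $A$ and the identity $AA^+=\Pi_A$, the orthogonal projector onto $\supp(A)$. Write $A=\sum_{j:\alpha_j>0}\alpha_j|a_j\>\<a_j|$, so $A^+=\sum_{j:\alpha_j>0}\alpha_j^{-1}|a_j\>\<a_j|$ and $AA^+=A^+A=\Pi_A$. The range condition $(I-AA^+)B^\dag=0$ says exactly that every column of $B^\dag$ lies in $\supp(A)$, i.e.\ $B=B\Pi_A$.

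For sufficiency, assume $C-BA^+B^\dag\geq0$ and $B=B\Pi_A$. Then the factorisation
\begin{equation}
    Z=\begin{bmatrix} I & 0\\ BA^+ & I\end{bmatrix}
    \begin{bmatrix} A & 0\\ 0 & C-BA^+B^\dag\end{bmatrix}
    \begin{bmatrix} I & A^+B^\dag\\ 0 & I\end{bmatrix}
\end{equation}
holds. To verify it, multiply out. The off-diagonal blocks come out as $AA^+B^\dag=\Pi_AB^\dag=B^\dag$ and its adjoint. The lower-right block comes out as $BA^+AA^+B^\dag+C-BA^+B^\dag=C$, using $A^+AA^+=A^+$. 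The middle factor is positive semidefinite and the outer factors are adjoints of each other, so $Z=S^\dag DS$ with $D\geq0$, and hence $Z\geq0$.

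For necessity, assume $Z\geq0$; there are two things to show.
\begin{enumerate}
    \item The range condition. Take any $u\in\ker A$ and any $v$, and evaluate $(tu,v)^\dag Z(tu,v)=2t\,\mathrm{Re}\<v,Bu\>+\<v,Cv\>\geq0$ for all real $t$. This forces $\mathrm{Re}\<v,Bu\>=0$. Replacing $v$ by $\iota v$ kills the imaginary part as well, so $Bu=0$. Hence $B(I-\Pi_A)=0$, which is equivalent to $(I-AA^+)B^\dag=0$.
    \item The Schur-complement condition. Now that $B=B\Pi_A$ is known, the same factorisation is valid, and $S$ is invertible with inverse the unipotent matrix carrying $-A^+B^\dag$ in the off-diagonal block. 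So $D=(S^\dag)^{-1}ZS^{-1}\geq0$, and its lower-right block $C-BA^+B^\dag$ is therefore positive semidefinite. A more direct route is to take $x=-A^+B^\dag v$ and compute $(x,v)^\dag Z(x,v)=\<v,(C-BA^+B^\dag)v\>$.
\end{enumerate}

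The main obstacle is the order of the necessity argument: the kernel/range step must come first, because without $B=B\Pi_A$ the factorisation fails, since the off-diagonal block would give $\Pi_AB^\dag\neq B^\dag$. Beyond that, the only care needed is bookkeeping with the pseudo-inverse identities $AA^+A=A$ and $A^+AA^+=A^+$, together with the fact that $AA^+$ is the Hermitian projector onto $\supp(A)$, which holds because $A\geq0$.
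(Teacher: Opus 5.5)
Your proof is correct and follows essentially the paper's argument. The congruence $Z=S^\dagger DS$ is the matrix form of the completed-square identity that the paper writes out for $\langle u\oplus v|Z|u\oplus v\rangle$, and your kernel-perturbation step and the test vector $x=-A^+B^\dagger v$ are exactly the paper's two necessity steps.

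The only difference is that the paper keeps the extra $\Pi_A^\perp$ cross terms in its identity, so the identity holds unconditionally. This shows that the Schur-complement necessity, via $x=-A^+B^\dagger v$ and using only $A^+AA^+=A^+$, does not actually require establishing $B=B\Pi_A$ first. The ordering issue you flag is specific to the factorisation route.
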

\begin{proof}
    See Appendix~\ref{app:psd_block_matrix}.
\end{proof}

For a given strategy $\mathcal{M}$ with strategy operator $\mathbf{M}$ and estimator $\hat{\theta}$, consider a block matrix
\begin{equation}
    \mathbf{X} = \left[\begin{matrix}
        \mathbf{M} & X_\theta(\mathcal{M},\hat{\theta})^\dag \\
        X_\theta(\mathcal{M},\hat{\theta}) & \mathbf{L}(\mathcal{M},\hat{\theta})
    \end{matrix}\right], 
\end{equation}
and let $\Pi_\mathbf{M} = \mathbf{M}\mathbf{M}^+$ be the projector onto the eigenspace of $\mathbf{M}$ so that $\Pi_\mathbf{M}^\perp = I-\mathbf{M}\mathbf{M}^+$ is the projector onto the space orthogonal to the eigenspace of $\mathbf{M}$.
Since the support of $X_\theta(\mathcal{M},\hat{\theta})_i = \sum_{k=1}^K \xi_{i,k} \mathbf{M}_k$ is in the eigenspace of $\mathbf{M}$, it holds that
\begin{equation}
    (I-\mathbf{M}\mathbf{M}^+)X(\mathcal{M},\hat{\theta})^\dag=0 \;.
\end{equation}
By Lemma~\ref{lem:block_martix_psd_iff_condition}, we have $\mathbf{X}\geq0$, since both $\mathbf{L}(\mathcal{M},\hat{\theta}) - X(\mathcal{M},\hat{\theta}) \mathbf{M}^+ X(\mathcal{M},\hat{\theta})^\dag\geq0$ and $(I-\mathbf{M}\mathbf{M}^+)X(\mathcal{M},\hat{\theta})^\dag = 0$ hold.
Thus, the minimization on the right-hand side of eqn.~\eqref{eqn:optimization_sequential_bound2} over $X$ and $\mathbf{L}$ with constraints given in eqn.~\eqref{eqn:sequential_sdp_constraints1} can be formulated as an SDP $\sdp(\mathbf{\Lambda}_\theta)$ given by
\begin{equation}
\begin{gathered}
    \min_\mathbf{X} \tr\Big( \overline{\mathbf{\Lambda}}_\theta \mathbf{X} \Big) \\
    \text{s.t.}\\
        \mathbf{X} = \begin{bmatrix}
            \mathbf{M} & X^\dag \\
            X & \mathbf{L}
        \end{bmatrix}\geq0 \;,\quad
        \mathbf{M} \in \seq(\Bar{A}_T) \;,\\
        X = \sum_{i=1}^m |i\>\otimes \mathbf{X}_{0,i} \;,\quad
            \mathbf{X}_{0,i} = \mathbf{X}_{0,i}^\dag \;, \\
        \mathbf{L} = \sum_{i,j=1}^m |i\>\<j|\otimes \mathbf{X}_{i,j} \;,\quad
            \mathbf{X}_{i,j} = \mathbf{X}_{i,j}^\dag \;, \\
        \tr(\mathbf{\Lambda}_\theta^\top X_j) = 0
        \;,\quad\text{and}\;\quad
        \frac{\partial}{\partial\theta_i} \tr(\mathbf{\Lambda}_\theta^\top \mathbf{X}_{0,j}) = \delta_{i,j} \quad\forall i,j\in[1,m] \;,
\end{gathered}
\end{equation}
where $\overline{\mathbf{\Lambda}}_\theta = \begin{bmatrix}
    0 & 0 \\
    0 & I_m\otimes\mathbf{\Lambda}_\theta^\top
\end{bmatrix}$.

\subsection{Runtime complexity of computing $\sdp(\Lambda_\theta)$}\label{app:SDP_runtime_complexity}

The runtime of computing $\sdp(\mathbf{\Lambda}_\theta)$ depends on the dimension $D$ of the probe system $A\cong A_t\cong A_t'$, the total number of time steps $T$ involved in the protocol, and the number of parameters $m$.
Note that the size of matrix $\overline{\mathbf{\Lambda}}_\theta$ is $(1+m)D^{2T}$ where $D^{2T}$ is the size of $\mathbf{\Lambda}_\theta$. 
An SDP with objective function $\max_X\tr(AX)$ for matrices $A,X$ of size $n$ can be solved up to accuracy of $\varepsilon>0$ using various solver algorithms in $O(\sqrt{n}\log(\varepsilon^{-1}))$ steps (see e.g. \cite{vandenberghe1996semidefinite}, \cite[Section 11.5]{boyd2004convex}, \cite[Theorem 2.6.1]{gartner2012approximation}).
Thus, $\sdp(\mathbf{\Lambda}_\theta)$ can be solved up to $\varepsilon>0$ accuracy in $O(\sqrt{(1+m)}D^T\log(\varepsilon^{-1}))$ steps, which is polynomial in the probe system dimension $D$ and number of parameters $m$, but exponential in the number of time steps $T$.

\section{Proof of Lemma~\ref{lem:diagonal_outer_product_block_choi_matrix_inequality}}\label{app:diagonal_outer_product_block_choi_matrix_inequality}

In this appendix, we will prove Lemma~\ref{lem:diagonal_outer_product_block_choi_matrix_inequality}, i.e.~will show the block matrix inequality
\begin{equation}\label{eqn:diagonal_outer_product_block_choi_matrix_inequality1}
\begin{aligned}
    \left[\begin{array}{ccc}
        \mathbf{M}_1 &&  \\
        & \ddots &\\
        && \mathbf{M}_K
    \end{array}\right]
    \geq
    \left[\begin{array}{cc}
        \mathbf{M}_1 \\
        \vdots \\
        \mathbf{M}_K
    \end{array}\right]
    \mathbf{M}^+
    \left[\begin{array}{cc}
        \mathbf{M}_1 \\
        \vdots \\
        \mathbf{M}_K
    \end{array}\right]^\top \;
\end{aligned}
\end{equation}
for $\mathbf{M}_k \geq0$ and $\mathbf{M} = \sum_k\mathbf{M}_k$.
Let us denote the left-hand side and right-hand side of eqn.~\eqref{eqn:diagonal_outer_product_block_choi_matrix_inequality1} as $C$ and $D$, respectively.

First, consider block matrices 
\begin{equation}
    A=
    \left[\begin{array}{ccc}
        \sqrt{\mathbf{M}_1} &&  \\
        & \ddots &\\
        && \sqrt{\mathbf{M}_K}
    \end{array}\right]
    \quad\text{and}\quad
    B = 
    \begin{bmatrix}
        \sqrt{\mathbf{M}^+}\sqrt{\mathbf{M}_1} &\dots& \sqrt{\mathbf{M}^+}\sqrt{\mathbf{M}_K}
    \end{bmatrix},
\end{equation}
so that $C = A^\dag A$ and $D = A^\dag B^\dag BA$.
The matrix inequality in eqn.~\eqref{eqn:diagonal_outer_product_block_choi_matrix_inequality1} is equivalent to the matrix $C-D$ being positive semidefinite.
We can write this matrix as
\begin{equation}
\begin{aligned}
    C-D = A^\dag (I-B^\dag B) A \;.
\end{aligned}
\end{equation}

Now note that 
\begin{equation}
    BB^\dagger
    =\mathbf{M}^{+1/2}\Big(\sum_{k=1}^K \mathbf{M}_k\Big)\mathbf{M}^{+1/2}
    =\mathbf{M}^{+1/2}\mathbf{M}\mathbf{M}^{+1/2}
    =\Pi_{\supp(\mathbf{M})}
    \leq I \;.
\end{equation}
This implies that $||BB^\dagger|| \leq1$ for $||\cdot||$ being the operator norm.
Thus, for all normalized vectors $|v\>\in \R^K\otimes\Bar{A}_T$, it holds that
\begin{equation}
    \<v|B^\dag B|v\> \leq |\<v|v\>|^2 = 1 \;.
\end{equation}
This implies that, for all normalized vectors $|v\>$, we have
\begin{equation}
    0 \leq 1-\<v|B^\dag B|v\> = \<v|(I-B^\dag B)|v\> \;,
\end{equation}
which means that $I-B^\dag B \geq0$.
By conjugating both sides of the inequality $I-B^\dag B \geq0$ with $A^\dag$ we obtain
\begin{equation}
\begin{aligned}
    0\leq A^\dag (I-B^\dag B) A = C-D \;,
\end{aligned}
\end{equation}
which gives eqn.~\eqref{eqn:diagonal_outer_product_block_choi_matrix_inequality1}.

\section{Proof of Lemma~\ref{lem:block_martix_psd_iff_condition}}\label{app:psd_block_matrix}

In this appendix, we prove Lemma~\ref{lem:block_martix_psd_iff_condition}.

Consider a complex block matrix over $\C^m\oplus\C^n$:
\begin{equation}
    Z = \left[\begin{matrix}
        A & B^\dag \\
        B & C
    \end{matrix}\right]
\end{equation}
for $C\in\C^{n\times n}$, $B\in\C^{n\times m}$, and positive semidefinite $A\in\C^{m\times m}$.
We will show that $Z\geq0$ if and only if both $C-BA^+B^\dag\geq0$ and $(I-AA^+)B^\dag=0$ hold.

First, consider spectral decomposition $A = \sum_j a_j \ketbra{j}$ for $a_j\geq0$ so that $A^+ = \sum_j a_j^{-1}\ketbra{j}$ and $AA^+ = A^+A = \Pi_A$, where $\Pi_A = \sum_{j:a_j\neq0} \ketbra{j}$ is the projector onto the eigenspace of $A$.
Also let us denote $\Pi_A^\perp = I-\Pi_A = \sum_{j:a_j=0}\ketbra{j}$.
Now note that, for any $|u\>\in\C^m$ and $|v\>\in\C^n$, we have
\begin{equation}
\begin{aligned}
    \<u|\oplus\<v| Z |u\>\oplus|v\> &= \<u|A|u\> + \<v|B|u\> + \<u|B^\dag|v\> + \<v|C|v\> \\
    &= \<u|A|u\> + \<v|B(\Pi_A+\Pi_A^\perp)|u\> + \<u|(\Pi_A+\Pi_A^\perp)B^\dag|v\> + \<v|C|v\> \\
    &= \<u|A|u\> + (\<v|BA^+)A|u\> + \<v|B\Pi_A^\perp|u\> + \<u|A(A^+B^\dag|v\>) + \<u|\Pi_A^\perp B^\dag|v\> + \<v|C|v\> \\
    &= \<u|A|u\> + (\<v|BA^+)A|u\> + \<u|A(A^+B^\dag|v\>) \\
       &\quad  + (\<v|BA^+ + \<u|)A(A^+B^\dag|v\> + |u\>) - \underbrace{ (\<v|BA^+ + \<u|)A(A^+B^\dag|v\> + |u\>) }_{= \<v|BA^+B^\dag|v\> + (\<v|BA^+)A|u\> + \<u|A(A^+B^\dag|v\>) + \<u|A|u\> } \\
       &\quad + \<v|B\Pi_A^\perp|u\> + \<u|\Pi_A^\perp B^\dag|v\> + \<v|C|v\> \\
    &= \<v|C|v\> - \<u|BA^+B^\dag|u\> + (\<v|BA^+ + \<u|)A(A^+B^\dag|v\> + |u\>) + \<v|B\Pi_A^\perp|u\> + \<u|\Pi_A^\perp B^\dag|v\> \\
    &= \<v| (C-BA^+B^\dag) |v\> + (\<v|BA^+ + \<u|)A(A^+B^\dag|v\> + |u\>) + \<v|B\Pi_A^\perp|u\> + \<u|\Pi_A^\perp B^\dag|v\> \;.
\end{aligned}
\end{equation}

If $A\geq0$, it holds that
\begin{equation}
\begin{aligned}
    \<u|\oplus\<v| Z |u\>\oplus|v\> &= \<v| (C-BA^+B^\dag) |v\> + (\<v|BA^+ + \<u|)A(A^+B^\dag|v\> + |u\>) + \<v|B\Pi_A^\perp|u\> + \<u|\Pi_A^\perp B^\dag|v\> \\
    &\geq \<v| (C-BA^+B^\dag) |v\> + \<v|B\Pi_A^\perp|u\> + \<u|\Pi_A^\perp B^\dag|v\> \;.
\end{aligned}
\end{equation}
Therefore $Z\geq0$ if both $C-BA^+B^\dag \geq0$ and $\Pi_A^\perp B^\dag = (I-AA^+)B^\dag = 0$ hold.

Now we show that $Z\geq0$ implies that both $C-BA^+B^\dag \geq0$ and $(I-AA^+)B^\dag = 0$ must hold.
First we show that $Z\geq0$ implies $(I-AA^+)B^\dag = 0$.
Consider $|u\>\in\ker A$, $|v\>\in\C^n$, and $t\in\C$.
Since $A|u\>=0$, for all $t\in\C$ it holds that
\begin{equation}
\begin{aligned}
    0\leq (\<u|\oplus t^*\<v|) Z (|u\>\oplus t|v\>) &= t\<u|B^\dag|v\> + t^*\<v|B|u\> + |t|^2\<v|C|v\> \\
    &= 2\Re(t\<u|B^\dag|v\>) + |t|^2\<v|C|v\> \;.
\end{aligned}
\end{equation}
If $\<u|B^\dag|v\>\neq0$, we can always choose a $t\in\C$ such that $2\Re(t\<u|B^\dag|v\>) / |t|^2 < -\<v|C|v\>$, which violates the inequality above.
Therefore it holds that $\<u|B^\dag|v\>=0$ for all $|v\>\in\C^n$ and for all $|u\>\in\ker A$, which implies that $\ker A \subseteq \ker B$.
Since $\Pi_A^\perp = I-AA^+$ is the projector onto $\ker A$, we have $(I-AA^+)B^\dag=0$.

To show that $Z\geq0$ implies $C-BA^+B^\dag\geq0$, suppose that $\<v| (C-BA^+B^\dag) |v\> <0$ for some $|v\>\in\C^m$.
Then, if we pick $|u\> = -A^+B^\dag|v\>$, we have
\begin{equation}
\begin{aligned}
    0\leq \<u|\oplus\<v| Z |u\>\oplus|v\> &= \<v| (C-BA^+B^\dag) |v\> + (\<v|BA^+ + \<u|)A(A^+B^\dag|v\> + |u\>) + \<v|B\Pi_A^\perp|u\> + \<u|\Pi_A^\perp B^\dag|v\> \\
    &= \<v| (C-BA^+B^\dag) |v\> + 0 \\
    &<0 \;,
\end{aligned}
\end{equation}
since $\Pi_A^\perp A^+=0$, which gives a contradiction.
Therefore $C-BA^+B^\dag\geq0$ must hold.

\section{Sequential single-parameter local estimation achievability}\label{app:single_par_achievability}

In this appendix, we give a proof of Proposition~\ref{prop:single_par_achievability}, i.e.~show that, under the assumptions of Proposition~\ref{prop:single_par_achievability}, the SDP bound in Theorem~\ref{thm:sequential_NH_bound} is attainable for single-parameter local estimation.

\begin{proof}[Proof of Proposition~\ref{prop:single_par_achievability}]
    Consider a sequential sensing strategy described by positive semidefinite operators $\{\mathbf{M}_k\}_{k=1}^K$ satisfying $\mathbf{M}=\sum_{k=1}^K\mathbf{M}_k \in\seq(\Bar{A}_T)$ together with estimator $\{\hat{\theta}_k\}_{k=1}^K$, so that the probability of outcome $k$ is $p_k(\vartheta) = \tr(\mathbf{\Lambda}_\vartheta^\top\mathbf{M}_k)$.
    The local-unbiasedness conditions at $\theta$ are given by
    \begin{equation}
        \tr(\mathbf{\Lambda}_\theta^\top X)=0
        \quad\text{and}\quad
        \tr(\dot{\mathbf{\Lambda}}_\theta^\top X)=1\;,
    \label{eqn:single_par_centered_X_constraints}
    \end{equation}
    where $\dot{\mathbf{\Lambda}}_\theta = \frac{d}{d\vartheta} \mathbf{\Lambda}_\vartheta|_{\vartheta=\theta}$.
    Thus, the SDP in Theorem~\ref{thm:sequential_NH_bound} for $m=1$ becomes
    \begin{equation}
    \begin{gathered}
        \sdp(\mathbf{\Lambda}_\theta)
        =
        \min_{\mathbf{M},X,\mathbf{L}}
        \tr(\mathbf{\Lambda}_\theta^\top\mathbf{L})\\
        \text{s.t.}\\
        \begin{bmatrix}
            \mathbf{M}&X\\
            X&\mathbf{L}
        \end{bmatrix}\geq0\;,
        \qquad
        \mathbf{M}\in\seq(\Bar{A}_T)\;,\\
        X=X^\dag\;,
        \qquad
        \tr(\mathbf{\Lambda}_\theta^\top X)=0\;,
        \qquad
        \tr(\dot{\mathbf{\Lambda}}_\theta^\top X)=1\;,
    \end{gathered}
    \label{eqn:single_par_centered_sequential_optimization}
    \end{equation}
    where $\mathbf{L},X \in\L(\Bar{A}_T)$ (as opposed to $\mathbf{L}\in\R^{m\times m}\otimes\L(\Bar{A}_T)$ and $X\in\R^m\otimes\L(\Bar{A}_T)$ in the $m$-parameter case).

    We first solve this SDP for a fixed $\mathbf{M}\in\seq(\Bar{A}_T)$.
    By the generalized Schur-complement condition in Lemma~\ref{lem:block_martix_psd_iff_condition}, block positivity in eqn.~\eqref{eqn:single_par_centered_sequential_optimization} is equivalent to
    \begin{equation}
    \begin{gathered}
        \mathbf{L}-X\mathbf{M}^+X\geq0\quad\text{and}\quad
        (I-\mathbf{M}\mathbf{M}^+)X=0\;.
    \end{gathered}
    \label{eqn:single_par_generalized_schur}
    \end{equation}
    Since $X=X^\dag$, the second condition also gives
    $X(I-\mathbf{M}\mathbf{M}^+)=0$.
    Let $\Pi_\mathbf{M}:=\mathbf{M}\mathbf{M}^+$ denote the projector onto $\supp(\mathbf{M})$.
    We therefore have
    \begin{equation}
        X=\Pi_\mathbf{M}X\Pi_\mathbf{M}\;,
    \label{eqn:single_par_X_support}
    \end{equation}
    and every feasible lower-right block can be written as $\mathbf{L} = X\mathbf{M}^+X+\Delta$ for some $\Delta\geq0$.
    Since $\mathbf{\Lambda}_\theta^\top\geq0$, it follows that
    \begin{equation}
    \begin{aligned}
        \tr(\mathbf{\Lambda}_\theta^\top\mathbf{L})
        &=
        \tr(\mathbf{\Lambda}_\theta^\top X\mathbf{M}^+X)
        +
        \tr(\mathbf{\Lambda}_\theta^\top\Delta)
        \geq
        \tr(\mathbf{\Lambda}_\theta^\top X\mathbf{M}^+X)\;.
    \end{aligned}
    \end{equation}
    Hence, for every pair of fixed feasible operators $\mathbf{M},X$, we can saturate the inequality in eqn.~\eqref{eqn:single_par_generalized_schur} by taking
    \begin{equation}
        \mathbf{L}=X\mathbf{M}^+X \;.
    \label{eqn:single_par_optimal_L}
    \end{equation}

    For Hermitian operator $Y:= \sqrt{\mathbf{M}^+} X \sqrt{\mathbf{M}^+}$, it holds that
    \begin{equation}
        Y=\Pi_\mathbf{M}Y\Pi_\mathbf{M}
        \quad\text{and}\quad
        X=\mathbf{M}^{1/2}Y\mathbf{M}^{1/2} \;
    \label{eqn:single_par_X_Y_relation}
    \end{equation}
    by eqn.~\eqref{eqn:single_par_X_support}.
    It follows that
    \begin{equation}
    \begin{aligned}
        X\mathbf{M}^+X
        &=
        \mathbf{M}^{1/2}
        Y\Pi_\mathbf{M}Y
        \mathbf{M}^{1/2}
        =
        \mathbf{M}^{1/2}Y^2\mathbf{M}^{1/2}\;.
    \end{aligned}
    \label{eqn:single_par_L_Y_relation}
    \end{equation}
    For this fixed $\mathbf{M}$, define the output state $\rho_\vartheta^\mathbf{M} := \mathbf{M}^{1/2} \mathbf{\Lambda}_\vartheta^\top \mathbf{M}^{1/2}$.
    Positivity of $\mathbf{M}$ and
    $\mathbf{\Lambda}_\vartheta^\top$ gives
    $\rho_\vartheta^\mathbf{M}\geq0$, while tester normalization gives $\tr(\rho_\vartheta^\mathbf{M})=1$.
    Thus $\rho_\vartheta^\mathbf{M}$ is a density operator.
    Now consider
    \begin{equation}
        \dot{\rho}_\theta^\mathbf{M}
        :=
        \left.
        \frac{d}{d\vartheta}
        \rho_\vartheta^\mathbf{M}
        \right|_{\vartheta=\theta}\;,
    \end{equation}
    so that $\tr(\dot{\rho}_\theta^\mathbf{M})=0$.
    For a fixed $\mathbf{M}$, therefore $\sdp(\mathbf{\Lambda}_\theta)$ becomes
    \begin{equation}
    \begin{gathered}
        C(\mathbf{M})
        =
        \min_Y
        \tr(\rho_\theta^\mathbf{M}Y^2)\\
        \text{s.t.}\\
        Y=Y^\dag\;,
        \qquad
        Y=\Pi_\mathbf{M}Y\Pi_\mathbf{M}\;,\\
        \tr(\rho_\theta^\mathbf{M}Y)=0\;,
        \qquad
        \tr(\dot{\rho}_\theta^\mathbf{M}Y)=1\;.
    \end{gathered}
    \label{eqn:single_par_fixed_tester_optimization}
    \end{equation}

    Let $S_\mathbf{M}=\Pi_\mathbf{M}S_\mathbf{M}\Pi_\mathbf{M}$ be an SLD of $\rho_\theta^\mathbf{M}$, which is a Hermitian operator on $\supp(\mathbf{M})$, such that
    \begin{equation}
        \dot{\rho}_\theta^\mathbf{M}
        =
        \frac{1}{2}
        \left(
            \rho_\theta^\mathbf{M}S_\mathbf{M}
            +
            S_\mathbf{M}\rho_\theta^\mathbf{M}
        \right)\;,
    \label{eqn:single_par_SLD_definition}
    \end{equation}
    and its SLD quantum Fisher information is
    \begin{equation}
        \mathcal{J}_\text{Q}(\rho_\theta^\mathbf{M})
        :=
        \tr\!\left(
            \rho_\theta^\mathbf{M}S_\mathbf{M}^2
        \right)\;.
    \label{eqn:single_par_SLD_QFI_definition}
    \end{equation}

    For any $Y$ satisfying the local-unbiasedness condition in
    eqn.~\eqref{eqn:single_par_fixed_tester_optimization}, we use
    eqn.~\eqref{eqn:single_par_SLD_definition} to obtain
    \begin{equation}
    \begin{aligned}
        1
        &=
        \tr(\dot{\rho}_\theta^\mathbf{M}Y)
        =
        \frac{1}{2}
        \tr\!\left[
            (\rho_\theta^\mathbf{M}S_\mathbf{M}
            +S_\mathbf{M}\rho_\theta^\mathbf{M})Y
        \right]
        =
        \operatorname{Re}
        \tr(\rho_\theta^\mathbf{M}S_\mathbf{M}Y)
        \leq
        \sqrt{
            \tr(\rho_\theta^\mathbf{M}S_\mathbf{M}^2)
            \tr(\rho_\theta^\mathbf{M}Y^2)
        }\;,
    \end{aligned}
    \label{eqn:single_par_SLD_Cauchy_Schwarz}
    \end{equation}
    where the last line follows by applying the Cauchy--Schwarz inequality 
    \begin{equation}
        |\tr\big( \rho_\theta^\mathbf{M} S_\mathbf{M} Y \big)|^2 = \Big| \Big\< Y\sqrt{\rho_\theta^\mathbf{M}},S_\mathbf{M}\sqrt{\rho_\theta^\mathbf{M}} \Big\>_\mathrm{HS} \Big|^2
        \leq \tr(\rho_\theta^\mathbf{M} Y^2) \; \tr(\rho_\theta^\mathbf{M}S_\mathbf{M}^2)
    \end{equation}
    for the Hilbert--Schmidt inner product $\<\cdot,\cdot\>_\mathrm{HS}$.
    Therefore we obtain an inequality
    \begin{equation}
        \tr(\rho_\theta^\mathbf{M}Y^2)
        \geq
        \frac{1}{\mathcal{J}_\text{Q}(\rho_\theta^\mathbf{M})}\;.
    \label{eqn:single_par_fixed_tester_lower_bound}
    \end{equation}

    The equality in eqn.~\eqref{eqn:single_par_fixed_tester_lower_bound} is attained by $Y_\opt := \frac{S_\mathbf{M}}{\mathcal{J}_\text{Q}(\rho_\theta^\mathbf{M})}$ since
    \begin{equation}
        \tr(\rho_\theta^\mathbf{M}Y_\opt^2)
        = \frac{ \tr(\rho_\theta^\mathbf{M} S_\mathbf{M}^2) }{ \mathcal{J}_\text{Q}(\rho_\theta^\mathbf{M})^2 }
        = \frac{1}{\mathcal{J}_\text{Q}(\rho_\theta^\mathbf{M})} \;
    \end{equation}
    by eqn.~\eqref{eqn:single_par_SLD_QFI_definition}.
    We will now show that $Y_\opt$ is a feasible solution to the optimization in eqn.~\eqref{eqn:single_par_fixed_tester_optimization}.
    First note that $\tr(\rho_\theta^\mathbf{M})=1$ since it is a density operator, so that $\tr(\dot{\rho}_\theta^\mathbf{M}) = 0$.
    By taking the trace of eqn.~\eqref{eqn:single_par_SLD_definition}, we have
    \begin{equation}
    \begin{aligned}
        0=\tr(\dot{\rho}_\theta^\mathbf{M}) = \tr(\rho_\theta^\mathbf{M}S_\mathbf{M}) \;.
    \end{aligned}
    \end{equation}
    Hence we obtain the first local-unbiasedness condition
    \begin{equation}
        \tr(\rho_\theta^\mathbf{M}Y_\opt) = \frac{ \tr(\rho_\theta^\mathbf{M} S_\mathbf{M}) }{ \mathcal{J}_\text{Q}(\rho_\theta^\mathbf{M}) } = 0 \;.
    \end{equation}
    Also consider
    \begin{equation}
    \begin{aligned}
        \tr(\dot{\rho}_\theta^\mathbf{M}S_\mathbf{M})
        &=
        \frac{1}{2}
        \tr\!\left[
            (\rho_\theta^\mathbf{M}S_\mathbf{M}
            +S_\mathbf{M}\rho_\theta^\mathbf{M})S_\mathbf{M}
        \right] =
        \tr(\rho_\theta^\mathbf{M}S_\mathbf{M}^2) =
        \mathcal{J}_\text{Q}(\rho_\theta^\mathbf{M})\;.
    \end{aligned}
    \end{equation}
    Hence we obtain the second local-unbiasedness condition:
    \begin{equation}
        \tr(\dot{\rho}_\theta^\mathbf{M}Y_\opt) = \frac{ \tr(\dot{\rho}_\theta^\mathbf{M} S_\mathbf{M}) }{ \mathcal{J}_\text{Q}(\rho_\theta^\mathbf{M}) } = 1 \;,
    \end{equation}
    since $Y_\opt=\frac{S_{\mathbf{M}}}{\mathcal{J}_\text{Q}(\rho_\theta^{\mathbf{M}})}$.
    
    We have therefore shown that $Y_\opt$ is a minimizer of the $\mathbf{M}$-fixed optimization in eqn.~\eqref{eqn:single_par_fixed_tester_optimization}: 
    \begin{equation}
        C(\mathbf{M})
        =
        \frac{1}{\mathcal{J}_\text{Q}(\rho_\theta^\mathbf{M})} \;
    \label{eqn:single_par_fixed_tester_value}
    \end{equation}
    whenever $\mathcal{J}_\text{Q}(\rho_\theta^\mathbf{M})>0$.
    If $\mathcal{J}_\text{Q}(\rho_\theta^\mathbf{M})=0$, then eqn.~\eqref{eqn:single_par_SLD_Cauchy_Schwarz} shows that the derivative constraint $\tr(\dot{\rho}_\theta^\mathbf{M}Y)=1$ is infeasible.
    Optimizing over all sequential probes $\mathbf{M}\in\seq(\Bar{A}_T)$ therefore gives
    \begin{equation}
    \begin{aligned}
        \sdp(\mathbf{\Lambda}_\theta)
        &= \min_{\mathbf{M}\in\seq(\Bar{A}_T)} C(\mathbf{M}) 
        = \min_{\mathbf{M}\in\seq(\Bar{A}_T)} \frac{1}{\mathcal{J}_\text{Q}(\rho_\theta^\mathbf{M})}
        = \frac{1}{ \max_{\mathbf{M}\in\seq(\Bar{A}_T)} \mathcal{J}_\text{Q}(\rho_\theta^\mathbf{M})}
        = \frac{1}{\mathcal{J}_{\rm seq}(\mathbf{\Lambda}_\theta)}\;.
    \end{aligned}
    \label{eqn:single_par_SDP_QFI_equality}
    \end{equation}

    It remains to show that this value is attained by a physical sequential sensing strategy whose estimator obeys the local-unbiasedness conditions.
    Let $\mathbf{M}_\opt$ attain the maximum in the definition of $\mathcal{J}_{\rm seq}(\mathbf{\Lambda}_\theta)$, and let
    \begin{equation}
        S_{\mathbf{M}_\opt} = \sum_{k=1}^K\lambda_kP_k
    \label{eqn:single_par_optimal_SLD_spectral_decomposition}
    \end{equation}
    be its spectral decomposition on $\supp(\mathbf{M}_\opt)$ for $P_1,\dots,P_K$ satisfying $\sum_{k=1}^KP_k=\Pi_{\mathbf{M}_\opt}$, where $\Pi_{\mathbf{M}_\opt}$ is the projector onto the support of $\mathbf{M}_\opt$.
    Define tester outcome operator corresponding to outcome $k$ as
    \begin{equation}
        \mathbf{M}_{\opt,k}
        :=
        \mathbf{M}_\opt^{1/2}
        P_k
        \mathbf{M}_\opt^{1/2}\;,
    \label{eqn:single_par_optimal_tester_outcomes}
    \end{equation}
    which satisfies $\mathbf{M}_{\opt,k}\geq0$ and $\sum_k\mathbf{M}_{\opt,k}=\mathbf{M}_\opt$.
    Note that $\{P_k\}_k$ is the final projective measurement on the output state $\rho_\theta^{\mathbf{M}_\opt})$ (see the tester decomposition in Appendix~\ref{app:tester_decomposition}).

    Now, consider
    \begin{equation}
    \begin{aligned}
        \xi_{\opt,k}
        &:= \frac{\lambda_k} {\mathcal{J}_{\rm seq}(\mathbf{\Lambda}_\theta)} \quad\text{and}\quad
        \hat{\theta}_{\opt,k} := \theta+\xi_{\opt,k}  \;,
    \end{aligned}
    \label{eqn:single_par_optimal_estimator_values}
    \end{equation}
    so that the corresponding optimal matrices are given by
    \begin{equation}
    \begin{aligned}
        X_\opt
        &=
        \sum_k\xi_{\opt,k}\mathbf{M}_{\opt,k}
        =
        \mathbf{M}_\opt^{1/2}
        \frac{S_{\mathbf{M}_\opt}}
        {\mathcal{J}_{\rm seq}(\mathbf{\Lambda}_\theta)}
        \mathbf{M}_\opt^{1/2}\;,\\
        \mathbf{L}_\opt
        &=
        \sum_k\xi_{\opt,k}^2\mathbf{M}_{\opt,k}
        =
        \mathbf{M}_\opt^{1/2}
        \frac{S_{\mathbf{M}_\opt}^2}
        {\mathcal{J}_{\rm seq}(\mathbf{\Lambda}_\theta)^2}
        \mathbf{M}_\opt^{1/2}\;,
    \end{aligned}
    \label{eqn:single_par_optimal_X_L}
    \end{equation}
    by combining eqns.~\eqref{eqn:single_par_optimal_SLD_spectral_decomposition}, \eqref{eqn:single_par_optimal_tester_outcomes},\eqref{eqn:single_par_optimal_estimator_values}.
    In particular,
    \begin{equation}
    \begin{aligned}
        \begin{bmatrix}
            \mathbf{M}_\opt&X_\opt\\
            X_\opt&\mathbf{L}_\opt
        \end{bmatrix}
        &=
        \sum_k
        \begin{bmatrix}
            1\\
            \xi_{\opt,k}
        \end{bmatrix}
        \begin{bmatrix}
            1&\xi_{\opt,k}
        \end{bmatrix}
        \otimes\mathbf{M}_{\opt,k} \geq0\;,
    \end{aligned}
    \end{equation}
    so these operators give a feasible SDP point.

    The outcome probabilities of such a strategy are
    \begin{equation}
    \begin{aligned}
        p_k(\vartheta)
        = \tr(\mathbf{\Lambda}_\vartheta^\top \mathbf{M}_{\opt,k})
        = \tr(\rho_\vartheta^{\mathbf{M}_\opt}P_k)\;.
    \end{aligned}
    \end{equation}
    Using
    $\tr(\rho_\theta^{\mathbf{M}_\opt}S_{\mathbf{M}_\opt})=0$,
    the estimator mean at the local point is
    \begin{equation}
    \begin{aligned}
        \sum_k\hat{\theta}_{\opt,k}p_k(\theta)
        &=
        \theta
        +
        \frac{
            \tr(\rho_\theta^{\mathbf{M}_\opt}
                S_{\mathbf{M}_\opt})
        }{
            \mathcal{J}_{\rm seq}(\mathbf{\Lambda}_\theta)
        } =
        \theta\;.
    \end{aligned}
    \label{eqn:single_par_optimal_estimator_mean}
    \end{equation}
    Its derivative is
    \begin{equation}
    \begin{aligned}
        \left.
        \frac{d}{d\vartheta}
        \sum_k\hat{\theta}_{\opt,k}p_k(\vartheta)
        \right|_{\vartheta=\theta}
        &=
        \theta\tr(\dot{\rho}_\theta^{\mathbf{M}_\opt})
        +
        \frac{
            \tr(\dot{\rho}_\theta^{\mathbf{M}_\opt}
                S_{\mathbf{M}_\opt})
        }{
            \mathcal{J}_{\rm seq}(\mathbf{\Lambda}_\theta)
        } 
        =0+1 =1\;.
    \end{aligned}
    \label{eqn:single_par_optimal_estimator_derivative}
    \end{equation}
    Thus the estimator is locally unbiased at $\theta$.
    So, the MSE of such a strategy described by $\mathbf{M}_\opt$ and $\hat{\theta}_\opt$ is
    \begin{equation}
    \begin{aligned}
        \mse_\theta(\mathbf{M}_\opt,\hat{\theta}_\opt)
        = \sum_k\xi_{\opt,k}^2p_k(\theta)
        = \frac{ \tr(\rho_\theta^{\mathbf{M}_\opt} S_{\mathbf{M}_\opt}^2) }{ \mathcal{J}_{\rm seq}(\mathbf{\Lambda}_\theta)^2 }
        = \frac{1}{
            \mathcal{J}_{\rm seq}(\mathbf{\Lambda}_\theta)
        } 
        = \sdp(\mathbf{\Lambda}_\theta)\;.
    \end{aligned}
    \end{equation}
    Hence the single-parameter SDP bound is locally attainable.
\end{proof}

\section{Additional examples}\label{app:additional_examples}

In this appendix, we present additional examples of sequential sensing problems and apply to them the SDP bound given in Theorem~\ref{thm:sequential_NH_bound}.

\subsection{Sequential single-qubit Pauli-$Z$ rotation angle estimation}\label{sec:z_rotation_example}

In this section, we go through an example of estimating the angle of $T$ copies of the single-qubit Pauli-$Z$ rotation channel
\begin{equation}
    \mathcal{R}_{z,\theta}(\cdot) = e^{-\iota\frac{\theta}{2}\sigma_z} \cdot e^{\iota\frac{\theta}{2}\sigma_z} 
\end{equation}
and demonstrate how one can obtain the explicit circuit form of a sequential probe $\mathbf{M}\in\seq(\Bar{A}_T)$ obtained from the SDP in Theorem~\ref{thm:sequential_NH_bound}, using the decomposition method for a general tester in Appendix~\ref{app:tester_decomposition}.

Let us first consider the case of estimating angle $\theta$ using $T=2$ copies of $\mathcal{R}_{z,\theta}$.
The optimizer $\mathbf{X}$ of semidefinite program $\sdp(\Bar{A}_2)$ in Theorem~\ref{thm:sequential_NH_bound} gives an optimal estimator matrix $\mathbf{X}_{0,1}$ and sequential probe $\mathbf{M}$:
\begin{equation}
\begin{gathered}
    \mathbf{X}_{0,1} = \frac{\kketbra{M_1} -\kketbra{M_2}}{4}  
    \quad\text{and}\quad
    \mathbf{M} = \sum_{k=1}^{12} \mathbf{M}_k
\end{gathered}
\end{equation}
for $|M_1\rr = \frac{|0000\>-|1111\>}{\sqrt{2}}$ and $|M_2\rr = \frac{|0000\>+|1111\>}{\sqrt{2}}$ and positive semidefinite operators $\{\mathbf{M}_k\}_{k=1}^{12}$ where $\mathbf{M}_1 = \kketbra{M_1}/2$, $\mathbf{M}_2=\kketbra{M_2}/2$, and the other $\mathbf{M}_k$'s are given in eqn.~\eqref{eqn:M_matrix_2_step_z_rotation} of Appendix~\ref{app:tester_isometry_sequence_decomposition}.

We find a circuit decomposition of $\mathbf{M}=\sum_{k=1}^{12} \mathbf{M}_k$ that can be illustrated as
\begin{equation}
    \mathbf{M} = \sum_{k=1}^{12}
    \begin{quantikz}[classical gap=0.04cm]
        \lstick[2]{$|\psi\>$} & \wire[r][1]["{C_1}"{above,pos=0}]{q} & & \gate[2,style={fill=blue!20}]{V} & \wire[r][1]["{C_2}"{above,pos=0}]{q} & & \gate[2,style={fill=blue!20}]{M_k'} \wireoverride{q} \\
        & \wire[l][1]["{A_0}"{above,pos=0}]{q} \setwiretype{n} & \wire[r][1]["{A_1}"{above,pos=0}]{q} & & \wire[l][1]["{A_1'}"{above,pos=0}]{q} & \wire[r][1]["{A_2}"{above,pos=0}]{q} & 
    \end{quantikz},
\end{equation}
where $C_1$ is a single-qubit ancilla register, $C_2$ is a 3-qubit ancilla register, $|\psi\>=(|00\>+|11\>)/\sqrt{2}$ is a maximally entangled state over $C_1\otimes A_0$, $V$ is an isometry, and $\{M_k'\}_{k=1}^{12}$ is a POVM over $C_2\otimes A_2$.
The isometry $V$ can be described by the circuit
\begin{equation}\label{eqn:isometry_noiseless_z_rot}
    V = 
    \begin{quantikz}
        & \gate[1,style={blue!05,fill=blue!05}]{|0\>} \setwiretype{n} \gategroup[4,steps=4,style={dashed, fill=blue!05, inner xsep=2pt},background,label style={label position=above,anchor=north,yshift=0.3cm}]{{\sc $V$}} & \gate[]{H} \setwiretype{q} & \gate[]{H} & \ctrl{3} & \rstick[3]{$C_2$} \\
        \lstick[1]{$C_1$} & & & \ctrl{-1} & & \\
        \lstick[1]{$A_1$} & & \ctrl{-2} & & & \\
        & \gate[1,style={blue!05,fill=blue!05}]{|0\>} \setwiretype{n} & \setwiretype{q} & & \targ{} & \rstick[1]{$A_1'$}
    \end{quantikz} \;,
\end{equation}
consisting of controlled Hadamard $H$ gates and a CNOT gate.
The POVM elements $\{M_k'\}_{k=1}^{12}$ are given in eqn.~\eqref{eqn:2step_z_rotation_optimal_POVM} of Appendix~\ref{app:tester_isometry_sequence_decomposition}.
However, the only outcomes with non-zero probability are outcomes 1 and 2, with POVM elements 
\begin{equation}
\begin{gathered}
    M_1' = \frac{\kketbra{M_1}}{2} 
    \quad\text{and}\quad
    M_2' = \frac{\kketbra{M_2}}{2}
    \;.
\end{gathered}
\end{equation}
This is because, after the second copy of the rotation channel, the quantum state over $C_2\otimes A_2$ is given by
\begin{equation}
    |\psi_2\> = \frac{e^{-\iota\theta}|0000\>+e^{\iota\theta}|1111\>}{\sqrt{2}} \;.
\end{equation}
Thus we have outcome probability distribution
\begin{equation}
    p(k|\mathbf{M},\mathbf{\Lambda}_\theta) =
    \begin{cases}
        (\sin\theta)^2 \;&,\;\text{if }k=1 \\
        (\cos\theta)^2 \;&,\;\text{if }k=2 \\
        0 \;&,\;\text{otherwise}
    \end{cases}
\end{equation}
which gives an MSE of
\begin{equation}
\begin{aligned}
    \mse_\theta(\mathcal{M},\hat{\theta}) 
    &= \frac{p(1|\mathbf{M},\mathbf{\Lambda}_\theta)+p(2|\mathbf{M},\mathbf{\Lambda}_\theta)}{4}
    =\frac{1}{4} \;.
\end{aligned}
\end{equation}

The protocol described above demonstrates how one can derive an explicit protocol from the SDP optimizer.
However, an ancilla is not required for this noiseless example to obtain such MSE.
Preparing the probe in $|+\>=(|0\>+|1\>)/\sqrt{2}$ and applying $R_{z,\theta}$ successively $T$ times gives
\begin{equation}
    |\psi_\theta^{(T)}\>
    =
    \frac{
        e^{-\iota T\theta/2}|0\>
        +
        e^{\iota T\theta/2}|1\>
    }{\sqrt{2}}.
\end{equation}
Fix a point $\theta_0$ and measure in the basis
\begin{equation}
\begin{aligned}
    |\pm_{\theta_0}\>
        :=
        \frac{|0\>\pm\iota e^{\iota T\theta_0}|1\>}
        {\sqrt{2}}.
\end{aligned}
\end{equation}
The outcome probabilities are $p_\pm(\theta)=[1\pm\sin(T(\theta-\theta_0))]/2$.
The fixed estimator values $\hat{\theta}_\pm=\theta_0\pm1/T$ are locally unbiased at $\theta=\theta_0$ and give $\mse_{\theta_0}=1/T^2$.
The ancillas above arise from the chosen tester decomposition and are not necessary to attain this bound.

\subsection{Noisy Pauli-$Z$ rotations}\label{sec:indep_noise_zrot}

Here, we consider a noisy sequential estimation of $T$ copies of Pauli-$Z$ rotation channels.
We consider an independent Pauli noise model where a Pauli noise channel $\mathcal{N}_p$ is applied after each rotation channel $\mathcal{R}_{z,\theta}$, where $p_x,p_y,p_z$ are the probabilities of Pauli $\sigma_x,\sigma_y,\sigma_z$ errors, respectively and $p_0=1-(p_x+p_y+p_z)$ the probability of no errors $\sigma_0=I$.
Then the Pauli noise channel is given by 
\begin{equation}\label{eqn:pauli_noise_channel}
    \mathcal{N}_p(\cdot) = \sum_{j\in\{0,x,y,z\}} p_j \sigma_j\cdot \sigma_j \;.
\end{equation}

\begin{figure}
    \centering
    \includegraphics[width=0.55\columnwidth]{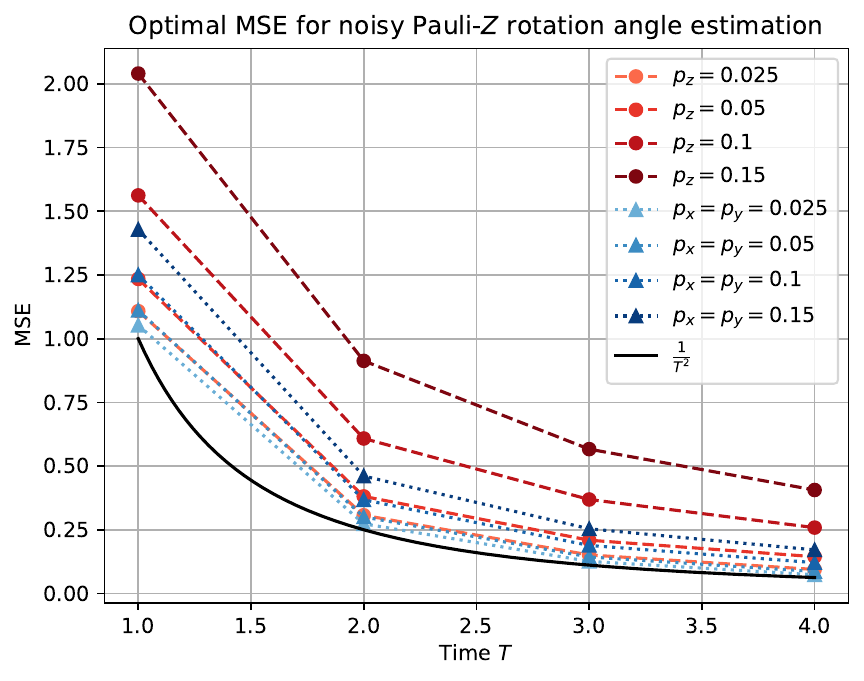}
    \caption{Mean-squared error (MSE) lower bound for sequential Pauli-$Z$ rotation angle estimation up to $T=4$ steps under independent Pauli noise.
    Red dashed lines correspond to the MSE for Pauli noise with only non-zero probability $p_z$ of $\sigma_z$ error.
    Blue dotted lines correspond to the MSE for Pauli noise with equal probability $p_x=p_y$ of either $\sigma_x$ or $\sigma_y$ errors, and zero probability of $\sigma_z$ error.
    If the Pauli noise has either only non-zero probability of $\sigma_x$ or $\sigma_y$, then the Heisenberg scaling (black line) is recovered, in agreement with the HNLS condition~\cite{zhou2018achieving}.
    }
    \label{fig:MSE_indep_noise_Z_rot}
\end{figure}

For up to $T=4$ copies of noisy $\mathcal{R}_{z,\theta}$ with non-zero $p_z$ error rates and non-zero $p_x=p_y$ error rates ranging from $.025$ to $.15$, the MSE obtained from the SDP in Theorem~\ref{thm:sequential_NH_bound} is plotted in Fig.~\ref{fig:MSE_indep_noise_Z_rot}.

For $T=2$ copies of noisy $\mathcal{R}_{z,\theta}$, the SDP program in Theorem~\ref{thm:sequential_NH_bound} gives an optimal sequential probe $\mathbf{M}$:
\begin{equation}
    \begin{quantikz}[classical gap=0.05cm]
        \;\; \setwiretype{n} \bshade{2}{8}{$\mathbf{M}$} & \lstick[2]{$|\psi\>$} & \setwiretype{q} & & \gate[2,style={fill=blue!20}]{V} & & & \gate[2,style={fill=blue!20}]{\mathcal{M}}  \\
        \setwiretype{n} & & \gate[1,style={fill=green!20}]{R_z(\theta)} \gshade{1}{2}{} \setwiretype{q} & \gate[style={fill=red!20}]{\mathcal{N}_p} & & \gate[1,style={fill=green!20}]{R_z(\theta)} \gshade{1}{2}{} & \gate[style={fill=red!20}]{\mathcal{N}_p} & 
    \end{quantikz} \;.
\end{equation}
For non-zero $p_x=p_y$ error rate and non-zero $p_z$ error rate, the isometry $V$ is identical to the one for the noiseless Pauli-$Z$ rotation estimation given in eqn.~\eqref{eqn:isometry_noiseless_z_rot}.
On the other hand, if only either $p_x$ error rate or $p_y$ error rate is non-zero, in which the Heisenberg scaling $\frac{1}{T^2}$ is achieved (due to the HNLS condition~\cite{zhou2018achieving}), the isometry $V$ is given by
\begin{equation}
    V = 
    \begin{quantikz}
        & \gate[1,style={blue!05,fill=blue!05}]{|0\>} \setwiretype{n} \gategroup[4,steps=3,style={dashed, fill=blue!05, inner xsep=2pt},background,label style={label position=above,anchor=north,yshift=0.3cm}]{{\sc $V$}} & \setwiretype{q} & \targ{} & \rstick[3]{$C_2$} \\
        \lstick[1]{$C_1$} & & & & \\
        \lstick[1]{$A_1$} & & \ctrl{1} & \ctrl{-2} & \\
        & \gate[1,style={blue!05,fill=blue!05}]{|0\>} \setwiretype{n} & \targ{} \setwiretype{q} & & \rstick[1]{$A_1'$}
    \end{quantikz} \,,
\end{equation}
where the ancilla system $C_2$ after isometry $V$ is composed of three qubits, as opposed to 1-qubit ancilla $C_1$ initially.

\subsection{Pauli $X$ and $Y$ rotations}

\begin{figure}
    \centering
    \includegraphics[width=0.55\columnwidth]{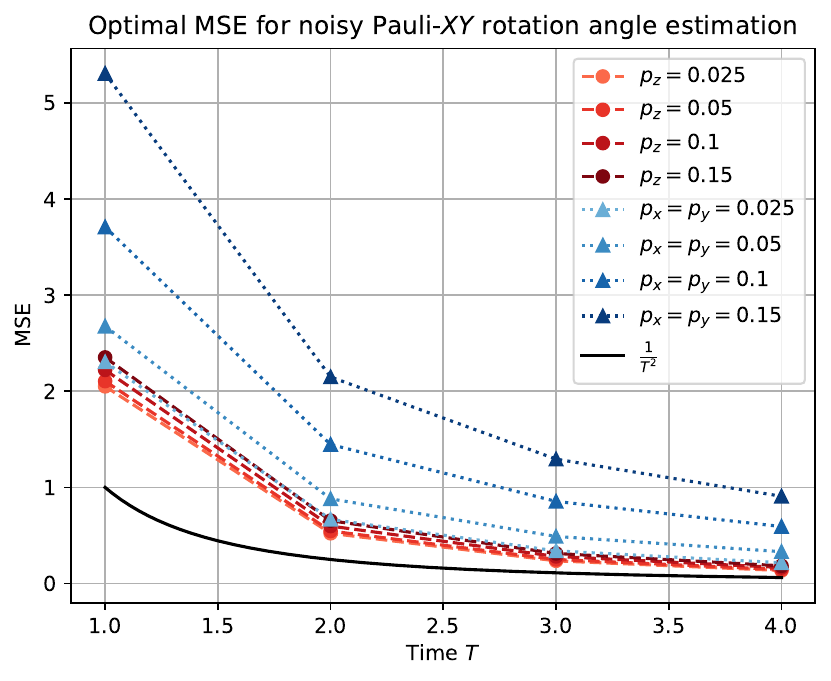}
    \caption{Mean-squared error (MSE) lower bound for noisy sequential Pauli $X$ and $Y$ rotation angle estimation up to $T=4$ steps under independent Pauli noise.
    Red dashed lines correspond to the MSE for Pauli noise with only non-zero probability $p_z$ of $\sigma_z$ error.
    Blue dotted lines correspond to the MSE for Pauli noise with equal probability $p_x=p_y$ of either $\sigma_x$ or $\sigma_y$ errors, and zero probability of $\sigma_z$ error.
    }
    \label{fig:MSE_noisy_xy_rot}
\end{figure}

Here we consider two-parameter sequential estimation of a 
quantum channel consisting of a Pauli-$Y$ rotation followed by a Pauli-$X$ rotation: $R_{xy,\theta} = R_{x,\theta_x} R_{y,\theta_y}$, then followed by either $\sigma_x$ or $\sigma_y$ or $\sigma_z$ error with probability $p_x,p_y,p_z$, respectively.
The noisy $T$-step process $\mathbf{\Lambda}_{\theta}$ consists of $T$ copies of $R_{xy,\theta}$ and can be illustrated as
\begin{equation}
\begin{aligned}
    &\mathbf{\Lambda}_{\theta}
        =
        \left(
            \sum_{j\in\{0,x,y,z\}}
            p_j\kketbra{\sigma_jR_{xy,\theta}}
        \right)^{\otimes T} \\
    &= \sum_{j_1,\ldots,j_T\in\{0,x,y,z\}} \prod_{t=1}^T p_{j_t}
    \begin{quantikz}[classical gap=0.05cm]
        & \gate[1,style={fill=green!20}]{R_{xy,\theta}} \wire[l][1]["{A_0}"{above,pos=0.6}]{q} \gshade{1}{2}{} & \gate[style={fill=red!20}]{\sigma_{j_1}} \wire[r][1]["{A_1}"{above,pos=0.5}]{q} & \invgate{\dots} & \gate[1,style={fill=green!20}]{R_{xy,\theta}} \wire[l][1]["{A_{T-1}'}"{above,pos=0.65}]{q} \gshade{1}{2}{} & \gate[style={fill=red!20}]{\sigma_{j_T}} \wire[r][1]["{A_T}"{above,pos=0.5}]{q} \wireoverride{q} & 
    \end{quantikz} \;.
\end{aligned}
\end{equation}
The SDP given in Theorem~\ref{thm:sequential_NH_bound} gives MSEs that are closer to the Heisenberg scaling $\frac{1}{T^2}$ as the error rates approach $0$, as shown in Fig.~\ref{fig:MSE_noisy_xy_rot} up to $T=4$ steps.
Note that, when the Pauli noise has non-zero probability of causing $\sigma_z$ error (blue lines in Fig.~\ref{fig:MSE_noisy_xy_rot}) or when the Pauli noise has a non-zero probabilities $p_x=p_y\neq 0$ of causing either $\sigma_x$ or $\sigma_y$ error (red lines in Fig.~\ref{fig:MSE_noisy_xy_rot}), then Heisenberg scaling cannot be recovered, in agreement with the HNLS condition~\cite{zhou2018achieving}.

\subsection{Noisy qutrit channel}

Here we will demonstrate our SDP in Theorem~\ref{thm:sequential_NH_bound} on a higher-dimensional system, namely a qutrit system.
We consider a single-parameter qutrit unitary $U_\theta = e^{i\theta G_8/2}$ where $G_8$ is one of the qutrit Gell-Mann matrices (see e.g.~\cite{di2012elementarygatesternaryquantum,trisciani2025decompositionmultiqutritgatesgenerated}), given by
\begin{equation}
    G_8 = \sqrt{\frac{1}{3}}
    \begin{bmatrix}
        1&0&0\\
        0&1&0\\
        0&0&-2
    \end{bmatrix} \;.
\end{equation}
Also, consider the qutrit Weyl-Heisenberg operators $\sigma_x,\sigma_z$ defined by
\begin{equation}
    \sigma_x|j\> = |j+1 \text{ (mod 3)}\>
    \quad\text{and}\quad
    \sigma_z|j\> = \omega^j|j\>
\end{equation}
for the third root of unity $\omega=e^{\iota2\pi/3}$ and $\{|j\>\}_{j=0}^2$ the eigenvectors of $\sigma_z$ (also the eigenvectors of $G_8$).

We consider the problem of estimating parameter $\theta$ of the unitary $U_\theta$ followed by a noise channel
\begin{equation}
    \mathcal{N}(\cdot) = \sum_{a,b=0}^2 p_{a,b} E_{a,b} \cdot E_{a,b}^\dag
\end{equation}
that applies an error $E_{a,b}=\sigma_x^a\sigma_z^b$ with probability $p_{a,b}$ (with $p_{0,0}$ being the probability of no errors).
Note that $G_3+G_8$ is not in the span of errors $E_{a,0}$, hence when the noise channel applies only errors $E_{1,0}=\sigma_x$ or $E_{2,0}=\sigma_x^2$ with non-zero probability, then the Heisenberg scaling can be achieved according to the HNLS condition~\cite{zhou2018achieving}.
On the other hand, if there is a non-zero probability of error $E_{a,b}$ for $b\neq 0$, then the MSE must be strictly larger than the Heisenberg scaling~\cite{Boixo_2007}
\begin{equation}
    \frac{1}{T^2||G_8/2||^2} = \frac{4}{3T^2}
\end{equation}
where the semi-norm $||H|| = \lambda_{\max}(H)-\lambda_{\min}(H)$ and $\lambda_{\max}(H)$ (resp. $\lambda_{\min}(H)$) is the maximum (resp. minimum) eigenvalue of $H$.

\begin{figure}
    \centering
    \includegraphics[width=0.55\columnwidth]{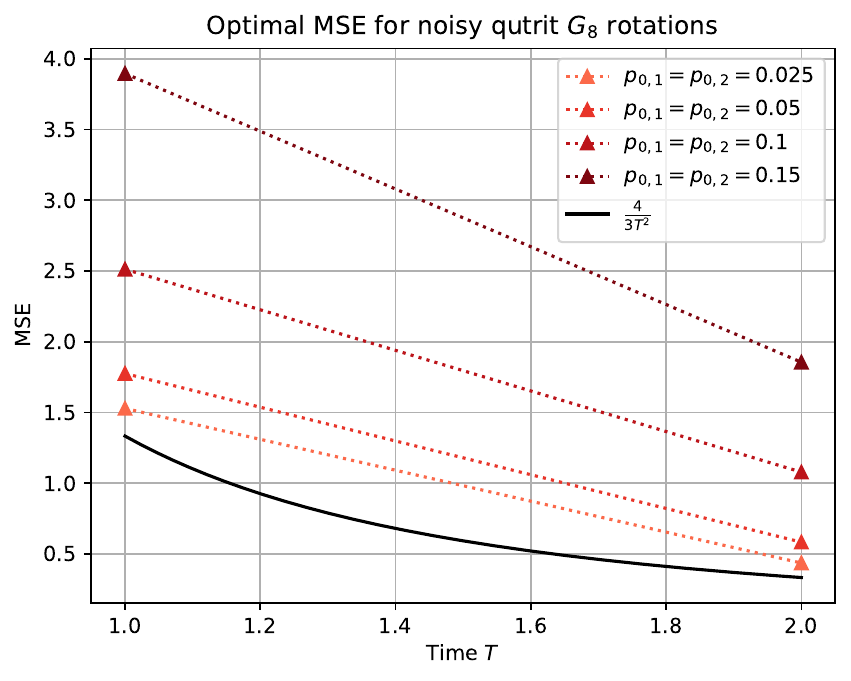}
    \caption{Mean-squared error (MSE) lower bound for noisy qutrit rotations for $T=1$ and $T=2$ steps (red lines).
    Noise channel $\mathcal{N}$ after each rotation $U_\theta$ applies either $\sigma_z$ or $\sigma_z^2$ error with (independent) probability of $p_{0,1}$ and $p_{0,2}$, respectively, where $p=p_{0,1}=p_{0,2}$. 
    The MSEs get closer to the Heisenberg scaling $4/(3T^2)$ (black line) as the error probability $p$ gets smaller.
    }
    \label{fig:MSE_qutrit}
\end{figure}

In Fig.~\ref{fig:MSE_qutrit}, we show that the SDP bound given in Theorem~\ref{thm:sequential_NH_bound} gives MSEs for estimating the parameters of $U_\theta$ followed by noise channel $\mathcal{N}$ with only non-zero probabilities $p=p_{0,1}=p_{0,2}$ of either $\sigma_z$ or $\sigma_z^2$ errors.
In this case, the resulting SDP bounds show that the MSE must be strictly larger than the Heisenberg scaling, since $G_8$ is in the linear span of $\sigma_z$ and $\sigma_z^2$.
Otherwise, if the only errors happening with non-zero probabilities are $\sigma_x$ or $\sigma_x^2$, Heisenberg scaling can be achieved.

\section{Optimal sequential strategy for $Z$-rotation with independent Pauli noise}\label{app:independent_noise_z_rot}

\begin{atenv}
In this appendix, we give the initial probe state and intermediate isometry for the two-step independent Pauli-noise examples in Appendix~\ref{sec:indep_noise_zrot}.
These examples complement the depolarizing-noise calculations in Section~\ref{sec:example_multiparameter_rotations_depolarizing} and illustrate the construction of explicit single-parameter strategies described in Section~\ref{sec:single_par_tight_bound}.
The isometry below applies to the dephasing and equal-$X/Y$ noise cases considered there.
The single-axis transverse-noise cases are discussed separately in Appendix~\ref{sec:indep_noise_zrot}.
\end{atenv}

The optimal sequential probe $\mathbf{M}$ acting on the noisy $Z$ rotations can be illustrated as
\begin{equation}\label{eqn:zrot_correlated_noise}
    \begin{quantikz}[classical gap=0.05cm]
        \;\; \setwiretype{n} \bshade{2}{8}{$\mathbf{M}$} & \lstick[2]{$|\psi\>$} & \setwiretype{q} & & \gate[2,style={fill=blue!20}]{V} & & & \gate[2,style={fill=blue!20}]{\mathcal{M}}  \\
        \setwiretype{n} & & \gate[1,style={fill=green!20}]{R_z(\theta)} \gshade{1}{2}{} \setwiretype{q} & \gate[style={fill=red!20}]{\mathcal{N}_p} & & \gate[1,style={fill=green!20}]{R_z(\theta)} \gshade{1}{2}{} & \gate[style={fill=red!20}]{\mathcal{N}_p} & 
    \end{quantikz} ,
\end{equation}
where the initial probe state is given by $|\psi\> = (|00\>+|11\>)/\sqrt{2}\in C_1\otimes A_0$ and isometry $V:C_1\otimes A_1\rightarrow C_2\otimes A_1'$ performed at time step 1 takes the form of
\begin{equation}
\begin{aligned}
    V = |0000\>\<00| + |1111\>\<11| + \frac{|0010\>+|1011\>}{\sqrt{2}}\<01| + \frac{|0100\>+|1101\>}{\sqrt{2}}\<10|
    =
    \begin{quantikz}
        & \gate[1,style={blue!05,fill=blue!05}]{|0\>} \setwiretype{n} \gategroup[4,steps=4,style={dashed, fill=blue!05, inner xsep=2pt},background,label style={label position=above,anchor=north,yshift=0.3cm}]{{\sc $V$}} & \gate[]{H} \setwiretype{q} & \gate[]{H} & \ctrl{3} & \rstick[3]{$C'$} \\
        \lstick[1]{$C$} & & & \ctrl{-1} & & \\
        \lstick[1]{$A_1$} & & \ctrl{-2} & & & \\
        & \gate[1,style={blue!05,fill=blue!05}]{|0\>} \setwiretype{n} & \setwiretype{q} & & \targ{} & \rstick[1]{$A_1'$}
    \end{quantikz} ,
\end{aligned}
\end{equation}
which is precisely the same isometry in eqn.~\eqref{eqn:app_optimal_isometry_z_rot} for noiseless $Z$-rotation.

\section{Sequential estimation with correlated noise}

In this appendix, we give explicit finite-step strategies for the correlated-noise examples in Section~\ref{sec:correlated_noise_z_rot}.
We first describe a classical mixture of noise histories, where a label is sampled once and determines the noise throughout the protocol.
We then specialize to persistent Pauli errors and give initial states, intermediate isometries, and final measurements.
These constructions illustrate the single-parameter attainability result in Proposition~\ref{prop:single_par_achievability}.

Here, we consider estimation of $T$ copies of channel $\Lambda_\theta$ over $T$ time steps with (classically) correlated noise:
\begin{equation}\label{eqn:correlated_noise}
    \sum_j p_j \left(
    \begin{quantikz}[classical gap=0.05cm]
        \setwiretype{q} \lstick[2]{$\rho$} \bshade{2}{9}{$\mathbf{M}_k$} & & & \gate[2,style={fill=blue!20}]{\mathcal{M}^{(1)}} & & & \gate[2,style={fill=blue!20}]{\mathcal{M}^{(2)}} & \invgate{\dots} & & & \gate[2,style={fill=blue!20}]{\mathcal{M}_k^{(T)}} \wireoverride{q}  \\
        \setwiretype{q} & \gate[1,style={fill=green!20}]{\Lambda_\theta} \gshade{1}{2}{} & \gate[style={fill=red!20}]{E_j^{(1)}} & & \gate[1,style={fill=green!20}]{\Lambda_\theta} \gshade{1}{2}{} & \gate[style={fill=red!20}]{E_j^{(2)}} & & \invgate{\dots} & \gate[1,style={fill=green!20}]{\Lambda_\theta} \gshade{1}{2}{} & \gate[style={fill=red!20}]{E_j^{(T)}} &
    \end{quantikz} \right) \;.
\end{equation}
The $T$ copies of noisy channels are described by the channel operator
\begin{equation}
    \mathbf{\Lambda}_\theta = 
    \sum_j p_j \left(
    \begin{quantikz}[classical gap=0.05cm]
        & \gate[1,style={fill=green!20}]{\Lambda_\theta} \wire[l][1]["{A_0}"{above,pos=0.6}]{q} \gshade{1}{2}{} & \gate[style={fill=red!20}]{E_j^{(1)}} & \wire[l][1]["{A_1}"{above,pos=0}]{q} & \wire[r][1]["{A_1'}"{above,pos=0}]{q} \setwiretype{n} & \gate[1,style={fill=green!20}]{\Lambda_\theta} \gshade{1}{2}{} & \gate[style={fill=red!20}]{E_j^{(2)}} \wireoverride{q} & \wire[l][1]["{A_2}"{above,pos=0}]{q} & \invgate{\dots} & \wire[r][1]["{A_{T-1}'}"{above,pos=0}]{q} & \gate[1,style={fill=green!20}]{\Lambda_\theta} \gshade{1}{2}{} & \gate[style={fill=red!20}]{E_j^{(T)}} \wireoverride{q} & \wire[l][1]["{A_T}"{above,pos=0}]{q}
    \end{quantikz} \right) \;.
\end{equation}

\subsection{Optimal sequential strategy for $Z$-rotation with correlated Pauli noise}\label{app:correlated_noise_z_rot}

Here we consider $Z$-rotation with angle $\theta$ over $T=2$ time steps and correlated noise process with parameters $(p_X,p_Y,p_Z) = (0.1,0.1,0)$ as discussed in Section~\ref{sec:correlated_noise_z_rot} of the main text.
The optimal estimator matrix is given by
\begin{equation}
\begin{gathered}
    \mathbf{X}_{0,1} = \frac{1}{4} \kketbra{M_1} - \frac{1}{4} \kketbra{M_2} + \frac{1}{4}\kketbra{M_3} - \frac{1}{4}\kketbra{M_4} \\
    \text{where}\\
    |M_1\rr = \frac{|0^4\>-i|1^4\>}{\sqrt{2}} \;,\quad
    |M_2\rr = \frac{|0^4\>+i|1^4\>}{\sqrt{2}} \;,\quad
    |M_3\rr = \frac{|0101\>-i|1010\>}{\sqrt{2}} \;,\quad
    |M_4\rr = \frac{|0101\>+i|1010\>}{\sqrt{2}} \;,
\end{gathered}
\end{equation}
and the optimal strategy described by
\begin{equation}
    \mathbf{M} = \frac{1}{2}\Big( \kketbra{M_1} + \dots + \kketbra{M_4} + \ketbra{0010} + \ketbra{0111} + \ketbra{1000} + \ketbra{1010} + \ketbra{1101} \Big) \;.
\end{equation}
The optimal strategy $\mathbf{M}$ acting on the noisy $Z$ rotations can be illustrated as
\begin{equation}\label{eqn:zrot_correlated_noise2}
    \sum_{j\in\{0,x,y\}} p_j \left(
    \begin{quantikz}[classical gap=0.05cm]
        \;\; \setwiretype{n} \bshade{2}{8}{$\mathbf{M}$} & \lstick[2]{$|\psi\>$} & \setwiretype{q} & & \gate[2,style={fill=blue!20}]{V} & & & \gate[2,style={fill=blue!20}]{\mathcal{M}}  \\
        \setwiretype{n} & & \gate[1,style={fill=green!20}]{R_z(\theta)} \gshade{1}{2}{} \setwiretype{q} & \gate[style={fill=red!20}]{\sigma_j} & & \gate[1,style={fill=green!20}]{R_z(\theta)} \gshade{1}{2}{} & \gate[style={fill=red!20}]{\sigma_j} & 
    \end{quantikz}  \right) \;,
\end{equation}
where the initial probe state is given by $|\psi\> = (|00\>+|11\>)/\sqrt{2}\in C\otimes A_0$ and isometry $V:C\otimes A_1\rightarrow C'\otimes A_1'$ performed at time step 1 takes the form of
\begin{equation}
    V = |0^4\>\<00| + |1011\>\<01| + |0100\>\<10| + |1^4\>\<11| 
    =
    \begin{quantikz}
        & \gate[1,style={blue!05,fill=blue!05}]{|0\>} \setwiretype{n} \gategroup[4,steps=3,style={dashed, fill=blue!05, inner xsep=2pt},background,label style={label position=above,anchor=north,yshift=0.3cm}]{{\sc $V$}} & \setwiretype{q} & \targ{} & \rstick[3]{$C'$} \\
        \lstick[1]{$C$} & & & & \\
        \lstick[1]{$A_1$} & & \ctrl{1} & \ctrl{-2} & \\
        & \gate[1,style={blue!05,fill=blue!05}]{|0\>} \setwiretype{n} & \targ{} \setwiretype{q} & & \rstick[1]{$A_1'$}
    \end{quantikz}
    \;,
\end{equation}
whereas POVM measurement $\mathcal{M}$ consists of $9$ POVM elements 
\begin{equation}
\begin{gathered}
    M_k' = \sqrt{\mathbf{M}^+}\mathbf{M}_k\sqrt{\mathbf{M}^+} = \kketbra{M_k} \;,\; k\in\{1,\dots,4\} 
    \\ \text{and}\\
    M_k' = \sqrt{\mathbf{M}^+}\frac{\ketbra{abcd}}{2}\sqrt{\mathbf{M}^+} + Q_j \;,\; k\in\{5,\dots,9\}
\end{gathered}
\end{equation}
where $|abcd\>$ are the last five eigenvectors of $\mathbf{M}$ and $Q_k$ are positive semidefinite operators chosen such that the POVM completeness condition $\sum_k M_k'=I$ holds.

The total state after the first noisy $Z$ rotation with noise unitary $E\in\{I,X,Y\}$ before isometry $V$ at time step 1 is applied is given by
\begin{equation}\label{eqn:noisy_zrot_state_t1}
    (I\otimes ER_z(\theta))|\psi\> 
    = 
    \begin{cases}
        \frac{e^{-i\theta/2}|00\> + e^{i\theta/2}|11\>}{\sqrt{2}}  \quad&,\;\text{if }E=I\\[0.3ex]
        \frac{e^{-i\theta/2}|01\> + e^{i\theta/2}|10\>}{\sqrt{2}}  \quad&,\;\text{if }E=X \\[0.3ex]
        \frac{ie^{-i\theta/2}|01\> - ie^{i\theta/2}|10\>}{\sqrt{2}}  \quad&,\;\text{if }E=Y
    \end{cases} \;,
\end{equation}
so that after $V$, we obtain the state
\begin{equation}
    V(I\otimes ER_z(\theta))|\psi\> = 
    \begin{cases}
        \frac{e^{-i\theta/2}|0^4\> + e^{i\theta/2}|1^4\>}{\sqrt{2}} \quad&,\;\text{if }E=I \\[0.3ex]
        \frac{e^{-i\theta/2}|1011\> + e^{i\theta/2}|0100\>}{\sqrt{2}} \quad&,\;\text{if }E=X \\[0.3ex]
        \frac{ie^{-i\theta/2}|1011\> - ie^{i\theta/2}|0100\>}{\sqrt{2}} \quad&,\;\text{if }E=Y
    \end{cases} \;.
\end{equation}
Since the noise at time steps 1 and 2 is correlated, the state of the system for a given noise $E$ before measurement $\mathcal{M}$ at time step 2 is
\begin{equation}
    |\psi_E\> = (I\otimes ER_z(\theta))V(I\otimes ER_z(\theta))|\psi\> = 
    \begin{cases}
        \frac{e^{-i\theta}|0^4\> + e^{i\theta}|1^4\>}{\sqrt{2}} \quad&,\;\text{if }E=I \\[0.3ex]
        \frac{e^{-i\theta}|1010\> + e^{i\theta}|0101\>}{\sqrt{2}} \quad&,\;\text{if }E=X \\[0.3ex]
        \frac{-e^{-i\theta}|1010\> - e^{i\theta}|0101\>}{\sqrt{2}} \quad&,\;\text{if }E=Y 
    \end{cases} \;.
\end{equation}
Note that only the first four measurement outcomes are nonzero:
\begin{equation}
\begin{gathered}
    \<\psi_E|M_k'|\psi_E\>|^2 =
    \begin{cases}
        \frac{1-\cos(2\theta+\frac{\pi}{2})}{2} \quad&,\;\text{if }k=1,E=I \\[0.3ex]
        \frac{1+\cos(2\theta+\frac{\pi}{2})}{2} \quad&,\;\text{if }k=2,E=I \\[0.3ex]
        \frac{1-\cos(2\theta+\frac{\pi}{2})}{2} \quad&,\;\text{if }k=3,E=X \\[0.3ex]
        \frac{1+\cos(2\theta+\frac{\pi}{2})}{2} \quad&,\;\text{if }k=4,E=X \\[0.3ex]
        \frac{1-\cos(2\theta+\frac{\pi}{2})}{2} \quad&,\;\text{if }k=3,E=Y \\[0.3ex]
        \frac{1+\cos(2\theta+\frac{\pi}{2})}{2} \quad&,\;\text{if }k=4,E=Y \\[0.3ex]
        0 \quad&,\;\text{otherwise}
    \end{cases} \;.
\end{gathered}
\end{equation}

Now we compare the optimal strategy for $T=2$ time steps above with the optimal strategy for $T=1$ time-step $Z$-rotation with $\theta=0.01$ and $(p_X,p_Y,p_Z)=(0.1,0.1,0)$.
In this case, the optimal estimator $\mathbf{X}_{0,1}$ and sequential probe $\mathcal{M}$ are given by
\begin{equation}
\begin{gathered}
    \mathbf{X}_{0,1} = 0.625(-\kketbra{M_1}+\kketbra{M_2}) \;\text{and}\\
    \mathbf{M} = \frac{1}{2}(\kketbra{M_1}+\kketbra{M_2} + \kketbra{M_3}+\kketbra{M_4}) \\
    \text{for}\\
    |M_1\rr = \frac{i|00\>+|11\>}{\sqrt{2}} \;,\quad
    |M_2\rr = \frac{-i|00\>+|11\>}{\sqrt{2}} \;,\quad
    |M_3\rr=|01\> \;,\;\text{and}\quad
    |M_4\rr=|10\> \;.
\end{gathered}
\end{equation}
Strategy $\mathbf{M}$ can be decomposed as an initial state $|\psi\>$ and a POVM measurement $\mathcal{M}$ illustrated as
\begin{equation}\label{eqn:zrot_T1_correlated_noise}
    \sum_E p_E \left(
    \begin{quantikz}[classical gap=0.05cm]
        \setwiretype{q} \lstick[2]{$|\psi\>$} & & & \gate[2,style={fill=blue!20}]{\mathcal{M}}  \\
        \setwiretype{q} & \gate[1,style={fill=green!20}]{R_z(\theta)} & \gate[style={fill=red!20}]{E} & 
    \end{quantikz} \right) \;.
\end{equation}
Since $\sqrt{\mathbf{M}^+} = \sqrt{2}(\kketbra{M_1}+\kketbra{M_2} + \kketbra{M_3}+\kketbra{M_4})$, we can take the POVM measurement $\mathcal{M} = \{M_k'\}_{k=1}^4$ to be
\begin{equation}
\begin{gathered}
    M_k = \sqrt{\mathbf{M}^+}\kketbra{M_k}\sqrt{\mathbf{M}^+} = \kketbra{M_k} \;,
\end{gathered}
\end{equation}
for initial state $|\psi\> = (|00\>+|11\>)/\sqrt{2}$.
We can use the state $|\psi_E^{(1)}\> = E(I\otimes R_z(\theta))|\psi\>$ after the noisy $Z$-rotation in eqn.~\eqref{eqn:noisy_zrot_state_t1} given error $E\in\{I,X,Y\}$ to obtain the probability of each outcome from $\mathcal{M}$:
\begin{equation}
\begin{aligned}
    \<\psi_E^{(1)}|M_k'|\psi_E^{(1)}\>|^2 =
    \begin{cases}
        \frac{|-ie^{-i\theta} + 1|^2}{4} = \frac{1-\cos(\theta+\frac{\pi}{2})}{2} \quad&,\;\text{if }k=1,E=I \\[0.3ex]
        \frac{|ie^{-i\theta} + 1|^2}{4} = \frac{1+\cos(\theta+\frac{\pi}{2})}{2} \quad&,\;\text{if }k=2,E=I \\[0.3ex]
        \frac{|e^{-i\theta}|^2}{2} = \frac{1}{2} \quad&,\;\text{if }k=3,E=X \\[0.3ex]
        \frac{|e^{i\theta}|^2}{2} = \frac{1}{2} \quad&,\;\text{if }k=4,E=X \\[0.3ex]
        \frac{|e^{-i\theta}|^2}{2} = \frac{1}{2} \quad&,\;\text{if }k=3,E=Y \\[0.3ex]
        \frac{|e^{i\theta}|^2}{2} = \frac{1}{2} \quad&,\;\text{if }k=4,E=Y \\[0.3ex]
        0 \quad&,\;\text{otherwise}
    \end{cases} \;.
\end{aligned}
\end{equation}

For noise parameters $(p_x,p_y,p_z) = (0,0,0.1)$ over $T=2$ time steps, the optimal estimator matrix is given by
\begin{equation}
\begin{gathered}
    \mathbf{X}_{0,1} = \frac{1}{4} \kketbra{M_1} - \frac{1}{4} \kketbra{M_2} \\
    \text{where}\\
    |M_1\rr = \frac{|0^4\>-i|1^4\>}{\sqrt{2}} \;,\quad
    |M_2\rr = \frac{|0^4\>+i|1^4\>}{\sqrt{2}} \;,
\end{gathered}
\end{equation}
and the optimal strategy is described by
\begin{equation}
\begin{aligned}
    \mathbf{M} &= \kketbra{M_1} + \kketbra{M_2} + \frac{1}{4}\Big( + \ketbra{0001} + \ketbra{0010} + \ketbra{0101} + \ketbra{0110} + 2\ketbra{0111} \\
        &+ 2\ketbra{1000} + \ketbra{1001} + \ketbra{1010} + \ketbra{1101} + \ketbra{1110} \Big) \\
    &= \sum_{E\in\{I,Z\}} p_E \left(
    \begin{quantikz}[classical gap=0.05cm]
        \setwiretype{q} \lstick[2]{$|\psi\>$} & & & \gate[2,style={fill=blue!20}]{V} & & & \gate[2,style={fill=blue!20}]{\mathcal{M}}  \\
        \setwiretype{q} & \gate[1,style={fill=green!20}]{R_z(\theta)} & \gate[style={fill=red!20}]{E} & & \gate[1,style={fill=green!20}]{R_z(\theta)} & \gate[style={fill=red!20}]{E} & 
    \end{quantikz} \right)
    \;.
\end{aligned}
\end{equation}

The intermediate isometry is given by
\begin{equation}
\begin{aligned}
    V &= |0^4\>\<00| + \frac{|0010\>+|1011\>}{\sqrt{2}}\<01| + \frac{|0100\>+|1101\>}{\sqrt{2}}\<10| + |1^4\>\<11| \;.
\end{aligned}
\end{equation}
So, the state after the first noisy $Z$ rotation is given by
\begin{equation}
\begin{aligned}
    (I\otimes ER_z(\theta)) = 
    \begin{cases}
        \frac{e^{-i\theta/2}|00\> + e^{i\theta/2}|11\>}{\sqrt{2}}  \quad&,\;\text{if }E=I\\[0.3ex]
        \frac{e^{-i\theta/2}|00\> - e^{i\theta/2}|11\>}{\sqrt{2}} = \frac{e^{-i\theta/2}|00\> + e^{i(\theta/2+\pi)}|11\>}{\sqrt{2}}  \quad&,\;\text{if }E=Z
    \end{cases} \;.
\end{aligned}
\end{equation}
Then the state after isometry $V$ is given by
\begin{equation}
\begin{aligned}
    V(I\otimes ER_z(\theta)) = 
    \begin{cases}
        \frac{e^{-i\theta/2}|0000\> + e^{i\theta/2}|1111\>}{2}  \quad&,\;\text{if }E=I\\[0.3ex]
        \frac{e^{-i\theta/2}|0000\> + e^{i(\theta/2+\pi)}|1111\>}{2}  \quad&,\;\text{if }E=Z
    \end{cases} \;.
\end{aligned}
\end{equation}
Then finally we obtain the state before the measurement:
\begin{equation}
\begin{aligned}
    |\psi_E\> = (I\otimes ER_z(\theta))V(I\otimes ER_z(\theta)) = 
    \begin{cases}
        \frac{e^{-i\theta}|0000\> + e^{i\theta}|1111\>}{2}  \quad&,\;\text{if }E=I\\[0.3ex]
        \frac{e^{-i\theta}|0000\>  + e^{i\theta}|1111\>}{2}  \quad&,\;\text{if }E=Z
    \end{cases} \;,
\end{aligned}
\end{equation}
which gives outcome probability distribution
\begin{equation}
\begin{aligned}
    |\ll M_k|\psi_E\>|^2 =
    \begin{cases}
        \frac{1-\cos(2\theta)}{2} \quad&,\;\text{if }k=1,E=I \\[0.3ex]
        \frac{1+\cos(2\theta)}{2} \quad&,\;\text{if }k=2,E=I \\[0.3ex]
        \frac{1-\cos(2\theta)}{2} \quad&,\;\text{if }k=1,E=Z \\[0.3ex]
        \frac{1+\cos(2\theta)}{2} \quad&,\;\text{if }k=2,E=Z \\[0.3ex]
        0 \quad&,\;\text{otherwise}
    \end{cases} \;.
\end{aligned}
\end{equation}

\section{Nagaoka-Hayashi bound}\label{app:NH_bound}

\begin{atenv}
Section~\ref{sec:existing_bounds} reviews the Nagaoka--Hayashi bound for a fixed output-state model $\rho_\theta$, whereas Theorem~\ref{thm:sequential_NH_bound} in Section~\ref{sec:sequential_bound} extends this bound to arbitrary sequential strategies of a parameterized multi-time process.
Here we verify explicitly that the latter reduces to the former when $T=1$ and the initial probe state $\rho$ is fixed.
In this specialization, $\mathbf{\Lambda}_{\theta,\rho}=\Lambda_\theta(\rho)=\rho_\theta$ and there are no intermediate control operations.
\end{atenv}

For a quantum channel $\Lambda_\theta:\L(A)\rightarrow\L(A)$ parameterized by $\theta=(\theta_1,\dots,\theta_m)$, 
a single measurement $\mathcal{M}$ and an initial probe state $\rho$ to estimate $\theta$ are illustrated as
\begin{equation}\label{eqn:NH_circuit}
\begin{quantikz}[classical gap=0.05cm]
    \setwiretype{q} \lstick{$\rho$} & \gate[1,style={fill=green!20}]{\Lambda_\theta} & \gate[1,style={fill=blue!20}]{\mathcal{M}_k} 
\end{quantikz} .
\end{equation}
By denoting $\rho_\theta = \Lambda_\theta(\rho)$ and for measurement $\mathcal{M}$ with POVM element $M_k^\dag M_k$ corresponding to outcome $k$, we have 
\begin{equation}
\begin{aligned}
    \mathbf{\Lambda}_{\theta,\rho} &=
    \begin{quantikz}[classical gap=0.05cm]
        \setwiretype{n} & \lstick{$\rho$} \wire[r][1][]{q} & \gate[1,style={fill=green!20}]{\Lambda_\theta} & \wire[l][1]["{A_1}"{above,pos=0}]{q} 
    \end{quantikz} = \rho_\theta \;.
\end{aligned}
\end{equation}
The measurement map $\mathcal{M}_k = M_k\cdot M_k^\dag$ corresponding to outcome $k$ can be described by a positive semidefinite operator
\begin{equation}
    \mathbf{M}_k =
    \begin{quantikz}[classical gap=0.05cm]
        \setwiretype{n} & \wire[r][1]["A_1"{above,pos=0}]{q} & \gate[1,style={fill=blue!20}]{\mathcal{M}_k}  
    \end{quantikz} \;.
\end{equation}
So, we can write the $\mathbf{L}$ matrix and estimator matrix $X$ corresponding to measurement $\mathcal{M}$ and estimator $\hat{\theta}_j=\sum_k\hat{\theta}_{j,k}p(k|\theta,\mathcal{M})$ as
\begin{equation}
\begin{aligned}
    \mathbf{L}_\theta(\mathcal{M},\hat{\theta})
    &= \sum_{i,j=1}^m |i\>\<j| \otimes \sum_k \xi_{i,k}\xi_{j,k} \mathbf{M}_k 
    \quad\text{and}\quad
    X_\theta(\mathcal{M},\hat{\theta}) 
    = \sum_{i=1}^m |i\> \otimes \sum_k \xi_{i,k} \mathbf{M}_k \;,
\end{aligned}
\end{equation}
where $\xi_{i,k} = (\hat{\theta}_{i,k}-\theta_i)$.
For this fixed-state strategy, define
\begin{equation}
    \overline{\mathbf{\Lambda}}_{\theta,\rho}
    :=
    \begin{bmatrix}
        0 & 0\\
        0 & I_m\otimes\mathbf{\Lambda}_{\theta,\rho}
    \end{bmatrix}.
\end{equation}
Thus the SDP bound in Theorem~\ref{thm:sequential_NH_bound} becomes
\begin{equation}
\begin{aligned}
    \mse_\theta(\mathcal{M},\hat{\theta}) \geq \min_\mathbf{X} \bigg\{ \tr\Big( \overline{\mathbf{\Lambda}}_{\theta,\rho} \mathbf{X} \Big) :  
        &\; \mathbf{X} = 
        \begin{bmatrix}
            I & X^\dag \\
            X & \mathbf{L}
        \end{bmatrix}\geq0 
        \;,\\
        &\qquad X = \sum_{i=1}^m |i\>\otimes \mathbf{X}_{0,i} \;,\quad
                \mathbf{X}_{0,i} = \mathbf{X}_{0,i}^\dag \;, \\
        &\qquad \mathbf{L} = \sum_{i,j=1}^m |i\>\<j|\otimes \mathbf{X}_{i,j} \;,\quad
            \mathbf{X}_{i,j} = \mathbf{X}_{i,j}^\dag \;, \\
        &\qquad \tr(\mathbf{\Lambda}_{\theta,\rho} \mathbf{X}_{0,j})=0
        \;,\quad \frac{\partial}{\partial\theta_i} \tr\Big( \mathbf{\Lambda}_{\theta,\rho} \mathbf{X}_{0,j} \Big) = \delta_{i,j} \bigg\} \;,
\end{aligned}
\end{equation}
which is precisely the Nagaoka-Hayashi bound as proposed in~\cite{conlon2021efficient}.

\end{document}